\def\final{1}
\def\stocsubmit{0}
\definecolor{DarkGreen}{rgb}{0.1,0.5,0.1}
\definecolor{DarkRed}{rgb}{0.5,0.1,0.1}
\definecolor{DarkBlue}{rgb}{0.1,0.1,0.5}
\newcommand{\mynote}[1]{\marginpar{\tiny \sf #1}}
\newcommand{\mynote}[1]{}
\newcommand{\mnote}[1]{\mynote{Mark: {#1}}}
\newcommand{\knote}[1]{\mynote{Kobbi: {#1}}}
\newcommand{\remove}[1]{}
\newcommand{\stocrm}[1]{#1}
\newcommand{\stoctext}[1]{}
\newcommand{\stocrm}[1]{}
\newcommand{\stoctext}[1]{#1}
\newcommand{\LDim}{{\sf Ldim}}
\newcommand{\Lap}{\operatorname{\rm Lap}}
\newcommand{\AAA}{\mathcal A}
\newcommand{\BBB}{\mathcal B}
\newcommand{\FFF}{\mathcal F}
\newcommand{\XXX}{\mathcal X}
\def\opt{\mathop{\rm{opt}}\nolimits}
\newcommand\N{\mathbb{N}}
\newcommand\R{\mathbb{R}}
\newcommand{\cA}{\mathcal{A}}
\newcommand{\cC}{\mathcal{C}}
\newcommand{\cD}{\mathcal{D}}
\newcommand{\cF}{\mathcal{F}}
\newcommand{\cM}{\mathcal{M}}
\newcommand{\cR}{\mathcal{R}}
\newcommand{\tO}{\tilde{O}}
\newcommand{\eps}{\varepsilon}
\newcommand{\poly}{\mathrm{poly}}
\newcommand{\polylog}{\mathrm{polylog}}
\newcommand{\getsr}{\gets_{\mbox{\tiny R}}}
\newtheorem{theorem}{Theorem}[section]
\newtheorem{lemma}[theorem]{Lemma}
\newtheorem{claim}[theorem]{Claim}
\newtheorem{remark}[theorem]{Remark}
\newtheorem{proposition}[theorem]{Proposition}
\newtheorem{observation}[theorem]{Observation}
\theoremstyle{definition}
\newtheorem{definition}[theorem]{Definition}
\newcommand{\db}{D}
\newcommand{\error}{{\rm error}}
\newcommand{\VC}{\operatorname{\rm VC}}
\newcommand{\RepDim}{\operatorname{\rm RepDim}}
\newcommand{\point}{\operatorname{\tt POINT}}
\newcommand{\thresh}{\operatorname*{\tt THRESH}}
\newcommand{\OPT}{\operatorname{\rm OPT}}
\def\E{\operatorname*{\mathbb{E}}}
\def\poly{\mathop{\rm{poly}}\nolimits}
\def\opt{\mathop{\rm{opt}}\nolimits}
\newcommand{\Gen}{\operatorname{Gen}}
\newcommand{\Trace}{\operatorname{Trace}}
\newcommand{\zo}{\{0,1\}}
\title{Differentially Private Release and Learning of Threshold Functions\thanks{Earlier versions of this paper appeared in FOCS 2015 \cite{BunNSV15-FOCS} and was posted as  arXiv:1504.07553 [cs.CR].  This version corrects an error in the proof and statement of Theorem~\ref{thm:sanitization-vs-range}.}\stoctext{\\(Extended Abstract)}}
\author{	Mark Bun\thanks{School of Engineering \& Applied Sciences, Harvard University. \texttt{mbun@seas.harvard.edu}, \texttt{http://seas.harvard.edu/\textasciitilde mbun}. Supported by an NDSEG fellowship and NSF grant CNS-1237235.}\qquad
		Kobbi Nissim\thanks{Department of Computer Science, Georgetown University. \texttt{kobbi.nissim@georgetown.edu}. Work done when K.N.\ was visiting the Center for Research on Computation \& Society, Harvard University. Supported by NSF grant CNS-1237235, a gift from Google, Inc., and a Simons Investigator grant.} \qquad
		Uri Stemmer\thanks{Tel Aviv University. \texttt{u@uri.co.il}. Work done while U.S.\ was at Ben-Gurion University, supported by the Ministry of Science and Technology (Israel), by the Check Point Institute for Information Security, by the IBM PhD Fellowship Awards Program, and by the Frankel Center for Computer Science.} \qquad
		Salil Vadhan\thanks{Center for Research on Computation \& Society, School of Engineering \& Applied Sciences, Harvard University. \texttt{salil@seas.harvard.edu}, \texttt{http://seas.harvard.edu/\textasciitilde salil}.  Supported by NSF grant CNS-1237235, a gift from Google, Inc., and a Simons Investigator grant.}
}
\date{December 19, 2024}
\begin{document}

\maketitle

\begin{abstract}
We prove new upper and lower bounds on the sample complexity of $(\eps, \delta)$ differentially private algorithms for releasing approximate answers to threshold functions. A threshold function $c_x$ over a totally ordered domain $X$ evaluates to $c_x(y) = 1$ if $y \le x$, and evaluates to $0$ otherwise. We give the first nontrivial lower bound for releasing thresholds with $(\eps,\delta)$ differential privacy, showing that the task is impossible over an infinite domain $X$, and moreover requires sample complexity $n \ge \Omega(\log^*|X|)$, which grows with the size of the domain. Inspired by the techniques used to prove this lower bound, we give an algorithm for releasing thresholds with $n \le 2^{(1+ o(1))\log^*|X|}$ samples. This improves the previous best upper bound of $8^{(1 + o(1))\log^*|X|}$ (Beimel et al., RANDOM '13).

Our sample complexity upper and lower bounds also apply to the tasks of learning distributions with respect to Kolmogorov distance and of properly PAC learning thresholds with differential privacy. The lower bound gives the first separation between the sample complexity of properly learning a concept class with $(\eps,\delta)$ differential privacy and learning without privacy. For properly learning thresholds in $\ell$ dimensions, this lower bound extends to $n \ge \Omega(\ell \cdot \log^*|X|)$.

To obtain our results, we give reductions in both directions from releasing and properly learning thresholds and the simpler \emph{interior point problem}. Given a database $D$ of elements from $X$, the interior point problem asks for an element between the smallest and largest elements in $D$. We introduce new recursive constructions for bounding the sample complexity of the interior point problem, as well as further reductions and techniques for proving impossibility results for other basic problems in differential privacy.
\end{abstract}
\thispagestyle{empty}

\vfill

\noindent \textbf{Keywords}: differential privacy, PAC learning, lower bounds, threshold functions, fingerprinting codes

\newpage

\setcounter{page}{1}

\section{Introduction}

The line of work on {\em differential privacy}~\cite{DworkMcNiSm06} is aimed at
enabling useful statistical analyses on privacy-sensitive data while providing strong privacy protections for individual-level information.  Privacy is achieved in differentially private algorithms through randomization and the introduction of ``noise'' to obscure the effect of each individual, and thus differentially private algorithms can be less accurate than their non-private analogues.  Nevertheless,
by now a rich literature has shown that many data analysis tasks of interest are compatible with differential privacy, and generally the loss in accuracy vanishes as the number $n$ of individuals tends to infinity.  However, in many cases, there is still is a price of privacy hidden in these asymptotics --- in the rate at which the loss in accuracy vanishes, and in how large $n$ needs to be to start getting accurate results at all (the ``sample complexity'').

In this paper, we consider the price of privacy for three very basic types of computations involving threshold functions: query release, distribution learning with respect to Kolmogorov distance, and (proper) PAC learning.  In all cases, we show for the first time that accomplishing these tasks with differential privacy is {\em impossible} when the data universe is infinite (e.g. $\N$ or $[0,1]$) and in fact that the sample complexity must grow with the size $|X|$ of the data universe: $n=\Omega(\log^*|X|)$, 
which is tantalizingly close to the previous upper bound of $n=2^{O(\log^*|X|)}$~\cite{BeimelNiSt13b}.  We also provide simpler and somewhat improved upper bounds for these problems, reductions between these problems and other natural problems, as well as additional techniques that allow us to prove
impossibility results for infinite domains even when the sample complexity does not need to grow with the domain size (e.g. for PAC learning of point functions with ``pure'' differential privacy).

\subsection{Differential Privacy}

We recall the definition of differential privacy.  We think of a dataset as consisting of $n$ rows from a data universe $X$, where each row corresponds to one individual.  Differential privacy requires that no individual's data has a significant effect on the distribution of what we output.

\begin{definition}
A randomized algorithm $M : X^n\rightarrow Y$ is $(\eps,\delta)$ {\em differentially private} if for every two datasets $x,x'\in X^n$ that differ on one row, and every set $T\subseteq Y$, we have
$$\Pr[M(x)\in T]\leq e^{\eps}\cdot \Pr[M(x')\in T]+\delta.$$
\end{definition}

The original definition from \cite{DworkMcNiSm06} had $\delta=0$, and is sometimes referred to as {\em pure} differential privacy.  However, a number of subsequent works have shown that allowing a small (but negligible) value of $\delta$, referred to as {\em approximate differential privacy}, can provide substantial gains over  pure differential privacy~\cite{DworkLe09,HardtTa10,DworkRoVa10,De12,BeimelNiSt13b}.

The common setting of parameters is to take $\eps$ to be a small constant and $\delta$ to be negligible in $n$ (or a given security parameter).  To simplify the exposition, we fix $\eps=0.1$ and $\delta=1/n^{\log n}$ throughout the introduction (but precise dependencies on these parameters are given in the main body).

\subsection{Private Query Release}

Given a set $Q$ of queries $q : X^n\rightarrow \R$, the {\em query release} problem for $Q$ is to output accurate answers  to all queries in $Q$.  That is, we want a differentially private algorithm $M : X^n\rightarrow \R^{|Q|}$ such that for every dataset $D\in X^n$, with high probability over $y\leftarrow M(D)$, we have $|y_q-q(D)|\leq \alpha$ for all $q\in Q$, for an error parameter $\alpha$.

A special case of interest is the case where $Q$ consists of {\em counting queries}.  In this case, we are given a set $Q$ of predicates $q : X\rightarrow \{0,1\}$ on individual rows, and then extend them to databases by averaging. That is, $q(D) = (1/n)\sum_{i=1}^D q(D_i)$ counts the fraction of the population that satisfies predicate $q$.

The query release problem for counting queries is one of the most widely studied problems in differential privacy.  Early work on differential privacy implies that for every family of counting queries $Q$, the query release problem for $Q$ has sample complexity at most $\tO(\sqrt{|Q|})$ \cite{DinurNi03, DworkNi04, BlumDwMcNi05, DworkMcNiSm06}. That is, there is an $n_0=\tO(\sqrt{|Q|})$ such that for all $n\geq n_0$,
there is a differentially private mechanism
$M : X^n\rightarrow \R^{Q}$ that solves the query release problem for $Q$ with
error at most $\alpha=0.01$.
(Again, we treat $\alpha$ as a small constant to avoid an extra parameter in the introduction.)

Remarkably, Blum, Ligett, and Roth~\cite{BlumLiRo08} showed that if the data universe $X$ is finite, then the sample complexity grows much more slowly with $|Q|$ --- indeed the query release problem for $Q$ has sample complexity at most $O((\log |X|)(\log |Q|))$.  Hardt and Rothblum~\cite{HardtRo10} improved this bound to $\tO(\log |Q|\cdot \sqrt{\log |X|})$, which was recently shown to be optimal for
some families $Q$~\cite{BUV14}.

However, for specific query families of interest, the sample complexity can be significantly smaller.  In particular, consider the family of {\em point functions} over $X$, which is the family $\{q_x\}_{x\in X}$ where $q_x(y)$ is 1 iff $y=x$, and the family of {\em threshold functions} over $X$, where $q_x(y)$ is 1 iff $y\leq x$ (where $X$ is a totally ordered set).  The query release problems for these families correspond to the very natural tasks of producing $\ell_\infty$ approximations to the histogram and to the cumulative distribution function of the empirical data distribution, respectively.
For point functions, Beimel, Nissim, and Stemmer~\cite{BeimelNiSt13b} showed that the sample complexity has no dependence on $|X|$ (or $|Q|$, since
$|Q|=|X|$ for these families).  In the case of threshold functions, they showed that it has at most a very mild dependence on $|X|=|Q|$, namely $2^{O(\log^* |X|)}$.

Thus, the following basic questions remained open:
Are there differentially private algorithms for releasing threshold functions over an infinite data universe (such as $\N$ or $[0,1]$)?
If not, does the sample complexity for releasing threshold functions grow with the size $|X|$ of the data universe?

We resolve these questions:
\begin{theorem}
The sample complexity  of releasing threshold functions over a data universe $X$ with differential privacy is at least $\Omega(\log^* |X|)$.  In particular, there is no differentially private algorithm for releasing threshold functions over an infinite data universe.
\end{theorem}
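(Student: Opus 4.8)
The plan is to reduce the release problem for threshold functions to a simpler combinatorial task, the \emph{interior point problem}, prove that this task requires $\Omega(\log^* T)$ samples over a totally ordered domain of size $T$, and transfer the bound back.

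\emph{Step 1 (reduction to the interior point problem).} Suppose $M$ releases thresholds over $X$ with $n$ samples, error $\alpha < 1/2$, and failure probability $\beta$. On a dataset $D$, write $\{\hat q_x\}_{x \in X}$ for the released answers and let $x^* := \min\{x \in X : \hat q_x \ge 1/2\}$. When $M$ succeeds, $\hat q_x \le \alpha < 1/2$ for every $x$ strictly below $\min D$, while $\hat q_{\max D} \ge 1 - \alpha > 1/2$; hence $x^*$ is well defined and lies in $[\min D, \max D]$, i.e.\ it is an interior point of $D$. Since $x^*$ is a post-processing of $M(D)$, differential privacy is preserved, so every $(\eps,\delta)$-differentially private threshold-release algorithm using $n$ samples yields an $(\eps,\delta)$-differentially private algorithm that, with probability $\ge 1-\beta$, returns an interior point of its $n$-element input. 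For finite $X$ the domain is itself a totally ordered set of size $|X|$, so this gives $n \ge \Omega(\log^*|X|)$; for infinite $X$, any $T$ elements of $X$ form a chain isomorphic to $[T]$, and rounding the output of an interior-point algorithm for $X$ onto that chain yields an interior-point algorithm for $[T]$, so $n \ge \Omega(\log^* T)$ for every $T$, forcing $n = \infty$.

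\emph{Step 2 (recursive lower bound for the interior point problem).} I would prove that an $(\eps,\delta)$-differentially private algorithm solving the interior point problem on $[T]$ with bounded failure probability needs $n \ge \Omega(\log^* T)$ samples, by a recursion that spends one sample to shrink the domain from $T$ to roughly $\log T$. The core is a reduction that, from an $(\eps,\delta)$-private, $\beta$-error, $n$-sample algorithm for the interior point problem on $[2^m]$, produces an $(\eps',\delta')$-private, $\beta'$-error, $(n-1)$-sample algorithm for the interior point problem on $[m]$, with $(\eps,\delta,\beta)$ degrading only mildly to $(\eps',\delta',\beta')$. The combinatorial ingredient is a pair of order-preserving maps: embed a level $j \in [m]$ into $[2^m]$ as the number whose binary expansion begins with $j$ ones followed by zeros, and project a number $w \in [2^m]$ to the length of the initial run of ones in its binary expansion; both maps are monotone, and an interior point of an embedded dataset projects to an interior point of the original dataset. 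Running the given algorithm on the embedded (and suitably padded) dataset and projecting its answer therefore solves the $[m]$-instance. Iterating this step $\Theta(\log^* T)$ times drives the domain down to size $2$ while decrementing the sample count each time, and the base case --- no single-sample $(\eps,\delta)$-private algorithm can output an interior point of a two-element dataset with small failure probability, since the datasets $(1)$ and $(2)$ differ in one row yet demand almost disjoint output distributions --- yields the contradiction.

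\emph{The main obstacle.} The hard part is not the combinatorics but controlling the privacy and accuracy parameters over the $\Theta(\log^* T)$ levels of the recursion. Invoking an $n$-sample algorithm on an $(n-1)$-element dataset forces us to add a row, and the obvious choice --- duplicating an existing row --- can double the Hamming distance between neighboring datasets, so $\eps$ gets multiplied by a constant at every level and blows up to $2^{\Theta(\log^* T)} \cdot \eps$, far too large for the base-case impossibility to apply. Making the recursion go through therefore demands a reduction whose per-level privacy loss is genuinely controlled --- for instance by keeping the recursive instances \emph{pure} differentially private so that losses add rather than multiply, or by a more careful matching of sample counts that avoids inflating the effective distance between neighbors --- together with an honest accounting of how small $\delta$ and $\beta$ must be at the outset (an inverse polynomial, and a suitably small function of $T$, respectively) for the base case still to bite once the parameters have degraded through all the iterations. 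This same loss --- one sample per level of the iterated logarithm --- is what leaves an exponential gap between the resulting bound $\Omega(\log^*|X|)$ and the known $2^{O(\log^*|X|)}$ upper bound, rather than a matching one.
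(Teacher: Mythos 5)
Your Step~1 (reducing release of thresholds to the interior point problem by thresholding the released CDF at $1/2$ after padding with $\min X$ and $\max X$) is correct and is essentially identical to item~1 of the paper's Theorem~\ref{thm:sanitization-vs-range}. Your Step~2 correctly identifies the target (a $\log^*$-depth recursion, reducing the domain from $T$ to roughly $\log T$ per sample) and the main technical danger (parameter degradation across $\Theta(\log^*T)$ levels), but the specific reduction you propose does not work and is not the one the paper uses.

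The problem with your deterministic embedding: mapping $j\in[m]$ to $1^j0^{m-j}\in[2^m]$ and projecting back by the length of the initial run of ones is a fine \emph{worst-case} reduction, but it preserves the sample count rather than decreasing it. An $n$-sample mechanism for $[2^m]$ run on the embedded chain gives you an $n$-sample mechanism for $[m]$ --- nothing more, because restricting a domain is always free. To lose a sample you must pad, and you correctly observe that padding (duplicating a row) doubles the Hamming distance between neighboring inputs, blowing $\eps$ up to $2^{\Theta(\log^*T)}\eps$. But this is not a surmountable nuisance; it reflects an information-theoretic fact: one extra row over $[m]$ carries only $\log m$ bits, yet your recursion asks it to ``pay for'' an exponentially larger universe. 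No worst-case, deterministic, order-preserving reduction can close that gap. The paper sidesteps this entirely by constructing a \emph{distribution} $\cD_n$ over databases rather than a reduction between worst-case instances (Lemma~\ref{lem:range-lb-dist}). The extra element $y_0$ is a \emph{fresh uniformly random} string, the other $y_i$ agree with $y_0$ on the first $x_i$ base-$b(n)$ digits and are random thereafter, and --- crucially --- changing a single $x_i$ changes a single $y_i$, so the reduction from $\cD_{n+1}$ to $\cD_n$ preserves $(\eps,\delta)$-DP \emph{exactly}, with no multiplicative loss. The privacy of the mechanism's output \emph{with respect to $y_0$'s digit at position $\max_i x_i+1$} is then what bounds the output from above (the mechanism cannot guess a fresh random digit much better than $e^\eps/b(n)+\delta$), and the interior-point constraint on $\hat D$ itself bounds it from below. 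This use of randomness in $y_0$, and of privacy against a \emph{coordinate of a single row}, is the missing idea; without it, you cannot stop $\eps$ from compounding, which is exactly the obstacle you describe but do not resolve.
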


In addition, inspired by the ideas in our lower bound, we present a simplification of the algorithm of \cite{BeimelNiSt13b} and improve the sample complexity to $2^{(1+o(1))\log^* |X|}$ (from roughly $8^{\log^* |X|}$).  Closing the gap between the lower bound of $\approx \log^* |X|$ and the upper bound of $\approx 2^{\log^* |X|}$ remains an intriguing open problem.

We remark that in the case of pure differential privacy ($\delta=0$), a sample complexity lower bound of $n=\Omega(\log |X|)$ follows from a standard ``packing argument''~\cite{HardtTa10,BeimelKaNi10}.  For point functions, this is matched by the standard Laplace mechanism~\cite{DworkMcNiSm06}.  For threshold functions, a matching upper bound was recently obtained~\cite{ReingoldRo14}, building on a construction of \cite{DworkNaPiRo10}.  We note that these algorithms have a slightly better dependence on the accuracy parameter $\alpha$ than our algorithm (linear rather than nearly linear in $1/\alpha$). In general, while packing arguments often yield tight lower bounds for pure differential privacy, they fail badly for approximate differential privacy, for which much less is known.

There is also a beautiful line of work on characterizing the $\ell_2$-accuracy achievable for query release in terms of other measures of the ``complexity'' of the family $Q$ (such as ``hereditary discrepancy'')~\cite{HardtTa10,BhaskaraDaKrTa12,MuthukrishnanNi12,NikolovTaZh13}.  However, the characterizations given in these works are tight only up to factors of
$\poly(\log |X|,\log |Q|)$ and thus do not give good estimates of the sample complexity (which is at most $(\log |X|)(\log |Q|)$ even for pure differential privacy, as mentioned above).

\subsection{Private Distribution Learning}

A fundamental problem in statistics is \emph{distribution learning}, which is the task of learning an unknown distribution $\cD$ given i.i.d.\ samples from it. The query release problem for threshold functions is closely related to the problem of learning an arbitrary distribution $\cD$ on $\R$ up to small error in Kolmogorov (or CDF) distance: Given $n$ i.i.d.\ samples $x_i \getsr \cD$, the goal of a distribution learner is to produce a CDF $F : X \to [0, 1]$ such that $|F(x) - F_{\cD}(x)| \le \alpha$ for all $x \in X$, where $\alpha$ is an accuracy parameter. While closeness in Kolmogorov distance is a relatively weak measure of closeness for distributions, under various structural assumptions (e.g. the two distributions have probability mass functions that cross in a constant number of locations), it implies closeness in the much stronger notion of total variation distance. Other works have developed additional techniques that use weak hypotheses learned under Kolmogorov distance to test and learn distributions under total variation distance (e.g. \cite{DaskalakisDiSeVaVa13, DaskalakisDiSe14, DaskalakisKa14}).

The well-known Dvoretzky-Kiefer-Wolfowitz inequality \cite{DvoretzkyKiWo56} implies that without privacy, any distribution over $X$ can be learned to within constant error with $O(1)$ samples. On the other hand, we show that with approximate differential privacy, the task of releasing thresholds is essentially equivalent to distribution learning. As a consequence, with approximate differential privacy, distribution learning instead requires sample complexity that grows with the size of the domain.
\begin{theorem} \label{thm:dist-informal}
The sample complexity of learning arbitrary distributions on a domain $X$ with differential privacy is at least $\Omega(\log^*|X|)$.
\end{theorem}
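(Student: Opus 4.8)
The plan is to reduce the problem of privately releasing threshold functions to that of privately learning arbitrary distributions, and then invoke the $\Omega(\log^*|X|)$ lower bound for releasing thresholds established above. Suppose $L$ is an $(\eps,\delta)$-differentially private distribution learner over $X$ with sample complexity $m$ and some fixed constant accuracy $\alpha<1/2$: for every distribution $\cD$ on $X$, given $m$ i.i.d.\ samples from $\cD$, with probability at least $1-\beta$ the output $F\colon X\to[0,1]$ satisfies $|F(y)-F_{\cD}(y)|\le\alpha$ for all $y\in X$. I will build from $L$ an algorithm $M$ that, on a database $D\in X^n$, releases approximate answers to all threshold queries $c_x$.

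The construction is: on input $D=(D_1,\dots,D_n)$, sample indices $i_1,\dots,i_m$ independently and uniformly from $[n]$ and output $F \gets L(D_{i_1},\dots,D_{i_m})$. Conditioned on $D$, the tuple $(D_{i_1},\dots,D_{i_m})$ is exactly $m$ i.i.d.\ samples from the empirical distribution $\widehat{D}$ (the uniform distribution over the multiset $\{D_1,\dots,D_n\}$). Hence, by the accuracy guarantee of $L$ applied to $\cD=\widehat{D}$, with probability at least $1-\beta$ the output $F$ satisfies $|F(x)-F_{\widehat{D}}(x)|\le\alpha$ for every $x$. Since $F_{\widehat{D}}(x)=\frac1n\sum_{i=1}^n\mathbf{1}[D_i\le x]=c_x(D)$ is exactly the value of the threshold query $c_x$ on $D$, the algorithm $M$ solves the threshold release problem on databases of size $n$ with accuracy $\alpha$ and failure probability $\beta$.

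The main work is the privacy analysis of $M$. Since $M$ is a post-processing of $L$ run on a random subsample of $D$, privacy amplification by subsampling gives that $M$ is $(\eps',\delta')$-differentially private with $\eps'=O\bigl(\tfrac{m}{n}\eps\bigr)$ and $\delta'=O\bigl(\tfrac{m}{n}\delta\bigr)$. The one subtlety is that we are sampling \emph{with} replacement, so changing a single row of $D$ can change several of the $m$ subsampled rows; one therefore needs a subsampling lemma that accounts for how many times a given row is selected (rather than a naive one-row-to-one-row bound, which would lose a spurious factor of $m$). Taking $n=\Theta(m)$ with a suitable constant then makes $M$ an $(\eps,\delta)$-differentially private threshold-release algorithm over $X$ on databases of size $n=\Theta(m)$ (with the corresponding constant accuracy); in particular, if $L$ is $(0.1,\mathrm{negl})$-DP then so is $M$, and $\delta'$ is negligible in $n$ since it is negligible in $m=\Theta(n)$.

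Finally, applying the lower bound for releasing threshold functions to $M$ gives $n=\Omega(\log^*|X|)$, and hence $m=\Omega(\log^*|X|)$, which is the claimed bound; in particular, no finite sample size suffices over an infinite domain. The step I expect to require the most care is exactly the privacy amplification under with-replacement subsampling, together with the bookkeeping of how $\eps$, $\delta$, $\alpha$, and $\beta$ propagate through the reduction; the accuracy claim and the concluding invocation are routine. (The reverse reduction, from distribution learning to threshold release --- feed the learner's i.i.d.\ samples to a threshold-release algorithm and pass from the empirical CDF to the true CDF via the Dvoretzky--Kiefer--Wolfowitz inequality --- is what makes the two problems essentially equivalent, but is not needed for the lower bound itself.)
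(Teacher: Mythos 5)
Your proposal is correct and is essentially the paper's own proof: the paper also derives Theorem~\ref{thm:dist-informal} by reducing threshold query release to distribution learning via subsampling $m$ rows with replacement from the size-$n$ database and invoking the learner on the empirical distribution (Lemma~\ref{lem:nonprivateSubsample}), with privacy guaranteed by exactly the multiplicity-aware ``secrecy-of-the-sample'' lemma you flag as the delicate step (Lemma~\ref{lem:secrecy-of-the-sample}, which sets $n=9m$ so that the with-replacement subsampling yields $(\eps',\delta')$ with $\eps'\le\eps$ and $\delta'\le\delta$). The only cosmetic difference is that the paper then composes this with its threshold-release-to-interior-point reduction (Theorem~\ref{thm:sanitization-vs-range}) and the interior point lower bound (Theorem~\ref{thm:range-lb}) rather than citing a threshold-release lower bound directly, but this is the same chain of reasoning.
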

We prove Theorem \ref{thm:dist-informal} by showing that the problem of distribution learning with respect to Kolmogorov distance with differential privacy is essentially equivalent to query release for threshold functions. Indeed, query release of threshold functions amounts to approximating the \emph{empirical distribution} of a dataset with respect to Kolmogorov distance. Approximating the empirical distribution is of course trivial without privacy (since we are given it as input), but with privacy, it turns out to have essentially the same sample complexity as the usual distribution learning problem from i.i.d.\ samples.  More generally, query release for a family $Q$ of counting queries is equivalent to distribution learning with respect to the distance measure $d_{Q}(\cD, \cD') = \sup_{q \in Q} |\E[q(\cD)] - \E[q(\cD')]|$.

\subsection{Private PAC Learning}

Kasiviswanathan et al.~\cite{KasiviswanathanLeNiRaSm07} defined {\em private PAC learning} as a combination of probably approximately correct (PAC) learning~\cite{Valiant84} and differential privacy.
Recall that a PAC learning algorithm takes some $n$ labeled examples
$(x_i,c(x_i))\in X\times \zo$ where the $x_i$'s are i.i.d.\ samples of an arbitrary and unknown distribution on a data universe $X$ and $c : X\rightarrow \zo$ is an unknown {\em concept} from some concept class $C$.  The goal of the learning algorithm is
to output a {\em hypothesis} $h : X\rightarrow \zo$ that approximates $c$ well on the unknown distribution.   We are interested in PAC learning algorithms
$L : (X\times \zo)^n\rightarrow H$ that are also differentially private.  Here
$H$ is the {\em hypothesis class}; if $H \subseteq C$, then $L$ is called a {\em proper} learner.

As with query release and distribution learning, a natural problem is to characterize the {\em sample complexity} --- the minimum number $n$ of samples in order to achieve differentially private PAC learning for a given concept class $C$.
Without privacy,  it is well-known that
the sample complexity of (proper) PAC learning is proportional to the Vapnik--Chervonenkis (VC) dimension of the class $C$~\cite{VC,BlumerEhHaWa89,EHKV}.  In the initial work on differentially private learning, Kasiviswanathan et al.~\cite{KasiviswanathanLeNiRaSm07} showed that $O(\log|C|)$ labeled examples suffice for privately learning any concept class $C$.\footnote{As with the query release discussion, we omit the dependency on all parameters except for $|C|$, $|X|$ and $\VC(C)$.} The VC dimension of a concept class $C$ is always at most $\log|C|$, but is significantly lower for many interesting classes. Hence, the results of~\cite{KasiviswanathanLeNiRaSm07} left open the possibility that the sample complexity of private learning may be significantly higher than that of non-private learning.

In the case of {\em pure} differential privacy ($\delta=0$), this gap in the sample complexity was shown to be unavoidable in general.
Beimel, Kasiviswanathan, and Nissim~\cite{BeimelKaNi10}  considered the concept class $C$ of point functions over a data universe $X$, which have VC dimension 1 and hence can be (properly) learned without privacy with $O(1)$ samples. In contrast, they showed that proper PAC learning with pure differential privacy requires sample complexity $\Omega(\log |X|)=\Omega(\log |C|)$.
Feldman and Xiao~\cite{FX14} showed a similar separation even for improper learning --- the class $C$ of threshold functions over $X$ also has VC dimension 1, but PAC learning with pure differential privacy requires sample complexity $\Omega(\log |X|)=\Omega(\log |C|)$.

For approximate differential privacy ($\delta>0$), however, it was still open whether there is an asymptotic gap between the sample complexity of private learning and non-private learning.  Indeed, Beimel et al.~\cite{BeimelNiSt13b} showed that point functions
can be properly learned with approximate differential privacy using $O(1)$ samples (i.e. with no dependence on $|X|$).  For threshold functions, they exhibited a proper learner with sample complexity $2^{O(\log^* |X|)}$, but it was conceivable that the sample complexity could also be reduced to $O(1)$.

We prove that the sample complexity of proper PAC learning with approximate differential privacy can be asymptotically larger than the VC dimension:

\begin{theorem}
The sample complexity of properly learning threshold functions over a data universe $X$ with differential privacy is at least $\Omega(\log^* |X|)$.
\end{theorem}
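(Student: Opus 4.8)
The plan is to reduce the \emph{interior point problem} to proper PAC learning of threshold functions, and then invoke the $\Omega(\log^*|X|)$ lower bound on the sample complexity of the interior point problem established by the recursive construction (the same bound that, via an easy reduction, also yields the threshold‑release lower bound). The key observation is that a \emph{proper} learner for thresholds outputs a single point $t\in X$ --- namely the hypothesis $c_t$ --- which is precisely the type of object the interior point problem asks for; moreover, if the learner is fed data labeled by a threshold sitting in the ``middle'' of the data, then any hypothesis of small error is forced to land between the smallest and largest data points.

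Concretely, suppose $L : (X \times \zo)^n \to \{c_t : t \in X\}$ is an $(\eps,\delta)$-differentially private proper learner for thresholds with constant accuracy $\alpha < 1/4$ and confidence $\beta < 1/4$. I would work with the distributional form of the interior point lower bound: there is a distribution $\cD$ on $X$, with a constant fraction of its mass strictly on each side of some point $t^\star$, such that no $(\eps,\delta)$-DP algorithm can, on input $D \sim \cD^n$, return an element of $[\min D, \max D]$ with probability at least $1/2$ unless $n = \Omega(\log^*|X|)$. From $L$, build such an algorithm: on input $D = (y_1,\dots,y_n)$, attach to each $y_i$ the label $\1[y_i \le t^\star]$ --- crucially, $t^\star$ is a point fixed in advance by the reduction, \emph{not} a function of $D$ --- run $L$ on the resulting labeled sample, and return the point defining its hypothesis. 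Privacy is immediate: because $t^\star$ is fixed, two neighboring databases produce labeled samples differing in exactly one row, so the reduction is $(\eps,\delta)$-DP. For accuracy, when $D \sim \cD^n$ the labeled sample is genuinely i.i.d.\ and realized by $c_{t^\star}$, so with probability at least $1 - \beta$ the returned point $t$ satisfies $\cD\big(\{x : \min(t,t^\star) < x \le \max(t,t^\star)\}\big) \le \alpha$; the mass condition on $\cD$ then gives $\cD(\{x \le t\}) > 0$ and $\cD(\{x > t\}) > 0$, and a union bound over the two sides shows $t \in [\min D, \max D]$ except with probability exponentially small in $n$. Hence for $n$ above an absolute constant the reduction solves the interior point problem with probability at least $1/2$, forcing $n = \Omega(\log^*|X|)$.

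The step I expect to be the main obstacle is reconciling the i.i.d.\ input demanded by a PAC learner with the interior point problem, whose hardness is most naturally stated for worst-case databases: the clean reduction above needs the interior point lower bound precisely in the distributional (product‑measure) form above. If the recursive construction instead yields a hard distribution over databases that is not a product, one must first pass to an i.i.d.\ sample by subsampling $n$ elements without replacement from a larger database of size $N$, incurring statistical error about $\binom{n}{2}/N$ between the subsample and true i.i.d.\ draws; taking $N$ polynomially large in $n$ would only give $n = \Omega(\poly(\log^*|X|))$, which suffices for impossibility over infinite domains but falls short of the stated bound. Getting the tight $\Omega(\log^*|X|)$ thus requires either establishing the recursive interior point lower bound directly in the i.i.d./product form, or a coupling‑based privacy argument for the subsampled reduction that already works with $N = O(n)$ (so that a single‑row change in the large database perturbs only $O(1)$ rows of the subsample in expectation and group privacy costs only an $O(1)$ factor); I would pursue the former. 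Finally, the $\ell$-dimensional strengthening $n = \Omega(\ell \log^*|X|)$ would follow by composing $\ell$ independent coordinate‑wise copies of this construction, together with a direct‑sum argument showing that the $\ell$ sub‑problems cannot be solved with a shared sample.
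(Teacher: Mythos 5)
Your high-level plan --- reduce the interior point problem to proper PAC learning of thresholds, then invoke the $\Omega(\log^*|X|)$ lower bound for interior point --- is the right one and is also what the paper does (Theorem~\ref{thm:learning-vs-range}). But the specific path you commit to has a genuine gap, and you explicitly decline to pursue the path that actually works.

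The gap is at the heart of your reduction: you rely on a ``distributional form'' of the interior point lower bound, i.e.\ a \emph{fixed} distribution $\cD$ on $X$ such that $D \sim \cD^n$ defeats every private interior point algorithm unless $n = \Omega(\log^*|X|)$. No such $\cD$ exists. The paper points this out explicitly: for any fixed distribution $\cD$ there is a trivially private algorithm that ignores the input and always outputs a fixed $z$ with $\Pr_{x\sim\cD}[x\ge z]=1/2$, which is an interior point with high probability. The hardness in Lemma~\ref{lem:range-lb-dist} is furnished by a \emph{correlated} distribution over databases (each $y_i$ shares a random-length prefix with $y_0$), and this correlation is essential; so your preferred option --- ``establishing the recursive interior point lower bound directly in the i.i.d./product form'' --- cannot succeed, not merely for technical reasons but because the statement you would need is false.

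The alternative you mention only as a fallback is in fact the paper's route, and it is cleaner than you fear. The paper first reduces interior point to a worst-case \emph{empirical} learning task (Lemma~\ref{lem:range-learning-reduction}): given a database $D'$, label the smaller half $1$ and the larger half $0$, pad with labeled $\min X$ and $\max X$ entries, and run the proper learner; an $\alpha$-consistent threshold is forced to be an interior point, and a one-row change of $D'$ changes at most two labeled rows, costing only a factor $2$ in $(\eps,\delta)$. It then shows that any private PAC learner with sample size $m$ gives a private empirical learner with sample size $n = 9m$ by subsampling $m$ rows \emph{with replacement} and running the PAC learner. Sampling with replacement is literally drawing $m$ i.i.d.\ samples from the uniform distribution on the rows, so there is no $\binom{n}{2}/N$ statistical error to control --- that concern only applies to sampling without replacement, which is not needed. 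The real work is the privacy of the subsampled mechanism, which is exactly the coupling-style argument you wave at: the paper's secrecy-of-the-sample lemma (Lemma~\ref{lem:secrecy-of-the-sample}) shows the extra group-privacy loss from a user being drawn with multiplicity $k\ge 2$ is offset by the amplification from $k=0$ being the common case, and the two effects balance already at $n=O(m)$. So the ``fallback'' is the argument, and it needs to be carried out; the product-distribution shortcut is a dead end.
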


This lower bound extends to the concept class of $\ell$-dimensional thresholds. An $\ell$-dimensional threshold function, defined over the domain $X^\ell$, is a conjunction of $\ell$ threshold functions, each defined on one component of the domain. This shows that our separation between the sample complexity of private and non-private learning applies to concept classes of every VC dimension.

\begin{theorem}
For every finite, totally ordered $X$ and $\ell\in \N$, the sample complexity of properly learning the class $C$ of $\ell$-dimensional threshold functions on $X^\ell$ with differential privacy is at least
$\Omega(\ell\cdot \log^*|X|) = \Omega(\VC(C)\cdot \log^* |X|)$.
\end{theorem}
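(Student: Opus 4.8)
The plan is to deduce the $\ell$-dimensional lower bound from the one-dimensional one (the bound of $\Omega(\log^*|X|)$ for privately and properly learning thresholds on $X$, which we take as given) by a direct-sum style reduction. Concretely, we will show that an $(\eps,\delta)$-differentially private proper learner for the class $C$ of $\ell$-dimensional thresholds on $X^\ell$ that uses $n$ samples can be converted into an $(\eps,\delta)$-differentially private proper learner for one-dimensional thresholds on $X$ that uses only $m=(1+o(1))\cdot n/\ell$ samples and has only mildly worse accuracy and confidence. The one-dimensional lower bound then forces $m=\Omega(\log^*|X|)$, hence $n=\Omega(\ell\cdot\log^*|X|)$. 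Finally, since the $\ell$ points of $X^\ell$ that equal $x_{\min}$ in every coordinate but one (where they equal the second-smallest element of $X$) are shattered by $C$, we have $\VC(C)=\Theta(\ell)$, so this is the same as $\Omega(\VC(C)\cdot\log^*|X|)$.

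The reduction embeds $\ell$ independent one-dimensional tasks, one per coordinate, into a single $\ell$-dimensional task, using the key idea that the resulting $\ell$-dimensional data distribution is a \emph{mixture} of $\ell$ ``blocks'' — so that, out of $n$ i.i.d.\ samples, only about $n/\ell$ carry information about any fixed coordinate's target, and each coordinate's subproblem is effectively ``diluted'' by a factor of $\ell$. Write $x_{\min}$ for the least element of $X$. For block $i$, we take the hard one-dimensional distribution on coordinate $i$ and fix every other coordinate to $x_{\min}$; labels are preserved because $x_{\min}\le t$ for every threshold $t\in X$, so an $\ell$-dimensional threshold $c_{\vec t}$ with threshold vector $\vec t=(t_1,\dots,t_\ell)$ labels a block-$i$ point exactly as the one-dimensional threshold $c_{t_i}$ labels its $i$-th coordinate. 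Given a (publicly known) hard one-dimensional learning task — which we may assume, after mixing its data distribution with a point mass at $x_{\min}$, puts at least half of its mass on the label $1$ — the reduction receives $m$ i.i.d.\ labeled samples for it, chooses a uniformly random coordinate $j\getsr[\ell]$, places these samples in block $j$, and instantiates fresh independent copies of the same hard task (with freshly drawn random targets) in each of the other $\ell-1$ blocks, which it can do since the task is public; it feeds the resulting $\ell$-dimensional sample to the assumed learner and outputs the $j$-th coordinate $c_{s_j}$ of the returned hypothesis $c_{\vec s}$.

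The heart of the argument is that accuracy in $\ell$ dimensions transfers to per-coordinate accuracy in one dimension. First, if $s_{j'}$ lies below $x_{\min}$ for some coordinate $j'$ (possible only if the threshold class contains the all-zeros threshold), then $c_{\vec s}\equiv 0$ on the entire support of the mixture, so its error is $\Pr[c_{\vec t}=1]=\frac1\ell\sum_i\Pr[\text{block-}i\text{ label}=1]\ge 1/2$ — too large — so we may assume $s_{j'}\ge x_{\min}$ for all $j'$. Then on block $i$ the hypothesis $c_{\vec s}$ labels points exactly as the one-dimensional threshold $c_{s_i}$ labels their $i$-th coordinate, so the error of $c_{\vec s}$ on the mixture is exactly $\frac1\ell\sum_i e_i$, where $e_i$ is the one-dimensional error of $s_i$ against block $i$'s target. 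Since the assumed learner has error at most $\alpha$ with probability at least $1-\beta$, and since the distribution of the $\ell$-dimensional instance does not depend on $j$ while $j$ is uniform, $\E[e_j]=\E\!\big[\frac1\ell\sum_i e_i\big]\le\alpha+\beta$, so by Markov the output $c_{s_j}$ has one-dimensional error at most $10(\alpha+\beta)$ with probability at least $0.9$. For constants $\alpha,\beta$ small enough this is a valid one-dimensional proper learner in the PAC sense.

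It remains to check privacy and the sample count. Conditioned on the reduction's internal randomness (the choice of $j$, the assignment of the $n$ rows to blocks, the fresh instances and samples for the other blocks) — all independent of the input — each input example occupies at most one row of the $\ell$-dimensional database, so replacing one input example changes at most one row; composing with the $\ell$-dimensional learner and the post-processing $\vec s\mapsto s_j$ preserves $(\eps,\delta)$-differential privacy. To have the $\ell$-dimensional learner see exactly $n$ i.i.d.\ samples from the mixture, we let each of the $n$ rows independently pick its block uniformly, filling ``other-block'' rows with fresh samples and ``block-$j$'' rows with our input examples; the number of block-$j$ rows is $\mathrm{Binomial}(n,1/\ell)$, which is at most $n/\ell+O(\sqrt{(n/\ell)\log(1/\beta)})=(1+o(1))\,n/\ell$ with probability at least $1-\beta$, so it suffices to take $m=(1+o(1))\,n/\ell$. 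The main obstacle, and the step that needs the most care, is the accuracy-transfer argument of the previous paragraph: making the block/dominated-coordinate structure clean enough that $\ell$-dimensional accuracy provably implies one-dimensional accuracy on almost every coordinate — in particular, ensuring (via the at-least-half mass on label $1$) that a learner cannot ``cheat'' by corrupting an off-coordinate — and tracking the resulting accuracy and confidence parameters so that the one-dimensional lower bound indeed applies.
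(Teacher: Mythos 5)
Your proof is correct and reaches the paper's bound, but it takes a different route through the embedding argument than the paper does. Your core construction is the paper's: pick a random coordinate, place the hard one-dimensional examples there, fill every other coordinate with the universal ``always-1'' point $x_{\min}$ (the paper's $p_1$) so that the $\ell$-dimensional threshold $c_{\vec s}$ restricted to block $i$ reduces to the one-dimensional threshold $c_{s_i}$. Where you diverge is in how accuracy transfers. The paper first converts the PAC learner to an \emph{empirical} learner on worst-case databases (paying a constant factor via a secrecy-of-the-sample/subsampling argument), uses exactly $n$ labeled points per axis, and then argues: $\alpha$-consistency on the combined database forces all but a $\beta$-fraction of axes to have $\le (\alpha/\beta) n$ errors, and a uniformly random axis misses the bad set with probability $\ge 1-\beta$. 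You instead stay at the PAC/distributional level: you make the $\ell$-dimensional instance an exchangeable mixture of per-block tasks by assigning each of the $n$ rows to a uniformly random block (giving Binomial block sizes, which you control by concentration), observe that the error of $c_{\vec s}$ on the mixture is exactly $\frac{1}{\ell}\sum_i e_i$, invoke coordinate symmetry to get $\E[e_j]=\E[\frac{1}{\ell}\sum_i e_i]\le\alpha+\beta$, and finish by Markov. Your approach avoids the empirical-learning conversion and its sample overhead, while the paper's avoids the Binomial concentration argument and the exchangeability bookkeeping; both are sound. One remark: with the paper's concept class $\thresh_X^\ell$, every $s_i$ lies in $X$ and hence $s_i\ge x_{\min}$, so the case you guard against (``$s_{j'}$ below $x_{\min}$'', requiring mass $\ge 1/2$ on label $1$) cannot arise, and the mixing with a point mass at $x_{\min}$ is unnecessary; it is harmless but adds a step.
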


Based on these results, it would be interesting to fully characterize the difference between the sample complexity of proper non-private learners and of proper learners with (approximate) differential privacy. Furthermore, our results still leave open the possibility that {\em improper} PAC learning with (approximate) differential privacy has sample complexity $O(\VC(C))$.  We consider this to be an important question for future work.

We also present a new result on improper learning of point functions with pure differential privacy over infinite countable domains.  Beimel et al.~\cite{BeimelKaNi10,BeimelNiSt13a} showed that for finite data universes $X$, the sample complexity of improperly learning point functions with pure differential
privacy does not grow with $|X|$.  They also gave a mechanism for learning point functions over infinite domains (e.g. $X=\N$), but the outputs of their mechanism do not have a finite description length (and hence cannot be implemented by an algorithm).  We prove that this is inherent:

\begin{theorem} \label{thm:pure-point-lb}
For every infinite domain $X$, countable hypothesis space $H$, and $n\in \N$, there is no (even improper) PAC learner $L : (X\times \zo)^n\rightarrow
H$ for point functions over $X$ that satisfies pure differential privacy.
\end{theorem}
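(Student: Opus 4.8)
The plan is to argue by contradiction: suppose $L\colon(X\times\zo)^{n}\to H$ is a pure $\eps$-differentially private $(\alpha,\beta)$-PAC learner for the class of point functions $\{c_x\}_{x\in X}$ (where $c_x(y)=1$ iff $y=x$) over an infinite $X$, with $H$ countable and $\alpha,\beta$ small constants. I will lean on two structural features of \emph{pure} differential privacy. First, any two databases in $(X\times\zo)^{n}$ differ in at most $n$ rows, so for all inputs $D,D'$ and every $h\in H$ one has $\Pr[L(D)=h]\le e^{\eps n}\Pr[L(D')=h]$; in particular the support $H_0\subseteq H$ of $L(D)$ is the \emph{same} set for every $D$, and if $h$ is output with probability at most $\beta$ on one input then it is output with probability at most $e^{\eps n}\beta$ on every input. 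Second, the database $D_x=((x,1),\dots,(x,1))$ is (with probability one) the only sample for target $c_x$ under the distribution concentrated at $x$, so accuracy $\alpha<1$ forces $\Pr\bigl[L(D_x)\in\{h:h(x)=1\}\bigr]\ge 1-\beta$; symmetrically, $((x,0),\dots,(x,0))$ is a sample for $c_{x'}$ with $x'\ne x$, forcing $\Pr\bigl[L\in\{h:h(x)=0\}\bigr]\ge 1-\beta$ there.

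\smallskip
\noindent\emph{Warm-up: proper learners.} If $L$ is proper, the only hypothesis it may output with $h(x)=1$ is $c_x$ itself, so the first forcing gives $\Pr[L(D_x)=c_x]\ge 1-\beta$ for every $x$. Fix a reference database $R=((z_0,0),\dots,(z_0,0))$; group privacy yields $\Pr[L(R)=c_x]\ge e^{-\eps n}(1-\beta)$ for every $x\ne z_0$. The events $\{L(R)=c_x\}$ are disjoint, so $1\ge\sum_{x\ne z_0}\Pr[L(R)=c_x]\ge \bigl|X\setminus\{z_0\}\bigr|\cdot e^{-\eps n}(1-\beta)=\infty$, a contradiction. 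This already proves the theorem for proper learners.

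\smallskip
\noindent\emph{Improper learners.} The obstacle is that an improper $L$ can dodge the warm-up by outputting ``large'' hypotheses (e.g.\ the constant-$1$ function), which are simultaneously correct on $D_x$ for every $x$. One useful sanity check: every $h\in H_0$ is incorrect on some single-point instance — if $h(x^\ast)=0$ then $h$ has error $1$ for $c_{x^\ast}$ under $\delta_{x^\ast}$, and otherwise $h\equiv 1$ has error $1$ for $c_{x'}$ under $\delta_{z}$, $z\ne x'$ — so by the first structural fact $\Pr[L(D)=h]\le e^{\eps n}\beta$ for every input $D$ and every $h\in H_0$; no hypothesis can serve as a default answer, which is exactly the room an improper learner exploits. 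To close this gap I would add label-$0$ stress tests: for a finite $F\subseteq X$ and $x_0\notin F$, a typical sample of points of $F$ labelled $0$ is realizable for $c_{x_0}$, and accuracy $\alpha$ forces the output to vanish on a $(1-\alpha)$-fraction of $F$. Combining this with the label-$1$ constraints $\Pr[L(D_x)\in\{h:h(x)=1\}]\ge 1-\beta$ for $x\in F$, and using that a single $h\in H_0$ contributes to the event ``$h(x)=1$'' for only $|h^{-1}(1)\cap F|$ points $x\in F$, one obtains matching lower and upper bounds on $\sum_{x\in F}\Pr[L(R)\in\{h:h(x)=1\}]$; letting $|F|\to\infty$ (possible because $X$ is infinite), the aim is to force the mass of $L(R)$ onto point-function-like hypotheses and thereby reduce to the proper situation above.

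\smallskip
\noindent\emph{Main obstacle.} The label-$1$ databases $D_x$ and the label-$0$ stress tests differ in $\Theta(n)$ rows, so comparing $L$'s behavior across them via group privacy costs a factor $e^{\Theta(\eps n)}$; carried out crudely, the combination above only contradicts the existence of $L$ for $n\lesssim\frac1\eps\log\frac1\beta$, not for every $n$. Obtaining impossibility for all $n$ requires instead a family of hard instances that are pairwise \emph{close} — differing in $O(1)$ rows, independently of $n$ — yet still force the learner to commit to distinct behavior on each; I would build such a family by a recursive/iterative construction in the spirit of the interior-point constructions used elsewhere in this paper, or else reduce improper learning with a countable $H$ to the packing-based impossibility of releasing point functions over $X$, using the countability of $H$ to replace arbitrary improper hypotheses by a bounded set of canonical ones. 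Reconciling the $O(1)$-row closeness with the learner actually being forced to track the target is where essentially all the work lies.
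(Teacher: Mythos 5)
Your warm-up for proper learners is correct and is essentially the standard packing argument. For the improper case, though, you never use the countability of $H$, and that is the ingredient the paper's proof turns on.

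The paper's argument is not the ``label-$0$ stress test plus pairwise-close instances'' strategy you sketch at the end. It is this: because $H$ is countable, there is a \emph{finite} subset $H'\subseteq H$ capturing all but probability $1/6$ of $L$'s output mass on the fixed reference database $R=(0,1)^n$. For each point $x$, let $G_x\subseteq H'$ be the hypotheses \emph{in $H'$} that are $\alpha$-good for $c_x$ under a to-be-chosen two-point distribution $\cD_x$. Since $H'$ is finite and $X$ is infinite, pigeonhole gives, at each stage of an inductive construction, two fresh points $x,x'$ on which every $h$ in the finite union $B=\bigcup_{j\le i}G_{x_j}$ agrees; taking $\cD_{x_{i+1}}$ uniform on $\{x,x'\}$ then makes every such $h$ exactly $1/2$-wrong for $c_{x}$, so $G_{x_{i+1}}$ is disjoint from $B$. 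This produces infinitely many pairwise-disjoint subsets $G_i$ of the \emph{finite} set $H'$ with the property that $L$ must land in $G_i$ with probability $\ge 1/2$ on some database $R_i$ realizable under $\cD_i^n$. A single application of group privacy (over $n$ rows, comparing each $R_i$ to the fixed $R$) gives $\Pr[L(R)\in G_i]\ge \tfrac12 e^{-\eps n}$ for every $i$, and infinitely many disjoint events each carrying the same fixed positive mass is the contradiction.

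Your ``Main obstacle'' paragraph misdiagnoses what is hard. The $e^{\eps n}$ factor from comparing databases that differ in all $n$ rows is not the bottleneck, because you only ever compare each hard instance to one fixed reference; the factor is paid once and then beaten by infinitude. Your approach genuinely does break for large $n$, but that is because you try to make a single finite sum $\sum_{x\in F}\Pr[L(R)\in\{h:h(x)=1\}]$ yield a contradiction, with both the lower and upper bounds scaling linearly in $|F|$, so letting $|F|\to\infty$ does not help. The paper avoids this by never needing the label-$0$ and label-$1$ bounds simultaneously: the disjointness of the $G_i$ lives inside the \emph{finite} $H'$, which is where countability does the work. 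Neither the proposed recursive construction with $O(1)$-row-close instances (which is the style of the interior-point lower bound, a $\delta>0$ argument that would not give impossibility for \emph{all} $n$ at a fixed $\eps$ in this setting) nor a reduction to query-release impossibility appears in the paper, and you would still need the finite-$H'$ pigeonhole idea to make either work.
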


\subsection{Techniques}

Our results for query release and proper learning of threshold functions are obtained by analyzing the sample complexity of a related but simpler problem, which
we call the {\em interior-point problem}.  Here we want a mechanism $M : X^n\rightarrow X$ (for a totally ordered data universe $X$) such that for every database $D\in X^n$, with high probability we have
$\min_i D_i \leq M(D)\leq \max_i D_i$.  We give reductions showing that the sample complexity of this problem is equivalent to the other ones we study:
\begin{theorem}
Over every totally ordered data universe $X$, the following four problems have the same sample complexity (up to constant factors) under differential privacy:
\stocrm{
\begin{enumerate}
\item The interior-point problem.
\item Query release for threshold functions.
\item Distribution learning (with respect to Kolmogorov distance).
\item Proper PAC learning of threshold functions.
\end{enumerate}
}
\stoctext{
(1) The interior-point problem, (2) Query release for threshold functions, (3) Distribution learning with respect to Kolmogorov distance, and (4) Proper PAC learning of threshold functions.
}
\end{theorem}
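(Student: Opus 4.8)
The plan is to establish a cycle of four reductions: $(1)\Rightarrow(2)\Rightarrow(3)\Rightarrow(1)$ together with $(2)\Leftrightarrow(4)$, where each reduction preserves sample complexity up to constant factors (and preserves $(\eps,\delta)$-differential privacy, with $\eps,\delta$ changing by at most constant factors). By ``sample complexity'' we mean the least $n$ for which there is an $(\eps,\delta)$-private mechanism solving the problem with constant error $\alpha$ and constant failure probability $\beta$; since all four problems are monotone in $n$, it suffices to give, for each arrow, a way to convert a mechanism for the source problem on databases of size $n$ into a mechanism for the target problem on databases of size $O(n)$.

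First I would do $(2)\Rightarrow(1)$ and $(1)\Rightarrow(2)$, which are the heart of the theorem. For $(2)\Rightarrow(1)$: given a database $D\in X^n$ for the interior-point problem, run the threshold-release mechanism on $D$ to get approximate values $a_x\approx c_x(D)=\tfrac1n|\{i: D_i\le x\}|$ for all $x$, accurate to within $\alpha<1/4$. The true function $x\mapsto c_x(D)$ jumps from $0$ to something $\ge 1/n$ as $x$ crosses $\min_i D_i$ and reaches $1$ at $\max_i D_i$; so from the released answers one can identify (by a simple threshold-crossing search on the $a_x$'s, e.g. the least $x$ with $a_x\ge 1/2$) a point that provably lies in $[\min_i D_i,\max_i D_i]$ whenever the release is accurate and $n\ge 4$. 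This step is post-processing, so privacy is immediate. For $(1)\Rightarrow(2)$, one cannot simply call the interior-point mechanism once; instead I would recurse: use an interior-point oracle to privately find a ``splitting'' element of $D$, recurse on the sub-databases of elements below and above it to build, level by level, a private approximation to the empirical CDF. Each element of $D$ participates in $O(\log n)$ levels of the recursion, so by group privacy / composition the whole construction is $(\eps',\delta')$-private with $\eps' = O(\eps\log n)$; rescaling $\eps$ down by a $\log n$ factor costs only a constant factor in $n$ (since the interior-point sample complexity is, by the companion results, sub-polynomial, $2^{O(\log^*|X|)}$, hence $\mathrm{poly}(1/\eps)$ blowups are affordable) — this bookkeeping is the part I expect to be fiddly, and is presumably where the paper's ``recursive constructions'' enter.

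Next, $(2)\Leftrightarrow(4)$ is the most standard pair. A threshold-release mechanism with error $\alpha$ immediately gives a proper learner for thresholds: given labeled examples $(x_i,y_i)$ with $y_i=c_{x^*}(x_i)$, the empirical CDF of the $x_i$'s and the target threshold $x^*$ determine each other up to the empirical error, so releasing thresholds on the unlabeled points $x_i$ lets us read off a threshold hypothesis $c_{\hat x}$ with small empirical (hence, by uniform convergence / VC for thresholds, small population) error; standard generalization arguments handle the sampling. Conversely, a proper learner for thresholds can be turned into a threshold-release mechanism by the usual reduction from query release to learning of a single query family (label the database points by each candidate threshold and invoke the learner), again with only constant-factor changes. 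Finally $(2)\Rightarrow(3)$ and $(3)\Rightarrow(1)$: releasing thresholds on an i.i.d.\ sample approximates the empirical CDF, which by DKW (Theorem of \cite{DvoretzkyKiWo56}, needing only $O(1)$ extra samples) is within $\alpha$ of the true CDF $F_\cD$ in Kolmogorov distance — giving a distribution learner; and a distribution learner, fed the empirical distribution of a database $D$ as its ``$\cD$'', outputs a CDF within $\alpha$ of the empirical CDF, from which (exactly as in $(2)\Rightarrow(1)$) one extracts an interior point of $D$. All of these are post-processing or sampling arguments and preserve privacy directly.

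The main obstacle is the direction $(1)\Rightarrow(2)$: turning a single-output interior-point mechanism into a full CDF-release mechanism without paying more than a constant factor in samples. The naive recursion blows up the privacy parameter by $\log n$, and one must argue that this is harmless — either because the target regime of $n$ is itself super-polylogarithmic in all relevant quantities (so that re-running with $\eps/\log n$ and $\delta/\mathrm{poly}(n)$ changes $n$ by only a constant), or by a more careful ``one element per level'' accounting that keeps the recursion tree shallow. Once that reduction is in hand, the remaining arrows are routine, and chaining them closes the cycle, establishing that all four sample complexities agree up to constants.
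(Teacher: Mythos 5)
Your overall plan of chaining reductions is sound, and you correctly identify $(1)\Rightarrow(2)$ as the crux, but your proposed reduction for that direction has two genuine problems. First, a recursive splitting scheme that calls the interior-point oracle to find a ``splitting'' element gives you no control over \emph{where} the split lands: an interior point of $D$ is merely some $x\in[\min D,\max D]$, and on $D=(1,2,\dots,n)$ the oracle could always return $2$, making every split maximally lopsided so the recursion depth is $\Theta(n)$, not $O(\log n)$ (let alone $O(\log(1/\alpha))$). Second, even granting an $O(\log n)$-depth recursion, your repair -- rescale $\eps$ to $\eps/\log n$ and argue this is harmless because the interior-point sample complexity is $2^{O(\log^*|X|)}$ -- is both circular (it invokes the very upper bound the equivalence is supposed to sit alongside) and quantitatively wrong: the interior-point sample complexity scales like $\Theta(1/\eps)$, so substituting $\eps/\log n$ \emph{multiplies} the sample complexity by $\log n$, which is not a constant factor. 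The theorem claims equivalence up to genuine constants, so a $\log n$ loss is not affordable.

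The paper sidesteps both issues with a non-recursive reduction (Algorithm \emph{Thresh} in the proof of Theorem~\ref{thm:sanitization-vs-range}, item 2). It \emph{sorts} $D$, cuts it into $k=O(1/\alpha)$ contiguous blocks of size $\approx\alpha n/3$ with Laplace-perturbed boundaries, and solves the interior-point problem \emph{once per block}. The resulting representatives $r_0<\dots<r_k$ form a tiny data universe of size $O(1/\alpha)$; snapping each $x_j$ down to the nearest $r_\ell$ and running the $\eps$-DP tree-based threshold-release algorithm of \cite{DworkNaPiRo10} on that universe yields error $\mathrm{polylog}(1/\alpha)/(\alpha n\eps)$, which is $O(\alpha)$ once $n$ is a constant factor larger than the interior-point sample complexity. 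Privacy costs only a constant because changing one row changes the sorted order only locally, and a bijection on noise vectors shows it perturbs at most two blocks. You should also revisit your $(2)\Leftrightarrow(4)$ arrows: your $(2)\Rightarrow(4)$ claims the empirical CDF of the \emph{unlabeled} $x_i$'s ``determines'' $x^*$, which is false (the labels carry the information about $x^*$), and your $(4)\Rightarrow(2)$ appeals to a ``usual reduction from query release to learning'' that does not exist in this direction. The paper instead relates $(4)$ directly to $(1)$ (Theorem~\ref{thm:learning-vs-range} via Lemma~\ref{lem:range-learning-reduction}): an interior-point solver gives an empirical learner by running it on the largest $n/2$ points labeled $1$ together with the smallest $n/2$ labeled $0$, and conversely a proper learner solves interior point by labeling the smaller half $1$ and larger half $0$; generalization and a secrecy-of-the-sample argument (Lemma~\ref{lem:secrecy-of-the-sample}) then bridge empirical and distributional accuracy.
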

Thus we obtain our lower bounds and our simplified and improved upper bounds  for query release and proper learning by proving such bounds  for the interior-point problem, such as:

\begin{theorem} \label{thm:range-lb-informal}
The sample complexity for solving the interior-point problem over a data universe $X$ with differential privacy is $\Omega(\log^* |X|)$.
\end{theorem}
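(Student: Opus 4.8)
The plan is to prove that any $(\eps,\delta)$-differentially private mechanism solving the interior-point problem on a domain $X$ with $|X| \ge $ some tower function of $n$ must fail, by an inductive/recursive argument that trades domain size for accuracy. Concretely, I would show that a private interior-point mechanism $M$ that succeeds with sample size $n$ on a domain of size $N$ can be bootstrapped into a private mechanism that solves a ``harder'' version of the problem — one that must output a point lying strictly between consecutive chunks — on a domain that is roughly $\log N$ in size, again with $n$ samples. Iterating this step $\log^* N$ times shrinks the domain below a constant while the accuracy/privacy guarantees are (barely) preserved, and on a constant-size domain a simple packing argument (in the style of \cite{HardtTa10,BeimelKaNi10}, but usable here because the domain is tiny) forces $n = \Omega(1)$ with the right constants, contradiction. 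Running the contrapositive: if $|X|$ is larger than the height-$n$ tower, no such $M$ exists, i.e. $n = \Omega(\log^*|X|)$.

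The key technical device is the recursive construction hinted at in the abstract: partition $X$ into $N'$ blocks; given a database $D'$ over the block-index alphabet $[N']$, ``blow up'' each symbol $D'_i$ into a sub-database spread across the corresponding block of $X$ in a way that encodes $D'_i$, feed the concatenation to $M$, and decode $M$'s output (an interior point of the blown-up database) back to a block index. The decoded index should be an interior point of $D'$ with good probability, and — crucially — differential privacy is inherited: changing one row of $D'$ changes a bounded number of rows of the blown-up database (one needs to keep this blow-up factor at $1$, or absorb it via group privacy with a controlled loss in $\eps,\delta$). The number of blocks one can afford, $N'$, is governed by how much ``room'' each block must have to encode one symbol, which is where the exponential-to-logarithmic shrinkage $N \mapsto \log N$ comes from; the $\delta$ parameter degrades additively across the $\log^* N$ levels, so one must ensure the initial $\delta$ is small enough (e.g. $\delta \le 1/(n \log^* |X|)$ suffices, matching the introduction's regime $\delta = 1/n^{\log n}$).

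The main obstacle I expect is making the recursion \emph{actually close} quantitatively: each level of the reduction must not lose more than a constant factor in the success probability and must not blow up $\eps$ at all (or only by a factor that, compounded $\log^*|X|$ times, stays $O(1)$). This forces the encoding/decoding to be very efficient — in particular the ``one row of $D'$ maps to one row of the blown-up database'' property, or something nearly as strong, is what allows the privacy of $M$ to transfer verbatim. A secondary subtlety is the base case: on a domain of constant size $c$, I need that an $(\eps,\delta)$-private interior-point mechanism requires $n \ge n_0$ for an explicit $n_0$ that exceeds the $n$ we started the recursion with; a packing/fingerprinting argument gives a lower bound growing with $\log c$, so choosing $c$ a suitable constant makes the contradiction go through. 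Finally, one should double-check that the ``harder'' intermediate problem produced at each step really is solvable by $M$ via the reduction — i.e. that an interior point of the blown-up database always decodes to a valid interior point of $D'$, which requires the blocks to be laid out in the same total order as $[N']$ and the encoding within a block to stay strictly inside that block's range.
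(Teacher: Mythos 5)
Your high-level plan — a recursive reduction that shrinks the domain by a logarithm at each step and uses a ``blow-up/decode'' trick to transfer privacy — is indeed the shape of the paper's argument, and you correctly identify the key invariant (``one row of $D'$ maps to one row of the blown-up database'') that lets $\eps$ pass through unchanged. But there is a genuine quantitative gap in how you close the recursion, and it is precisely the place where the $\Omega(\log^*|X|)$ bound has to come from.

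As you describe it, each level keeps the sample size $n$ fixed and only shrinks the domain. After $\log^*|X|$ levels you reach a constant-size domain, and your base case (a packing argument there) can only force $n \ge \Omega(1)$ — that is never a contradiction for an adversarially large $n$, so nothing in this chain actually ties $n$ to $\log^*|X|$. The additive drift you track (the per-level loss in success probability and in $\delta$) is not enough: if the per-level loss is $\Theta(\delta)$ and you run $\log^*|X|$ levels with $\delta \approx 1/n^2$, the best you can squeeze out is $n = \Omega(\sqrt{\log^*|X|})$, short of the claim. The paper closes the gap differently: the reduction consumes one sample per level. Concretely, a single extra ``anchor'' element $y_0$ is drawn uniformly (independently of the input $D = (x_1,\dots,x_n)$), and each $x_i$ is encoded as a single $y_i$ that agrees with $y_0$ on its first $x_i$ base-$b$ digits, so the blown-up database is $(y_0,y_1,\dots,y_n)$ — \emph{$n+1$} rows. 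Thus a private $(n+1)$-sample mechanism on the big domain yields a private $n$-sample mechanism on the small domain, and the induction must bottom out after exactly $n$ levels (the base case is $n=1$ on a $2$-point domain). Since the domain size grows as a tower of height $n$ under the recursion $S(n+1) = b(n)^{S(n)}$, this is exactly what produces $n \ge \Omega(\log^*|X|)$.

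Two smaller points. First, you never say how to establish the upper end of feasibility — that the decoded index $x$ is at most $\max D$. In the paper this step is the one that genuinely uses privacy: the $(\max D + 1)$-st digit of $y_0$ is a fresh uniform coin that no $y_i$ depends on, so a private mechanism can match it with probability only about $e^\eps/b + \delta$ better than random, and hence the output's agreement with $y_0$ essentially cannot exceed $\max D$. (The lower end, $x \ge \min D$, is immediate from feasibility of the output since all of $\hat{D}$ shares the first $\min D$ digits.) Without this privacy-based cap there is no reason the decoded $x$ is small, and your ``harder intermediate problem that must output a point strictly between consecutive chunks'' gestures at the issue but is not a substitute — you cannot simply impose strictness as an extra requirement on $M$ at the next level, since $M$ is only promised to solve the ordinary interior-point problem. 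Second, your estimate $\delta \le 1/(n\log^*|X|)$ for the necessary decay is the right order of magnitude once the recursion depth is $n$, and your concern about laying blocks out in the ambient order is exactly handled by the paper's base-$b$ prefix encoding. If you rework the reduction so that each level strips one sample (e.g.\ by introducing the shared random anchor as above), the rest of your proposal essentially matches the paper's proof.
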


Note that for every fixed distribution $\cD$ over $X$ there exists a simple differentially private algorithm for solving the interior-point problem (w.h.p.) over databases sampled i.i.d.\ from $\cD$ -- simply output a point $z$ s.t. $\Pr_{x\sim \cD}[x\geq z]=1/2$. Hence, in order to prove Theorem~\ref{thm:range-lb-informal}, we show a (correlated) distribution $\cD$ over {\em databases} of size $n \approx \log^* |X|$ on which privately solving the interior-point problem is impossible. The construction is recursive: we use a hard distribution over databases of size $(n-1)$ over a data universe of size logarithmic in $|X|$ to construct
the hard distribution over databases of size $n$ over $X$.

By another reduction to the interior-point problem, we show an impossibility result for the following \emph{undominated-point} problem:
\begin{theorem}
For every $n\in \N$, there does not exist a differentially private mechanism $M : \N^n\rightarrow \N$ with the property that
for every dataset $D\in \N^n$, with high probability $M(D)\geq \min_i D_i$.
\end{theorem}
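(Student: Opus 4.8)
The plan is to argue by contradiction, exploiting the tension between two facts: the all‑equal databases must all receive ``large'' outputs, yet they are mutually indistinguishable. Fix a supposed $(\eps,\delta)$-differentially private $M:\N^n\to\N$ that outputs a value $\ge\min_i D_i$ with probability at least $1-\beta$ on every $D$, and for $k\in\N$ write $D_k=(k,k,\dots,k)$. Correctness forces $\Pr[M(D_k)\ge k]\ge 1-\beta$ for every $k$. On the other hand $D_1$ and $D_k$ differ in all $n$ rows, so by group privacy $M(D_1)$ and $M(D_k)$ are $(n\eps,\delta')$-indistinguishable with $\delta'=\tfrac{e^{n\eps}-1}{e^{\eps}-1}\,\delta$; in particular, taking the event $\{\text{output}\ge k\}$,
\[
\Pr[M(D_1)\ge k]\ \ge\ e^{-n\eps}\bigl(\Pr[M(D_k)\ge k]-\delta'\bigr)\ \ge\ e^{-n\eps}\bigl(1-\beta-\delta'\bigr).
\]
First I would observe that this is impossible as soon as $1-\beta-\delta'>0$: the left-hand side is the upper tail of a single probability distribution on $\N$ and therefore tends to $0$ as $k\to\infty$, so the inequality cannot hold for all $k$. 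This already proves the theorem whenever $\delta$ is small enough that $\delta'<1-\beta$ (e.g.\ $\eps=O(1)$ and $\delta$ exponentially small in $n$), and it is the argument I would write down first.

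To obtain the theorem in the parameter regime where the crude $n$-fold group-privacy blow-up is too lossy (the additive term $\delta'$ can exceed $1$ once $\delta$ is only, say, $1/n^{\log n}$), I would instead derive from $M$ a differentially private solver for the \emph{interior-point} problem and invoke Theorem~\ref{thm:range-lb-informal}. Given a finite totally ordered universe $X=\{1,\dots,N\}$ and a database $D\in X^n$, let $\sigma(x)=N+1-x$ be the order-reversal of $X$, which is a fixed, data-independent map and hence privacy-preserving. Run $M$ on $D$ and on $\sigma(D)$ and clamp both outputs into $X$: this yields $u\in[\min_i D_i,\,N]$ and, after reversing the second output, $v\in[1,\,\max_i D_i]$, each valid with probability $1-\beta$. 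If $u\le\max_i D_i$ output $u$; if $v\ge\min_i D_i$ output $v$; one of these succeeds whenever $v\ge u$, since then $[u,v]\subseteq[\min_i D_i,\max_i D_i]$. In the remaining event ($u>\max_i D_i$ and $v<\min_i D_i$) we have certified $[\min_i D_i,\max_i D_i]\subsetneq(v,u)\subsetneq X$ while $D$ still lies entirely in $\{v+1,\dots,u-1\}$, so we recurse on that strictly smaller universe using a fresh block of samples. Carrying this out for $X$ with $\log^*|X|$ larger than the constant of Theorem~\ref{thm:range-lb-informal} times the total number of samples used contradicts the $\Omega(\log^*|X|)$ lower bound, and letting $|X|\to\infty$ rules out every fixed $n$.

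The hard part will be the recursive step. A single invocation of $M$ only guarantees an additive shrinkage of the active universe (the clamped value $u$ could be as large as $N$), so naively the reduction would blow up the sample count by a factor proportional to $|X|$, which destroys the contradiction with the $\Omega(\log^*|X|)$ bound. The main technical work is therefore to set up the reflection/embedding so that each level shrinks the universe drastically — ideally so that only $O(1)$ (or at worst $O(\log^*|X|/n)$) fresh sample blocks are ever needed — and to argue that the ``stuck'' event cannot persist across levels. A secondary but necessary point is the routine bookkeeping of how $(\eps,\delta)$ degrade across the constantly many calls to $M$ and across the reflection, which is harmless as long as the number of calls is bounded independently of $|X|$.
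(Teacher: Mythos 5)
Your first argument (the group-privacy chain of all-equal databases $D_1,D_2,\dots$) is sound as far as it goes, but, as you yourself note, it only yields a contradiction when $\delta' = \delta\frac{e^{n\eps}-1}{e^{\eps}-1} < 1-\beta$, i.e.\ when $\delta$ is roughly $e^{-n\eps}$; the theorem is supposed to hold (and the paper's argument does hold) already for $\delta = O(1/n^2)$. Your second argument aims in the right direction --- convert $M$ into an interior-point solver and contradict the $\Omega(\log^*|X|)$ lower bound --- but the specific reflect-and-recurse reduction has a genuine gap, which you acknowledge. Nothing in the construction bounds the number of recursion levels: the stuck event only guarantees $\{v+1,\dots,u-1\}\subsetneq X$, and an adversarial $M$ (e.g.\ one that outputs $\max_i D_i+1$, or always $N$, when it is correct) can make the active universe shrink by only a constant per level, so you would need $\Theta(|X|)$ levels and $\Theta(n\,|X|)$ fresh samples --- far more than $\log^*|X|$, so no contradiction. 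Reusing samples across levels does not help either, since the adaptively chosen subuniverse leaks information and the composition cost is then unbounded.

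The idea you are missing is to tailor the universe to $M$ in advance rather than shrink it adaptively. Every mechanism $M:\N^n\to\N$ is ``$T$-bounded'' for some increasing $T:\N\to\N$, meaning $\Pr[M(D)\ge T(\max_i D_i)] < 1/8$: for each of the finitely many $D$ with all entries at most $m$, the output distribution has a finite tail, so $T(m)$ can be chosen large enough to cover all of them. Having fixed such a $T$, take $X_d = \{1, T(1), T^{(2)}(1), \dots, T^{(d-1)}(1)\}$. On a database $D\in X_d^n$, a union bound shows that with probability at least $3/4$ a \emph{single} call to $M$ returns a value in $[\min_i D_i,\, T(\max_i D_i))$; rounding that value down to the nearest element of $X_d$ gives an interior point of $D$, because $T(\max_i D_i)$ is exactly the next element of $X_d$ beyond $\max_i D_i$. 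This one-shot reduction is differentially private by post-processing (no composition at all) and forces $n=\Omega(\log^* d)$, a contradiction once $d$ is large, since $n$ is fixed. So the mechanism-dependent object you should construct is the function $T$ and the tower-like universe built from it, not a nested sequence of subuniverses chosen on the fly.
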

Note that for the above problem, one cannot hope to construct a single distribution over databases that every private mechanism fails on. The reason is that for any such distribution $\cD$, and any desired failure probability $\beta$, there is some number $K$ for which $\Pr_{D\sim\cD}[\max{D} > K] \le \beta$, and hence that the mechanism that always outputs $K$ solves the problem.
Hence, given a mechanism $\cM$ we must tailor a hard distribution $\cD_\cM$.
We use a similar mechanism-dependent approach to prove Theorem~\ref{thm:pure-point-lb}.

\subsection{Subsequent Work}

\paragraph{Private PAC learning.}
Our lower bound of $\Omega(\log^* |X|)$ on the sample complexity for privately learning threshold functions over a domain $X$ only holds for {\em proper} learning. Alon, Livni, Malliaris, and Moran~\cite{AlonLMM19} showed that the same lower bound also holds for {\em improper} private learners. Furthermore, they related the sample complexity privately learning of a general concept class $C$ to the {\em Littlestone dimension} of $C$, a combinatorial dimension that is known to characterize {\em online learnability} (non-privately).\footnote{Online learning is a learning model in which data becomes available in a sequential order and is used to update the predictor at each step.
The Littlestone dimension is a combinatorial 
parameter that characterizes online learning~\cite{Littlestone87online,Bendavid09agnostic}. The definition of this parameter uses the notion of {\it mistake-trees}:
these are binary decision trees whose internal nodes are labeled by data points. Any root-to-leaf path in a mistake tree can be described as a sequence of examples 
$(x_1,y_1),\ldots,(x_d,y_d)$, where $x_i$ is the data point of the $i$'th 
internal node in the path, and $y_i=1$ if the $(i+1)$'th node  
in the path is the right child of the $i$'th node, and otherwise $y_i = 0$.
We say that a tree $T$ is {\it shattered} by a class $C$ if for any root-to-leaf path
$(x_1,y_1),\ldots,(x_d,y_d)$ in $T$ there is $h\in C$ such that $h(x_i)=y_i$, for all $i\leq d$.
The Littlestone dimension of $C$, denoted by $\LDim(C)$, is the depth of the largest
complete tree that is shattered by~$C$.
} Specifically, Alon et al.~\cite{AlonLMM19} showed that if a class $C$ has Littlestone dimension $\LDim(C)$ then the sample complexity of privately learning $C$ is at least $\Omega(\log^* (\LDim(C)))$.
This follows from a classical result of Shelah~\cite{Shelah78classification}, who showed that every class of functions $C$ with a large Littlestone dimension has a large collection of 1-dimensional threshold functions embedded in it. 

Alon, Bun, Livni, Malliaris, and Moran~\cite{AlonBLMM22} then showed that the sample complexity of privately learning a class of functions $C$ can be {\em upper bounded} in terms of $\LDim(C)$. Together with the lower bound of Alon et al.~\cite{AlonLMM19}, this shows that the sample complexity of 
privately learning a class $C$
is {\em finite} if and only if the class $C$ has a finite Littlestone dimension. The generic upper bound of Alon et al.~\cite{AlonBLMM22} was then improved by Ghazi, Golowich, Kumar, and Manurangsi~\cite{GhaziG0M21}. Together, the results of~\cite{AlonLMM19,AlonBLMM22,GhaziG0M21} show that if a class $C$ has Littlestone dimension $\LDim(C)$ then the sample complexity of privately learning $C$ is at least $\Omega(\log^* (\LDim(C)))$ and at most $\tilde{O}(\LDim(C)^6)$. As the Littlestone dimension characterizes online learnability, these results imply a qualitative equivalence between online learnability and private PAC learnability.  

\paragraph{The private interior-point problem.} Followup works presented improved algorithms for the private interior-point problem over a domain $X$. Specifically, Kaplan, Ligett, Mansour, Naor, and Stemmer~\cite{KaplanLMNS20} presented an improved algorithm with sample complexity $\tilde{O}((\log^*|X|)^{1.5})$, as opposed to the upper bound of $2^{(1+o(1))\log^* |X|}$ presented in this work. Subsequently, 
Cohen, Lyu, Nelson, Sarl{\'{o}}s, and Stemmer~\cite{Cohen0NSS23}
presented a {\em tight} upper bound of $\tilde{\Theta}(\log^*|X|)$ for this problem (up to lower order terms). Via our reductions, this translate to near optimal algorithms for query release for threshold functions, distribution learning (with respect to Kolmogorov distance), and for private learning of threshold functions.

\stocrm{
\section{Preliminaries}

Throughout this work, we use the convention that $[n] = \{0, 1, \dots, n-1\}$ and write $\log$ for $\log_2$. We use $\thresh_X$ to denote the set of all {\em threshold functions} over a totally ordered domain $X$. That is,
$$\thresh\nolimits_X = \{c_x : x \in X\} \quad \text{where} \quad c_x(y) = 1 \text{ iff } y \le x.$$
\mnote{Didn't change this to $y \ge x$ (yet)}

\subsection{Differential Privacy} \label{sec:dp-prelim}

%

Our algorithms and reductions rely on a number of basic results about differential privacy. Early work on differential privacy showed how to solve the query release problem by adding independent Laplace noise to each exact query answer. A real-valued random variable is distributed as $\Lap(b)$ if its probability density function is $f(x)=\frac{1}{2b}\exp(-\frac{|x|}{b})$. We say a function $f:X^n \rightarrow \R^m$ has {\em sensitivity $\Delta$} if for all neighboring $D,D'\in X^n$, it holds that $||f(D)-f(D')||_1\leq \Delta$.

\begin{theorem}[The Laplace Mechanism \cite{DworkMcNiSm06}]\label{thm:lap}
Let $f:X^n \rightarrow \R^n$ be a sensitivity $\Delta$ function. The mechanism $A$ that on input $D\in X^n$
adds independent noise with distribution $\Lap(\Delta/\epsilon)$ to each coordinate of $f(D)$ preserves $\epsilon$-differential privacy.
\end{theorem}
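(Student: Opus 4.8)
The plan is to prove the pointwise density ratio bound and then integrate. Fix any two neighboring datasets $D, D' \in X^n$ and any measurable set $T \subseteq \R^m$ (I will write $m$ for the output dimension). Since $A(D)$ is the random variable $f(D) + (Y_1, \dots, Y_m)$ where the $Y_j$ are independent $\Lap(\Delta/\eps)$ variables, $A(D)$ has a density on $\R^m$, namely $p_D(z) = \prod_{j=1}^m \frac{\eps}{2\Delta}\exp\!\left(-\frac{\eps \, |z_j - f(D)_j|}{\Delta}\right)$, and similarly for $D'$. So it suffices to bound the ratio $p_D(z)/p_{D'}(z)$ at every point $z \in \R^m$, and then conclude $\Pr[A(D)\in T] = \int_T p_D(z)\,dz \le e^\eps \int_T p_{D'}(z)\,dz = e^\eps \Pr[A(D')\in T]$.

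For the pointwise bound, I would compute directly:
\[
\frac{p_D(z)}{p_{D'}(z)} = \prod_{j=1}^m \exp\!\left(\frac{\eps}{\Delta}\bigl(|z_j - f(D')_j| - |z_j - f(D)_j|\bigr)\right) = \exp\!\left(\frac{\eps}{\Delta}\sum_{j=1}^m \bigl(|z_j - f(D')_j| - |z_j - f(D)_j|\bigr)\right).
\]
Applying the triangle inequality coordinatewise gives $|z_j - f(D')_j| - |z_j - f(D)_j| \le |f(D)_j - f(D')_j|$, so the exponent is at most $\frac{\eps}{\Delta}\sum_{j=1}^m |f(D)_j - f(D')_j| = \frac{\eps}{\Delta}\,\|f(D) - f(D')\|_1 \le \frac{\eps}{\Delta}\cdot \Delta = \eps$, where the last inequality is the sensitivity hypothesis on $f$ for neighboring $D, D'$. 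Hence $p_D(z) \le e^\eps \, p_{D'}(z)$ for all $z$, which is what we needed. Note $\delta = 0$ here, so there is no additive slack term to track.

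I expect no real obstacle: this is the standard argument, and the only points requiring a modicum of care are (i) justifying that it is enough to work with densities (which holds because the output distributions are absolutely continuous with respect to Lebesgue measure, so the measure inequality for all measurable $T$ follows from the density inequality) and (ii) the direction of the triangle inequality in the exponent. One should also observe that the argument is symmetric in $D$ and $D'$, so the same bound holds with their roles reversed, giving the two-sided guarantee implicit in the definition.
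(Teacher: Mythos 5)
Your proof is correct and is the standard argument; the paper itself does not prove this theorem but cites it directly from Dwork et al.~\cite{DworkMcNiSm06}, whose proof proceeds in exactly the way you describe (pointwise density-ratio bound via the triangle inequality, then integrate over $T$). Your two remarks on rigor (absolute continuity justifying the reduction to densities, and symmetry in $D,D'$) are both apt and are exactly the places where a careless write-up could go astray.
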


We will present algorithms that access their input database using (several) differentially private mechanisms and use the following composition theorem to prove their overall privacy guarantee.

\begin{lemma}[Composition, e.g. \cite{DworkLe09}] \label{lem:composition}
Let $\cM_1: X^n \to \cR_1$ be $(\eps_1, \delta_1)$-differentially private. Let $\cM_2: X^n \times \cR_1 \to \cR_2$ be $(\eps_2, \delta_2)$-differentially private for any fixed value of its second argument. Then the composition $M(D) = \cM_2(D, \cM_1(D))$ is $(\eps_1 + \eps_2, \delta_1 + \delta_2)$-differentially private.
\end{lemma}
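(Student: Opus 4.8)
The plan is the standard conditioning argument, together with one small twist needed to obtain the stated constant $\delta_1+\delta_2$ rather than $e^{\eps_2}\delta_1+\delta_2$. I would fix neighboring datasets $D,D'\in X^n$ and a measurable set $T\subseteq\cR_2$, and aim to show $\prob{M(D)\in T}\le e^{\eps_1+\eps_2}\prob{M(D')\in T}+\delta_1+\delta_2$. Write $\mu,\mu'$ for the distributions of $\cM_1(D)$ and $\cM_1(D')$, and for $r\in\cR_1$ set
$$f(r)=\prob{\cM_2(D,r)\in T},\qquad g(r)=\prob{\cM_2(D',r)\in T}.$$
Since $\cM_1$ and $\cM_2$ use independent coins, conditioning on the intermediate value $r=\cM_1(\cdot)$ gives $\prob{M(D)\in T}=\int f\,d\mu$ and $\prob{M(D')\in T}=\int g\,d\mu'$, so the goal reduces to $\int f\,d\mu\le e^{\eps_1+\eps_2}\int g\,d\mu'+\delta_1+\delta_2$.

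First I would record an ``expectation form'' of the privacy guarantee: if a mechanism $\cN$ is $(\eps,\delta)$-differentially private with output distributions $\nu,\nu'$ on a pair of neighbors, then $\int h\,d\nu\le e^{\eps}\int h\,d\nu'+\delta$ for every measurable $h\colon\cR\to[0,1]$. This follows from the layer-cake identity $h(r)=\int_0^1\1[h(r)>t]\,dt$: exchanging the order of integration (the integrand is nonnegative) and applying the definition of differential privacy to each super-level set $\{h>t\}$ bounds $\int h\,d\nu$ by $e^{\eps}\int h\,d\nu'+\int_0^1\delta\,dt=e^{\eps}\int h\,d\nu'+\delta$. The boundedness of $h$ by $1$ is exactly what keeps the additive term at $\delta$, so this will only be applied to (functions valued in) probabilities.

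Next I would use the hypothesis that $\cM_2(\cdot,r)$ is $(\eps_2,\delta_2)$-differentially private for each fixed $r$, which gives $f(r)\le e^{\eps_2}g(r)+\delta_2$ pointwise (and one may assume $\delta_2\le1$). The twist is to split $f=f_1+f_2$ with $f_1(r)=\min\{f(r),\,e^{\eps_2}g(r)\}$ and $f_2(r)=(f(r)-e^{\eps_2}g(r))_+$, so that $0\le f_1\le1$ and $0\le f_2\le\delta_2$. Applying the expectation form of $(\eps_1,\delta_1)$-privacy to the $[0,1]$-valued function $f_1$, and then the bound $f_1\le e^{\eps_2}g$,
$$\int f_1\,d\mu\;\le\;e^{\eps_1}\int f_1\,d\mu'+\delta_1\;\le\;e^{\eps_1+\eps_2}\int g\,d\mu'+\delta_1,$$
while $\int f_2\,d\mu\le\delta_2$ because $\mu$ is a probability measure. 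Adding the two bounds gives $\int f\,d\mu\le e^{\eps_1+\eps_2}\int g\,d\mu'+\delta_1+\delta_2$, and since $D,D',T$ were arbitrary this is the claim.

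There is no genuinely hard step here; the only points requiring care are (i) the measurability of $f,g,f_1$ and the justification of the conditioning identity $\prob{M(D)\in T}=\int f\,d\mu$ when $\cR_1$ is not discrete (independence of coins together with Fubini), and (ii) noticing the $f=f_1+f_2$ decomposition — a naive ``two successive applications of the definition'' argument only yields $(\eps_1+\eps_2,\,e^{\eps_2}\delta_1+\delta_2)$-differential privacy (or $(\eps_1+\eps_2,\,\delta_1+e^{\eps_1}\delta_2)$ in the other order), which is weaker than stated, though the difference is immaterial in the constant-$\eps$ regime used throughout this paper.
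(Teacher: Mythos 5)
Your proof is correct. A couple of remarks for context: the paper does not actually prove Lemma~\ref{lem:composition} --- it is stated and attributed to \cite{DworkLe09} --- so there is no in-paper argument to compare against, but your argument is complete and matches the standard modern treatment. The two ingredients you use, the layer-cake (``expectation form'') consequence of $(\eps,\delta)$-privacy for $[0,1]$-valued test functions, and the split $f=\min\{f,e^{\eps_2}g\}+(f-e^{\eps_2}g)_+$, are exactly what is needed: you correctly identify that a naive sequential application of the two privacy guarantees yields only $(\eps_1+\eps_2,\,e^{\eps_2}\delta_1+\delta_2)$ or $(\eps_1+\eps_2,\,\delta_1+e^{\eps_1}\delta_2)$, and the truncation keeps the function fed to the $\cM_1$ expectation bound in $[0,1]$ so that the additive slack stays at $\delta_1$ rather than being inflated by $e^{\eps_2}$. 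The measurability/conditioning caveats you flag (Fubini plus independence of the two mechanisms' coins, and measurability of $r\mapsto\Pr[\cM_2(\cdot,r)\in T]$, which holds once $\cM_2$ is a Markov kernel) are the right things to note and do not hide any real difficulty. In short: correct, and with the right amount of care at exactly the spot where care is needed.
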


}


\section{The Interior Point Problem}

\subsection{Definition}
In this work we exhibit a close connection between the problems of privately learning and releasing threshold queries, distribution learning, and solving the \emph{interior point problem} as defined below.

\begin{definition}
An algorithm $A:X^n \to X$ solves the \emph{interior point problem} on $X$ with error probability $\beta$ if for every $D \in X^n$,
\[\Pr[\min D \le A(D) \le \max D] \ge 1 - \beta,\]
where the probability is taken over the coins of $A$. The sample complexity of the algorithm $A$ is the database size $n$.
\end{definition}

We call a solution $x$ with $\min D \le x \le \max D$ an \emph{interior point} of $D$. Note that $x$ need not be a member of the database $D$.

\stocrm{\subsection{Lower Bound}}
\stoctext{\subsection{Lower and Upper Bounds}}
\label{sec:lowerBound}

We now prove our lower bound on the sample complexity of private algorithms for solving the interior point problem.

\begin{theorem} \label{thm:range-lb}
Fix any constant $0 < \eps < 1/4$. Let $\delta(n) \le 1/(50 n^2)$. Then for every positive integer $n$, solving the interior point problem on $X$ with probability at least $3/4$ and with $(\eps, \delta(n))$-differential privacy requires sample complexity $n \ge \Omega(\log^*|X|)$.
\end{theorem}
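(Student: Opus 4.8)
The plan is to build, by recursion on $n$, a distribution $\cD_n$ over databases in $X_n^n$ (where $|X_n|$ is a tower-of-exponentials of height roughly $n$ in $1/\eps$) such that no $(\eps,\delta(n))$-differentially private algorithm can output an interior point of $D\sim\cD_n$ with probability $\ge 3/4$. Taking the contrapositive, any algorithm that does succeed must have $n\ge\Omega(\log^*|X|)$. The base case is $n$ around a small constant: over a domain of constant size, privacy forces the output distribution to be essentially independent of the input up to an $e^{\eps}$ factor and an additive $\delta$, so a two-point (or few-point) database with widely separated, adversarially chosen coordinates defeats any mechanism --- this is just a packing-style argument and should be routine.

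The heart of the argument is the recursive step: given a hard distribution $\cD_{n-1}$ over databases of size $n-1$ drawn from a universe $Y$ of size $m$, I would construct $\cD_n$ over databases of size $n$ drawn from a universe $X$ of size roughly $2^m$ (i.e. $\log|X|\approx|Y|$). The idea is to identify elements of $X$ with subsets of $Y$ (or with binary strings of length $m$ indexed by $Y$), so that the $i$-th bit of an element $x\in X$ encodes a ``label'' for the $i$-th element of $Y$. Sample $D'=(y_1,\dots,y_{n-1})\sim\cD_{n-1}$; then sample a database $D=(x_1,\dots,x_n)\in X^n$ together with a secret index, structured so that (a) the bits of the $x_j$'s at coordinates in $[\min D',\max D']$ are ``consistent'' in a way that an interior point $z$ of $D$ must, via its bit-pattern, reveal an interior point of $D'$; while (b) the marginal distribution of $D$ seen by the mechanism is nearly independent of which sub-universe-$Y$ instance was planted, so that a successful interior-point algorithm for $X$ (run on $n$ samples) yields a successful interior-point algorithm for $Y$ on $n-1$ samples, contradicting the inductive hypothesis. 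The bookkeeping here --- making the reduction privacy-preserving (the new mechanism should be a post-processing of the old one applied to a database derived from its input by adding/removing one row, so that $(\eps,\delta)$ is preserved up to the composition bound of Lemma~\ref{lem:composition}) and making the error probabilities and the $\delta$ budget add up correctly across the $\approx\log^*|X|$ levels of recursion --- is where I expect the real work to be, and in particular controlling how $\delta$ must shrink (hence the $\delta(n)\le 1/(50n^2)$ hypothesis, so that $\sum_n \delta(n)$ and the accumulated $e^{\eps}$-slack stay below the constant accuracy gap $3/4-1/2$) is the main obstacle.

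Concretely the steps are: (1) fix the encoding of $X$-elements as strings indexed by a smaller domain and define the ``consistency'' predicate linking an interior point of $D$ to an interior point of $D'$; (2) prove the base case by a direct packing argument; (3) state and prove the recursive reduction lemma: if there is an $(\eps,\delta)$-private algorithm solving the interior point problem on $X$ with error $\beta$ using $n$ samples, then there is an $(\eps,O(\delta))$-private algorithm solving it on a domain $Y$ with $|Y|=\Theta(\log|X|)$ with error $\beta+O(\delta)+$(small constant) using $n-1$ samples; (4) iterate the reduction $n$ times starting from $X$, driving the domain down through $\log^*|X|$ applications of $\log$ until it becomes constant-sized, at which point the base case is violated if $n$ is too small; (5) track the accumulated error and privacy loss to conclude that success with probability $3/4$ under $\delta(n)\le 1/(50n^2)$ forces $n=\Omega(\log^*|X|)$. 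I would finally note the infinite-domain corollary: since no finite $n$ suffices for unbounded $|X|$, there is no private interior-point algorithm over $\N$ or $[0,1]$.
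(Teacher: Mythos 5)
Your plan matches the paper's proof at essentially every structural level: the recursive construction encoding elements of the large domain $X$ as strings indexed by a smaller domain $Y$ with $|Y|\approx\log|X|$ (the paper uses base-$b(n)$ digits with $b(n)=1/\delta(n)$), the ``shared prefix length'' link between an interior point over $X$ and an interior point over $Y$, the $\delta$-budget bookkeeping that keeps $\sum_n\delta(n)$ below the constant accuracy gap (hence the $\delta(n)\le 1/(50n^2)$ hypothesis), the tower-of-exponentials domain growth giving $n=\Omega(\log^*|X|)$, and the constant-size base-case packing argument.

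There is, however, one gap I would not file under ``bookkeeping.'' Your step (a) --- the claim that ``an interior point $z$ of $D$ must, via its bit-pattern, reveal an interior point of $D'$'' --- is false as a deterministic statement. The paper constructs the lifted database $\hat D=(y_0,y_1,\dots,y_n)$ so that each $y_i$ shares the first $x_i$ digits with an anchor $y_0$; from this the \emph{lower} bound is free (every interior point of $\hat D$ shares at least $\min_i x_i$ digits with $y_0$, since $\min\hat D$ and $\max\hat D$ already do). But the \emph{upper} bound does not follow from the interior-point property alone: $y_0$ itself is a valid interior point of $\hat D$ and agrees with $y_0$ on every digit. The paper obtains the upper bound from privacy, not geometry: the digit $y_0^{(w+1)}$ at position $w+1=\max_i x_i+1$ is a fresh uniform symbol that, with everything else fixed, influences only one row of $\hat D$, so an $(\eps,\delta)$-private mechanism's output can agree with it with probability at most $e^{\eps}/b(n)+\delta(n+1)$. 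This is the precise place where privacy becomes a utility upper bound; your ``marginal nearly independent of the planted instance'' phrasing gestures in this direction but does not identify it, and without it the reduction lemma in your step (3) cannot be proved. Relatedly, you should note that the reduction preserves privacy exactly, with no invocation of composition: for each fixed $y_0$, changing one $x_i$ changes exactly one $y_i$, so the derived mechanism for $Y$ is $(\eps,\delta)$-private with no degradation --- the only place the privacy budget accumulates is in the utility analysis via the convergent series $\sum_n\delta(n)$, which is why $\eps$ can stay a fixed constant across all $\log^*|X|$ levels.
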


Our choice of $\delta = O( 1/n^2 )$ is unimportant; any monotonically non-increasing 
convergent series will do. To prove the theorem, we inductively construct a sequence of database distributions $\{\cD_n\}$ supported on data universes $[S(n)]$ (for $S(n+1) = 2^{\tilde{O}(S(n))}$) over which any differentially private mechanism using $n$ samples must fail to solve the interior point problem. Given a hard distribution $\cD_n$ over $n$ elements $(x_1, x_2, \dots, x_n)$ from $[S(n)]$, we construct a hard distribution $\cD_{n+1}$ over elements $(y_0, y_1, \dots, y_n)$ from $[S(n+1)]$ by setting $y_0$ to be a random number, and letting each other $y_i$ agree with $y_0$ on the $x_i$ most significant digits. We then show that if $y$ is the output of any differentially private interior point mechanism on $(y_0, \dots, y_n)$, then with high probability, $y$ agrees with $y_0$ on at least $\min x_i$ entries and at most $\max x_i$ entries. Thus, a private mechanism for solving the interior point problem on $\cD_{n+1}$ can be used to construct a private mechanism for $\cD_n$, and so $\cD_{n+1}$ must also be a hard distribution.

The inductive lemma we prove depends on a number of parameters we now define. Fix $\frac{1}{4}>\eps, \beta > 0$. Let $\delta(n)$ be any positive non-increasing
sequence for which
\[P_n \triangleq \frac{e^\eps}{e^\eps + 1} + (e^\eps+1)\sum_{j=1}^{n} \delta(j) \le 1 - \beta \] for every $n$.
In particular, it suffices that
\[\sum_{n=1}^\infty \delta(n) \leq \frac{\frac{1}{3}-\beta}{e^\epsilon + 1}.\]
Let $b(n) = 1/\delta(n)$ and define the function $S$ recursively by
\[S(1) = 2 \quad \text{and} \quad S(n+1) = b(n)^{S(n)}.\]

\begin{lemma} \label{lem:range-lb-dist}
For every positive integer $n$, there exists a distribution $\cD_n$ over databases $D \in [S(n)]^n = \{0, 1, \dots, S(n)-1\}^n$ such that for every $(\eps, \delta(n))$-differentially private mechanism $\cM$,
\[\Pr[\min D \le \cM(D) \le \max D] \le P_n,\]
where the probability is taken over $D \getsr \cD_n$ and the coins of $\cM$.
\end{lemma}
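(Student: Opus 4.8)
The plan is to prove the lemma by induction on $n$, implementing the recursive construction described above.

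\emph{Base case.} Take $\cD_1$ to be uniform on $\{0,1\}=[S(1)]$. A one-element database $(x)$ has $\min=\max=x$, so solving the interior point problem means outputting $x$ exactly. Since $(0)$ and $(1)$ are neighbors, adding the two privacy inequalities $\Pr[\cM(b)=b]\le e^\eps\Pr[\cM(1-b)=b]+\delta(1)$ for $b\in\{0,1\}$ and using $\Pr[\cM(b)=1-b]=1-\Pr[\cM(b)=b]$ gives $\Pr_{D\getsr\cD_1}[\cM(D)=D]=\tfrac12\big(\Pr[\cM(0)=0]+\Pr[\cM(1)=1]\big)\le\tfrac{e^\eps+\delta(1)}{e^\eps+1}\le P_1$.

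\emph{Inductive step.} Assume the claim for $n$ and let $\cM'$ be any $(\eps,\delta(n+1))$-differentially private mechanism on $[S(n+1)]^{n+1}$. Identify $[S(n+1)]=[b(n)^{S(n)}]$ with the length-$S(n)$ strings over the alphabet $\{0,\dots,b(n)-1\}$, ordered as base-$b(n)$ integers, and define $\cD_{n+1}$ by: sample $(x_1,\dots,x_n)\getsr\cD_n$ and $y_0\getsr[S(n+1)]$ uniformly; let $y_i$ agree with $y_0$ in its first $x_i$ digits and be $0$ in the remaining $S(n)-x_i$ digits; output $D=(y_0,y_1,\dots,y_n)$. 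Every $y_i\le y_0$, so $\max D=y_0$ and $\min D=y_{i^\ast}$ where $i^\ast$ attains $m:=\min_j x_j$; since $y_{i^\ast}$ and $y_0$ share a prefix of length $m$, every $z$ with $y_{i^\ast}\le z\le y_0$ does too. Hence, whenever $\cM'(D)$ is an interior point of $D$, the longest common prefix of $\cM'(D)$ and $y_0$ has length $k(\cM'(D),y_0)\ge m$.

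Now define a mechanism $\cM$ for $\cD_n$: on input $(x_1,\dots,x_n)\in[S(n)]^n$, internally sample $y_0$, build $D=(y_0,y_1,\dots,y_n)$ exactly as above (so $D\sim\cD_{n+1}$ when the input is $\sim\cD_n$), run $z\gets\cM'(D)$, and output $\min\{k(z,y_0),\,S(n)-1\}$. For each fixed value of the internal coin $y_0$, changing one $x_i$ changes only the row $y_i$ of $D$, so this map takes neighbors to neighbors; composing with $\cM'$ and the input-independent post-processing and averaging over $y_0$, $\cM$ is $(\eps,\delta(n+1))$- and hence $(\eps,\delta(n))$-differentially private, so by the inductive hypothesis $\Pr_{D'\getsr\cD_n}[\min D'\le\cM(D')\le\max D']\le P_n$. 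Put $M:=\max_j x_j$ and let $B=\{k(z,y_0)>M\}$. If $\cM'(D)$ is an interior point and $B$ fails, then $m\le k(z,y_0)\le M\le S(n)-1$, so $\cM$ outputs $k(z,y_0)\in[m,M]=[\min D',\max D']$ and succeeds; therefore $\Pr[\cM'(D)\text{ is interior}]\le P_n+\Pr[B]$. To bound $\Pr[B]$, fix $(x_1,\dots,x_n)$ and the first $M$ digits of $y_0$; this determines $y_1,\dots,y_n$ (each depends only on the first $\max_j x_j=M$ digits of $y_0$), while digit $M+1$ of $y_0$ is still uniform over the $b(n)$ alphabet symbols and changing it alters only the single row $y_0$ of $D$. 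The event $B$ forces $z$ to agree with $y_0$ in digit $M+1$, and for each fixed $z$ the events ``$z$ has symbol $c$ in position $M+1$'' are disjoint over the $b(n)$ choices of $c$; averaging over that digit together with $(\eps,\delta(n+1))$-privacy of $\cM'$ yields $\Pr[B]\le e^\eps/b(n)+\delta(n+1)=e^\eps\delta(n)+\delta(n+1)$. Iterating this one-step reduction from the base case, any $(\eps,\delta(n))$-private mechanism is interior on $\cD_n$ with probability at most $\tfrac{e^\eps+\delta(1)}{e^\eps+1}+\sum_{j=1}^{n-1}\big(e^\eps\delta(j)+\delta(j+1)\big)$, and a direct computation shows this is at most $\tfrac{e^\eps}{e^\eps+1}+(e^\eps+1)\sum_{j=1}^n\delta(j)=P_n$, completing the induction.

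The main obstacle is the bound on $\Pr[B]$: showing that an interior point of $\cD_{n+1}$ is unlikely to agree with $y_0$ on more than $M$ digits. This is the one step where differential privacy of $\cM'$ is essential, and it works precisely because $y_1,\dots,y_n$ depend only on the first $M$ digits of $y_0$, so the tail of $y_0$ is hidden from $\cM'$ up to a single-row change, and no differentially private mechanism can predict a uniformly random hidden digit much better than at random. The remaining ingredients --- the lower bound $k(\cM'(D),y_0)\ge m$, the privacy of the reduction, and the base case --- are routine; the one further subtlety is that it is the \emph{accumulated} error over all recursion levels (rather than the per-level error $e^\eps\delta(n)+\delta(n+1)$, which generally exceeds $P_{n+1}-P_n$) that must fit inside the budget $P_n$, which is exactly what the prescribed choices of $\delta(\cdot)$, $b(\cdot)$, $S(\cdot)$, and $P_n$ are arranged to achieve.
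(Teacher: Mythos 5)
Your proof is correct and follows the same overall strategy as the paper: a recursive construction where $\cD_{n+1}$ is built by embedding $\cD_n$ as prefix lengths of strings sharing a common prefix with a fresh random $y_0$, followed by the key observation that a private interior-point solver cannot predict the digit of $y_0$ just past $\max_j x_j$. Two small deviations are worth flagging.

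First, you pad the tails of $y_1,\dots,y_n$ with zeros rather than fresh uniform digits as the paper does. This is a clean simplification: it makes $y_0 = \max D$ and $y_{i^*} = \min D$ hold deterministically and means that fixing $(x_1,\dots,x_n)$ and the first $M$ digits of $y_0$ determines $y_1,\dots,y_n$ with no conditioning on auxiliary coins. Both versions work, but yours saves a small bookkeeping step in the max bound.

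Second, and more substantively: you correctly observed that the per-step loss $e^{\eps}\delta(n) + \delta(n+1)$ does not fit inside the per-step budget $P_{n+1}-P_n = (e^{\eps}+1)\delta(n+1)$ when $\delta$ is strictly decreasing (one would need $\delta(n+1)\ge\delta(n)$). The paper's inductive step, as literally written with $b(n)=1/\delta(n)$, asserts $P_{n+1}-\bigl(e^{\eps}/b(n)+\delta(n+1)\bigr)\ge P_n$, which fails under that same condition; the intended choice appears to be $b(n)=1/\delta(n+1)$ (equivalently indexing $S(n+1)=b(n+1)^{S(n)}$), which makes the two sides equal. You sidestep this by tracking the accumulated error $Q_n = \tfrac{e^{\eps}+\delta(1)}{e^{\eps}+1}+\sum_{j=1}^{n-1}\bigl(e^{\eps}\delta(j)+\delta(j+1)\bigr)$ and verifying directly that $Q_n\le P_n$. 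That is a valid fix, but it means your stated inductive invariant should be $\Pr\le Q_n$ (which is what you actually propagate), not $\Pr\le P_n$: as written, the middle of your argument invokes the inductive hypothesis as $\le P_n$ and then silently swaps to the accumulated bound when you ``iterate.'' A clean write-up would carry the strengthened invariant $\Pr\le Q_n$ through the induction and conclude $Q_n\le P_n$ at the end. With that bookkeeping repair the argument is airtight, and in fact slightly more careful than the paper's.
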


In this section, we give a direct proof of the lemma and in \stocrm{Appendix \ref{app:range-fpc}}\stoctext{the full version of this paper}, we show how the lemma follows from the construction of a new combinatorial object we call an ``interior point fingerprinting code.'' This is a variant on traditional fingerprinting codes, which have been used recently to show nearly optimal lower bounds for other problems in approximate differential privacy \cite{BUV14, DworkTaThZh14, BassilySmTh14}.

\begin{proof}
The proof is by induction on $n$. We first argue that the claim holds for $n = 1$ by letting $\cD_1$ be uniform over the singleton databases $(0)$ and $(1)$. To that end let $x \getsr\cD_1$ and note that for any $(\eps, \delta(1))$-differentially private mechanism $\cM_0:\{0,1\}\rightarrow\{0,1\}$ it holds that
$$
\Pr[\cM_0(x)=x]\leq e^\eps \Pr[\cM_0(\bar{x})=x]+\delta(1)=e^\eps (1-\Pr[\cM_0(x)=x])+\delta(1),
$$
giving the desired bound on $\Pr[\cM_0(x)=x]$.

Now inductively suppose we have a distribution $\cD_n$ that satisfies the claim. We construct a distribution $\cD_{n+1}$ on databases $(y_0, y_1, \dots, y_n) \in [S(n+1)]^{n+1}$ that is sampled as follows:
\begin{itemize}
\item Sample $(x_1, \dots, x_n) \getsr \cD_n$.
\item Sample a uniformly random $y_0 \getsr [S(n+1)]$. We write the base $b(n)$ representation of $y_0$ as $y_0^{(1)}y_0^{(2)}\dots y_0^{(S(n))}$.
\item For each $i = 1, \dots, n$ let $y_i$ be a base $b(n)$ number (written $y_i^{(1)}y_i^{(2)}\dots y_i^{(S(n))}$) that agrees with the base $b(n)$ representation of $y_0$ on the first $x_i$ digits and contains a random sample from $[b(n)]$ in every index thereafter.
\end{itemize}
Suppose for the sake of contradiction that there were an $(\eps, \delta(n+1))$-differentially private mechanism $\hat{\cM}$ that could solve the interior point problem on $\cD_{n+1}$ with probability greater than $P_{n+1}$. We use $\hat{\cM}$ to construct the following private mechanism $\cM$ for solving the interior point problem on $\cD_n$, giving the desired contradiction:

\begin{algorithm}[H]
\caption{$\cM(D)$}
{\bf Input:} Database $D = (x_1, \dots, x_n) \in [S(n)]^n$
\begin{enumerate}
\item Construct $\hat{D} = (y_0, \dots, y_n)$ by sampling from $\cD_{n+1}$, but starting with the database $D$. That is, sample $y_0$ uniformly at random and set every other $y_i$ to be a random base $b(n)$ string that agrees with $y_0$ on the first $x_i$ digits.
\item Compute $y \getsr \hat{\cM}(\hat{D})$.
\item Return the length of the longest prefix of $y$ (in base $b(n)$ notation) that agrees with $y_0$.
\end{enumerate}
\end{algorithm}

The mechanism $\cM$ is also $(\eps, \delta(n+1))$-differentially private, since for all pairs of adjacent databases $D\sim D'$ and every $T \subseteq [S(n)]$,

\begin{align*}
\Pr[\cM(D) \in T] &= \E_{y_0 \getsr [S(n+1)]} \Pr[\hat{\cM}(\hat{D}) \in \hat{T} \mid y_0] \\
&\le \E_{y_0 \getsr [S(n+1)]} (e^\eps \Pr[\hat{\cM}(\hat{D}') \in \hat{T} \mid y_0] + \delta) & \text{since $\hat{D} \sim \hat{D}'$ for fixed $y_0$}\\
& = e^{\eps} \Pr[\cM(D') \in T] + \delta,
\end{align*}
where $\hat{T}$ is the set of $y$ that agree with $y_0$ in exactly the first $x$ entries for some $x \in T$.

Now we argue that $\cM$ solves the interior point problem on $\cD_n$ with probability greater than $P_n$. First we show that $x \ge \min D$ with probability greater than $P_{n+1}$. Observe that by construction, all the elements of $\hat{D}$ agree in at least the first $\min D$ digits, and hence so does any interior point of $\hat{D}$. Therefore, if $\cM'$ succeeds in outputting an interior point $y$ of $\hat{D}$, then $y$ must in particular agree with $y_0$ in at least $\min D$ digits, so $x \ge \min D$.

Now we use the privacy that $\hat{\cM}$ provides to $y_0$ to show that $x \le \max D$ except with probability at most $e^{\eps}/b(n)+ \delta(n+1)$. Fix a database $D$. Let $w = \max D$, and fix all the randomness of $\cM$ but the $(w+1)$st entry of $y_0$ (note that since $w = \max D$, this fixes $y_1,\ldots,y_n$). Since the $(w+1)$st entry of $y_0$ is still a uniformly random element of $[b(n)]$, the privately produced entry $y^{w+1}$ should not be able to do much better than randomly guessing $y_0^{(w+1)}$. Formally, for each $z \in [b(n)]$, let $\hat{D}_z$ denote the database $\hat{D}$ with $y_0^{(w+1)}$ set to $z$ and everything else fixed as above. Then by the differential privacy of $\hat{\cM}$,

\begin{align*}
\Pr_{z \in [b(n)]}[\hat{\cM}(\hat{D}_z)^{w+1} = z] &= \frac{1}{b(n)} \sum_{z \in [b(n)]} \Pr[\hat{\cM}(\hat{D}_z)^{w+1} = z] \\
&\le \frac{1}{b(n)} \sum_{z \in [b(n)]} \E_{z' \getsr [b(n)]} \left[e^\eps \Pr[\hat{\cM}(\hat{D}_{z'})^{w+1} = z] +\delta(n+1)\right] \\
&\le \frac{e^{\eps}}{b(n)} + \delta(n+1),
\end{align*}
where all probabilities are also taken over the coins of $\hat{\cM}$. Thus $x \le \max D$ except with probability at most $e^{\eps}/b(n)+ \delta(n+1)$. By a union bound, $\min D\le x \le \max D$ with probability greater than
\[P_{n+1} -\left( \frac{e^{\eps}}{b(n)} + \delta(n+1)\right) \geq P_n.\]
This gives the desired contradiction.
\end{proof}

\stocrm{
We now prove Theorem \ref{thm:range-lb} by estimating the $S(n)$ guaranteed by Lemma \ref{lem:range-lb-dist}.

\begin{proof}[Proof of Theorem \ref{thm:range-lb}]
Let $S(n)$ be as in Lemma \ref{lem:range-lb-dist}. We introduce the following notation for iterated exponentials:
\[{\rm tower}^{(0)}(x) = x \quad \text{and} \quad  {\rm tower}^{(k)}(x) = 2^{{\rm tower}^{(k-1)}(x)}.\]
Observe that for $k \ge 1$, $x > 0$, and $M > 16$,
\begin{align*}
M^{{\rm tower}^{(k)}(x)} &= 2^{{\rm tower}^{(k)}(x) \log M} \\
&= {\rm tower}^{(2)}({\rm tower}^{(k-1)}(x) + \log\log M) \\
&\le {\rm tower}^{(2)}({\rm tower}^{(k-1)}(x + \log\log M)) \\
&= {\rm tower}^{(k+1)}(x + \log\log M).
\end{align*}
By induction on $n$ we get an upper bound of
\[S(n+1) \le {\rm tower}^{(n)}(2 + n\log\log (cn^2)) \le {\rm tower}^{(n + \log^*(cn^2))}(1).\]
This immediately shows that solving the interior point problem on $X=[S(n)]$ requires sample complexity
\begin{align*}
n &\ge \log^*S(n) - \log^*(cn^2)\\
 &\ge \log^*S(n) - O(\log^*\log^*S(n))\\
&= \log^*|X| - O(\log^*\log^*|X|).
\end{align*}
To get a lower bound for solving the interior point problem on $X$ when $|X|$ is not of the form $S(n)$, note that a mechanism for $X$ is also a mechanism for every $X'$ s.t.\ $|X'|\le |X|$. The lower bound follows by setting $|X'|=S(n)$ for the largest $n$ such that $S(n)\leq|X|$.
\end{proof}
}

\stoctext{The proof of Theorem \ref{thm:range-lb} follows by estimating the $S(n)$ guaranteed by Lemma \ref{lem:range-lb-dist}, and appears in the full version of this work. Using similar ideas, we also prove the following upper bound on the sample complexity of solving the interior point problem.

\begin{theorem}
Let $\beta,\epsilon,\delta > 0$ and let $X$ be a finite totally ordered domain.
There is an $(\epsilon,\delta)$-differentially private algorithm for solving the interior point problem on $X$ with failure probability $\beta$ and sample complexity $n=\frac{18500}{\epsilon} \cdot 2^{\log^*|X|} \cdot \log^*(|X|) \cdot \ln(\frac{4\log^*|X|}{\beta\epsilon\delta})$.
\end{theorem}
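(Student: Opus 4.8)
The plan is to mimic the recursive structure of the lower bound in reverse: I will give a reduction that turns a differentially private algorithm for the interior point problem on a domain of size roughly $\log|X|$ into one on a domain of size $|X|$, at the cost of (essentially) doubling the sample complexity and adding a lower-order term of size $O(\tfrac{1}{\eps}\log(1/\beta\delta))$. Iterating this reduction $\log^*|X|$ times brings the domain down to constant size, where the interior point problem is solved directly (e.g.\ by adding Laplace noise to the counts of the $O(1)$ domain elements and reporting a suitable one, or by randomized response). Writing $g(N)$ for the resulting sample complexity on a universe of size $N$, the recurrence reads $g(N)\le 2\,g(O(\log N))+O(\tfrac{1}{\eps}\log(1/\beta\delta))$, and unrolling it over the $\le\log^*|X|+O(1)$ levels gives $n=O\!\left(\tfrac{2^{\log^*|X|}\cdot \log^*|X|}{\eps}\cdot\ln\tfrac{\log^*|X|}{\beta\eps\delta}\right)$, where the extra $\log^*|X|$ factor is the price of splitting the privacy budget evenly across the $\log^*|X|$ levels.

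The heart of the argument is the reduction lemma. Identify $[N]$ with binary strings of length $T=\log N$. Given a database $D\in([N])^n$, the first step is to privately and \emph{stably} locate a prefix around which $D$ "splits": using the sparse-vector / \textsf{AboveThreshold} technique over the $T$ levels of the binary prefix tree (each query being the count of the heaviest prefix at that level, which has sensitivity $1$), I extract a length $\hat\ell$ and release the corresponding prefix $\hat p$ — which is shared by a robust (hence stable, hence releasable) fraction of $D$ — chosen so that the elements of $D$ carrying $\hat p$ do not all agree on bit $\hat\ell+1$. Writing $B$ for the sub-database of elements with prefix $\hat p$ (of size at least $n/2$ up to the \textsf{AboveThreshold} slack), I strip $\hat p$, keep the next $k:=\lceil\log\log N\rceil$ bits, and recursively solve the interior point problem on the resulting sub-database $B'\in(\{0,1\}^k)^{|B|}$, whose domain has size only $2^k\le\log N$. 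If $w$ is the returned interior point of $B'$, the output is $\hat p\cdot w\cdot \overline{w_1}^{\,T-\hat\ell-k}$, where $w_1$ is the first bit of $w$. Correctness: every element of $B'$ is a "window" of some element of $B\subseteq D$, so $w$ is sandwiched between the corresponding windows of $\min D$ and $\max D$; because $B$ splits at bit $\hat\ell+1$, these two windows differ already in their first bit, so padding by the constant string $\overline{w_1}$ pushes the output into $[\min B,\max B]\subseteq[\min D,\max D]$ — an interior point of $D$. Crucially, this argument never refers to whether $\min D$ or $\max D$ themselves carry the prefix $\hat p$: since $B\subseteq D$, an interior point of $B$ is automatically an interior point of $D$, which is what lets us anchor the reduction on a stable popular prefix rather than on the (high-sensitivity) true extremes.

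For privacy, each of the $O(\log^*|X|)$ levels is $(\eps_i,\delta_i)$-differentially private as a function of the original $D$: the derived sub-database $B'$ changes in at most one row when a single row of $D$ changes, and \textsf{AboveThreshold} together with the stability-based release of $\hat p$ are private by standard results; setting $\eps_i=\Theta(\eps/\log^*|X|)$ and $\delta_i=\Theta(\delta/\log^*|X|)$ and composing (Lemma~\ref{lem:composition}) gives overall $(\eps,\delta)$-privacy. For accuracy, the failure probability $\beta$ is split across the levels (and, within a level, between the prefix-finding step and the recursive call), and the per-level additive sample cost is the \textsf{AboveThreshold} / stability slack $O(\tfrac{1}{\eps_i}\log(1/\beta_i\delta_i))$; plugging these parameters into the recurrence and the constant-size base case and tracking constants carefully yields exactly $n=\tfrac{18500}{\eps}\cdot 2^{\log^*|X|}\cdot\log^*|X|\cdot\ln\!\big(\tfrac{4\log^*|X|}{\beta\eps\delta}\big)$.

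The step I expect to be the main obstacle is the reduction lemma, and specifically arranging for the domain to shrink from $N$ all the way down to (roughly) $\log N$ in a \emph{single} private step — this is what makes the recursion depth $\log^*|X|$ rather than $\log\log|X|$ or worse. The subtlety is twofold. First, privacy forces the reduction to be anchored on a quantity of bounded sensitivity, so one cannot use the true longest common prefix of $D$ or the true $\min D,\max D$ (each of which a single individual can change arbitrarily); one must instead work with a robustly popular prefix and absorb the resulting \textsf{AboveThreshold} slack into the sample complexity. Second, the prefix-finding step must be engineered so that it behaves sensibly on \emph{every} input — both when $D$ is heavily clustered (a long popular prefix exists) and when $D$ is spread out or "dribbles" down the prefix tree with no clean split — which may require considering projections of $D$ at several resolutions and recursing on whichever one both reduces the domain to $\approx\log N$ and lifts back correctly. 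Getting the constants, the noise scales, and the $\log^*|X|$-fold composition to line up so that the final bound has the stated clean form is then a lengthy but routine bookkeeping exercise.
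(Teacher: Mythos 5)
Your plan is a genuinely different route from the paper's. The paper's $\RecPrefix$ pairs elements at random, forms the database of pairwise longest-common-prefix \emph{lengths}, recurses on those lengths over the domain $[\log|X|]$, and only \emph{after} the recursion returns a length $z$ does it search (via the choosing mechanism) for a prefix of the now-fixed length $z{+}1$ shared by many elements. You instead try to find a stable popular prefix $\hat p$ \emph{before} recursing, and then recurse on a short bit-window of the elements carrying $\hat p$. Both yield the recurrence $g(N)\le 2\,g(\log N)+O\!\big(\tfrac{1}{\eps}\log\tfrac{1}{\beta\delta}\big)$ (yours because $|B|\gtrsim n/2$, the paper's because pairing halves the database), and both split the budget over $\log^*|X|$ levels. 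If your reduction lemma were airtight, the bookkeeping would indeed give the claimed bound, and your version would avoid the paper's random-shuffling analysis (Claim~\ref{claim:randomPermutation}) entirely, since a prefix with count $>n/2$ is automatically unique and hence stably releasable.

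However, there is a real gap in the prefix-finding step, and you have correctly anticipated exactly where it is. Your correctness argument requires that $B$ \emph{split} at bit $\hat\ell{+}1$; otherwise $w_{\min}$ and $w_{\max}$ agree on the first bit of the window and the fixed padding by $\overline{w_1}$ can overshoot (e.g.\ if $w=w_{\max}$ has $w_1=0$, padding with $1$'s gives a point larger than $\max B$). But \textsf{AboveThreshold} on the sequence $c_\ell=$ (count of the heaviest prefix at level $\ell$) does \emph{not} certify a split: $c_\ell$ is non-increasing and can stay flat across many levels, and \textsf{AboveThreshold}'s inherent $\pm O(\tfrac{1}{\eps}\log(T/\delta))$ slack means the reported $\hat\ell$ can land strictly inside a flat stretch, where the unique heavy prefix has a unique heavy child of the same count and $B$ does not split at all. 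One would like to target ``the last $\ell$ with $c_\ell > n/2 + (\text{slack})$,'' at which point a split is forced; but that crossing index has unbounded sensitivity when $c_\ell$ is flat near the threshold, so it cannot be released directly. Put differently, you need stability on the \emph{location of a drop} in $c_\ell$, and neither sparse vector nor report-noisy-max gives you that.

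The paper resolves precisely this difficulty by (i) strengthening the recursive invariant — Lemma~\ref{lem:RecPrefixUtility} guarantees not merely an interior point but a point with at least $k=\Theta(\tfrac{1}{\eps}\log\tfrac{1}{\beta\eps\delta})$ elements on its upper side — and (ii) deciding the padding direction $L_0$ vs.\ $L_1$ with a separate noisy count (Steps~7--8), rather than a fixed rule. With that slack, it is harmless if the identified prefix $L$ is one level too long or too short, because the noisy count catches it. If you want to salvage the top-down structure, you would need to graft in the same two fixes: make the recursion promise a $\Theta(k)$-margin interior point of $B'$ (trimming $O(k)$ extreme elements of $B$ before recursing, which changes the recurrence only by a lower-order additive term), and replace the deterministic $\overline{w_1}$ padding with a Laplace test for which of $\hat p\cdot w\cdot 0^{\,\cdots}$ or $\hat p\cdot w\cdot 1^{\,\cdots}$ has enough data on the appropriate side. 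Without those, the argument as written does not go through on the flat-count inputs you already flagged as the danger case.
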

}

\def\RecPrefix{\mbox{\it RecPrefix}}

\stocrm{
\subsection{Upper Bound}
We now present a recursive algorithm, $\RecPrefix$, for privately solving the interior point problem.

\begin{theorem} \label{thm:range_queries_upperbound}
Let $\beta,\epsilon,\delta > 0$, let $X$ be a finite, totally ordered domain, and let $n \in \N$ with $n\geq\frac{18500}{\epsilon} \cdot 2^{\log^*|X|} \cdot \log^*(|X|) \cdot \ln(\frac{4\log^*|X|}{\beta\epsilon\delta})$. If $\RecPrefix$ (defined below) is executed on a database $S \in X^n$ with parameters $\frac{\beta}{3\log^*|X|},\frac{\epsilon}{2\log^*|X|},\frac{\delta}{2\log^*|X|}$, then
\begin{enumerate}
	\item $\RecPrefix$ is $(\epsilon,\delta)$-differentially private;
	\item With probability at least $(1-\beta)$, the output $x$ satisfies
	$\min\{x_i:x_i\in S\}\leq x\leq\max\{x_i:x_i\in S\}$.
\end{enumerate}
\end{theorem}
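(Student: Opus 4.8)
The plan is to analyze the recursive mechanism $\RecPrefix$. On input $S=(x_1,\dots,x_n)\in X^n$ it represents each $x_i$ in base $b$ (for a suitable $b$), randomly partitions the indices into $\lfloor n/2\rfloor$ pairs, and forms a database $S'$ whose $j$-th entry is the length of the longest common prefix of the $j$-th pair; thus $S'$ lives over the much smaller domain $\{0,1,\dots,\ell\}$ with $\ell=\lceil\log_b|X|\rceil$. After discarding its topmost few entries (a number logarithmic in the parameters, to maintain ``slack''), the mechanism recurses on $S'$ to obtain a prefix-length $z$, then uses the choosing mechanism to privately extract a length-$(z+1)$ prefix $L$ that is a prefix of at least one input point $x_{i^*}$, and finally privately selects and outputs one of the two extreme completions of $L$ -- $L_0$ (append $0$'s) or $L_1$ (append $1$'s). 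When the domain has shrunk to constant size the recursion bottoms out and the interior point problem is solved directly using Laplace-noised counts. Since each level replaces $|X|$ by $\lceil\log_b|X|\rceil$, the recursion depth is $O(\log^*|X|)$, and $\RecPrefix$ is run at every level with the split parameters $\epsilon'=\epsilon/(2\log^*|X|)$, $\delta'=\delta/(2\log^*|X|)$, $\beta'=\beta/(3\log^*|X|)$.

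\textbf{Privacy.} I would argue level by level. The map $S\mapsto S'$ has sensitivity $1$: neighboring databases $S\sim\widetilde{S}$ differ in one row, hence in one pair, hence in one entry of $S'$ (the pairing and padding randomness can be fixed in advance). So the recursive call inherits the neighbor relation, and each remaining per-level step -- the choosing mechanism and the final $L_0$-versus-$L_1$ selection -- is a standard $(\epsilon',\delta')$-private primitive. Invoking the composition lemma (Lemma~\ref{lem:composition}) across the $O(\log^*|X|)$ levels plus the constant-domain base case yields $(\epsilon,\delta)$-differential privacy overall.

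\textbf{Accuracy.} I would induct on the recursion depth, charging failure probability $\beta'$ per level (a constant number of bad events per level), so that the union bound over all levels totals at most $\beta$. Suppose the recursive call returns, except with the accumulated failure probability, a value $z$ with (1) $z\ge\min_i z_i$, so that some pair of inputs agrees on a prefix of length at most $z$, and (2) $z_i\ge z$ for several $i$ -- the ``slack'' guaranteed by discarding the top entries of $S'$ -- so that several pairs agree on prefixes of length at least $z$. From (2), a Chernoff bound exploiting that the pairing is \emph{random} shows that with probability $\ge 1-\beta'$ some single length-$(z+1)$ prefix is shared by enough input elements to be \emph{stable} (unaffected by changing one row), whereupon the choosing mechanism outputs, with probability $\ge 1-\beta'$, such a prefix $L$, a prefix of some input $x_{i^*}$. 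From (1), taking the pair $x_{i_1}\le x_{i_2}$ that agrees on a prefix of length at most $z<z+1=|L|$, a short case analysis on how that pair sits relative to $L$ shows that $x_{i_1}\le L_0\le x_{i^*}$ or $x_{i^*}\le L_1\le x_{i_2}$; since $x_{i_1},x_{i_2},x_{i^*}\in[\min S,\max S]$, one of $L_0,L_1$ is an interior point of $S$, and the final private selection step outputs a good one except with probability $\le\beta'$. The base case is immediate: interior point over a constant-size domain is solvable with failure probability $\beta'$ using $O(\frac{1}{\epsilon'}\log\frac{1}{\beta'})$ samples.

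\textbf{Sample complexity and the main obstacle.} Each level's private primitives succeed with probability $\ge 1-\beta'$ on $O(\frac{1}{\epsilon'}\ln\frac{1}{\beta'\delta'})=O(\frac{\log^*|X|}{\epsilon}\ln\frac{\log^*|X|}{\beta\epsilon\delta})$ samples, while $n$ is essentially halved at each of the $O(\log^*|X|)$ levels; unwinding the recursion therefore requires $n\ge\frac{18500}{\epsilon}\cdot 2^{\log^*|X|}\cdot\log^*|X|\cdot\ln\frac{4\log^*|X|}{\beta\epsilon\delta}$ at the top. The step I expect to be the main obstacle is the quantitative ``stable heavy prefix'' claim: showing that random pairing forces a single length-$(z+1)$ prefix to be held by enough input elements for the choosing mechanism to succeed requires the number of surviving pairs agreeing on $\ge z$ digits to be suitably large relative to $1/\beta'$ and to the parameters of the choosing mechanism, and it is this requirement, propagated through the halving of $n$ across $O(\log^*|X|)$ levels, that produces the $2^{\log^*|X|}$ factor and pins down the constant. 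Threading $(\beta,\epsilon,\delta)$ through all the levels so that both the privacy composition and the accuracy union bound close with the stated constant is the remaining bookkeeping.
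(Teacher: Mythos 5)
Your overall plan is the right one and matches the paper's structure closely: a recursive prefix-length reduction that shrinks the domain from $|X|$ to $\log|X|$ per level, privacy via composition over $O(\log^*|X|)$ levels, utility via an inductive lemma with two invariants, a combinatorial lemma about random pairings producing a "heavy" common prefix, and the choosing mechanism followed by a private $L_0$-vs-$L_1$ selection. Your diagnosis of the $2^{\log^*|X|}$ factor (database halving across levels) is also exactly right.

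The genuine gap is where the ``slack'' is inserted. You propose pairing \emph{all} of $S$, forming $S'$ of size $\lfloor n/2\rfloor$, and then discarding the topmost entries \emph{of $S'$} before recursing. The paper instead discards the top $2k$ elements \emph{of $S$} before pairing, so that $S'$ has size $(n-2k)/2$ and every element of $Y$ (the remainder of $S$) is dominated by $2k$ withheld rows. This distinction is load-bearing in two places. First, the inductive invariant is not merely ``the output is an interior point'' but ``$\exists\,x_i\le x$ \emph{and} $|\{i: x_i\ge x\}|\ge k$''; your step establishes only the interior-point part. The paper certifies the second invariant for the output $L_0$ by observing that $L_0\le y_{i'}\in Y$ and the $2k$ discarded elements of $S$ all sit above $Y$. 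Discarding from $S'$ does not produce any elements of $S$ known to lie above $L_0$, so the induction does not close. Second, the private $L_0$/$L_1$ selection compares a noisy count $\#\{j: x_j\ge L_1\}$ against a threshold of $3k/2$; the soundness of that threshold in the case ``$L_0$ bad, $L_1$ good'' again invokes the $2k$ discarded elements of $S$ lying above $L_1$. Without them the gap between the two cases may be too small to resolve privately. So the variant you describe would need a different certification of the second invariant and a re-tuned selection step, and it is not clear either can be made to work as stated.

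One smaller caution: for privacy, ``fix the pairing randomness in advance'' is not quite enough, because neighboring inputs can differ after sorting, so the same permutation applied to both can scatter the change across two positions. The paper handles this with an explicit measure-preserving bijection between permutations of $Y$ and of $Y'$ under which $S'$ and the corresponding $S''$ are neighbors; you'd want to make that bijection explicit rather than asserting sensitivity $1$ directly.

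Your instinct that the combinatorial heart is a concentration statement about random pairings is correct; the paper proves it (Claim~\ref{claim:randomPermutation}) by a direct counting/Stirling argument rather than a Chernoff bound, but that is a presentational rather than substantive difference.
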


The idea of the algorithm is that on each level of recursion, $\RecPrefix$ takes an input database $S$ over $X$ and constructs a database $S'$ over a smaller universe $X'$, where $|X'| = \log |X|$, in which every element is the length of the longest prefix of a pair of elements in $S$ (represented in binary). In a sense, this reverses the construction presented in Section~\ref{sec:lowerBound}.

\subsubsection{The exponential and choosing mechanisms}

Before formally presenting the algorithm $\RecPrefix$, we introduce several additional algorithmic tools. One primitive we will use is the exponential mechanism of McSherry and Talwar~\cite{McSherryTa07}. Let $X^*$ denote the set of all finite databases over the universe $X$. A quality function  $q:X^*\times \cF \rightarrow \N$ defines an {\em optimization problem} over the domain $X$ and a finite solution set $\cF$:  Given a database $S \in X^*$, choose $f\in\FFF$ that (approximately) maximizes $q(S,f)$. The exponential mechanism solves such an optimization problem by choosing a random solution where the probability of outputting any solution $f$ increases exponentially with its quality $q(D,f)$. Specifically, it outputs each $f \in \cF$ with probability $\propto \exp\left(\epsilon \cdot q(S,f) /2 \Delta q\right)$. Here, the sensitivity of a quality function, $\Delta q$, is the maximum over all $f\in\cF$ of the sensitivity of the function $q(\cdot, f)$.

\begin{proposition}[Properties of the Exponential Mechanism] \label{prop:exp_mech}
\ \begin{enumerate}
\item The exponential mechanism is $\epsilon$-differentially private.
\item Let $q$ be a quality function with sensitivity at most $1$. Fix a database $S \in X^n$ and let $\OPT = \max_{f\in \cF}\{q(S,f)\}$. Let $t >0$. Then exponential mechanism outputs a solution $f$ with $q(S,f)\leq \OPT - tn$ with probability at most $|\cF| \cdot \exp(-\epsilon tn /2)$.
\end{enumerate}
\end{proposition}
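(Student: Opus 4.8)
The plan is to prove both parts by direct computation with the output distribution of the exponential mechanism, which I will denote $\cM_E$. For a quality function $q$ of sensitivity $\Delta q$, write $Z(S) = \sum_{f' \in \cF} \exp\!\big(\epsilon\, q(S,f') / (2\Delta q)\big)$ for the normalizing constant, so that $\Pr[\cM_E(S) = f] = \exp\!\big(\epsilon\, q(S,f)/(2\Delta q)\big) / Z(S)$ for each $f \in \cF$.

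For part 1, fix neighboring databases $S \sim S'$ and an output $f$. I would write the likelihood ratio $\Pr[\cM_E(S) = f] / \Pr[\cM_E(S') = f]$ as a product of two factors: the ratio of numerators $\exp\!\big(\epsilon(q(S,f)-q(S',f))/(2\Delta q)\big)$, which is at most $e^{\epsilon/2}$ since $|q(S,f)-q(S',f)| \le \Delta q$; and the ratio $Z(S')/Z(S)$, which is also at most $e^{\epsilon/2}$ since each summand of $Z(S')$ is within an $e^{\epsilon/2}$ factor of the corresponding summand of $Z(S)$. Multiplying yields $\Pr[\cM_E(S)=f] \le e^{\epsilon}\Pr[\cM_E(S')=f]$, and summing over $f \in T$ gives $\epsilon$-differential privacy (with $\delta = 0$) for every event $T \subseteq \cF$.

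For part 2, set $\Delta q = 1$ and fix $S$. The key point is that the sum defining $Z(S)$ includes the term for a maximizer of $q(S,\cdot)$, so $Z(S) \ge \exp(\epsilon\,\OPT/2)$; hence $\Pr[\cM_E(S) = f] \le \exp\!\big(\epsilon(q(S,f) - \OPT)/2\big)$ for every $f$. Taking a union bound over the (at most $|\cF|$ many) solutions $f$ with $q(S,f) \le \OPT - tn$, each of which has probability at most $\exp(-\epsilon t n / 2)$, gives the claimed bound $|\cF|\cdot \exp(-\epsilon t n/2)$.

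I do not expect a genuine obstacle: both statements are the standard privacy and utility guarantees of the McSherry--Talwar exponential mechanism. The only places needing care are carrying the factor $2\Delta q$ through the exponent in part 1, and recognizing that the ``$tn$'' in part 2 is simply the additive deficit parameter (so part 2 is the usual tail bound $\Pr[q(S,f) \le \OPT - s] \le |\cF|\, e^{-\epsilon s/2}$ applied with $s = tn$).
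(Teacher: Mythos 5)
Your proof is correct and is the standard McSherry--Talwar argument (privacy by bounding the numerator ratio and the normalizer ratio each by $e^{\epsilon/2}$; utility by lower-bounding $Z(S)$ via the maximizer and taking a union bound over bad outcomes). The paper itself states this proposition as a known result and cites it without proof, so there is nothing further to compare against.
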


We will also use an $(\eps, \delta)$-differentially private variant of the exponential mechanism called the \emph{choosing mechanism}, introduced in \cite{BeimelNiSt13b}.

A quality function with sensitivity at most $1$ is of {\em $k$-bounded-growth} if adding an element to a database can increase (by 1) the score of at most $k$ solutions,\mnote{We do mean ``adding an element'' here} without changing the scores of other solutions.
\stocrm{Specifically, it holds that
\begin{enumerate}
\item $q(\emptyset, f) = 0$ for all $f \in \cF$,
\item If $S_2 = S_1 \cup \{x\}$, then $q(S_1, f) + 1 \ge q(S_2, f) \ge q(S_1, f)$ for all $f \in \cF$, and
\item There are at most $k$ values of $f$ for which $q(S_2, f) = q(S_1, f) + 1$.
\end{enumerate}
}

The choosing mechanism is a differentially private algorithm for approximately solving bounded-growth choice problems. Step~1 of the algorithm checks whether a good solution exists (otherwise any solution is approximately optimal) and Step~2 invokes the exponential mechanism, but with the {\em small} set $G(S)$ of good solutions instead of $\FFF$.

\begin{algorithm}
\caption{Choosing Mechanism}
\label{alg:choosing}
{\bf Input:} database $S$, quality function $q$, solution set $\FFF$, and parameters $\beta,\epsilon,\delta$ and $k$.
\begin{enumerate}[topsep=-1pt, rightmargin=10pt]
\item Set $\widetilde{\OPT}=\OPT+\Lap(\frac{4}{\epsilon})$. If $\widetilde{\OPT}<\frac{8}{\epsilon}\ln(\frac{4k}{\beta\epsilon\delta})$ then halt and return $\bot$.
\item Let $G(S)=\{f\in\FFF : q(S,f)\geq1\}$. Choose and return $f\in G(S)$ using the exponential mechanism with parameter $\frac{\epsilon}{2}$.\\
\end{enumerate}
\end{algorithm}

The following lemmas give the privacy and utility guarantees of the choosing mechanism. We give a slightly improved utility result over \cite{BeimelNiSt13b}, and the analysis is presented in Appendix \ref{app:choosing}.

\begin{lemma} \label{lem:CMprivacy}
Fix $\delta > 0$, and $0 < \epsilon\leq2$.
If $q$ is a $k$-bounded-growth quality function, then the choosing mechanism is $(\epsilon,\delta)$-differentially private.
\end{lemma}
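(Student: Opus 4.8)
The plan is to decompose the choosing mechanism into its two steps. Step~1 is an instance of the Laplace mechanism: the function $S\mapsto\OPT(S):=\max_{f\in\FFF}q(S,f)$ has sensitivity at most $1$ (since $q$ does), so by Theorem~\ref{thm:lap} the noisy value $\widetilde{\OPT}=\OPT(S)+\Lap(4/\epsilon)$ is $(\epsilon/4)$-differentially private, and hence so is the indicator of whether the mechanism halts (it is a post-processing of $\widetilde{\OPT}$). Thus for neighboring databases $S,S'$ (differing by the insertion or deletion of one element, the relevant notion for bounded-growth functions) we get for free that $\Pr[\cM(S)=\bot]\le e^{\epsilon/4}\Pr[\cM(S')=\bot]$ and that $p_S:=\Pr[\text{$\cM$ does not halt on }S]$ satisfies $p_S\le e^{\epsilon/4}p_{S'}$. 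It then remains to bound $\Pr[\cM(S)\in T]=p_S\cdot\Pr[\mathrm{EM}_{G(S)}(S)\in T]$ for sets $T$ of non-$\bot$ outputs, where $\mathrm{EM}_A(S)$ denotes the exponential mechanism with parameter $\epsilon/2$ run over solution set $A$; writing $w(S,f)=\exp(\epsilon\, q(S,f)/4)$ and $Z_A(S)=\sum_{g\in A}w(S,g)$, we have $\Pr[\mathrm{EM}_A(S)=f]=w(S,f)/Z_A(S)$ for $f\in A$.

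Fix neighbors, let $S$ be the smaller and $S^{+}=S\cup\{x\}$ the larger. The $k$-bounded-growth hypothesis gives $G(S)\subseteq G(S^{+})$, $|G(S^{+})\setminus G(S)|\le k$, $q(S,f)\le q(S^{+},f)\le q(S,f)+1$ for all $f$, and $q(S^{+},f)=1$ for $f\in G(S^{+})\setminus G(S)$. These facts immediately yield $Z_{G(S)}(S)\le Z_{G(S^{+})}(S^{+})\le e^{\epsilon/4}\bigl(Z_{G(S)}(S)+k\bigr)$, which, combined with the per-element bounds $w(S,f)/w(S^{+},f)\le 1$, $w(S^{+},f)/w(S,f)\le e^{\epsilon/4}$, and the $(\epsilon/4)$-closeness of $p_S$ and $p_{S^{+}}$, shows that for every $f\in G(S)$,
\[
\frac{\Pr[\cM(S)=f]}{\Pr[\cM(S^{+})=f]}\le e^{\epsilon/2}\Bigl(1+\tfrac{k}{Z_{G(S)}(S)}\Bigr),\qquad \frac{\Pr[\cM(S^{+})=f]}{\Pr[\cM(S)=f]}\le e^{\epsilon/2}.
\]
(This is essentially the usual exponential-mechanism privacy calculation, redone while allowing the solution set to grow by $k$.) The first bound is useful only when $Z_{G(S)}(S)$ is not too small, and this is exactly what the threshold in Step~1 buys us: I split on whether $\OPT(S)\ge\tau_0:=\tfrac{4}{\epsilon}\ln\tfrac{4k}{\beta\epsilon\delta}$ (half the threshold used in Step~1). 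If $\OPT(S)<\tau_0$, a one-line Laplace tail bound (using also $\OPT(S^{+})\le\OPT(S)+1$) shows $\Pr[\cM(S)\ne\bot]\le\delta$ and $\Pr[\cM(S^{+})\ne\bot]\le\delta$, so both $(\epsilon,\delta)$-inequalities follow from the $\bot$-case alone. If $\OPT(S)\ge\tau_0$, then since the $q$-optimal solution lies in $G(S)$ we have $Z_{G(S)}(S)\ge e^{\epsilon\OPT(S)/4}\ge 4k/(\beta\epsilon\delta)=:M$; substituting $k/Z_{G(S)}(S)\le\beta\epsilon\delta/4$ into the first ratio bound and using $e^{\epsilon/2}\ge1+\epsilon/2\ge1+\beta\epsilon\delta/4$ (valid as $\beta\delta\le1$) upgrades it to a clean $e^{\epsilon}$ bound, holding for every $f\in G(S)$ in both directions. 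The only outputs not covered are the at most $k$ elements of $G(S^{+})\setminus G(S)$, which contribute nothing to $\cM(S)$ and total probability at most $k\cdot e^{\epsilon/4}/M\le\delta$ under $\cM(S^{+})$. Summing over $f\in T$ and adding back the $\bot$-term then gives $\Pr[\cM(S)\in T]\le e^{\epsilon}\Pr[\cM(S^{+})\in T]$ and $\Pr[\cM(S^{+})\in T]\le e^{\epsilon}\Pr[\cM(S)\in T]+\delta$, as required.

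The crux --- and the reason this is not simply ``compose the Laplace mechanism with the exponential mechanism'' --- is that the solution set $G(S)$ over which Step~2 runs is data-dependent: a neighboring database can introduce up to $k$ brand-new candidate solutions and inflate the normalizer $Z$ by a multiplicative $e^{\epsilon/4}$ plus an additive $k$. Bounded growth caps this extra damage at an additive $k$, and the Step~1 threshold ensures that whenever the mechanism produces an output at all, $Z\gtrsim k/(\beta\epsilon\delta)$, so the additive $k$ is absorbed into the $e^{\epsilon}$ factor and the $\delta$ slack. I expect the main obstacle to be the bookkeeping around these two simultaneous estimates on $Z$ and checking that every constant works out using only $\epsilon\le2$, $\beta,\delta\le1$, and $k\ge1$.
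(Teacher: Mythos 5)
Your proof is correct and follows essentially the same route as the paper's: decompose into the $(\epsilon/4)$-DP halting test plus the exponential mechanism over the data-dependent set $G(\cdot)$, case-split on whether $\OPT(S)$ clears half the Step-1 threshold, handle the small-$\OPT$ case by a Laplace tail bound, and in the large-$\OPT$ case use the threshold to force the normalizer $Z$ to be large enough that the $\le k$ newly-appearing solutions contribute at most $\delta$ while the normalizer ratio stays within $e^{\epsilon/2}$. The only differences are cosmetic bookkeeping (you carry an explicit $(1+k/Z)$ correction absorbed via $Z\ge 4k/(\beta\epsilon\delta)$, whereas the paper isolates the cleaner ``Fact 3'' bound $Z(S')\le e^{\epsilon/2}Z(S)$ from the coarser $Z\ge 4k/\epsilon$, and bounds each new solution's probability by $\delta/k$ as ``Fact 1''), plus your use of the add/remove neighbor relation where the paper's proof of this lemma takes $S,S'$ to be same-size neighbors.
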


\begin{lemma}\label{lem:CMscoreOne}
Let the choosing mechanism be executed on a $k$-bounded-growth quality function, and on a database $S$ s.t. there exists a solution $\hat{f}$ with quality $q(S,\hat{f})\geq\frac{16}{\epsilon}\ln(\frac{4k}{\beta\epsilon\delta})$.
With probability at least $(1-\beta)$, the choosing mechanism outputs a solution $f$ with quality
$q(S,f)\geq1$.
\end{lemma}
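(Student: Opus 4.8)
The plan is to separate the two ways in which the choosing mechanism could fail to output a solution of quality at least $1$, and to observe that only one of them is actually possible. Step~2 invokes the exponential mechanism over the \emph{restricted} solution set $G(S)=\{f\in\FFF : q(S,f)\geq 1\}$, so any $f$ it returns satisfies $q(S,f)\geq 1$ by construction. Hence the only bad event is that the mechanism halts in Step~1 and returns $\bot$, and it suffices to bound $\Pr[\widetilde{\OPT} < \frac{8}{\epsilon}\ln(\frac{4k}{\beta\epsilon\delta})]$, after first checking that $G(S)$ is nonempty so Step~2 is well-defined at all.

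For nonemptiness, I would note that $\OPT = \max_{f\in\FFF} q(S,f) \ge q(S,\hat f) \ge \frac{16}{\epsilon}\ln(\frac{4k}{\beta\epsilon\delta}) > 0$ (the logarithm being positive in the parameter range of interest, where $4k/\beta\epsilon\delta > 1$), and since $q$ is integer-valued this forces $q(S,\hat f)\ge 1$, i.e.\ $\hat f \in G(S)$. For the $\bot$ event, since $\widetilde{\OPT} = \OPT + \Lap(4/\epsilon)$ and $\OPT \ge \frac{16}{\epsilon}\ln(\frac{4k}{\beta\epsilon\delta})$, the event $\widetilde{\OPT} < \frac{8}{\epsilon}\ln(\frac{4k}{\beta\epsilon\delta})$ is contained in $\{\Lap(4/\epsilon) < -\frac{8}{\epsilon}\ln(\frac{4k}{\beta\epsilon\delta})\}$. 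Applying the standard Laplace tail bound $\Pr[\Lap(b) < -t] = \tfrac12 e^{-t/b}$ with $b = 4/\epsilon$ and $t = \frac{8}{\epsilon}\ln(\frac{4k}{\beta\epsilon\delta})$ gives probability $\tfrac12 e^{-2\ln(4k/\beta\epsilon\delta)} = \tfrac12 (\beta\epsilon\delta/4k)^2 \le \beta$ (using, e.g., $k\ge 1$ and $\beta,\epsilon,\delta\in(0,1]$). Therefore with probability at least $1-\beta$ the mechanism does not halt in Step~1, proceeds to Step~2, and outputs some $f\in G(S)$, which by definition of $G(S)$ has $q(S,f)\ge 1$.

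Honestly there is no serious obstacle here: the lemma reduces to a one-line Laplace tail estimate. The one point that matters is the conceptual observation that running the exponential mechanism over $G(S)$ rather than over $\FFF$ makes Step~2 automatically ``safe,'' so that utility hinges entirely on not returning $\bot$; the gap between the hypothesis threshold $\frac{16}{\epsilon}\ln(\cdot)$ and the Step~1 threshold $\frac{8}{\epsilon}\ln(\cdot)$, measured in units of the noise scale $4/\epsilon$, is exactly what drives the failure probability below $\beta$. The only care needed is the degenerate-parameter bookkeeping ($4k/\beta\epsilon\delta>1$ so the logarithm is positive and $G(S)\neq\emptyset$, and $\tfrac12(\beta\epsilon\delta/4k)^2\le\beta$), which holds comfortably under the usual conventions. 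This is also where the claimed minor improvement over \cite{BeimelNiSt13b} comes from: a tighter threshold in Step~1 lets us weaken the quality hypothesis to the stated constant $16$ rather than a larger multiple.
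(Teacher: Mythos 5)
Your proof is correct and follows essentially the same route as the paper's: observe that Step~2 always returns something of quality at least $1$ (since it optimizes over $G(S)$, which is nonempty by the hypothesis), so the only failure mode is returning $\bot$ in Step~1, and then bound $\Pr\bigl[\Lap(4/\eps) < -\tfrac{8}{\eps}\ln(\tfrac{4k}{\beta\eps\delta})\bigr] = \tfrac12(\beta\eps\delta/4k)^2 \le \beta$. The paper's version is terser (it does not spell out that $G(S)\neq\emptyset$ nor the parameter bookkeeping), but the decomposition and the Laplace tail estimate are identical.
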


\begin{lemma}\label{lem:CMutility}
Let the choosing mechanism be executed on a $k$-bounded-growth quality function, and on a database $S$ containing $m$ elements.
With probability at least $(1-\beta)$, the choosing mechanism outputs a solution $f$ with quality
$q(S,f)\geq\OPT-\frac{16}{\epsilon}\ln(\frac{4km}{\beta\epsilon\delta})$.
\end{lemma}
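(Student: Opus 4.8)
The plan is to split on the size of $\OPT = \max_{f\in\FFF} q(S,f)$ relative to the error term $\Delta := \frac{16}{\epsilon}\ln\!\left(\frac{4km}{\beta\epsilon\delta}\right)$, and to track the two independent sources of randomness in the mechanism separately: the Laplace noise in Step~1 and the exponential mechanism in Step~2. If $\OPT \le \Delta$, the asserted inequality $q(S,f)\ge \OPT-\Delta$ has a non-positive right-hand side, and since $k$-bounded growth gives $q(S,f)\ge q(\emptyset,f)=0$ for every $f\in\FFF$ (and the $\bot$ output of Step~1 is only ever benign in this regime), the conclusion is immediate. So I would assume $\OPT > \Delta$. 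Note that $\Delta \ge 2T$, where $T := \frac{8}{\epsilon}\ln\!\left(\frac{4k}{\beta\epsilon\delta}\right)$ is the Step~1 threshold, simply because $m\ge 1$.

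Under $\OPT>\Delta$, Step~1 aborts only when $\OPT + \Lap(4/\epsilon) < T$, i.e.\ $\Lap(4/\epsilon) < T-\OPT \le T-\Delta \le -T$; by the Laplace tail bound this happens with probability at most $\frac12 e^{-\epsilon T/4} = \frac12\left(\frac{\beta\epsilon\delta}{4k}\right)^2 \le \beta/2$ in the usual regime $\epsilon\le 2$, $\beta,\delta\le 1$, $k\ge 1$. Condition on the high-probability event that Step~1 proceeds. Step~2 then runs the exponential mechanism with privacy parameter $\epsilon/2$ over the set $G(S)=\{f\in\FFF : q(S,f)\ge 1\}$; recall $q$ has sensitivity at most $1$ as part of being $k$-bounded-growth. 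Here $|G(S)|\le km$, because $q(\emptyset,\cdot)=0$ and each of the $m$ insertions into $S$ raises the score of at most $k$ solutions above $0$ (this is exactly the $k$-bounded-growth hypothesis), and $\OPT>\Delta\ge 1$ forces the optimal solution into $G(S)$, so $\max_{f\in G(S)} q(S,f)=\OPT$. Applying Proposition~\ref{prop:exp_mech}(2) to the exponential mechanism run over $G(S)$, with database size $m$, privacy parameter $\epsilon/2$, and deviation $tm=\Delta$, the output $f$ satisfies $q(S,f) < \OPT - \Delta$ with probability at most
\[
|G(S)|\cdot\exp\!\left(-\frac{(\epsilon/2)\Delta}{2}\right) \;\le\; km\left(\frac{\beta\epsilon\delta}{4km}\right)^{4} \;=\; \frac{(\beta\epsilon\delta)^{4}}{256\,(km)^{3}} \;\le\; \frac{\beta}{2},
\]
using $\epsilon\le 2$, $\beta,\delta\le 1$, $k,m\ge 1$ for the last inequality. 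A union bound over the Step~1 abort event and this exponential-mechanism failure event (which are independent) yields the claimed $1-\beta$.

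There is no real obstacle here; the only thing that needs care is the parameter bookkeeping. In particular one must remember that Step~2 uses privacy parameter $\epsilon/2$ (hence the deviation factor $\exp(-\epsilon\Delta/4)$, which is why $\Delta$ carries a $16$ rather than an $8$), that the effective solution set is $G(S)$ rather than all of $\FFF$ (so its size is $km$, independent of $|\FFF|$), and that the Step~1 threshold $T$ is small enough — $\Delta\ge 2T$ — that a large $\OPT$ clears it with room to spare for the Laplace tail. Everything then lands on the stated constant $\frac{16}{\epsilon}\ln\!\left(\frac{4km}{\beta\epsilon\delta}\right)$. The one place where the $\bot$ output of Step~1 interacts with the statement is the trivial case $\OPT\le\Delta$, which can alternatively be dispatched via the convention that $\bot$ is an acceptable output whenever $\OPT\le\Delta$.
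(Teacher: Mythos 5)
Your proof is correct and follows essentially the same structure as the paper's: dispose of the trivial case $\OPT \le \Delta$ by nonnegativity of $q$, bound the Step~1 abort probability by a Laplace tail (the paper delegates this to the proof of Lemma~\ref{lem:CMscoreOne}; you compute it inline), bound the exponential-mechanism failure over $G(S)$ using $|G(S)|\le km$, and union-bound to $\beta$. Your explicit note about the convention for the $\bot$ output in the trivial regime makes precise something the paper's phrase ``the mechanism cannot fail'' glosses over, but it is not a different argument.
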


\subsubsection{The $\RecPrefix$ algorithm}

We are now ready to present and analyze the algorithm $\RecPrefix$.

\begin{algorithm}[H]
\caption{$\RecPrefix$}
{\bf Input:} Database $S=(x_j)_{j=1}^n \in X^n$, parameters $\beta,\epsilon,\delta$.
\begin{enumerate}[rightmargin=10pt,itemsep=1pt]
\item If $|X|\leq32$, then use the exponential mechanism with privacy parameter $\epsilon$ and quality function $q(S, x)= \min\left\{ \#\{j : x_j \ge x\}, \#\{j : x_j \le x\}\right\} $ to choose and return a point $x\in X$.

\item Let $k= \lfloor \frac{386}{\epsilon}\ln(\frac{4}{\beta\epsilon\delta}) \rfloor$, and let $Y=(y_1,y_2,\ldots,y_{n-2k})$ be a random permutation of the smallest $(n{-}2k)$ elements in $S$.

\item For $j=1$ to $\frac{n-2k}{2}$, define $z_j$ as the length of the longest prefix for which $y_{2j-1}$ agrees with $y_{2j}$ (in base 2 notation).

\item Execute $\RecPrefix$ recursively on $S'=(z_j)_{j=1}^{(n-2k)/2}\in (X')^{(n-2k)/2}$ with parameters $\beta,\epsilon,\delta$. Recall that $|X'| = \log |X|$. Denote the returned value by $z$.

\item Use the choosing mechanism to choose a prefix $L$ of length $(z+1)$ with a large number of agreements among elements in $S$.
Use parameters $\beta,\epsilon,\delta$, and the quality function
$q:X^* \times X^{z+1}\rightarrow\N$, where $q(S, I)$ is the number of agreements on $I$ among $x_1,\ldots,x_n$.

\item For $\sigma\in\{0,1\}$, define $L_{\sigma}\in X$ to be the prefix $L$ followed by $(\log|X|-z-1)$ appearances of $\sigma$.

\item Compute $\widehat{big}=\Lap(\frac{1}{\epsilon})+ \#\{j : x_j\geq L_1\}$.

\item If $\widehat{big}\geq\frac{3k}{2}$ then return $L_1$. Otherwise return $L_0$.
\end{enumerate}
\end{algorithm}

We start the analysis of $\RecPrefix$ with the following two simple observations.

\begin{observation}\label{obs:RecPrefixIterations}
There are at most $\log^*|X|$ recursive calls throughout the execution of $\RecPrefix$ on a database $S\in X^*$.
\end{observation}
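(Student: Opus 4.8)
The plan is a direct counting argument on the recursion depth, exploiting the fact that each recursive call shrinks the data universe from $X$ to a universe $X'$ with $|X'| = \log|X|$. First I would note that $\RecPrefix$ makes a recursive call only in Step~4, and that Step~4 is reached only when the test in Step~1 fails, i.e.\ only when $|X| > 32$; in particular, on any universe of size at most $32$ the algorithm returns in Step~1 without recursing. Since $\RecPrefix$ calls itself at most once per invocation (there is no branching), the recursion forms a single chain, and the number of recursive calls throughout the execution equals the length of that chain.

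Next I would track the universe size along this chain. Set $m_0 = |X|$ and $m_{i+1} = \log m_i$; then, by the description of Step~4 (``Recall that $|X'| = \log |X|$''), the universe at depth $i$ of the recursion has size $m_i$, whenever depth $i$ is reached. The execution stops recursing at the first depth $i^\star$ with $m_{i^\star} \le 32$ (and possibly earlier, e.g.\ if the constructed sub-database $S'$ becomes empty, which only helps the bound). Hence the total number of recursive calls is at most $i^\star = \min\{\, i : m_i \le 32 \,\}$.

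Finally I would identify this quantity with $\log^*|X|$. Writing $\log^{(0)}(x) = x$ and $\log^{(i+1)}(x) = \log\bigl(\log^{(i)}(x)\bigr)$, we have $m_i = \log^{(i)}(|X|)$, and $\log^*|X| = \min\{\, i : \log^{(i)}(|X|) \le 1 \,\}$ by definition. Since $1 \le 32$ and the sequence $\{m_i\}$ is eventually strictly decreasing, the iteration passes below $32$ no later than it passes below $1$, so $\min\{\, i : m_i \le 32 \,\} \le \log^*|X|$. Combining the three steps gives at most $\log^*|X|$ recursive calls.

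There is no substantive obstacle here; the argument is pure bookkeeping. The only point deserving a moment's care is reconciling the base-case threshold $|X| \le 32$ used in Step~1 with the threshold (typically $\le 1$ or $\le 2$) appearing in the definition of $\log^*$ — but since $32$ is a constant and the iterated logarithm is monotone, dropping below $32$ happens at or before depth $\log^*|X|$, so the stated bound holds as written (in fact with a small additive slack).
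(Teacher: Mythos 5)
The paper states this as a bare \emph{observation} with no written proof, evidently treating it as immediate from the structure of $\RecPrefix$ (one recursive call per invocation, universe size shrinks as $|X| \mapsto \log|X|$, recursion bottoms out when $|X| \le 32$). Your proposal supplies exactly the bookkeeping the paper leaves implicit, and it is correct: the only point needing care is the base-case threshold $32$ versus the threshold in the definition of $\log^*$, which you address correctly by noting that the iterated-log sequence is strictly decreasing (since $\log_2 x < x$), so it drops below $32$ no later than it drops below $1$.
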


\begin{observation}\label{obs:databaseSize}
Let $\RecPrefix$ be executed on a database $S\in X^n$, where $n\geq 2^{\log^*|X|} \cdot \frac{2312}{\epsilon} \cdot \ln(\frac{4}{\beta\epsilon\delta})$.
Every recursive call throughout the execution operates on a database containing at least $\frac{1540}{\epsilon} \cdot \ln(\frac{4}{\beta\epsilon\delta})$ elements.
\end{observation}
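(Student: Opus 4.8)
The plan is to track the one quantity that controls the claim --- the number of elements in the database passed to each successive recursive call --- and check that it stays above the threshold until the recursion reaches its base case ($|X|\le 32$ in Step~1). Write $c := \frac{1}{\epsilon}\ln(\frac{4}{\beta\epsilon\delta})$, so that the value $k$ chosen in Step~2 satisfies $k \le 386\,c$, the hypothesis on the input size reads $n \ge 2^{\log^*|X|}\cdot 2312\,c$, and the goal is to show every recursive call operates on at least $1540\,c$ elements. From Steps~2--4, a call of $\RecPrefix$ running on a size-$m$ database spawns (in Step~4) a recursive call on a database of size $(m-2k)/2$; and since $\RecPrefix$ passes $\beta,\epsilon,\delta$ unchanged to that recursive call, the same $k$ is used at every level. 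Hence, writing $n_0 = n$ and $n_{t+1} = (n_t - 2k)/2 = n_t/2 - k$ for the database sizes at successive recursion depths, it suffices to lower bound $n_t$ at every depth $t$ where a recursive call is actually made. (The indexing in Steps~3--4 implicitly assumes $m-2k$ is even; if one instead takes floors, the recurrence becomes $n_{t+1}\ge n_t/2 - k - 1$ and the bound below degrades by an additive constant, harmless given the slack in the constants.)

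The recurrence $n_{t+1} = n_t/2 - k$ is affine, and the substitution $m_t := n_t + 2k$ linearizes it into $m_{t+1} = m_t/2$; thus $m_t = m_0 \cdot 2^{-t}$ and $n_t = (n + 2k)\,2^{-t} - 2k$. In particular $n_t$ is non-increasing in $t$, so the smallest database in the whole execution is the one handed to the deepest recursive call. By Observation~\ref{obs:RecPrefixIterations} there are at most $\log^*|X|$ recursive calls, so that deepest call occurs at some depth $R \le \log^*|X|$, whence $n_R = (n+2k)\,2^{-R} - 2k \ge (n+2k)\,2^{-\log^*|X|} - 2k$.

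It then only remains to substitute the hypothesis: since $n \ge 2^{\log^*|X|}\cdot 2312\,c$ and $k \le 386\,c$, we obtain $n_R \ge n\cdot 2^{-\log^*|X|} - 2k \ge 2312\,c - 772\,c = 1540\,c$, which is exactly $\frac{1540}{\epsilon}\ln(\frac{4}{\beta\epsilon\delta})$; and the top-level call trivially has $n_0 = n \ge 2312\,c \ge 1540\,c$ as well. There is no genuine obstacle here --- it is a one-line solve of an affine recurrence --- so the only points requiring care are invoking Observation~\ref{obs:RecPrefixIterations} to cap the recursion depth (rather than pinning down exactly when the base case $|X|\le 32$ is first reached) and keeping the constant arithmetic honest, in particular noting that the choice $2312 = 1540 + 2\cdot 386$ is precisely what makes the inequality go through.
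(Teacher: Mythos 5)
Your proof is correct and takes essentially the same approach as the paper's: both reduce the claim to solving the affine recurrence $n_{t+1}=n_t/2-k$ (the paper writes out the geometric sum, you linearize via the substitution $m_t=n_t+2k$ — cosmetically different, identical content), cap the depth by Observation~\ref{obs:RecPrefixIterations}, and substitute the hypothesis $n\ge 2^{\log^*|X|}\cdot 2312\,c$ together with $k\le 386\,c$ to get $n\cdot 2^{-\log^*|X|}-2k\ge 1540\,c$.
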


\begin{proof}
This follows from Observation~\ref{obs:RecPrefixIterations} and from the fact that the $i^\text{th}$ recursive call is executed on a database of size $n_i=\frac{n}{2^{i-1}}-k\sum_{\ell=0}^{i-2}(\frac{1}{2})^{\ell} \geq \frac{n}{2^i} - 2k$.
\end{proof}

We now analyze the utility guarantees of $\RecPrefix$ by proving the following lemma.

\begin{lemma}\label{lem:RecPrefixUtility}
Let $\beta,\epsilon,\delta$, and $S\in X^n$ be inputs on which $\RecPrefix$ performs at most $N$ recursive calls,
all of which are on databases of at least $\frac{1540}{\epsilon}\cdot\ln(\frac{4}{\beta\epsilon\delta})$ elements.
With probability at least $(1-3\beta N)$, the output $x$ is s.t.
\begin{enumerate}
\item $\exists x_i\in S$ s.t. $x_i\leq x$;
\item $|\{i: x_i\geq x\}|\geq k\triangleq  \lfloor \frac{386}{\epsilon}\cdot\ln(\frac{4}{\beta\epsilon\delta}) \rfloor$.
\end{enumerate}
\end{lemma}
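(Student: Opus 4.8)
The plan is to prove Lemma~\ref{lem:RecPrefixUtility} by induction on the number of recursive calls $N$, tracking two invariants through the recursion: that the returned value lies above at least one input element, and that ``enough'' (at least $k$) input elements are at least as large as the returned value. The base case $N=1$ is when $|X|\le 32$, where $\RecPrefix$ directly applies the exponential mechanism with the quality function $q(S,x)=\min\{\#\{j:x_j\ge x\},\#\{j:x_j\le x\}\}$. Since the median of $S$ achieves quality at least $n/2$, which by the size assumption $n\ge\frac{1540}{\epsilon}\ln(\frac{4}{\beta\epsilon\delta})$ is comfortably large, Proposition~\ref{prop:exp_mech} (part 2), with $|\cF|=|X|\le 32$, guarantees that except with probability $\beta$ the output $x$ has quality at least, say, $k$. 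This immediately gives both $\exists x_i\le x$ and $|\{i:x_i\ge x\}|\ge k$, so the claim holds with failure probability $\beta\le 3\beta N$.

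For the inductive step, I would condition on the event (probability $\ge 1-3\beta(N-1)$ by induction) that the recursive call on $S'=(z_j)$ returns $z$ with $\exists z_j\le z$ and $|\{j:z_j\ge z\}|\ge k$, where each $z_j$ is the length of the longest common prefix of the random pair $(y_{2j-1},y_{2j})$. Translating: there is a pair of the smallest $n-2k$ elements agreeing on a prefix of length $\le z$, and at least $k$ pairs agreeing on a prefix of length $\ge z$. The first step is a combinatorial/probabilistic claim: because $Y$ was a \emph{random} permutation, the $\ge k$ pairs agreeing on prefixes of length $\ge z$ witness (with high probability) the existence of a single prefix $L^*$ of length $z+1$ on which many elements of $S$ agree, and moreover this prefix remains the prefix of many elements even after one element is removed — i.e., the quality function $q(S,I)=\#\{\text{agreements on }I\}$ at $I=L^*$ has large value (I expect $\Omega(k)=\Omega(\frac{1}{\epsilon}\ln(\frac{1}{\beta\epsilon\delta}))$). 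This is the content needed to apply Lemma~\ref{lem:CMscoreOne}: the choosing mechanism in Step~5 then outputs, except with probability $\beta$, a length-$(z+1)$ prefix $L$ that is the prefix of \emph{at least one} actual input element $x_{i^*}$. Here $q$ is $2$-bounded-growth (adding one element increments the agreement count of at most the two prefixes it extends — actually of all its prefixes, so one must be careful; the relevant count is on length-$(z+1)$ strings, of which one input contributes to exactly one).

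Given such an $L$, Step~6 defines $L_0\le x_{i^*}\le L_1$. Now the pair witnessing agreement on a prefix of length $\le z < z+1$ gives two elements $x_{i_1}\le x_{i_2}$ that differ within the first $z+1$ digits, so $L$ separates them: either $x_{i_1}\le L_0$ or $L_1\le x_{i_2}$. Combined with $L_0\le x_{i^*}\le L_1$, in either case one of $L_0,L_1$ is a genuine interior point and in fact lies above some input element. Steps~7–8 then privately decide between $L_0$ and $L_1$ using a noisy count $\widehat{big}$ of $\#\{j:x_j\ge L_1\}$ against threshold $\frac{3k}{2}$: I would argue via a standard Laplace tail bound that except with probability $\beta$, if the true count is $\ge 2k$ we return $L_1$ (which then has $\ge k$ elements above it and lies above $x_{i_1}$ since we're in the case $x_{i_1}\le L_0\le L_1$... need to recheck which case forces which), and if it is small we return $L_0$, which by the separation argument still has the required properties — crucially using that we removed the top $2k$ elements into the ``reserve'', so $L_0\le x_{i^*}$ and there remain many elements (from the reserved $2k$, which are all $\ge$ everything in $Y$) above $L_0$. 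Summing the four failure events (induction: $3\beta(N-1)$; choosing mechanism: $\beta$; the probabilistic ``common prefix exists'' claim: $\beta$; Laplace bound on $\widehat{big}$: $\beta$) gives total failure $\le 3\beta N$. The main obstacle I anticipate is the probabilistic argument in the inductive step — quantifying precisely how randomness of the pairing converts ``$k$ pairs with long common prefix'' into ``one short prefix with $\Omega(k)$ element-agreements robust to one deletion,'' and getting the constants (the $386$, $1540$, $2312$) to line up with the bounded-growth parameter and the threshold $\frac{3k}{2}$; this is where the bookkeeping between the reserve size $2k$, the choosing-mechanism slack, and the final Laplace margin all has to be made consistent.
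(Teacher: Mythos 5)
Your proposal follows exactly the paper's inductive skeleton, and most of the structure is right: base case via the exponential mechanism with the median-based quality function, inductive hypothesis giving the two invariants for the recursive output $z$, the choosing mechanism (via Lemma~\ref{lem:CMscoreOne}) producing a length-$(z+1)$ prefix $L$ that is the prefix of at least one input element, and a Laplace-threshold test at Steps~7--8 to decide between $L_0$ and $L_1$. You also correctly observe that the quality function in Step~5 is over length-$(z+1)$ strings, so each new element increments the score of exactly one solution (it is $1$-bounded-growth); that observation is correct and you should trust it rather than hedge.

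The genuine gap is exactly the one you flag as ``the main obstacle I anticipate'': converting ``at least $k$ random pairs agree on a prefix of length $\ge z$'' into ``some fixed length-$(z+1)$ prefix is shared by $\Omega(k)$ input elements.'' This is not a minor bookkeeping step; it is the crux of the utility argument, and your proposal leaves it as an expectation rather than a proof. The paper resolves it via Claim~\ref{claim:randomPermutation}, which is a clean counting bound on random permutations: call a pair $(y_{2j-1},y_{2j})$ \emph{close} if at most $\frac{k-1}{12}$ elements of $S$ lie between them, and show that except with probability $2^{-(k-1)} < \beta$, at most $k-1$ of the pairs produced by a random shuffle are close. Given the inductive guarantee that $\ge k$ pairs agree on a prefix of length $\ge z$, at least one such pair, say $(y_{2j^*-1},y_{2j^*})$, is not close; every element between them shares the same length-$z$ prefix, giving $\ge \frac{k-1}{12}$ agreements, of which at least half extend to a common length-$(z+1)$ prefix — yielding the quality threshold $\frac{k-1}{24} \ge \frac{16}{\eps}\ln(\frac{4}{\beta\eps\delta})$ needed to invoke the choosing mechanism. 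Without the ``close pair'' definition and the $2^{-r}$ tail bound, the link between the inductive output and the choosing-mechanism quality is unproven. Your $L_0$/$L_1$ case analysis is also underspecified (you say ``need to recheck which case forces which''); the paper's argument is a three-way case split (both good; $L_1$ not good so fewer than $k$ elements exceed $L_1$; $L_0$ not good so $\max\{y_{2\hat{j}-1},y_{2\hat{j}}\}>L_1$ and the $2k$ reserved top elements force $\widehat{big}$ large), each contributing at most $\beta$ to the failure probability via a Laplace tail bound. The approach is the same, but these two steps need to actually be carried out.
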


Before proving the lemma, we make a combinatorial observation that motivates the random shuffling in Step~2 of $\RecPrefix$. A pair of elements  $y,y'\in S$ is useful in Algorithm $\RecPrefix$ if many of the values in $S$ lie between $y$ and $y'$ -- a prefix on which $y,y'$ agree is also a prefix of every element between $y$ and $y'$. A prefix common to a useful pair can hence be identified privately via stability-based techniques. Towards creating useful pairs, the set $S$ is shuffled randomly. We will use the following lemma:

\begin{claim}\label{claim:randomPermutation}
Let $(\Pi_1,\Pi_2,\ldots,\Pi_n)$ be a random permutation of $(1,2,\ldots,n)$. Then for all $r \ge 1$,
$$
\Pr\left[  \left| \left\{ i : \left| \Pi_{2i-1} - \Pi_{2i} \right| \leq \frac{r}{12} \right\} \right| \geq r  \right] \leq 2^{-r}
$$
\end{claim}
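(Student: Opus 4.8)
The plan is to prove Claim~\ref{claim:randomPermutation} via a first-moment (union bound) argument on the set of indices $i$ for which the paired elements $\Pi_{2i-1}$ and $\Pi_{2i}$ are close. Let $N_r = \left|\left\{i : |\Pi_{2i-1} - \Pi_{2i}| \le r/12\right\}\right|$. The goal is to bound $\Pr[N_r \ge r]$. I would first compute $\E[N_r]$ by linearity of expectation: for a single pair $(\Pi_{2i-1},\Pi_{2i})$, the two values form a uniformly random ordered pair of distinct elements from $\{1,\dots,n\}$, so the probability that they differ by at most $t \triangleq \lfloor r/12\rfloor$ is at most $2t/(n-1) \le 2t/n$ (roughly; each of the $n$ possible values for the first coordinate leaves at most $2t$ choices within distance $t$ for the second). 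Since there are $n/2$ pairs, $\E[N_r] \le (n/2)\cdot(2t/n) = t \le r/12$.

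From here the natural route is a concentration bound. Markov's inequality alone gives only $\Pr[N_r \ge r] \le (r/12)/r = 1/12$, which is far too weak for the claimed $2^{-r}$. So the main obstacle is obtaining exponentially small tail probability despite the pairings not being independent (the pairs in a random permutation are negatively associated, not independent). I would handle this in one of two ways. The cleaner option: observe that the event ``$|\Pi_{2i-1}-\Pi_{2i}| \le t$'' across the $n/2$ disjoint pairs is a sum of negatively associated indicator random variables (this follows from standard facts about sampling without replacement / random permutations — disjoint blocks of a random permutation are negatively associated), so Chernoff-type bounds apply just as in the independent case. Then with $\mu = \E[N_r] \le r/12$ and deviation up to $r = 12\mu$, the multiplicative Chernoff bound $\Pr[N_r \ge (1+\lambda)\mu] \le (e^\lambda/(1+\lambda)^{1+\lambda})^\mu$ with $1+\lambda = r/\mu \ge 12$ gives $\Pr[N_r \ge r] \le (e/12)^{r/12} \cdot 12^{-11r/12}$ or more simply $\le (e\mu/r)^r \le (e/12)^r \le 2^{-r}$ after checking the constant $e/12 < 1/2$.

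An alternative that avoids invoking negative association: bound the $r$-th factorial moment directly. The probability that $r$ specified pairs are all ``close'' can be bounded by a product-like expression because conditioning on earlier pairs only shrinks the remaining population, so $\Pr[\text{pairs } i_1,\dots,i_r \text{ all close}] \le \prod_{s} 2t/(n - 2s) $, and for the range of parameters where $r \le n/2$ this is at most $(4t/n)^r$ up to lower-order corrections; summing over the $\binom{n/2}{r}$ choices of which $r$ pairs and using $\binom{n/2}{r} \le (en/2r)^r$ yields $\Pr[N_r \ge r] \le \binom{n/2}{r}(4t/n)^r \le (en/2r \cdot 4t/n)^r = (2et/r)^r \le (2e/12)^r < 2^{-r}$, using $t \le r/12$ and $2e/12 = e/6 < 1/2$. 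I would present whichever is shorter; the factorial-moment version is self-contained, so I lean toward that. The only care needed is the edge case where $n$ is small relative to $r$ (then the claimed set simply cannot have $r$ elements, or $t = \lfloor r/12\rfloor$ is $0$ and the event is vacuous), which is handled trivially. The key step I expect to require the most attention to detail is getting the constants to line up so that the base of the exponent is genuinely below $1/2$; the factor $1/12$ in the hypothesis is generous enough that this works with room to spare.
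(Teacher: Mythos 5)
Your preferred route (a first-moment/union bound over $\binom{n/2}{r}$ subsets of pair indices, bounding the probability that they are all ``close'') is essentially the same approach as the paper's, and correctly executed it gives exactly the paper's bound $(r/6)^r/r!$. But your intermediate step $\prod_{s=1}^{r}\frac{2t}{n-2s+1}\le(4t/n)^r$ is not a trivially handled edge case: it requires $n-2s+1\ge n/2$ for every $s\le r$, i.e.\ roughly $r\le n/4$, while the claim is asserted for all $r$, and for $n/4<r\le n/2$ the late factors in the product are constants, not $O(t/n)$. This regime is not peripheral here: in the proof of Lemma~\ref{lem:RecPrefixUtility} the claim is applied with $r=k-1$ to a permutation of $n'=n-2k$ elements with $n\approx 4k$, so $r$ is essentially $n'/2$. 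The fix is to combine the product with $\binom{n/2}{r}$ directly rather than bounding them separately: write $\binom{n/2}{r}=\frac{(n/2)(n/2-1)\cdots(n/2-r+1)}{r!}$, pair each numerator factor with a denominator factor, and note $\frac{n/2-j}{n-2j-1}\le 1$ for each $j<r$ (since $n-2j\ge 2$); this yields $\Pr[N_r\ge r]\le\frac{(2t)^r}{r!}\le\frac{(r/6)^r}{r!}$, and Stirling finishes as $\le(e/6)^r\le 2^{-r}$. The paper reaches the identical expression $(r/6)^r/r!$ by a direct permutation count ($\binom{n}{r}$ choices of left endpoints, at most $(r/6)^r$ choices of right endpoints, $(n-r)!$ arrangements, divided by $n!$).

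Your first route (negative association plus Chernoff) is the one I would not accept as written. The coordinates of a uniformly random permutation are indeed NA, and NA is closed under coordinatewise \emph{monotone} functions on disjoint index sets, but $\mathds{1}[|a-b|\le t]$ is not monotone in $(a,b)$, so the standard closure theorem does not give NA of the indicators $X_i=\mathds{1}[|\Pi_{2i-1}-\Pi_{2i}|\le t]$. That claim would need its own argument, and it is simpler to just carry out the factorial-moment computation above.
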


\begin{proof}
We need to show that w.h.p.\ there are at most $r$ ``bad'' pairs $(\Pi_{2i-1}, \Pi_{2i})$ within distance $\frac{r}{12}$. For each $i$, we call $\Pi_{2i-1}$ the left side of the pair, and $\Pi_{2i}$ the right side of the pair.
Let us first choose $r$ elements to be placed on the left side of $r$ bad pairs (there are ${n \choose r}$ such choices).
Once those are fixed, there are at most $(\frac{r}{6})^r$ choices for placing elements on the right side of those pairs.
Now we have $r$ pairs and $n-2r$ unpaired elements that can be shuffled in $(n-r)!$ ways.
Overall, the probability of having at least $r$ bad pairs is at most
$$
\frac{{n \choose r}(\frac{r}{6})^r(n-r)!}{n!}=\frac{(\frac{r}{6})^r}{r!}\leq\frac{(\frac{r}{6})^r}{\sqrt{r}r^r e^{-r}}=\frac{e^r}{\sqrt{r}6^r}\leq2^{-r},
$$
where we have used Stirling's approximation for the first inequality.
\end{proof}

Suppose we have paired random elements in our input database $S$, and constructed a database $S'$ containing lengths of the prefixes for those pairs.
Moreover, assume that by recursion we have identified a length $z$ which is the length at least $r$ random pairs.
Although those prefixes may be different for each pair, Claim~\ref{claim:randomPermutation} guarantees that (w.h.p.) at least one of these prefixes is the prefix of at least $\frac{r}{12}$ input elements.
This will help us in (privately) identifying such a prefix.

\begin{proof}[Proof of Lemma \ref{lem:RecPrefixUtility}]
The proof is by induction on the number of recursive calls, denoted as $t$.
For $t=1$ (i.e., $|X|\leq32$), the claim holds as long as the exponential mechanism outputs an $x$ with $q(S, x) \ge k$ except with probability at most $\beta$. By Proposition~\ref{prop:exp_mech}, it suffices to have $n\geq\frac{1540}{\epsilon} \cdot \ln(\frac{4}{\beta\epsilon\delta})$, since $32 \exp(-\eps (n/2 - k) / 2) \le \beta$.

Assume that the stated lemma holds whenever $\RecPrefix$ performs at most $t-1$ recursive calls.
Let $\beta,\epsilon,\delta$ and $S=(x_i)_{i=1}^n\in X^n$ be inputs on which algorithm $\RecPrefix$ performs $t$ recursive calls, all of which are on databases containing at least $\frac{1540}{\epsilon} \cdot\ln(\frac{4}{\beta\epsilon\delta})$ elements.
Consider the first call in the execution on those inputs, and let $y_1,\ldots,y_{n-2k}$ be the random permutation chosen on Step~2.
We say that a pair $y_{2j-1},y_{2j}$ is {\em close} if $$\left|i :
\begin{array}{c}
	y_{2j-1}\leq y_i \leq y_{2j}\\
	\text{or}\\
	y_{2j}\leq y_i \leq y_{2j-1}
\end{array} \right|\leq\frac{k-1}{12}.$$
By Claim~\ref{claim:randomPermutation}, except with probability at most $2^{-(k-1)} < \beta$, there are at most
$(k-1)$ close pairs.
We continue the proof assuming that this is the case.

Let $S'=(z_i)_{i=1}^{(n-2k)/2}$ be the database constructed in Step~3.
By the inductive assumption, with probability at least $(1-3\beta(t-1))$, the value $z$ obtained in Step~4 is s.t.
(1) $\exists z_i\in S'$ s.t. $z_i\leq z$; and (2) $|\{z_i\in S' : z_i\geq z\}|\geq k$.
We proceed with the analysis assuming that this event happened.

By (2), there are at least $k$ pairs $y_{2j-1},y_{2j}$ that agree on a prefix of length at least $z$.
At least one of those pairs, say $y_{2j^*-1},y_{2j^*}$, is not {\em close}.
Note that every $y$ between $y_{2j^*-1}$ and $y_{2j^*}$ agrees on the same prefix of length $z$, and that there are at least $\frac{k-1}{12}$ such elements in $S$.
Moreover, as the next bit is either 0 or 1, at least half of those elements agree on a prefix of length $(z+1)$.
Thus, when using the choosing mechanism on Step~5 (to choose a prefix of length $(z+1)$), there exists at least one prefix with quality at least $\frac{k-1}{24}\ge\frac{16}{\epsilon}\cdot\ln(\frac{4}{\beta\epsilon\delta})$. By Lemma \ref{lem:CMutility}, the choosing mechanism ensures, therefore, that with probability at least $(1-\beta)$, the chosen prefix $L$ is the prefix of at least one $y_{i'}\in S$, and, hence, this $y_{i'}$ satisfies $L_0\leq y_{i'}\leq L_1$ (defined in Step~6). We proceed with the analysis assuming that this is the case.

Let $z_{\hat{j}}\in S'$ be s.t. $z_{\hat{j}}\leq z$. By the definition of $z_{\hat{j}}$, this means that $y_{2\hat{j}-1}$ and $y_{2\hat{j}}$ agree on a prefix of length at most $z$. Hence, as $L$ is of length $z+1$, we have that either $\min\{ y_{2\hat{j}-1},y_{2\hat{j}}\}<L_0$ or $\max\{ y_{2\hat{j}-1},y_{2\hat{j}}\}>L_1$.
If $\min\{ y_{2\hat{j}-1},y_{2\hat{j}}\}<L_0$, then $L_0$ satisfies Condition~1 of being a good output. It also satisfies Condition~2 because $y_{i'} \ge L_0$ and $y_{i'} \in Y$, which we took to be the smallest $n-2k$ elements of $S$. Similarly, $L_1$ is a good output if $\max\{ y_{2\hat{j}-1},y_{2\hat{j}}\}>L_1$.
In any case, at least one out of $L_0,L_1$ is a good output.

If both $L_0$ and $L_1$ are good outputs, then Step~8 cannot fail.
We have already established the existence of $L_0\leq y_{i'}\leq L_1$. Hence, if $L_1$ is not a good output, then there are at most $(k{-}1)$ elements $x_i\in S$ s.t. $x_i\geq L_1$. Hence, the probability of $\widehat{big} \ge 3k/2$ and Step~8 failing is at most $\exp(-\frac{\epsilon k}{2})\leq\beta$. It remains to analyze the case where $L_0$ is not a good output (and $L_1$ is).

If $L_0$ is not a good output, then every $x_j\in S$ satisfies $x_j>L_0$.
In particular, $\min\{ y_{2\hat{j}-1},y_{2\hat{j}}\}>L_0$, and, hence, $\max\{ y_{2\hat{j}-1},y_{2\hat{j}}\}>L_1$.
Recall that there are at least $2k$ elements in $S$ which are bigger than $\max\{ y_{2\hat{j}-1},y_{2\hat{j}}\}$.
As $k\geq\frac{2}{\epsilon}\ln(\frac{1}{\beta})$, the probability that $\widehat{big} < 3k/2$ and $\RecPrefix$ fails to return $L_1$ in this case is at most $\beta$.

All in all, $\RecPrefix$ fails to return an appropriate $x$ with probability at most $3\beta t$.
\end{proof}

We now proceed with the privacy analysis.

\begin{lemma}\label{lem:RecPrefixPrivacy}
When executed for $N$ recursive calls, $\RecPrefix$ is $(2\epsilon N,2\delta N)$-differentially private.
\end{lemma}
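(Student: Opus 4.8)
The plan is to prove the lemma by induction on the number $N$ of recursive calls, viewing each invocation of $\RecPrefix$ as a composition of a few differentially private primitives (Steps~1, 5, 7) together with a randomized pre-processing map (Steps~2--3, producing $S'$) and some deterministic post-processing (Steps~6, 8). I would first record that the number of recursive calls is a deterministic function of the input universe $X$ (the universe shrinks as $|X| \to |X'| = \log|X|$ and the base case is $|X| \le 32$, independently of the data), so neighboring databases $S \sim \tilde S \in X^n$ trigger the exact same recursion pattern, keeping the composition accounting clean. The relevant primitives are: Step~1 is $\epsilon$-differentially private by Proposition~\ref{prop:exp_mech} (its quality function has sensitivity $1$); the quality function $q(S,I)$ of Step~5, which counts the $x_j$ having the length-$(z{+}1)$ string $I$ as a prefix, is $1$-bounded-growth (each $x_j$ has a unique length-$(z{+}1)$ prefix), so the choosing mechanism there is $(\epsilon,\delta)$-differentially private by Lemma~\ref{lem:CMprivacy}; and Step~7 adds $\Lap(1/\epsilon)$ to a sensitivity-$1$ count, hence is $\epsilon$-differentially private by Theorem~\ref{thm:lap}. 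Steps~6 and~8 only touch the data through $L$ and $\widehat{big}$, so they are post-processing.

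The heart of the argument is a stability property of the (randomized) map $S \mapsto S'$ implemented by Steps~2--3: for neighboring $S \sim \tilde S \in X^n$, the distributions of $S'(S)$ and $S'(\tilde S)$ over $(X')^{(n-2k)/2}$ admit a coupling supported on pairs that are equal or differ in a single coordinate. I would establish this in two steps. (i) Writing $S = T \cup \{a\}$ and $\tilde S = T \cup \{a'\}$ with $|T| = n-1$, the multiset of the $n-2k$ smallest elements of $S$ equals $\mathrm{bottom}_{n-2k-1}(T) \cup \{c\}$ where $c$ is either $a$ or the $(n-2k)$-th smallest element of $T$, and likewise for $\tilde S$ with some $c'$; hence the two truncated multisets differ in at most one element. (ii) Couple the random shuffles of Step~2 by using the same random placement for the common elements and for the single differing element, so that the two permuted sequences differ in at most one position; since $z_j$ depends only on the $j$-th consecutive pair, the resulting databases $(z_j)$ differ in at most one coordinate, and the marginals of the coupling are the correct distributions. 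Averaging the differential privacy inequality over this coupling then gives the corollary I actually use: if $\cA$ is $(\eps',\delta')$-differentially private, then $S \mapsto \cA(S'(S))$ is $(\eps',\delta')$-differentially private.

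With these pieces in hand the induction is routine. The base case $N=1$ is exactly Step~1 (the case $|X| \le 32$), which is $(\eps,0)$-differentially private and hence $(2\eps,2\delta)$-differentially private. For the inductive step, assume the claim for executions with $N-1$ calls and consider one with $N$ calls. Express the output as the composition of: $\cM_1(S) = \RecPrefix(S'(S))$, the value $z$, which is $(2\eps(N-1), 2\delta(N-1))$-differentially private by the inductive hypothesis applied to the depth-$(N{-}1)$ recursive call together with the corollary above; then $\cM_2(S, z)$, the choosing mechanism of Step~5, which for each fixed $z$ is $(\eps, \delta)$-differentially private in $S$; then $\cM_3(S, z, L)$, the noisy count of Step~7, which for each fixed $L$ is $(\eps, 0)$-differentially private in $S$; and finally the deterministic post-processing of Steps~6 and~8. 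Applying Lemma~\ref{lem:composition} iteratively (carrying the intermediate outputs $z$ and $L$ along) yields $\big(2\eps(N-1) + \eps + \eps,\ 2\delta(N-1) + \delta + 0\big) = (2\eps N,\ 2\delta N - \delta)$-differential privacy, which is contained in $(2\eps N, 2\delta N)$-differential privacy, as desired.

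The step I expect to be the main obstacle is the stability lemma (i)--(ii): setting up the coupling of the randomly shuffled, truncated databases and verifying both that it is supported on neighboring (or identical) pairs and that its marginals are the genuine distributions of $S'(S)$ and $S'(\tilde S)$. One has to be careful that the ``take the $n-2k$ smallest elements'' truncation---which is not itself a private operation---only interacts with the data through the recursive call and is therefore harmless, and that changing one record of $S$ perturbs the truncated multiset, and hence the vector of longest-common-prefix lengths, by at most one entry. Everything after that lemma is bookkeeping with the composition lemma and post-processing.
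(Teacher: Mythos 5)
Your proposal is correct and follows essentially the same route as the paper's proof: induction on the recursion depth, with the key step being that the truncate-and-shuffle map $S \mapsto S'$ sends neighboring inputs to (a coupling/bijection over) neighboring outputs, after which the composition lemma accounts for the choosing mechanism and the Laplace step. The paper phrases your coupling in (ii) as a bijection between permutations $\Pi$ and $\widehat\Pi$ and averages the privacy inequality over it, but this is the same argument; your observation that the recursion depth is data-independent also matches the paper's footnote.
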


\begin{proof}
The proof is by induction on the number of recursive calls, denoted by $t$.
For $t=1$ (i.e., $|X|\leq32$), then by Proposition~\ref{prop:exp_mech} the exponential mechanism ensures that $\RecPrefix$ is $(\epsilon,0)$-differentially private.
Assume that the stated lemma holds whenever $\RecPrefix$ performs at most $t-1$ recursive calls, and let $S_1,S_2\in X^*$ be two neighboring databases on which $\RecPrefix$ performs $t$ recursive calls.\footnote{The recursion depth is determined by $|X|$, which is identical in $S_1$ and in $S_2$.}
Let $\BBB$ denote an algorithm consisting of steps~1-4 of $\RecPrefix$ (the output of $\BBB$ is the value $z$ from Step~4).
Consider the executions of $\BBB$ on $S_1$ and on $S_2$, and denote by $Y_1,S'_1$ and by $Y_2,S'_2$ the elements $Y,S'$ as they are in the executions on $S_1$ and on $S_2$.

We show that the distributions on the databases $S'_1$ and $S'_2$ are similar in the sense that for each database in one of the distributions
there exist a neighboring database in the other that have the same
probability. Thus, applying the recursion (which is differentially private by the inductive assumption) preserves privacy. We now make this argument formal.

First note that as $S_1,S_2$ differ in only one element,
there is a bijection between
orderings $\Pi$ and $\widehat{\Pi}$ of the smallest $(n-2k)$ elements of $S_1$ and of $S_2$ respectively s.t. $Y_1$ and $Y_2$ are neighboring databases. This is because there exists a permutation of the smallest $(n-2k)$ elements of $S_1$ that is a neighbor of the smallest $(n-2k)$ elements of $S_2$; composition with this fixed permutation yields the desired bijection.
Moreover, note that whenever $Y_1,Y_2$ are neighboring databases, the same is true for $S'_1$ and $S'_2$. Hence, for every set of outputs $F$ it holds that

\begin{eqnarray*}
\Pr[\BBB(S)\in F] &=& \sum_{\Pi}\Pr[\Pi]\cdot\Pr[\RecPrefix(S'_1) \in F|\Pi]\\
&\leq& e^{2\epsilon(t-1)}\cdot\sum_{\Pi}\Pr[\Pi]\cdot\Pr[\RecPrefix(S'_2) \in F|\widehat{\Pi}]+2\delta(t-1)\\
&=& e^{2\epsilon(t-1)}\cdot\sum_{\widehat{\Pi}}\Pr[\widehat{\Pi}]\cdot\Pr[\RecPrefix(S'_2) \in F|\widehat{\Pi}]+2\delta(t-1)\\
&=& e^{2\epsilon(t-1)}\cdot\Pr[\BBB(S')\in F]+2\delta(t-1)\\
\end{eqnarray*}

So when executed for $t$ recursive calls, the sequence of Steps~1-4 of $\RecPrefix$ is $(2\epsilon(t{-}1),2\delta(t{-}1))$-differentially private.
On Steps~5 and~7, algorithm $\RecPrefix$ interacts with its database through the choosing mechanism and using the Laplace mechanism, each of which is $(\epsilon,\delta)$-differentially private. By composition (Lemma \ref{lem:composition}), we get that $\RecPrefix$ is $(2t\epsilon,2t\delta)$-differentially private.
\end{proof}

Combining Lemma~\ref{lem:RecPrefixUtility} and Lemma~\ref{lem:RecPrefixPrivacy} we obtain Theorem~\ref{thm:range_queries_upperbound}.

\subsubsection{Informal Discussion and Open Questions}
An natural open problem is to close the gap between our (roughly) $2^{\log^* |X|}$ upper bound on the sample complexity of privately solving the interior point problem (Theorem \ref{thm:range_queries_upperbound}), and our $\log^* |X|$ lower bound (Theorem \ref{thm:range-lb}). Below we describe an idea for reducing the upper bound to $\poly(\log^* |X|)$.

In our recursive construction for the lower bound, we took $n$ elements $(x_1,\ldots,x_n)$ and generated $n+1$ elements where $y_0$ is a random element (independent of the $x_i$'s), and every $x_i$ is the length of the longest common prefix of $y_0$ and $y_i$. Therefore, a change limited to one $x_i$ affects only one $y_i$ and privacy is preserved (assuming that our future manipulations on $(y_0,\ldots,y_n)$ preserve privacy). While the representation length of domain elements grows exponentially on every step, the database size grows by 1. This resulted in the $\Omega(\log^*|X|)$ lower bound.

In $\RecPrefix$ on the other hand, every level of recursion shrank the database size by a factor of $\frac{1}{2}$, and hence, we required a sample of (roughly) $2^{\log^*|X|}$ elements. Specifically, in each level of recursion, two input elements $y_{2j-1},y_{2j}$ were paired and a new element $z_j$ was defined as the length of their longest common prefix. As with the lower bound, we wanted to ensure that a change limited to one of the inputs affects only one new element, and hence, every input element is paired only once, and the database size shrinks.

If we could pair input elements {\em twice} then the database size would only be reduced additively (which will hopefully result in a $\poly(\log^*|X|)$ upper bound). However, this must be done carefully, as we are at risk of deteriorating the privacy parameter $\epsilon$ by a factor of $2$ and thus remaining with an exponential dependency in $\log^*|X|$.
Consider the following thought experiment for pairing elements.
\begin{center}
\noindent\fbox{
\parbox{0.95\textwidth}{
{\bf Input:} $(x_1,\ldots,x_n)\in X^n$.
\begin{enumerate}[topsep=-1pt, rightmargin=10pt]
\item Let $(y_1^0,\ldots,y_n^0)$ denote a random permutation of $(x_1,\ldots,x_n)$.
\item For $t=1$ to $\log^*|X|$:
\begin{enumerate}[label={}, topsep=-1pt, rightmargin=10pt]
	\item For $i=1$ to $(n{-}t)$, let $y_i^t$ be the length of the longest common prefix of $y_i^{t-1}$ and $y_{i+1}^{t-1}$.
\end{enumerate}
\end{enumerate}
}}
\end{center}

As (most of the) elements are paired twice on every step, the database size reduces additively. In addition, every input element $x_i$ affects at most $t+1$ elements at depth $t$, and the privacy loss is acceptable. However, this still does not solve the problem. Recall that every iteration of $\RecPrefix$ begins by randomly shuffling the inputs. Specifically, we needed to ensure that (w.h.p.) the number of ``close'' pairs is limited. The reason was that if a ``not close'' pair agrees on a prefix $L$, then $L$ is the prefix ``a lot'' of other elements as well, and we could privately identify $L$.
In the above process we randomly shuffled only the elements at depth $0$. Thus we do not know if the number of ``close'' pairs is small at depth $t>0$. On the other hand, if we changed the pairing procedure to shuffle at every step, then each input element $x_i$ might affect $2^t$ elements at depth $t$, causing the privacy loss to deteriorate rapidly.
}

\stoctext{
\section{Equivalences with the Interior Point Problem}
We show that under $(\epsilon,\delta)$-differential privacy the interior point problem is equivalent with each of the following three problems: (1) Query release for threshold functions, (2) Distribution learning with respect to Kolmogorov distance, and (3) Proper PAC learning of threshold functions.
Hence, our bounds from Section~\ref{sec:lowerBound} translate to new bounds on the sample complexity of those three problems.
Here we only state the equivalences; see the full version of the paper for more details.
}

\stocrm{
\section{Query Release and Distribution Learning}

\subsection{Definitions}

Recall that a {\em counting query} $q$ is a predicate $q: X \to \{0, 1\}$. For a database $D = (x_1, \dots, x_n) \in X^n$, we write $q(D)$ to denote the average value of $q$ over the rows of $D$, i.e.
$q(D) = \frac{1}{n} \sum_{i = 1}^n q(x_i).$
In the query release problem, we seek differentially private algorithms that can output approximate answers to a family of counting queries $Q$ simultaneously.

\begin{definition}[Query Release]
Let $Q$ be a collection of counting queries on a data universe $X$, and let $\alpha, \beta > 0$ be parameters. For a database $D \in X^n$, a sequence of answers $\{a_q\}_{q \in Q}$ is \emph{$\alpha$-accurate} for $Q$ if $|a_q - q(D)| \le \alpha$ for every $q\in Q$. An algorithm $A: X^n \to \R^{|Q|}$ is \emph{$(\alpha, \beta)$-accurate for $Q$} if for every $D \in X^n$, the output $A(D)$ is $\alpha$-accurate for $Q$ with probability at least $1-\beta$ over the coins of $A$. The sample complexity of the algorithm $A$ is the database size $n$.
\end{definition}

We are interested in the query release problem for the class $\thresh_X$ of \emph{threshold queries}, which we view as a class of counting queries.

We are also interested in the following \emph{distribution learning} problem, which is very closely related to the query release problem.

\begin{definition}[Distribution Learning with respect to $Q$]
Let $Q$ be a collection of counting queries on a data universe $X$. Algorithm $A$ is an $(\alpha, \beta)$-accurate \emph{distribution learner with respect to $Q$} with sample complexity $n$ if for all distributions $\cD$ on $X$, given an input of $n$ samples $D = (x_1, \dots, x_n)$ where each $x_i$ is drawn i.i.d.\ from $\cD$, algorithm $A$ outputs a distribution $\cD'$ on $X$ (specified by its PMF) satisfying $d_{Q}(\cD, \cD') \triangleq \sup_{q \in Q}|\E_{x \sim \cD}[q(x)]- \E_{x \sim \cD'}[q(x)]| \le \alpha$ with probability at least $1 - \beta$.
\end{definition}

We highlight two important special cases of the distance measure $d_Q$ in the distribution learning problem. First, when $Q$ is the collection of \emph{all} counting queries on a domain $X$, the distance $d_{Q}$ is the \emph{total variation distance} between distributions, defined by
\[d_{\mathrm{TV}}(\cD, \cD') \triangleq \sup_{S \subseteq X} |\Pr_{x \sim \cD}[x \in S]- \Pr_{x \sim \cD'}[x \in S]|.\]
 Second, when $X$ is a totally ordered domain and $Q = \thresh_X$, the distance $d_Q$ is the \emph{Kolmogorov }(or CDF) distance. A distribution learner in the latter case may as well output a CDF that approximates the target CDF in $\ell_\infty$ norm. Specifically, we define

\begin{definition}[Cumulative Distribution Function (CDF)]
Let $\cD$ be a distribution over a totally ordered domain $X$. The CDF $F_{\cD}$ of $\cD$ is defined by $F_{\cD}(t) = \Pr_{x \sim \cD}[x \le t]$. If $X$ is finite, then any function $F: X \to [0, 1]$ that is non-decreasing with $F(\max X) = 1$ is a CDF.
\end{definition}

\begin{definition}[Distribution Learning with respect to Kolmogorov distance]
Algorithm $A$ is an $(\alpha, \beta)$-accurate \emph{distribution learner with respect to Kolmogorov distance} with sample complexity $n$ if for all distributions $\cD$ on a totally ordered domain $X$, given an input of $n$ samples $D = (x_1, \dots, x_n)$ where each $x_i$ is drawn i.i.d.\ from $\cD$, algorithm $A$ outputs a CDF $F$ with $\sup_{x \in X} |F(x) - F_{\cD}(x)|$ with probability at least $1 - \beta$.
\end{definition}

The query release problem for a collection of counting queries $Q$ is very closely related to the distribution learning problem with respect to $Q$. In particular, solving the query release problem on a dataset $D$ amounts to learning the empirical distribution of $D$. Conversely, results in statistical learning theory show that one can solve the distribution learning problem by first solving the query release problem on a sufficiently large random sample, and then fitting a distribution to approximately agree with the released answers.
\stocrm{The requisite size of this sample (without privacy considerations) is characterized by a combinatorial measure of the class $Q$ called the VC dimension:
\begin{definition}
Let $Q$ be a collection of queries over domain $X$. A set $S = \{x_1, \dots, x_k\} \subseteq X$ is \emph{shattered} by $Q$ if for every $T \subseteq [k]$ there exists $q \in Q$ such that $T=\{i:q(x_i) = 1\}$. The \emph{Vapnik-Chervonenkis (VC) dimension} of $Q$, denoted $\VC(Q)$, is the cardinality of the largest set $S \subseteq X$ that is shattered by $Q$.
\end{definition}

It is known \cite{AnthonyBa09} that solving the query release problem on $256 \VC(Q) \ln(48/\alpha\beta) / \alpha^2$ random samples yields an $(\alpha, \beta)$-accurate distribution learner for a query class $Q$.
}

\subsection{Equivalences with the Interior Point Problem}
\subsubsection{Private Release of Thresholds vs. the Interior Point Problem}
}
\stoctext{\paragraph{Private Release of Thresholds vs. the Interior Point Problem.}}
We show that the problems of privately releasing thresholds and solving the interior point problem are equivalent.

\begin{theorem} \label{thm:sanitization-vs-range}
Let $X$ be a totally ordered domain. Then,
\begin{enumerate}
	\item If there exists an $(\eps,\delta)$-differentially private algorithm that is able to release threshold queries on $X$ with $(\alpha,\beta)$-accuracy and sample complexity $n/(8\alpha)$, then there is an $(\eps, \delta)$-differentially private algorithm that solves the interior point problem on $X$ with error $\beta$ and sample complexity $n$.
	
	\item If there exists a $(1, \delta)$-differentially private algorithm solving the interior point problem on $X$ with error $O(\alpha\beta)$ and sample complexity $m$, then there is an $(\eps, \delta)$-differentially private algorithm for releasing threshold queries with $(\alpha, \beta)$-accuracy and sample complexity
$$
n= O\left(
\frac{m}{\alpha\eps}
+
\frac{\log\left(1/\delta\right)}{\alpha\eps}
+
\frac{\log\left(1/\beta\right)\log^{2.5}\left(1/\alpha\right)}{\alpha\eps}
\right).
$$
\end{enumerate}

\end{theorem}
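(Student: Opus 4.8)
The plan is to establish the two implications as separate reductions, each translating between the ``interior point'' and ``empirical CDF'' views of a dataset: an interior point of $D$ is essentially a place where the empirical CDF $F_D$ crosses a convenient level, while a good $\ell_\infty$ approximation to $F_D$ can be reassembled from a small collection of privately computed approximate quantiles of $D$, each produced by one call to the interior point oracle.

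\textbf{Part 1: threshold release $\Rightarrow$ interior point.} Given $D\in X^n$, pad it without changing the neighbor relation: let $x_0=\min X$ and set $\hat D\in X^{n'}$ with $n'=n/(8\alpha)$ by appending $n'-n$ copies of $x_0$ to $D$ (since $\alpha\le 1/8$ we have $n'\ge n$). Padding commutes with ``differing in one row,'' so running the assumed $(\eps,\delta)$-private threshold releaser on $\hat D$ and post-processing its answers remains $(\eps,\delta)$-private. With probability $1-\beta$ its output yields a function $\tilde F$ with $|\tilde F(x)-F_{\hat D}(x)|\le\alpha$ for every $x\in X$, and by construction $F_{\hat D}(x)=1-8\alpha$ for all $x<\min D$ while $F_{\hat D}(\max D)=1$. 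Hence the least $x\in X$ with $\tilde F(x)\ge 1-4\alpha$ exists (because $\tilde F(\max X)\ge 1-\alpha$), lies above $\min D$ (because $\tilde F<1-4\alpha$ below $\min D$), and is at most $\max D$ (because $\tilde F(\max D)\ge 1-\alpha>1-4\alpha$); outputting it solves the interior point problem with sample complexity $n$ and error $\beta$.

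\textbf{Part 2: interior point $\Rightarrow$ threshold release.} Releasing all thresholds for $D$ to $\ell_\infty$ error $\alpha$ amounts to outputting a step function close to $F_D$, so it suffices to privately find a balanced family of about $1/\alpha$ split points of $D$. I would produce them by a recursive private approximate-median routine of fixed depth $d^*=O(\log(1/\alpha))$: on a sub-multiset $B$ of $D$ (initially $B=D$), if the current depth is $d^*$, stop; otherwise take the middle $m$ order statistics of $B$ — a slab $B_{\mathrm{mid}}$ of size exactly $m$ — feed $B_{\mathrm{mid}}$ to the interior point oracle to get a point $p$, record $p$, and recurse on $\{x\in B:x\le p\}$ and on $\{x\in B:x>p\}$. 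Any interior point of $B_{\mathrm{mid}}$ lies between the $(|B|/2-m/2)$-th and $(|B|/2+m/2)$-th order statistics of $B$, so $p$ is an approximate median of $B$, off by at most $m/2$ in rank. The final output is the CDF that jumps at each of the $2^{d^*}=O(1/\alpha)$ recorded medians (in sorted order) to successive multiples of $2^{-d^*}$. A union bound over the $O(1/\alpha)$ oracle calls — each run with error $\alpha\beta/24$ — makes every recorded median simultaneously accurate with probability $1-\beta$, after which the $\ell_\infty$ error of the output CDF is at most $2^{-d^*}$ plus the per-node rank errors accumulated along any root-to-leaf path, which is $O(\alpha)$ once $n\ge\max\{\,6m/\alpha,\ \frac{25\log(24/\beta)\log^{2.5}(6/\alpha)}{\alpha\eps}\,\}$ — the first term ensuring each call is handed $\ge m$ records, the second absorbing the accumulated rank error and the reduced per-call privacy budget over the $d^*$ levels.

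\textbf{The main obstacle.} Part 1 and the accuracy analysis of Part 2 are routine; the technical heart is the privacy analysis of Part 2. The map producing the slab $B_{\mathrm{mid}}$ is not $1$-Lipschitz: a neighboring change to $D$ shifts order statistics, so the input to each oracle call can change by as many as two records, inflating its guarantee to $(2\eps,(1+e^{\eps})\delta)$ — the source of the constants $5\eps$ and $(1+e^{\eps})\delta$ in the statement. One must then bound the accumulation over the recursion tree: at any fixed depth the sub-multisets are disjoint and a neighboring change touches only $O(1)$ of them, so parallel composition applies within a level, while across the $d^*=O(\log(1/\alpha))$ levels the calls compose sequentially (with the privacy budget split, which is why $n$ must carry an extra $\polylog(1/\alpha)$ factor so that every call still holds $\ge m$ samples at its reduced budget). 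Making the ``changes by at most two records'' claim precise for the sliced sub-databases, and verifying that the tree composition costs only a constant factor in $\eps$, is the delicate part of the argument.
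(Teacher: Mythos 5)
Your Part~1 is essentially correct and is a minor variant of the paper's reduction: the paper pads with an equal number of $\min X$ and $\max X$ entries and reads off a point where the released empirical CDF crosses $1/2$, while you pad only with $\min X$ and read off the crossing at $1-4\alpha$; both are valid one-line post-processing arguments.

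Your Part~2, however, has a genuine gap, and it is exactly where you flag ``the delicate part of the argument.'' Your recursive approximate-median tree makes oracle calls at every node along a root-to-leaf path of depth $d^*=\Theta(\log(1/\alpha))$, and removing a single element $x_i$ from $D$ changes the sub-multiset (and hence the middle-$m$ slab, by up to two records) at \emph{every} node on the path from the root to the leaf containing $x_i$. These $d^*$ calls lie at different depths, so they compose sequentially, giving a privacy loss of $\Omega(\eps\log(1/\alpha))$, not $5\eps$. Your remark that ``the privacy budget is split'' so that ``$n$ must carry an extra $\polylog(1/\alpha)$ factor'' conflates sample complexity with privacy: a larger $n$ does not make the given $(\eps,\delta)$-private interior-point oracle more private, and you have no amplification mechanism (such as subsampling) in the scheme. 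What the paper does instead is to avoid the recursion altogether and \emph{add Laplace noise to the block boundaries} of a single-level partition into $O(1/\alpha)$ blocks; this noise is precisely what lets one exhibit a measure-preserving bijection on noise vectors under which a single row change alters only two blocks (one removal, one addition), so the whole family of interior-point calls costs $e^{4\eps}$, and a single additional $\eps$-private invocation of the Dwork--Naor--Pitassi--Rothblum prefix-tree mechanism over the $O(1/\alpha)$ representatives yields the $5\eps$ total (Appendix~\ref{app:thresh-from-range2} gives a second construction based on correlated tree noise, again with noise on the boundaries). The noisy bucketing, which absorbs the rank shift from a single row change, is the key idea your proposal is missing.
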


For the first direction, observe that an algorithm for releasing thresholds could easily be used for solving the interior point problem. \stocrm{Formally,

\begin{proof}[Proof of Theorem~\ref{thm:sanitization-vs-range} item 1]
Suppose}\stoctext{More formally, suppose}
$\cA$ is a private $(\alpha,\beta)$-accurate algorithm for releasing thresholds over $X$ for databases of size $\frac{n}{8\alpha}$. Define $\cA'$ on databases of size $n$ to pad the database with an equal number of $\min\{X\}$ and $\max\{X\}$ entries, and run $\cA$ on the result. We can now return any point $t$ for which the approximate answer to the query $c_t$ is $(\frac{1}{2}\pm\alpha)$ on the (padded) database.
\stocrm{\end{proof}}

\stocrm{We now show the converse, i.e., that the problem of releasing thresholds can be reduced to the interior point problem. Specifically, we reduce the problem to a combination of solving the interior point problem, and of releasing thresholds on a much smaller data universe.
The latter task is handled by the following algorithm.

\begin{lemma}[\cite{DworkNaPiRo10}] \label{lem:prefix-tree}
For every finite data universe $X$, and $n\in\N$, $\eps,\beta>0$,
there is an $\eps$-differentially private algorithm $A$ that releases all threshold queries on $X$ with $(\alpha,\beta)$-accuracy for
\[\alpha=\frac{4 \log(1/\beta)\log^{2.5}|X|}{\eps n}.\]
\end{lemma}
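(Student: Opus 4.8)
The plan is to use the classical \emph{hierarchical (binary tree) mechanism}, which is exactly the ``binary mechanism'' for continually releasing a running count: releasing all threshold queries on $X$ amounts to releasing, for every $t \in X$, the number of database elements that are at most $t$---that is, a prefix count over the sorted database---so one could simply invoke the continual-counting result, but the construction is short enough to give directly. First I would reduce to the case $|X| = 2^k$ by padding $X$ above its maximum with dummy elements; since databases never contain a dummy and only the $|X|$ original thresholds must be answered, this costs only a factor of $2$ in $|X|$. Identify $X$ with the leaves of a complete binary tree $\cT$ of depth $k = \log_2 |X|$, ordered left to right, so that each node $v$ corresponds to a dyadic interval $I_v \subseteq X$ (the leaves in its subtree), and for $D = (x_1,\dots,x_n)$ set $c_D(v) = \#\{i : x_i \in I_v\}$. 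The combinatorial fact driving the construction is that each prefix set $[0,t] = \{y \in X : y \le t\}$ is a disjoint union of at most $k$ dyadic intervals---its \emph{canonical decomposition}, one per level of $\cT$---and that the threshold query satisfies $c_t(D) = \tfrac1n\, c_D([0,t])$, so it suffices to estimate all node counts $c_D(v)$.

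The mechanism $A$ releases $\tilde c(v) = c_D(v) + \Lap(O(\log|X|)/\eps)$ independently for every node $v$ of $\cT$, and answers each query $c_t$ by $a_t = \tfrac1n \sum_j \tilde c(v_j)$, where $I_{v_1},\dots,I_{v_m}$ (with $m \le k$) is the canonical decomposition of $[0,t]$; the released object is the vector $(a_t)_{t \in X}$ (a noisy empirical CDF). Privacy is immediate: each data element lies in exactly one dyadic interval per level, so changing one row of $D$ alters at most $O(\log|X|)$ of the counts $c_D(v)$, each by $1$; thus the map $D \mapsto (c_D(v))_v$ has $\ell_1$-sensitivity $O(\log|X|)$, releasing $(\tilde c(v))_v$ is $\eps$-differentially private by the Laplace mechanism (Theorem~\ref{thm:lap}), and $A$---which merely post-processes this vector---is $\eps$-differentially private as well.

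For accuracy, fix $t$; then $a_t - c_t(D) = \tfrac1n \sum_{j=1}^m Z_j$ with the $Z_j$ i.i.d.\ $\Lap(O(\log|X|)/\eps)$ and $m \le k$. The moment generating function of such a sum of $m$ Laplace variables is $(1 - b^2\lambda^2)^{-m}$, and optimizing $\lambda$ gives a tail bound of the form $\Pr\big[|\sum_j Z_j| > s\big] \le 2^{O(\log|X|)} \exp(-\Omega(\eps s / \log|X|))$. Taking $s = \alpha n$ with $\alpha$ as in the statement makes this at most $\beta / |X|$, and a union bound over the at most $|X|$ distinct threshold queries then yields $\Pr[\exists t : |a_t - c_t(D)| > \alpha] \le \beta$, which is the claimed $(\alpha,\beta)$-accuracy.

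The argument is structurally routine; the only step that needs care is verifying that the bound in the statement---exponent $\log^{2.5}|X|$ and a \emph{linear} dependence on $\log(1/\beta)$---is indeed sufficient in the last display. The union bound over the $|X| = 2^k$ queries contributes an additive $\log|X|$ term inside the logarithm of the failure probability, and folding that term into the $\log(1/\beta)$ factor uses that $\beta$ is bounded away from $1$ (harmless, since for $\beta$ close to $1$ the claimed $\alpha$ is not meaningful). The remaining items---the exact number of canonical dyadic intervals, the precise sensitivity constant, and whether one uses the Gaussian-type or the exponential-type tail estimate for the Laplace sum depending on the target failure probability---are mechanical bookkeeping. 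I expect this constant/exponent chase to be essentially the only real work; alternatively one simply quotes the published analysis of the binary mechanism for continual counting, of which empirical-CDF release over a sorted database is a special case.
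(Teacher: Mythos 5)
Your proposal is correct and is precisely the binary-tree (hierarchical/dyadic) mechanism of Dwork, Naor, Pitassi, and Rothblum that the lemma cites; the paper itself does not reprove the lemma, so there is no internal argument to compare against, but your construction, sensitivity analysis, and Laplace-sum concentration argument match the standard DNPR proof. The only remaining work, as you correctly flag, is the mechanical constant/exponent check that $\log^{2.5}|X|\log(1/\beta)$ dominates both the sub-Gaussian term $\log^{1.5}|X|\sqrt{\log(|X|/\beta)}$ and the sub-exponential term $\log|X|\log(|X|/\beta)$ once $\beta$ is bounded away from $1$, which is routine.
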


}

\stoctext{For proving the second direction of the equivalence, we reduce the problem of releasing thresholds to a combination of solving the interior point problem, and of releasing thresholds on a much smaller data universe.}
At a high level, the reduction we present consists of two steps: (1) A ``partitioning procedure'' that privately identifies $\approx 1/\alpha$ representatives that partition the data into blocks
of size roughly $\alpha n$, and (2) answering threshold queries on just the set of representatives. 
From this we can well-approximate all threshold queries. Moreover, since there are only $O(1/\alpha)$ representatives,
\stocrm{the base algorithm, mentioned above, gives only $\polylog(1/\alpha)$ error for these answers.}
\stoctext{we can use the results of~\cite{DworkNaPiRo10} in order to incur only $\polylog(1/\alpha)$ error for these answers.}

\begin{remark}
In an earlier version of this paper, the proof of Item~2 of Theorem~\ref{thm:sanitization-vs-range} had an error in the privacy analysis, which was pointed out to us by Haim Kaplan in 2021 and Roodabeh Safavi in 2024. We repair it here by showing that our original algorithm is $\left(O(1),\delta\right)$-differentially private (rather than $\left(\eps,\delta\right)$-differentially private as we claimed) and then applying privacy amplification via subsampling.

Subsequent work introduced several alternative ``partitioning procedures'' that could be used in our reduction in place of our original partitioning procedure. These include the partitioning procedures proposed in \cite{KaplanSS22} and \cite{Cohen0NSS23}. Additionally, in a personal communication, \cite{JalajPersonalComm} suggested another partitioning procedure, similar to that of \cite{KaplanSS22}.
\end{remark}

\stocrm{

\begin{proof}[Proof of Theorem~\ref{thm:sanitization-vs-range} item 2]
Let $R: X^* \to X$ be a $(1, \delta)$-differentially private algorithm solving the interior point problem on $X$ with error $O(\alpha\beta)$ and sample complexity $m$.
We may actually assume that $R$ is differentially private in the sense that if $D \in X^*$ and $D'$ differs from $D$ up to the addition or removal of a row, then for every $S \subseteq X$, $\Pr[R(D) \in S] \le e\cdot \Pr[R(D') \in S] + \delta$, and that $R$ solves the interior point problem with probability at least $1-O(\alpha\beta)$ whenever its input is of size at least $m$. This is because we can pad databases of size less than $m$ with an arbitrary fixed element, and subsample the first $m$ entries from any database with size greater than $m$.

Consider the following algorithm for answering thresholds on databases $D \in X^n$ for $n > m$:
\begin{algorithm}[H]
\caption{$Thresh(D)$}
\textbf{Input:} Database $D \in X^n$.

\smallskip
\textbf{Tools used:} A $(1,\delta)$-differentially private algorithm $R$ for solving the interior point problem on $X$ with error $\alpha\beta/56000$ and sample complexity $m$. A differentially private algorithm $A$ for releasing all threshold queries on $X$, as in Lemma~\ref{lem:prefix-tree}. 

\begin{enumerate}

\item Let $S$ be a subsample of $D$, where every point in $D$ is sampled into $S$ with probability $p=\frac{\eps}{8e^4}$. Denote $|S|=\hat{n}$. 

\item Sort $S$ in nondecreasing order $x_1 \le x_2 \le \dots \le x_{\hat{n}}$.

\item Denote $\hat{m}=\max\{m\;,\, \frac{\alpha\eps n}{7000} \}$, and set $k = \eps n/\hat{m}$. Let $t_0 = 1,\; t_1 = t_0 + \nu_1,\; t_2 = t_1 + \nu_2 \dots,\; t_k = t_{k-1} + \nu_k$ where each $\nu_\ell \sim \lceil \hat{m}+4+\log(\frac{k}{\beta\delta})+\Lap(1)\rceil$ independently.

\item Divide $S$ into blocks $S_1, \dots, S_k$, where $S_\ell = (x_{t_{\ell-1}}, \dots, x_{t_{\ell}-1})$ (setting $x_j = \max X$ if $j > n$; note some $S_\ell$ may be empty).

\item Let $r_0 = \min X$, $r_1 = R(S_1), \dots, r_k = R(S_k)$.

\item Define $\hat{D}$ from $D$ by replacing each $x_j$ with the largest $r_\ell$ for which $r_\ell \le x_j$.

\item Run algorithm $A$ from Lemma~\ref{lem:prefix-tree} on $\hat{D}$ over the universe $\{r_0, r_1, \dots, r_k\}$ to obtain threshold query answers $a_{r_0}, a_{r_1}, \dots, a_{r_k}$. Use privacy parameter $\eps/2$ and confidence parameter $\beta/2$.

\item Answer arbitrary threshold queries by interpolation, i.e. for $r_\ell \le t < r_{\ell+1}$, set $a_t = a_{r_\ell}$.

\item Output $(a_t)_{t\in X}$.
\end{enumerate}
\end{algorithm}

\paragraph{Privacy}
Let $\BBB_{2-5}$ denote the algorithm consisting of Steps~2-5 of Algorithm $Thresh$. The input of $\BBB_{2-5}$ is a database $S$ of size $\hat{n}$, and its output are the values $r_0,r_1,\dots,r_k$. We first show that $\BBB_{2-5}$ is $(O(1),O(\delta))$-differentially private. To this end, let $S=(x_1,\dots,x_{\hat{n}})$ where $x_1\leq x_2\leq\dots x_{\hat{n}}$, and consider a neighboring database $S'=(x_1,\dots,x'_i,\dots,x_{\hat{n}})$. Assume without loss of generality that $x'_i\geq x_{i+1}$, and suppose
\[x_1 \le \dots \le x_{i-1} \le x_{i+1} \le \dots \le x_{j} \le x_{i}' \le x_{j+1} \le \dots \le x_{n'}.\]

We define a mapping $\pi:\R^k \rightarrow \R^k$ from noise vectors $\nu = (\nu_1, \dots, \nu_k)$ during the execution on $S$ to noise vectors $\nu' = (\nu'_1, \dots, \nu'_k)$ during the execution
on $S'$ such that $S$ partitioned according to $\nu$ and $S'$ partitioned according to $\nu'=\pi(\nu)$ differ on at most two blocks.
Specifically, if $\ell, r$ are the indices for which $t_{\ell-1} \le i < t_{\ell}$ and $t_{r - 1} \le j < t_r$ (we may have $\ell=r$), then we can take $\nu'_{\ell} = \nu_\ell - 1$ and $\nu'_{r} = \nu_r + 1$ with $\nu' = \nu$ at every other index.
Note that $S$ partitioned into $(S_1,\ldots,S_k)$ according to $\nu$ differs from $S'$ partitioned into $(S'_1,\ldots,S'_k)$ according to $\nu'$
by a removal of an element from one block (namely $S_\ell$) and the addition of an element to another block (namely $S_r$).

We now claim that the mapping $\pi$ is ``essentially'' a 2-to-1 mapping. Specifically, we show that every vector $\nu'$ such that $\forall t\in[k]\;\nu'_t\geq3$ could have at most 2 preimages. Indeed, let $\nu'$ be such a vector with a preimage $\nu$. Observe that all the coordinates of $\nu$ must be at least 2. In particular, the partition of $S$ according to $\nu$ and the partition of $S'$ according to $\nu'$ are both well-defined, containing no empty blocks. 

Recall that the preimage $\nu$ is obtained from $\nu'$ by increasing one coordinate and decreasing another (by 1). We will show that there is exactly one option for the index we need to decrease, and there could be at most 2 options for which index we need to increase. To this end,
let $S'_r$ be the unique block containing $x_j$ in the partition of $S'$ according to $\nu'$. 
Observe that $x_j$ cannot be the last element in this block, as otherwise $\nu'$ would not have any preimage (the mapping $\pi$ does not generate partitions containing a block that ends with $x_j$). Thus, we must set $\nu_r=\nu'_r-1$ in order for $\nu$ to be a preimage of $\nu'$.

Next let $\ell_1,\ell_2$ denote the indices of the blocks containing $x_{i-1}$ and $x_{i+1}$ in the partition of $S'$ according to $\nu'$ (it could be that $\ell_1=\ell_2$, or that $\ell_2=\ell_1+1$). Note that in the partition of $S$ according to $\nu$, it must be that $x_i$ belongs to the same block as at least one of $x_{i-1},x_{i+1}$ (since all blocks are of size at least 2). Thus, there could be at most two options for the index of the coordinate in $\nu$ that is increased by 1 compared to $\nu'$: Either $\nu_{\ell_1}=\nu'_{\ell_1}+1$ or $\nu_{\ell_2}=\nu'_{\ell_2}+1$. All other coordinates are the same. This shows that there could be at most 2 options for the preimage $\nu$.

Now let $F$ be a set of possible outcomes of Algorithm $\BBB_{2-5}$. We have that
\begin{align*}
\Pr[\BBB_{2-5}(S)\in F] &= \sum_{\nu} \Pr[\nu]\cdot \Pr[\BBB_{2-5}(S)\in F | \nu]\\
&\leq \Pr_{\nu}[\min\{v_t\}<4]
+
\sum_{\nu\,:\,\min\{v_t\}\geq4} \Pr[\nu]\cdot \Pr[\BBB_{2-5}(S)\in F | \nu]\\
&\leq \delta
+
\sum_{\nu\,:\,\min\{\nu_t\}\geq4} \Pr[\nu]\cdot \Pr[\BBB_{2-5}(S)\in F | \nu]\\
&\leq \delta
+
\sum_{\nu\,:\,\min\{\nu_t\}\geq4} e^2\cdot\Pr[\pi(\nu)]\cdot \Pr[\BBB_{2-5}(S)\in F | \nu]\\
&\leq \delta + \sum_{\nu\,:\,\min\{\nu_t\}\geq4} e^2\cdot \Pr[\pi(\nu)]\cdot \left(e^2\cdot\Pr[\BBB_{2-5}(S')\in F | \pi(\nu)]+2\delta\right)\\
&\leq \delta + 2\cdot\sum_{\nu'\in{\rm Range}(\pi)} e^2\cdot \Pr[\nu']\cdot \left(e^2\cdot\Pr[\BBB_{2-5}(S')\in F | \nu']+2\delta\right)\\
&\leq \delta + 2\cdot\sum_{\nu} e^2\cdot \Pr[\nu]\cdot \left(e^2\cdot\Pr[\BBB_{2-5}(S')\in F | \nu]+2\delta\right)\\
&= e^{4+\ln(2)}\cdot \Pr[\BBB_{2-5}(S')\in F]  + (4e^2+1)\delta,
\end{align*}
where the second inequality follows by a union bound on the Laplace noises sampled in Step 3, 
the third inequality follows 
since the noise vector $\nu$ is sampled with density at most $e^2$ times the density of $\pi(\nu)$, the forth inequality follows from the privacy guarantees of $R$ (recall that at most two applications of $R$ are affected), and the fifth inequality is because every $\nu'\in{\rm Range}(\pi)$ could have at most 2 preimages $\nu$ satisfying $\min_t\{\nu_t\}\geq4$. This shows that Algorithm $\BBB_{2-5}$ is $(O(1),O(\delta))$-differentially private.

Now consider algorithm $\BBB_{1-5}$, consisting of Steps 1-5 of Algorithm $Thresh$. The input of $\BBB_{1-5}$ is a database $D$ of size $n$ and its output are the values $r_0,r_1,\dots,r_k$. Note that $\BBB_{1-5}$ can be viewed as running $\BBB_{2-5}$ on a subsample. We leverage the following result for showing that $\BBB_{1-5}$ is differentially private with boosted privacy parameters:

\begin{theorem}[\cite{KasiviswanathanLeNiRaSm07}]
Let $\AAA$ be an $(\eps_0,\delta)$-differentially private algorithm. Fix $0\leq p\leq0.5$ and let $\hat{\AAA}$ be the algorithm that takes a database $D$, construct a dataset $\hat{D}$ by sampling every point from $D$ independently with probability $p$ (a.k.a.\ Poisson sampling), and runs $\AAA$
on $\hat{D}$. Then, $\hat{\AAA}$ is
$(2p(e^{\eps_0}-1),p\delta)$-differentially private
\end{theorem}

Using this theorem, we get that $\BBB_{1-5}$ satisfies $(\frac{\eps}{2},\delta)$-differential privacy. Finally, note that algorithm $Thresh$ can be viewed as the composition of $\BBB_{1-5}$ with the algorithm $A$ from Lemma~\ref{lem:prefix-tree}, which is $(\frac{\eps}{2},0)$-differentially private. Algorithm $Thresh$ is, therefore, $(\eps,\delta)$-differentially private by composition.


\paragraph{Utility} We can produce accurate answers to every threshold function as long as
\begin{enumerate}
    \item For every consecutive collection of $t=\alpha n/2$ points from $D$, call them $x_{j_1}\leq x_{j_2}\leq\dots\leq x_{j_t}$, there exist a value $r_j$ (computed in Step 5) satisfying $x_{j_1}\leq r_j\leq x_{j_t}$.

    \item The answers obtained from executing the algorithm from Lemma~\ref{lem:prefix-tree} all have additive (unnormalized) error at most $\alpha n/2$.

\end{enumerate}

Item 2 holds with probability at least $1-\beta/2$ by Lemma~\ref{lem:prefix-tree}, provided that
$n\geq \Omega\left(\frac{1}{\alpha\eps} \log(\frac{1}{\beta})\log^{2.5}(\frac{1}{\alpha})\right)$. 
As for Item 1, we show that it holds whenever the following events occur:

\begin{enumerate}
    \item[(a)] For every consecutive collection of $t=\alpha n/2$ points from $D$, at least $\frac{pt}{4}=\frac{\alpha\eps n}{64e^4}$ of them are sampled into $S$.

    \item[(b)] Every database $S_i$ has size at least $\hat{m}$ and at most $\hat{m}+4+2\log(\frac{k}{\beta\delta})$.

    \item[(c)]  The partitioning exhausts the database $S$, i.e. every element of $S$ is in some $S_i$.

    \item[(d)] Every execution of $R$ succeeds at finding an interior point,
\end{enumerate}

Indeed, suppose that (a)-(d) occur. Then, by (a), every consecutive collection of $t$ points in $D$ has at least $\frac{\alpha\eps n}{64e^4}$ consecutive representatives in $S$, which is more than twice times the maximal possible block size (as stated by event (b)), provided that $n\geq\Omega(\frac{m}{\alpha\eps}+\frac{1}{\alpha\eps}\log(\frac{1}{\alpha\beta\delta}))$. Thus, by (b) and (c), there must be a block $S_i$ in the partition of $S$ that is completely contained in this collection of representatives. Thus, by (d), at least one $r_j$ must be an interior point of these $t$ points in $D$.

Now note that if $t\geq\Omega(\frac{1}{p}\ln(\frac{1}{\alpha\beta}))$, then Item (a) happens with probability at least $1-\beta/8$ by the Chernoff bound\footnote{This calculation includes a union bound over $k$ disjoint (consecutive) sequences of $\frac{t}{2}$ elements in $D$. It suffices to union bound over these $O(\frac{1}{\alpha})$ sequences because every sequence of length $t$ must contain at least one such sequence of length $t/2$.}. Also note that Item (b) happens with probability at least $1-\beta/8$ by standard tails bounds for the Laplace distribution. Next, when (b) holds, then Item (c) occurs whenever $|S|\leq\eps n$, which happens with probability at least $1-\beta/8$ by the Chernoff bound, provided that $n\geq \Omega\left(\frac{1}{\eps}\ln(\frac{1}{\beta})\right)$. Finally, when (b) holds, then Item (d) holds with probability at least $1-\beta/8$ by the guarantees of algorithm $R$ (recall that we required the failure probability of $R$ to be $\ll \alpha\beta\leq\frac{\beta}{k}$).

Overall, with probability at least $1-\beta$, our answers are $\alpha$-accurate, provided that
$$
n\geq \Omega\left(
\frac{m}{\alpha\eps}
+
\frac{1}{\alpha\eps}\log\left(\frac{1}{\delta}\right)
+
\frac{1}{\alpha\eps} \log\left(\frac{1}{\beta}\right)\log^{2.5}\left(\frac{1}{\alpha}\right)
\right).
$$
\end{proof}

\subsubsection{Releasing Thresholds vs.\ Distribution Learning}
}

\stoctext{\paragraph{Releasing Thresholds vs.\ Distribution Learning.}}
Query release and distribution learning are very similar tasks: A distribution learner can be viewed as an algorithm for query release with small error w.r.t.\ the underlying distribution (rather than the fixed input database). We show that the two tasks are equivalent under differential privacy.

\begin{theorem}\label{thm:distlearning-release-reduction}
Let $Q$ be a collection of counting queries over a domain $X$.
\begin{enumerate}
	\item If there exists an $(\eps,\delta)$-differentially private algorithm for releasing $Q$ with $(\alpha,\beta)$-accuracy and sample complexity $n\geq 256 \VC(Q) \ln(48/\alpha\beta) / \alpha^2$, then there is an $(\eps,\delta)$-differentially private $(3\alpha, 2\beta)$-accurate distribution learner w.r.t.\ $Q$ with sample complexity $n$.

	\item If there exists an $(\eps, \delta)$-differentially private $(\alpha, \beta)$-accurate distribution learner w.r.t.\ $Q$ with sample complexity $n$, then there is an $(\eps,\delta)$-differentially private query release algorithm for $Q$ with $(\alpha,\beta)$-accuracy and sample complexity $9n$.
\end{enumerate}

\end{theorem}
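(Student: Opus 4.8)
I would prove both directions of Theorem~\ref{thm:distlearning-release-reduction} by black-box reductions. A query-release algorithm becomes a distribution learner by running it on the i.i.d.\ sample and then post-processing the released answers into a distribution consistent with them; a distribution learner becomes a query-release algorithm by running it on a random sub-sample of the input database and reading off the query values of its output.

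For the first direction, suppose $A$ is an $(\eps,\delta)$-DP query-release algorithm for $Q$ with $(\alpha,\beta)$-accuracy at sample size $n \ge 256\,\VC(Q)\ln(48/\alpha\beta)/\alpha^2$. On input an i.i.d.\ sample $D=(x_1,\dots,x_n)\getsr\cD^n$, run $A(D)$ to obtain answers $(a_q)_{q\in Q}$ and then, touching the data no further, output any distribution $\cD'$ on $X$ with $\sup_{q\in Q}|\E_{x\sim\cD'}[q(x)]-a_q|\le\alpha$ (output anything if none exists). Privacy is immediate by post-processing. For accuracy, with probability $\ge1-\beta$ the answers are $\alpha$-accurate for the \emph{empirical} distribution, $\sup_q|a_q-q(D)|\le\alpha$, and with probability $\ge1-\beta$ the standard VC uniform-convergence bound (the one place the sample-size hypothesis is used) gives $\sup_q|q(D)-\E_{x\sim\cD}[q(x)]|\le\alpha$. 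The first event also certifies that a consistent $\cD'$ exists, since $\hat\cD_D$ itself satisfies $\sup_q|\E_{x\sim\hat\cD_D}[q(x)]-a_q|=\sup_q|q(D)-a_q|\le\alpha$, so the post-processing step succeeds; chaining the two error estimates through $(a_q)$ yields $d_Q(\cD,\cD')\le3\alpha$ except with probability $2\beta$, which is exactly the asserted $(3\alpha,2\beta)$-accurate learner.

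For the second direction, suppose $B$ is an $(\eps,\delta)$-DP $(\alpha,\beta)$-accurate distribution learner with sample complexity $n$. On input $D\in X^{9n}$, draw a random size-$n$ sub-sample from $D$, run $\cD'\getsr B(\cdot)$ on it, and output $(\E_{x\sim\cD'}[q(x)])_{q\in Q}$. For privacy, sub-sampling at rate $n/(9n)=1/9$ means that any single row of $D$ affects only a small part of the input handed to $B$, so privacy amplification by sub-sampling keeps the output $(\eps,\delta)$-DP (the $1/9$ rate bounds both the $\eps$- and the $\delta$-loss below their original values). For accuracy, $B$ learns up to $d_Q$-error $\alpha$ the distribution underlying its input, which here faithfully represents $\hat\cD_D$, so that $\sup_q|\E_{x\sim\cD'}[q(x)]-q(D)|\le\alpha$ with probability $\ge1-\beta$.

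The delicate point, and the main obstacle, is this second direction: $B$'s accuracy is promised only for genuinely i.i.d.\ input, whereas a query-release algorithm must succeed on a worst-case database $D$, and the sub-sample that privacy forces us to use is not literally an i.i.d.\ draw from $\hat\cD_D$. Reconciling these — arguing that a random sub-sample of a sufficiently large database behaves well enough as input to $B$, while simultaneously keeping the sub-sampling rate small enough for privacy amplification — is precisely what the blow-up from $n$ to $9n$ rows buys, and checking that the two constraints fit together with the stated constants is the part that needs care. The first direction, by contrast, is routine once one notices that the empirical distribution itself should serve as the existence witness for a query-consistent output, which is what keeps the accuracy loss at $3\alpha$ rather than $4\alpha$.
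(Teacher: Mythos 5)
Your first direction matches the paper's proof essentially step for step: run the query-release algorithm, fit a $d_Q$-consistent distribution via post-processing (the paper uses the LP reconstruction of \cite{DworkNaReRoVa09}), certify feasibility via the empirical distribution, and chain the $\alpha$-errors through the released answers and the VC generalization bound. That part is fine.

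Your second direction has the right high-level mechanism---subsample $n$ rows from a size-$9n$ database and feed them to the learner---but you have the locus of difficulty exactly backwards, and the vagueness on that point is a genuine gap. You flag as ``the main obstacle'' that ``the sub-sample that privacy forces us to use is not literally an i.i.d.\ draw from $\hat\cD_D$.'' That is only true if you subsample \emph{without} replacement, and you then appeal to generic ``privacy amplification by sub-sampling'' to cover the privacy side. The paper does the opposite: it subsamples \emph{with} replacement, which makes the accuracy step completely trivial (a with-replacement subsample \emph{is} literally an i.i.d.\ draw from the uniform distribution $\hat\cD_D$ over the rows, so the learner's guarantee applies verbatim). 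The price is that with-replacement sampling is \emph{not} the setting of the standard subsampling amplification lemma: a single individual can be sampled multiple times, which on its own \emph{hurts} privacy. What you actually need, and what the paper proves as Lemma~\ref{lem:secrecy-of-the-sample}, is a tailored ``secrecy-of-the-sample'' bound that trades off the compounding $e^{k\eps}$ loss for a row sampled $k$ times against the rarity of sampling it at all, yielding $\tilde\eps = 6\eps m/n$ and $\tilde\delta = e^{6\eps m/n}\cdot\frac{4m}{n}\delta$, which with $n = 9m$ and $\eps \le 1$ lands back inside $(\eps,\delta)$. As written, your proposal neither realizes that with-replacement sampling dissolves the ``obstacle'' you identify, nor proves (or correctly cites) the nonstandard privacy lemma that with-replacement sampling actually requires; a reader following your route with without-replacement sampling would be left with the unproved (and not obviously true) claim that the learner's i.i.d.\ accuracy guarantee survives a hypergeometric input.
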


The first direction follows from a standard generalization bound, showing that if a given database $D$ contains (enough) i.i.d.\ samples from a distribution $\cD$, then (w.h.p.) accuracy with respect to $D$ implies accuracy with respect to $\cD$. We remark that the sample complexity lower bound on $n$ required to apply item 1 of Theorem~\ref{thm:distlearning-release-reduction} does not substantially restrict its applicability: It is known that an $(\eps, \delta)$-differentially private algorithm for releasing $Q$ always requires sample complexity $\Omega(\VC(Q)/\alpha\eps)$ anyway \cite{BlumLiRo08}.

\stocrm{
\begin{proof}[Proof of Theorem~\ref{thm:distlearning-release-reduction}, item 1]
Suppose $\tilde{\cA}$ is an $(\eps,\delta)$-differentially private algorithm for releasing $Q$ with $(\alpha,\beta)$-accuracy and sample complexity $n\geq 256 \VC(Q) \ln(48/\alpha\beta) / \alpha^2$.
Fix a distribution $\cD$ over $X$ and consider a database $D$ containing $n$ i.i.d.\ samples from $\cD$.
Define the algorithm $\cA$ that on input $D$ runs $\tilde{\cA}$ on $D$ to obtain answers $a_q$ for every query $q \in Q$. Afterwards, algorithm $\cA$ uses linear programming \cite{DworkNaReRoVa09} to construct a distribution $\cD'$ that such that $|a_q - q(\cD')| \le \alpha$ for every $q \in Q$, where $q(\cD') \triangleq \E_{x \sim \cD'}[q(x)]$. This reconstruction always succeeds as long as the answers $\{a_q\}$ are $\alpha$-accurate, since the empirical distribution of $D$ is a feasible point for the linear program. Note that $\cA$ is $(\eps,\delta)$-differentially private (since it is obtained by post-processing $\tilde{\cA}$).

We first argue that $q(\cD')$ is close to $q(D)$ for every $q\in Q$, and then argue that $q(D)$ is close to $q(\cD)$. By the utility properties of $\tilde{\cA}$, with all but probability $\beta$,
$$
|q(\cD') - q(D)| \le |q(\cD') - a_q| + |a_q - q(D)| \le 2\alpha.
$$
for every $q \in Q$.

We now use the following generalization theorem to show that (w.h.p.) $q(D)$ is close to $q(\cD)$ for every $q \in Q$.

\begin{theorem}[\cite{AnthonyBa09}] \label{thm:dist-generalization}
Let $Q$ be a collection of counting queries over a domain $X$. Let $D = (x_1, \dots, x_n)$ consist of i.i.d.\ samples from a distribution $\cD$ over $X$. If $d = \VC(Q)$, then
\[\Pr\left[\sup_{q \in Q} |q(D) - q(\cD)| > \alpha\right] \le 4 \left(\frac{2en}{d}\right)^d\exp\left(-\frac{\alpha^2n}{8}\right).\]
\end{theorem}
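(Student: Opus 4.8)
The plan is to prove this as a standard instance of Vapnik--Chervonenkis uniform convergence; I sketch the argument in three steps. We may assume $n\alpha^2 \ge 2$, since otherwise $\exp(-\alpha^2 n/8) \ge e^{-1/4} > 1/2$ and the right-hand side exceeds $1$, so the bound holds trivially. \textbf{Step 1 (Symmetrization with a ghost sample).} Let $D' = (x_1', \dots, x_n')$ be a second sample drawn i.i.d.\ from $\cD$, independent of $D$. The first step is to show
\[\Pr\left[\sup_{q \in Q} |q(D) - q(\cD)| > \alpha\right] \le 2 \Pr\left[\sup_{q \in Q} |q(D) - q(D')| > \frac{\alpha}{2}\right].\]
On the event that some query witnesses $|q(D)-q(\cD)|>\alpha$, fix one such $q^\star$ as a measurable function of $D$ alone; since $D'$ is independent of $D$ and $\operatorname{Var}(q^\star(D')) \le 1/(4n)$, Chebyshev's inequality gives $|q^\star(D') - q^\star(\cD)| \le \alpha/2$ with probability at least $1 - 1/(n\alpha^2) \ge 1/2$, and then the triangle inequality forces $|q^\star(D)-q^\star(D')| > \alpha/2$.

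\textbf{Step 2 (Random swaps).} Condition on the multiset of $2n$ points $\{x_1, x_1', \dots, x_n, x_n'\}$. Because $D$ and $D'$ are i.i.d., independently exchanging $x_i \leftrightarrow x_i'$ for each $i$ according to uniform signs $\sigma_i \in \{\pm 1\}$ leaves the distribution of $\sup_{q}|q(D)-q(D')|$ unchanged, so it suffices to bound
\[\Pr\left[\sup_{q \in Q} \frac{1}{n}\left|\sum_{i=1}^n \sigma_i\bigl(q(x_i)-q(x_i')\bigr)\right| > \frac{\alpha}{2}\right].\]
\textbf{Step 3 (Growth function and Hoeffding).} Fix the $2n$ points. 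As $q$ ranges over $Q$, the vector $\bigl(q(x_1), q(x_1'), \dots, q(x_n), q(x_n')\bigr)$ takes at most $\Pi_Q(2n) \le \sum_{i=0}^{d}\binom{2n}{i} \le (2en/d)^d$ distinct values, by the Sauer--Shelah lemma. For each fixed such pattern, $\sum_i \sigma_i(q(x_i)-q(x_i'))$ is a sum of independent mean-zero terms lying in $[-1,1]$, so Hoeffding's inequality gives $\Pr_{\sigma}[\,\frac{1}{n}|\sum_i \sigma_i(q(x_i)-q(x_i'))| > \alpha/2\,] \le 2\exp(-\alpha^2 n/8)$. A union bound over the at most $(2en/d)^d$ patterns (uniformly over the choice of $2n$ points), together with the factor $2$ from Step 1, yields the claimed bound $4(2en/d)^d\exp(-\alpha^2 n/8)$.

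\textbf{Main obstacle.} The delicate point is the symmetrization of Step 1: one must select the witnessing query $q^\star$ measurably from $D$ before revealing the ghost sample, and the mild assumption $n\alpha^2 \ge 2$ is exactly what forces the ghost sample to concentrate tightly enough for the argument to close. The remaining steps are routine; the constants have been arranged so that the $\alpha/2$ threshold produced by symmetrization and the range $[-1,1]$ appearing in Hoeffding combine to give precisely the exponent $\alpha^2 n/8$.
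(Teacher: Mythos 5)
The paper does not prove this theorem; it is imported verbatim from Anthony--Bartlett \cite{AnthonyBa09} as a black box. Your reconstruction is the standard Vapnik--Chervonenkis uniform-convergence argument (ghost-sample symmetrization via Chebyshev, random sign swaps, Sauer--Shelah, Hoeffding), and the constants are tracked correctly: symmetrization loses a factor $2$ and halves the threshold, Sauer--Shelah on $2n$ points gives the $(2en/d)^d$ factor, and Hoeffding on a sum of $n$ mean-zero terms in $[-1,1]$ at threshold $n\alpha/2$ yields $2\exp(-\alpha^2 n/8)$, multiplying out to the stated $4(2en/d)^d\exp(-\alpha^2 n/8)$. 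The reduction to the case $n\alpha^2\ge 2$ is the right way to make the Chebyshev step in symmetrization go through. One small caveat worth stating explicitly: the trivial-case argument (RHS $\ge 1$ when $n\alpha^2 < 2$) and the Sauer--Shelah estimate $\sum_{i\le d}\binom{2n}{i}\le(2en/d)^d$ both implicitly assume $d\le 2n$; this is the regime in which the bound is meaningful and is what the cited source intends, but it deserves a line.
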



Using the above theorem, together with the fact that $n \ge 256 \VC(Q) \ln(48/\alpha\beta) / \alpha^2$, we see that except with probability at least $1 - \beta$ we have that $|q(D) - q(\cD))|\leq \alpha$ for every $q \in Q$. By a union bound (and the triangle inequality) we get that $\cA$ is $(3\alpha,2\beta)$-accurate.
\end{proof}

In the special case where $Q = \thresh_X$ for a totally ordered domain $X$, corresponding to distribution learning under Kolmogorov distance, the above theorem holds as long as $n \ge 2\ln(2/\beta)/\alpha^2$.
This follows from using the Dvoretzky-Kiefer-Wolfowitz inequality \cite{DvoretzkyKiWo56,massart1990} in place of Theorem \ref{thm:dist-generalization}.

\begin{theorem}
If there exists an $(\eps,\delta)$-differentially private algorithm for releasing $\thresh_X$ over a totally ordered domain $X$ with $(\alpha,\beta)$-accuracy and sample complexity $n\geq 2\ln(2/\beta)/\alpha^2$, then there is an $(\eps,\delta)$-differentially private $(2\alpha, 2\beta)$-accurate distribution learner under Kolmogorov distance with sample complexity $n$.
\end{theorem}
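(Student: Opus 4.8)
The plan is to follow the proof of Theorem~\ref{thm:distlearning-release-reduction}, item~1 almost verbatim, making two substitutions appropriate to the one-dimensional setting: use the Dvoretzky--Kiefer--Wolfowitz inequality \cite{DvoretzkyKiWo56, massart1990} in place of the VC-based generalization bound (Theorem~\ref{thm:dist-generalization}), and, in place of the linear-programming reconstruction step, post-process the released answers into a genuine CDF by clipping and taking a running supremum. The second substitution is what buys the improvement from $3\alpha$ to $2\alpha$. In detail: let $\tilde\cA$ be the hypothesized $(\eps,\delta)$-differentially private, $(\alpha,\beta)$-accurate release algorithm for $\thresh_X$ with sample complexity $n \ge 2\ln(2/\beta)/\alpha^2$ (we may assume $X$ is finite, as otherwise no such $\tilde\cA$ exists by Theorem~\ref{thm:range-lb} and the statement is vacuous). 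Define the distribution learner $\cA$ to take $D = (x_1,\dots,x_n)$ drawn i.i.d.\ from an unknown distribution $\cD$, run $\tilde\cA(D)$ to obtain answers $(a_t)_{t\in X}$ to the threshold queries (so each $a_t$ approximates $c_t(D)$), set $g(t) = \max\{0,\min\{1,a_t\}\}$ and $\hat F(t) = \sup_{s\le t} g(s)$, and output the CDF $F$ that equals $\hat F$ except that $F(\max X) = 1$.

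Privacy is immediate: $F$ is a function of $\tilde\cA(D)$ alone, so $\cA$ inherits $(\eps,\delta)$-differential privacy from $\tilde\cA$ by closure under post-processing. For accuracy, write $F_D$ for the empirical CDF of $D$, so that $c_t(D) = F_D(t)$ for every $t$. First I would bound the two failure modes: by the accuracy of $\tilde\cA$, the event $\sup_t|a_t - F_D(t)| > \alpha$ has probability at most $\beta$; and by the DKW inequality (with Massart's constant), the event $\sup_t|F_D(t) - F_\cD(t)| > \alpha$ has probability at most $2e^{-2n\alpha^2}$, which is at most $\beta$ since $n \ge 2\ln(2/\beta)/\alpha^2$. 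Off the union of these two events --- which has probability at least $1-2\beta$ --- the triangle inequality gives $\sup_t|a_t - F_\cD(t)| \le 2\alpha$. Finally I would argue that the map $a \mapsto F$ is a nonexpansion toward $F_\cD$ in $\ell_\infty$: clipping to $[0,1]$ moves each coordinate no further from the value $F_\cD(t)\in[0,1]$; the running supremum satisfies $\hat F(t) \ge g(t) \ge F_\cD(t) - 2\alpha$ and also $\hat F(t) \le \sup_{s\le t}(F_\cD(s)+2\alpha) = F_\cD(t)+2\alpha$ because $F_\cD$ is non-decreasing; and overwriting the last value with $F_\cD(\max X) = 1$ only helps and preserves monotonicity. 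Hence $\sup_t|F(t) - F_\cD(t)| \le 2\alpha$, so $\cA$ is a $(2\alpha,2\beta)$-accurate distribution learner under Kolmogorov distance.

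The only step with any real content is the last one --- that clipping and monotone rearrangement cannot increase the $\ell_\infty$ distance to $F_\cD$ --- and it is also the reason the bound improves over the general reduction: it exploits that $F_\cD$ is itself non-decreasing and $[0,1]$-valued, so the post-processing acts as an honest projection onto a class containing the target, rather than merely producing some feasible point near the released answers (as the linear program does in the proof of Theorem~\ref{thm:distlearning-release-reduction}, item~1, which costs one more triangle-inequality hop and hence $3\alpha$ rather than $2\alpha$). Everything else --- verifying $F$ is a valid CDF, the DKW probability estimate, and the union bound --- is routine.
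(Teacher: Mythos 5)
Your proof is correct and follows the paper's intended route: the paper itself offers only the one-line remark that the result ``follows from using the Dvoretzky--Kiefer--Wolfowitz inequality in place of Theorem~\ref{thm:dist-generalization},'' which on its face would yield $(3\alpha,2\beta)$-accuracy from Theorem~\ref{thm:distlearning-release-reduction}~item~1, not the claimed $(2\alpha,2\beta)$. You correctly identify the missing step: in the Kolmogorov setting the learner must emit a CDF rather than a PMF, so the LP reconstruction (which burns one $\alpha$ because it merely finds \emph{some} feasible distribution near the released answers) can be replaced by a lossless projection. Your clipping + running-supremum + endpoint-normalization construction, and the observation that it is a nonexpansion toward $F_\cD$ in $\ell_\infty$ (using monotonicity of $F_\cD$ for $\hat F(t)\le F_\cD(t)+2\alpha$ and $\hat F(t)\ge g(t)$ for the lower bound), is exactly the content the paper leaves implicit, and it is sound. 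The DKW arithmetic is fine too (your bound is even slightly slack, as $n\ge\ln(2/\beta)/(2\alpha^2)$ would suffice, but the paper's stated threshold $2\ln(2/\beta)/\alpha^2$ is certainly enough). Privacy by post-processing and the final union bound are routine, as you say.
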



We now show the other direction of the equivalence.

\begin{lemma}\label{lem:nonprivateSubsample}
Suppose $\cA$ is an $(\eps, \delta)$-differentially private and $(\alpha, \beta)$-accurate distribution learner w.r.t.\ a concept class $Q$ with sample complexity $n$. Then there is an $(\eps, \delta)$-differentially private algorithm $\tilde{\cA}$ for releasing $Q$ with $(\alpha,\beta)$-accuracy and sample complexity $9n$.
\end{lemma}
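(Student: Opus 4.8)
The plan is to build $\tilde{\cA}$ by running $\cA$ on a small random subsample of the input database and reading the query answers off the distribution that $\cA$ learns. Concretely, on input $D \in X^{9n}$, algorithm $\tilde{\cA}$ draws $n$ rows of $D$ uniformly at random and independently (i.e.\ with replacement), forms the sample $S$ from these rows, runs $\cD' \gets \cA(S)$, and outputs $a_q = \E_{x \sim \cD'}[q(x)]$ for all $q \in Q$. Since these answers are a function of $\cA$'s output together with data-independent randomness, the two things to check are accuracy and privacy.

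Accuracy is the easy direction. Because the $n$ rows are drawn i.i.d.\ (with replacement) from $D$, the sample $S$ is distributed exactly as $n$ i.i.d.\ draws from the empirical distribution $U_D$ of $D$, so $\cA$'s guarantee applies with $\cD = U_D$: except with probability $\beta$, the returned $\cD'$ satisfies $d_Q(U_D, \cD') = \sup_{q \in Q} |\E_{x\sim U_D}[q(x)] - \E_{x\sim\cD'}[q(x)]| \le \alpha$. By definition of a counting query, $q(D) = \E_{x\sim U_D}[q(x)]$, so this says precisely that $|a_q - q(D)| \le \alpha$ for every $q \in Q$; hence $\tilde{\cA}$ is $(\alpha,\beta)$-accurate with sample complexity $9n$.

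Privacy is where the work --- and the constant-factor blow-up --- lies. Fix neighboring databases $D \sim D'$ differing on row $i$ and couple the two executions using the same list of sampled indices (legitimate, since this randomness is independent of the data). If index $i$ is never sampled, $\cA$ receives identical inputs; conditioned on $i$ being sampled $k$ times, the inputs to $\cA$ differ in exactly those $k$ positions, so by group privacy of $\cA$ the two output distributions are $(k\eps, \frac{e^{k\eps}-1}{e^\eps-1}\delta)$-close. The multiplicity of row $i$ in the sample is $\mathrm{Bin}(n, 1/(9n))$, so it is $0$ with probability near $1$, is $1$ with probability about $1/9$, and higher values are far rarer; averaging the group-privacy bound against this distribution --- while using that $\cA$'s own output probabilities, viewed as a function of the multiplicity, cannot change faster than $\cA$'s privacy permits --- shows the averaged loss stays within $(\eps,\delta)$.

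The main obstacle is exactly this averaging step: a crude bound leaves a residual additive term from the (rare) event of two or more hits on row $i$, and the delicate point is to argue that this term is genuinely absorbed into $\delta$ rather than only into a slightly larger privacy parameter; the factor $9$ (any sufficiently large constant would do) is what keeps the multiplicity small enough for this to go through. An alternative is to subsample without replacement, which makes the privacy bookkeeping immediate, at the cost of separately showing that a uniform size-$n$ sub-multiset of $D$ is an adequate substitute for $n$ i.i.d.\ draws from $U_D$ in $\cA$'s guarantee.
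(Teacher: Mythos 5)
Your plan, the subsampling step, and the accuracy argument all match the paper's: on input $D\in X^{9n}$, sample $n$ rows of $D$ with replacement (so the sample is exactly $n$ i.i.d.\ draws from the empirical distribution of $D$), run $\cA$, and read off $a_q=\E_{x\sim\cD'}[q(x)]$. The accuracy claim is correctly proved. But the privacy argument, which you yourself flag as "the main obstacle," is left as a sketch, and as stated it does not quite close.

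The issue is in the averaging step. Writing $p_k=\Pr[\text{row $i$ is sampled $k$ times}]$, $q_k=\Pr[\cA(D|_T)\in S\mid \ell(T)=k]$, and $q'_k$ analogously for $D'$, you propose to use group privacy $q_k\le e^{k\eps}q'_k+\frac{e^{k\eps}-1}{e^\eps-1}\delta$ and then average over $k$. But $\sum_k p_k e^{k\eps}q'_k$ is \emph{not} in general at most $e^{\eps'}\sum_k p_k q'_k$ for a small $\eps'$, because multiplying each term by $e^{k\eps}$ reweights the contributions of different $k$ rather than producing a uniform factor. The way out --- which you gesture at but do not carry through --- is to anchor both executions to the common event $k=0$: since $q_0=q'_0$, use $q_k\le e^{k\eps}q_0+\frac{e^{k\eps}-1}{e^\eps-1}\delta$ and the symmetric lower bound $q'_k\ge e^{-k\eps}q_0-\cdots$, sum against the binomial $p_k$'s in closed form (the sums are binomial-theorem expressions), and then take the ratio. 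With $n=9m$ this gives a multiplicative loss $\bigl(1-\frac1n+\frac{e^\eps}{n}\bigr)^m/\bigl(1-\frac1n+\frac{e^{-\eps}}{n}\bigr)^m\le e^{6\eps m/n}$ and an additive loss $\le e^{6\eps m/n}\cdot\frac{4m}{n}\cdot\delta$, both within $(\eps,\delta)$ for $\eps\le 1$. This is the "secrecy of the sample" lemma in the paper (Lemma~\ref{lem:secrecy-of-the-sample}), and it is the substantive content of the proof; without it you have reduced the statement to an unproved claim. The alternative you mention (sampling without replacement) trades the problem for a different one --- the subsample is no longer distributed as i.i.d.\ draws from $U_D$, so the distribution-learner guarantee does not apply directly --- so it is not a shortcut either.
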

}

\stoctext{For the second direction of the equivalence,}
\stocrm{To construct the algorithm $\tilde{\cA}$,}
we note that a distribution learner must perform well on the uniform distribution on the rows of any fixed database, and thus must be useful for releasing accurate answers for queries on such a database. Thus if we have a distribution learner $\cA$, the mechanism $\tilde{\cA}$ that samples $m$ rows (with replacement) from its input database $D \in (X \times \{0, 1\})^n$ and runs $\cA$ on the result should output accurate answers for queries with respect to $D$. The random sampling has two competing effects on privacy. On one hand, the possibility that an individual is sampled multiple times incurs additional privacy loss. On the other hand, if $n > m$, then a ``secrecy-of-the-sample'' argument shows that random sampling actually improves privacy, since any individual is unlikely to have their data affect the computation at all. We show that if $n$ is only a constant factor larger than $m$, these two effects offset, and the resulting mechanism is still differentially private.

\stocrm{
\begin{proof}
Consider a database $D \in X^{9n}$. Let $\cD$ denote the uniform distribution over the rows of $D$, and let $\cD'$ be the distribution learned. Consider the algorithm $\tilde{\cA}$ that subsamples (with replacement) $n$ rows from $D$ and runs $\cA$ on it to obtain a distribution $\cD'$. Afterwards, algorithm $\tilde{\cA}$ answers every threshold query $q \in Q$ with $a_q = q(\cD') \triangleq \E_{x \sim \cD'}[q(x)]$.

Note that drawing $n$ i.i.d.\ samples from $\cD$ is equivalent to subsampling $n$ rows of $D$ (with replacement).
Then with probability at least $1 - \beta$, the distribution $\cD'$ returned by $\cA$ is such that for every $x\in X$
$$
|q(\cD')-q(D)| = |q(\cD')-q(\cD)| \leq \alpha,
$$
showing that $\tilde{\cA}$ is $(\alpha,\beta)$-accurate.
\end{proof}

We'll now use a secrecy-of-the-sample argument (refining an argument that appeared implicitly in \cite{KasiviswanathanLeNiRaSm07}), to show that $\tilde{\cA}$ (from Lemma~\ref{lem:nonprivateSubsample}) is differentially private whenever $\cA$ is differentially private.

\begin{lemma}\label{lem:secrecy-of-the-sample}
Fix $\epsilon\leq1$ and let $\cA$ be an $(\eps, \delta)$-differentially private algorithm operating on databases of size $m$.
For $n\geq2m$, construct an algorithm $\tilde{\cA}$ that on input a database $D$ of size $n$ subsamples (with replacement) $m$ rows from $D$ and runs $\cA$ on the result. Then $\tilde{\cA}$ is $( \tilde{\eps} , \tilde{\delta} )$-differentially private for
$$\tilde{\eps}=6\eps m /n \quad \text{and} \quad \tilde{\delta}= \exp(6\eps m/n)\frac{4m}{n}\cdot\delta.$$
\end{lemma}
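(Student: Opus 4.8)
The plan is to fix two neighboring databases $D, D' \in X^n$ differing in row $i^*$, and to compare the output distributions of $\tilde{\cA}(D)$ and $\tilde{\cA}(D')$. The subsample drawn by $\tilde{\cA}$ is a sequence of $m$ indices drawn uniformly and independently from $[n]$; equivalently, it is determined by the multiset $T$ of sampled rows together with (irrelevant) ordering information. The key quantity is the number $k$ of times the differing row $i^*$ is selected among the $m$ draws; this is distributed as $\mathrm{Bin}(m, 1/n)$. First I would condition on the set of positions (among the $m$ slots) at which index $i^*$ lands, say exactly $k$ of them, and note that conditioned on this event the sample from $D$ and the sample from $D'$ agree except on those $k$ rows. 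Since $\cA$ is $(\eps,\delta)$-differentially private and changing one row costs a factor $e^{\eps}$ and an additive $\delta$, changing $k$ rows (via a hybrid argument over the $k$ rows, group privacy) costs a factor $e^{k\eps}$ and an additive $\delta \cdot (1 + e^{\eps} + \cdots + e^{(k-1)\eps}) \le \delta \cdot k e^{k\eps}$.

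Next I would average over $k$. For any event $S$,
\[
\Pr[\tilde{\cA}(D) \in S] = \sum_{k=0}^m \Pr[K = k] \cdot \Pr[\tilde{\cA}(D) \in S \mid K = k],
\]
and using the bound above together with the fact that conditioning on $K=k$ the sample from $D'$ has the same distribution as the sample from $D$ on the $m-k$ "common" coordinates while the $k$ "bad" coordinates are fresh i.i.d. draws in both,
\[
\Pr[\tilde{\cA}(D) \in S \mid K = k] \le e^{k\eps}\Pr[\tilde{\cA}(D') \in S \mid K = k] + k e^{k\eps}\delta.
\]
Summing, $\Pr[\tilde{\cA}(D)\in S] \le \sum_k \Pr[K=k]\,e^{k\eps}\Pr[\tilde{\cA}(D')\in S\mid K=k] + \delta\sum_k \Pr[K=k]\, k e^{k\eps}$. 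The additive term is $\delta \cdot \E[K e^{\eps K}]$ with $K \sim \mathrm{Bin}(m,1/n)$; using the moment generating function $\E[e^{\eps K}] = (1 - (1-e^{\eps})/n)^m \le e^{(e^{\eps}-1)m/n}$ and $\E[K e^{\eps K}] = \frac{d}{d\eps}\E[e^{\eps K}] \le e^{\eps}\frac{m}{n} e^{(e^{\eps}-1)m/n}$, and then bounding $e^{\eps} \le e \le $ small constant and $(e^\eps-1) \le 2\eps$ for $\eps \le 1$, one gets an additive loss of at most $\exp(6\eps m/n)\cdot\frac{4m}{n}\cdot\delta$, matching $\tilde\delta$ (the numerical constants are chosen loosely to absorb the $e^\eps$ factors).

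The multiplicative term is the delicate one and is where the secrecy-of-the-sample gain appears. Writing $p = 1/n$, we need $\sum_k \binom{m}{k} p^k (1-p)^{m-k} e^{k\eps} \Pr[\tilde\cA(D') \in S \mid K=k] \le e^{\tilde\eps}\Pr[\tilde\cA(D')\in S]$. The point is that $\Pr[\tilde\cA(D')\in S \mid K=k]$, as a function of $k$, cannot be arbitrary: for any two values $k, k'$ it can be related by conditioning on which of the bad slots actually matter, but more usefully, one shows that the tilted measure $\Pr[K=k] e^{k\eps}$, after renormalizing by $\E[e^{\eps K}] \le e^{(e^\eps - 1)m/n} \le e^{2\eps m/n}$, is itself within $e^{O(\eps m/n)}$ (in a suitable sense) of $\Pr[K=k]$; the cleanest route is to compare $\tilde\cA(D')$ run with $m$ samples to $\tilde\cA(D')$ run with $m$ samples conditioned on $K=0$ — these differ by at most the expected number of "resampled" bad rows, which again is $O(\eps m/n)$ in privacy cost — so that $e^{k\eps}\Pr[\cdot\mid K=k]$ integrates against $\Pr[K=k]$ to at most $e^{O(\eps m/n)}\Pr[\tilde\cA(D')\in S]$. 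Tracking constants (the $e^\eps$ from group privacy, the $(e^\eps-1) \le 2\eps$ bound requiring $\eps \le 1$, and the $m/n \le 1/2$ slack) yields $\tilde\eps = 6\eps m/n$.

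I expect the main obstacle to be making the multiplicative-term argument fully rigorous: one must carefully set up the coupling between the subsample from $D$ and the subsample from $D'$ so that the only discrepancy is on the $K$ rows equal to $i^*$, and then argue that re-randomizing those $K$ rows (rather than substituting one fixed alternative value) is still only $K$ applications of the differential privacy guarantee in expectation — i.e. that we may pay $\E[K]\cdot\eps = \eps m/n$ "on average" rather than $m\cdot\eps$ in the worst case. The concentration of $K = \mathrm{Bin}(m,1/n)$ around its mean $m/n$, together with the constraint $n \ge 2m$ keeping $m/n$ bounded, is what converts the naive group-privacy bound $e^{m\eps}$ into the desired $e^{6\eps m/n}$.
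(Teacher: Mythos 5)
Your decomposition is exactly the paper's: split on $K \sim \mathrm{Bin}(m,1/n)$, the multiplicity of the differing index in the subsample, and apply group privacy; the additive-term calculation via $\E[K e^{\eps K}]$ is fine. But the gap you flagged in the multiplicative term is real and is not closed by your proposal as written. Your first inequality compares $q_k := \Pr[\tilde\cA(D)\in S\mid K=k]$ to $q'_k := \Pr[\tilde\cA(D')\in S\mid K=k]$, which leaves you needing to bound $\sum_k p_k e^{k\eps} q'_k$ by $e^{O(\eps m/n)}\sum_k p_k q'_k$ --- and that does not follow without further information, since $q'_k$ can correlate with $k$ in a way that defeats any such factoring. (A separate slip: conditioned on $K=k$ the $k$ ``bad'' slots are \emph{not} fresh i.i.d.\ draws; in the sample from $D$ they are all the fixed value $x_{i^*}$, and in the sample from $D'$ they are all $x'_{i^*}$. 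It is precisely this deterministic $k$-row difference to which group privacy is applied.)

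The paper sidesteps the issue by not comparing $q_k$ to $q'_k$ at all. Instead it bounds each of $q_k$ and $q'_k$ relative to the common anchor $q_0 = q'_0$, which coincide because when $K=0$ the subsamples from $D$ and $D'$ are identical. Group privacy gives the upper bound $q_k \le e^{k\eps} q_0 + \frac{e^{k\eps}-1}{e^\eps-1}\delta$ and, symmetrically, the lower bound $q'_k \ge e^{-k\eps}q_0 - \frac{1-e^{-k\eps}}{1-e^{-\eps}}\delta$. Summing against $p_k = \binom{m}{k}n^{-k}(1-1/n)^{m-k}$ and using $\sum_k p_k e^{\pm k\eps} = \bigl(1-\tfrac{1}{n}+\tfrac{e^{\pm\eps}}{n}\bigr)^m$ yields a closed-form upper bound on $\Pr[\tilde\cA(D)\in S]$ and a closed-form lower bound on $\Pr[\tilde\cA(D')\in S]$, both affine in $q_0$; taking the ratio eliminates $q_0$ and gives $\tilde\eps = m\ln\frac{1-1/n+e^\eps/n}{1-1/n+e^{-\eps}/n} \le 6\eps m/n$, with the $\delta$-terms collected into $\tilde\delta$. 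You did gesture at this (``compare $\tilde\cA(D')$ ... to ... conditioned on $K=0$''), which is the right idea; but the linchpin identity $q_0=q'_0$ and the \emph{two-sided} bound against $q_0$ need to be stated explicitly and to replace, rather than supplement, the direct $q_k$-vs-$q'_k$ comparison. Once that is done there is no remaining delicacy and the constant bookkeeping is routine.
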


\begin{proof}
Let $D, D'$ be adjacent databases of size $n$, and suppose without loss of generality that they differ on the last row. Let $T$ be a random variable denoting the multiset of indices sampled by $\tilde{\cA}$, and let $\ell(T)$ be the multiplicity of index $n$ in $T$. Fix a subset $S$ of the range of $\tilde{\cA}$. For each $k = 0, 1, \dots, m$ let
\begin{align*}
p_k &= \Pr[\ell(T) = k] = {m \choose k} n^{-k} (1 - 1/n)^{m - k} = {m \choose k} (n-1)^{-k} (1 - 1/n)^m ,\\
q_k &= \Pr[\cA(D|_T) \in S | \ell(T) = k], \\
q'_k &= \Pr[\cA(D'|_T) \in S | \ell(T) = k].
\end{align*}
Here, $D|_T$ denotes the subsample of $D$ consisting of the indices in $T$, and similarly for $D'|_T$. Note that $q_0 = q'_0$, since $D|_T = D'|_T$ if index $n$ is not sampled. Our goal is to show that
\[\Pr[\tilde{\cA}(D) \in S] = \sum_{k=0}^{m} p_k q_k \le e^{\tilde{\eps}} \sum_{k=0}^{m} p_kq'_k + \tilde{\delta} = e^{\tilde{\eps}} \Pr[\tilde{\cA}(D') \in S] + \tilde{\delta}.\]

To do this, observe that by privacy,
$q_k \le e^{\eps}q_{k-1} + \delta$, so
\[q_k \le e^{k\eps} q_0 + \frac{e^{k\eps} - 1}{e^{\eps} - 1}\delta.\]

Hence,
\begin{eqnarray}
\Pr[\tilde{\cA}(D) \in S] &=& \sum_{k=0}^{m} p_k q_k \nonumber \\
&\leq& \sum_{k=0}^{m}  {m \choose k} (n-1)^{-k} (1 - 1/n)^m \left(  e^{k\eps} q_0 + \frac{e^{k\eps} - 1}{e^{\eps} - 1}\delta \right) \nonumber \\
&=& q_0 (1 - 1/n)^m \sum_{k=0}^{m} {m \choose k} \left(\frac{e^\eps}{n-1}\right)^k
    + \frac{\delta}{e^{\eps} - 1} (1 - 1/n)^m \sum_{k=0}^{m} {m \choose k} \left(\frac{e^\eps}{n-1}\right)^k
    - \frac{\delta}{e^{\eps} - 1} \nonumber \\
&=& q_0 (1 - 1/n)^m \left( 1 + \frac{e^\eps}{n-1} \right)^m
    + \frac{\delta}{e^{\eps} - 1} (1 - 1/n)^m \left( 1 + \frac{e^\eps}{n-1} \right)^m
    - \frac{\delta}{e^{\eps} - 1} \nonumber \\
&=& q_0 \left( 1-\frac{1}{n}+\frac{e^\eps}{n} \right)^m + \frac{\left( 1-\frac{1}{n}+\frac{e^\eps}{n} \right)^m -1}{e^\eps -1}\delta. \label{eq:pkqk1}
\end{eqnarray}

Similarly, we also have that
\begin{eqnarray}
\Pr[\tilde{\cA}(D') \in S] \geq q_0 \left( 1-\frac{1}{n}+\frac{e^{-\eps}}{n} \right)^m - \frac{\left( 1-\frac{1}{n}+\frac{e^{-\eps}}{n} \right)^m -1}{e^{-\eps} -1}\delta.
\label{eq:pkqk2}
\end{eqnarray}

Combining inequalities~\ref{eq:pkqk1} and~\ref{eq:pkqk2} we get that
\begin{eqnarray*}
\Pr[\tilde{\cA}(D) \in S] &\leq& \left(\frac{1-\frac{1}{n}+\frac{e^\eps}{n}}{1-\frac{1}{n}+\frac{e^{-\eps}}{n}}\right)^m
\cdot\left\{ \Pr[\tilde{\cA}(D') \in S]  +\frac{1-\left(1-\frac{1}{n}+\frac{e^{-\eps}}{n}\right)^m}{1-e^{-\eps}}\delta \right\}
+\frac{\left(1-\frac{1}{n}+\frac{e^\eps}{n}\right)^m -1}{e^\eps -1}\delta,
\end{eqnarray*}
proving that $\cA'$ is $(\tilde{\eps},\tilde{\delta})$-differentially private for
\begin{eqnarray*}
\tilde{\eps} &\leq& m\cdot\ln\left( \frac{1+\frac{e^\eps -1}{n}}{1+\frac{e^{-\eps}-1}{n}} \right)
\leq\frac{6\eps m}{n}
\end{eqnarray*}
and
\begin{eqnarray*}
\tilde{\delta}&\leq& \exp(6\eps m/n)\frac{1-\left(1+\frac{e^{-\eps}-1}{n}\right)^m}{1-e^{-\eps}}\cdot\delta + \frac{\left(1+\frac{e^{\eps}-1}{n}\right)^m-1}{e^{\eps}-1}\cdot\delta\\
&\leq& \exp(6\eps m/n)\frac{1-\exp\left(2\frac{e^{-\eps} -1}{n/m}\right)}{1-e^{-\eps}}\cdot\delta + \frac{\exp\left(\frac{e^\eps-1}{n/m}\right)-1}{e^{\eps}-1}\cdot\delta\\
&\leq& \exp(6\eps m/n)\frac{2m}{n}\cdot\delta + \frac{2m}{n}\cdot\delta\\
&\leq& \exp(6\eps m/n)\frac{4m}{n}\cdot\delta.
\end{eqnarray*}

\end{proof}

\section{PAC Learning}

\subsection{Definitions}

A concept $c:X\rightarrow \{0,1\}$ is a predicate that labels {\em examples} taken from the domain $X$.  A \emph{concept class} $C$ over $X$ is a set of concepts over the domain $X$. A learner is given examples sampled from an unknown probability distribution $\cD$ over $X$ that are labeled according to an unknown {\em target concept} $c\in C$ and outputs a hypothesis $h$ that approximates the target concept with respect to the distribution $\cD$. More precisely,

\begin{definition}
The {\em generalization error} of a hypothesis $h:X\rightarrow\{0,1\}$ (with respect to a target concept $c$ and distribution $\cD$) is defined by
$\error_{\cD}(c,h)=\Pr_{x \sim \cD}[h(x)\neq c(x)].$
If $\error_{\cD}(c,h)\leq\alpha$ we say that $h$ is an {\em $\alpha$-good} hypothesis for $c$ on $\cD$.
\end{definition}

\begin{definition}[PAC Learning~\cite{Valiant84}]\label{def:PAC}
Algorithm $A$ is an {\em $(\alpha,\beta)$-accurate PAC learner} for a concept class $C$ over $X$ using hypothesis class $H$ with sample complexity $m$ if for all target concepts $c \in C$ and all distributions $\cD$ on $X$, given an input of $m$ samples $\db =((x_i, c(x_i)),\ldots,(x_m, c(x_m)))$, where each $x_i$ is drawn i.i.d.\ from $\cD$, algorithm $A$ outputs a hypothesis $h\in H$ satisfying
$\Pr[\error_{\cD}(c,h)  \leq \alpha] \geq 1-\beta.$

The probability is taken over the random choice of the examples in $\db$ and the coin tosses of the learner $A$. If $H\subseteq C$ then $A$ is called {\em proper}, otherwise, it is called {\em improper}.
\end{definition}

\begin{definition}
The \emph{empirical error} of a hypothesis $h$ on a labeled sample $S=((x_1,\ell_1), \dots, (x_m, \ell_m))$ is $\error_S(h) = \frac{1}{m} |\{i : h(x_i) \neq \ell_i\}|.$
If $\error_{S}(h) \le \alpha$ we say $h$ is \emph{$\alpha$-consistent} with $S$.
\end{definition}

Classical results in statistical learning theory show that
a sample of size $\Theta(\VC(C))$ is both necessary and sufficient for PAC learning a concept class $C$. \stocrm{That $O(\VC(C))$ samples suffice follows from a ``generalization'' argument: for any concept $c$ and distribution $\cD$, with probability at least $1-\beta$ over $m > O_{\alpha,\beta}(\VC(C))$ random labeled examples, \emph{every} concept $h \in C$ that agrees with $c$ on the examples has error at most $\alpha$ on $\cD$. Therefore, $C$ can be properly learned by finding any hypothesis $h \in C$ that agrees with the given examples.}

Recall the class of \emph{threshold functions}, which are concepts defined over a totally ordered domain $X$ by $\thresh_X = \{c_x : x \in X\}$ where $c_x(y) = 1$ iff $y \le x$\mnote{Didn't change this to $y \ge x$ (yet)}. The class of threshold functions has VC dimension $\VC(\thresh_X) = 1$, and hence can be learned with $O_{\alpha, \beta}(1)$ samples.


A private learner is a PAC learner that is differentially private. Following \cite{KasiviswanathanLeNiRaSm07}, we consider algorithms $A: (X \times \{0, 1\})^m \to H$, where $H$ is a hypothesis class, and require that
\begin{enumerate}
\item $A$ is an $(\alpha, \beta)$-accurate PAC learner for a concept class $C$ with sample complexity $m$, and
\item $A$ is $(\eps, \delta)$-differentially private.
\end{enumerate}


\stocrm{Note that while we require utility (PAC learning) to hold only when the database $D$ consists of random labeled examples from a distribution, the requirement of differential privacy applies to every pair of neighboring databases $D \sim D'$, including those that do not correspond to examples labeled by any concept.}

Recall the relationship between distribution learning and releasing thresholds, where accuracy is measured w.r.t.\ the underlying distribution in the former and w.r.t.\ the fixed input database in the later. Analogously, we now define the notion of an {\em empirical learner} which is similar to a PAC learner where accuracy is measured w.r.t.\ the fixed input database.

\begin{definition}[Empirical Learner]
Algorithm $A$ is an {\em $(\alpha,\beta)$-accurate empirical learner} for a concept class $C$ over $X$ using hypothesis class $H$ with sample complexity $m$ if for every $c\in C$ and for every database $\db =((x_i, c(x_i)),\ldots,(x_m, c(x_m)))\in(X\times\{0,1\})^m$ algorithm $A$ outputs a hypothesis $h\in H$ satisfying
$\Pr[\error_{\db}(c,h)  \leq \alpha] \geq 1-\beta.$

The probability is taken over the coin tosses of $A$.
\end{definition}

Note that without privacy (and ignoring computational efficiency) identifying a hypothesis with small empirical error is trivial for every concept class $C$ and for every database of size at least $1$. This is not the case with $(\eps,\delta)$-differential privacy,\footnote{The lower bound in Theorem~\ref{thm:empiricalVC} also holds for {\em label private} empirical learners, that are only required to provide privacy for the labels in the database.} and the sample complexity of every empirical learner for a concept class $C$ is at least $\Omega(\VC(C))$:

\begin{theorem}\label{thm:empiricalVC}
For every $\alpha,\beta\leq1/8$, every $\delta\leq\frac{1}{8n}$ and $\epsilon>0$,
if $\cA$ is an $(\eps,\delta)$-differentially private $(\alpha,\beta)$-accurate empirical learner for a class $C$ with sample complexity $n$, then
$n=\Omega\left(\frac{1}{\alpha\epsilon}\VC(C)\right)$.
\end{theorem}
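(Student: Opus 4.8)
The plan is to exploit a shattered witness for $\VC(C)$ to embed many labeled databases on which the empirical learner is forced to behave in pairwise incompatible ways, and then to show that $(\eps,\delta)$-differential privacy with $\delta\le 1/(8n)$ cannot support this. Fix a set $\{x_1,\dots,x_d\}\subseteq X$ with $d=\VC(C)$ that is shattered by $C$, so that for every $b\in\{0,1\}^d$ there is a concept $c_b\in C$ with $c_b(x_i)=b_i$. For $b$ I would form a labeled database $S_b$ that places $\mu=\Theta(\alpha n)$ copies of the example $(x_i,b_i)$ for each $i$ and pads the remaining $\Theta(n)$ rows with a fixed example labeled consistently with an extension of $c_b$; to make this fit in $n$ rows one needs $d\mu\le n$, which is why I would first treat the case $d\le O(1/\alpha)$ and reduce the general case to it by partitioning the shattered set into blocks of size $\Theta(1/\alpha)$. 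The choice $\mu=\Theta(\alpha n)$ is exactly so that any hypothesis with empirical error at most $\alpha$ on $S_b$ must agree with $b$ on points of total multiplicity more than $n-\alpha n$, hence must reproduce $b$ on the entire shattered set; thus the ``success sets'' $G_b=\{h:\ h(x_i)=b_i\ \forall i\}$ are pairwise disjoint, and $\cA(S_b)\in G_b$ with probability at least $1-\beta\ge 7/8$. For the label-private strengthening mentioned in the footnote, note that all the $S_b$ use the same sequence of unlabeled examples and differ only in the labels, so the entire construction goes through for learners that are private only in the labels.

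Next I would run a group-privacy/packing argument against a reference database, say $S_{\vec 0}$. Since $S_b$ and $S_{\vec 0}$ differ in $\mu\cdot|b|_1=\Theta(\alpha n)\cdot|b|_1$ rows, iterating the $(\eps,\delta)$-privacy guarantee along a path of single-row changes gives $\Pr[\cA(S_{\vec 0})\in G_b]\ \ge\ e^{-\Theta(\alpha n\eps)\,|b|_1}\cdot\tfrac78\ -\ O(\delta/\eps)$, where the crucial point is that the additive $\delta$-losses telescope to $O(\delta/\eps)$ regardless of the path length. Summing over a suitable family of $b$'s, using that the $G_b$ are disjoint so the left-hand sides sum to at most $1$, yields an inequality of the shape $(\#\text{codewords})\cdot e^{-O(\alpha n\eps)}\ \le\ 1+(\#\text{codewords})\cdot O(\delta/\eps)$; here the hypothesis $\delta\le 1/(8n)$ is what lets us absorb the second term, since $\delta/\eps\le 1/(8n\eps)$. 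Choosing the codeword family (and the multiplicity/block structure) so that the logarithm of its size is $\Theta(d)$ while the reference distances stay $\Theta(\alpha n)$, and rearranging, forces $\alpha n\eps=\Omega(d)$, i.e.\ $n=\Omega(\VC(C)/(\alpha\eps))$. In the low-privacy-budget regime $n\eps=O(1)$, where the above rearrangement is vacuous, one instead invokes directly the two-database comparison ($G_b$ vs.\ a single neighbor obtained by flipping one coordinate's block) to get the separate bound $n=\Omega(1/\eps)$, and these combine to the stated estimate.

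The main obstacle, and the step that needs the hypothesis $\delta\le 1/(8n)$, is the simultaneous extraction of the $\VC(C)$ and the $1/\alpha$ factors: to get the $1/\alpha$ one wants every database in the packing to lie within distance $O(\alpha n)$ of the reference (so that group privacy only costs a factor $e^{O(\alpha n\eps)}$ and the $\delta$-loss stays negligible), while to get the $\VC(C)$ one wants the packing to have $2^{\Omega(d)}$ members with pairwise-disjoint success sets; a naive binary encoding cannot achieve both at once, so the construction must balance the per-coordinate multiplicity against the number of coordinates active in any one database (this is the role of the $\Theta(1/\alpha)$-size blocks), and then verify that the telescoped $\delta$-term, bounded via $\delta\le 1/(8n)$, does not overwhelm the gain from the packing size. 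Carrying out this bookkeeping carefully is the crux; the rest is the routine group-privacy calculation sketched above.
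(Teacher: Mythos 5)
Your plan rests on making the success sets $G_b$ pairwise disjoint and then running a Hamming-ball packing argument, but these two requirements are in irreconcilable tension, and the hedging about ``block structure'' does not resolve it. For $\alpha$-consistency to \emph{force} $h(x_i)=b_i$ on a coordinate $x_i$, that coordinate must carry weight $\mu>\alpha n$ (otherwise $h$ may simply pay the cost of mislabeling all $\mu$ copies and stay within its $\alpha n$ error budget). With $\mu>\alpha n$ and total mass at most $n$, you can afford at most $1/\alpha$ active coordinates, so any family of databases with genuinely disjoint $G_b$'s has size at most $2^{O(1/\alpha)}$, \emph{independent of} $d=\VC(C)$. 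Moreover the distance to your reference scales as $\mu\cdot|b|_1$; pushing the packing to size $2^{\Theta(1/\alpha)}$ forces $|b|_1=\Theta(1/\alpha)$ and hence distance $\Theta(n)$, not $\Theta(\alpha n)$. Carrying through the group-privacy inequality in that regime yields only $n\eps=\Omega(1/\alpha)$, i.e.\ $n=\Omega(1/(\alpha\eps))$, with no $d$ factor at all --- even in the ``base case'' $d\le O(1/\alpha)$ you propose to start from. The constraint you articulate at the end (``log of the packing size $\Theta(d)$ while reference distances stay $\Theta(\alpha n)$'') is exactly what cannot be achieved under the disjointness requirement: distance $\Theta(\alpha n)$ with $\mu>\alpha n$ forces $|b|_1=O(1)$, so the packing has only $\poly(d)$ members, giving $\log(\text{size})=O(\log d)$, not $\Theta(d)$.

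The paper's proof avoids the tension by \emph{not} requiring disjoint success sets. It puts only $8\alpha n/d$ copies on each of $d$ shattered points (so an $\alpha$-consistent $h$ may err on up to $d/8$ of them --- the $G_b$'s overlap heavily), and replaces the packing with a randomized two-database comparison. One draws a random balanced labeling $c$, a random coordinate $x$ with $c(x)=1$ and $y$ with $c(y)=0$, and forms $c'$ by swapping the labels of $x$ and $y$; the crucial observation is that $c$ and $c'$ have identical marginal distributions, so $\Pr[h(x)=1]$ and $\Pr[h'(x)=1]$ are directly comparable, and accuracy on a \emph{random} such coordinate pins them at $\ge3/4$ and $\le1/4$ respectively. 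Since $D_c$ and $D_{c'}$ differ in only $16\alpha n/d$ rows, a single group-privacy step gives $3/4\le e^{16\alpha\eps n/d}(1/4 + 16\alpha n\delta/d)$, and $\delta\le 1/(8n)$, $\alpha\le 1/8$ make the additive term at most $1/4$, yielding $n\ge d/(40\alpha\eps)$. No packing, no telescoping of $\delta$, and the $d$-dependence enters through the pair distance $\Theta(\alpha n/d)$. This is the mechanism your proposal is missing; to repair your approach you would have to give up disjointness and adopt something like this averaged distinguishing argument, at which point you are essentially reproducing the paper's proof.
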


The proof of Theorem~\ref{thm:empiricalVC} is very similar the analysis of~\cite{BlumLiRo08} for lower bounding the sample complexity of releasing approximated answers for queries in the class $C$.
As we will see in the next section, at least in some cases (namely, for threshold functions) the sample complexity must also have some dependency in the size of the domain $X$.

\begin{proof}[Proof of Theorem~\ref{thm:empiricalVC}]
Fix $d<\VC(C)$, let $x_0,x_1,x_2,\ldots,x_d$ be {\em shattered} by $C$, and denote $S=\{x_1,\ldots,x_d\}$.
Let $D$ denote a database containing $(1-8\alpha)n$ copies of $x_0$ and $8\alpha n/d$ copies of every $x_i\in S$.
For a concept $c$ we use $D_c$ to denote the database $D$ labeled by $c$.
We will consider concepts that label $x_0$ as 0, and label exactly half of the elements in $S$ as 1.
To that end, initiate $\tilde{C}=\emptyset$, and for every subset $S'\subseteq S$ of size $|S'|=|S|/2$, add to $\tilde{C}$ one concept $c\in C$ s.t.\ $c(x_0)=0$ and for every $x_i\in S$ it holds that $c(x_i)=1$ iff $x_i\in S'$ (such a concept exists since $S\cup\{x_0\}$ is shattered by $C$).

Now, let $c\in\tilde{C}$ be chosen uniformly at random, let $x\in S$ be a random element s.t.\ $c(x)=1$, and let $y\in S$ be a random element s.t.\ $c(y)=0$. Also let $c'\in\tilde{C}$ be s.t.\ $c'(x)=0$, $c'(y)=1$, and $c'(x_i)=c(x_i)$ for every $x_i\in S\setminus\{x,y\}$. Note that the marginal distributions on $c$ and on $c'$ are identical, and denote $h=\cA(D_c)$ and $h'=\cA(D_{c'})$.

Observe that $x$ is a random element of $S$ that is labeled as 1 in $D_c$, and that an $\alpha$-consistent hypothesis for $D_c$ must label at least $(1-\frac{1}{8})d$ such elements as 1. Hence, by the utility properties of $\cA$, we have that
$$
\Pr[h(x)=1]\geq(1-\beta)(1-1/8)\geq3/4.
$$
Similarly, $x$ is a random elements of $S$ that is labeled as 0 in $D_{c'}$, and an $\alpha$-consistent hypothesis for $D_{c'}$ must not label more than $d/8$ such elements as 1. Hence,
$$
\Pr[h'(x)=1]\leq\beta+(1-\beta)\frac{1}{8}\leq1/4.
$$

Finally, as $D_c$ and $D_{c'}$ differ in at most $16\alpha n/d$ entries, differential privacy ensures that
$$
3/4\leq\Pr[h(x)=1]
\leq e^{16 \alpha \epsilon n/d}\cdot\Pr[h'(x)=1]+e^{16 \alpha \epsilon n/d}\cdot 16 \alpha n \delta/d
\leq e^{16 \alpha \epsilon n/d}\cdot1/2,
$$
showing that $n\geq\frac{d}{40\alpha\epsilon}$.
\end{proof}

\subsection{Private Learning of Thresholds vs.\ the Interior Point Problem}\label{sec:PPAC-vs-IPP}
}
\stoctext{\paragraph{Private Learning of Thresholds vs.\ the Interior Point Problem.}}

We show that with differential privacy, there is a $\Theta(1/\alpha)$ multiplicative relationship between the sample complexities of properly PAC learning thresholds with $(\alpha, \beta)$-accuracy and of solving the interior point problem with error probability $\Theta(\beta)$. Specifically, we show


\begin{theorem}\label{thm:learning-vs-range}
Let $X$ be a totally ordered domain. Then,
\begin{enumerate}
	\item If there exists an $(\eps, \delta)$-differentially private algorithm solving the interior point problem on $X$ with error probability $\beta$ and sample complexity $n$, then there is a $(2\eps, (1 + e^{\eps})\delta)$-differentially private $(2\alpha, 2\beta)$-accurate proper PAC learner for
\stoctext{threshold functions over $X$}\stocrm{$\thresh_X$}
with sample complexity $\max \left\{\frac{n}{2\alpha}, \frac{4 \log(2/\beta)}{\alpha}\right\}$.

	\item If there exists an $(\eps,\delta)$-differentially private $(\alpha, \beta)$-accurate proper PAC learner for
\stoctext{thresholds over $X$}
\stocrm{$\thresh_X$}	
with sample complexity $n$, then there is a $(2\eps, (1 + e^{\eps})\delta)$-differentially private algorithm that solves the interior point problem on $X$ with error $\beta$ and sample complexity $27\alpha n$.
\end{enumerate}

\end{theorem}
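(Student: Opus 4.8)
The plan is to prove both directions as reductions with a $\Theta(1/\alpha)$ blow-up in sample size, in the same spirit as the query-release $\leftrightarrow$ interior-point equivalence of Theorem~\ref{thm:sanitization-vs-range}.

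For item~1 (interior point $\Rightarrow$ proper learner), let $R$ be an $(\eps,\delta)$-DP interior-point algorithm on $X$ with error $\beta$ and sample complexity $n$. On a labeled sample $S=((x_i,c_t(x_i)))_{i=1}^m$ of size $m=\max\{\frac{n}{2\alpha},\frac{4\log(2/\beta)}{\alpha}\}$, write $p$ for the number of positively-labeled examples and sort $S$; I would form the database $\hat D$ from the $2\alpha m=n$ examples whose ranks are closest to $p$ (i.e.\ the $\alpha m$ largest $1$-labeled points and the $\alpha m$ smallest $0$-labeled points, padding with copies of $\min X$ or $\max X$ if one side is short — which can happen on only one side when $\alpha<1/2$), run $R$ on $\hat D$, and output the threshold at the returned point $\hat t$. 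For utility, $\hat t$ lies between the $(p-\alpha m+1)$-st and $(p+\alpha m)$-th smallest points of $S$, so on a realizable $S$ the interval $(\hat t,t]$ (or $(t,\hat t]$) contains fewer than $\alpha m$ sample points; comparing this window to the true $2\alpha$-mass quantile of $\cD$ on the appropriate side of $t$, a single Chernoff bound per side — requiring only $m=\Omega(\log(1/\beta)/\alpha)$ since the quantile is fixed, not sample-dependent — gives generalization error $\le 2\alpha$ with probability $\ge1-2\beta$. For privacy, a one-row change in $S$ shifts the rank window by at most one and alters at most one value, so $\hat D$ changes by a bounded number of deletions and insertions; arranging the construction so this is exactly one deletion and one insertion, the learner inherits $(2\eps,(1+e^\eps)\delta)$-DP from $R$ by invoking $R$'s guarantee twice.

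For item~2 (proper learner $\Rightarrow$ interior point), let $L$ be an $(\eps,\delta)$-DP $(\alpha,\beta)$-accurate proper learner for thresholds with sample complexity $n$, and let $D\in X^N$ with $N=27\alpha n$. I would pad $D$ with $\Theta(n)$ copies of $\min X$ labeled $1$ and $\Theta(n)$ copies of $\max X$ labeled $0$, obtaining a database $\hat D$ of size $M=3n$, label each element of $D$ by $c_{t_0}$ for a value $t_0$ chosen ``in the middle'' of $D$ (with between $\alpha N$ and $(1-\alpha)N$ of $D$'s elements below it, picked via an index-based tie-break and a little extra randomness so that the labeling has small sensitivity, and isolating degenerate inputs — e.g.\ $D$ concentrated at an extreme point, where an interior point is immediate — separately), then resample $n$ rows i.i.d.\ from the uniform distribution on $\hat D$, run $L$, and output its threshold $\hat t$. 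For utility, $L$ is $\alpha$-accurate w.r.t.\ the uniform distribution on $\hat D$, hence misclassifies at most $\alpha M=N/9$ rows; it cannot err on the $\min X$/$\max X$ copies since there are too many identical ones, so all errors land on the $D$-part, forcing fewer than half the elements of $D$ onto the wrong side of $\hat t$ relative to $t_0$ — which places $\hat t$ in $[\min D,\max D]$ (the factor $27$ provides the slack). For privacy, a one-row change in $D$ perturbs $\hat D$ in only $O(1)$ rows, and secrecy of the sample (Lemma~\ref{lem:secrecy-of-the-sample}) with database size $3n$ and subsample size $n$ amplifies $L$'s $(\eps,\delta)$-DP to $(2\eps,(1+e^\eps)\delta)$.

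I expect the privacy analysis of item~2 to be the main obstacle. In item~1 the transformation is a low-sensitivity function of the labeled sample, but in item~2 two effects fight us: the labeling threshold $t_0$ is a rank statistic of $D$, which can shift — and hence relabel unboundedly many rows when there are ties — if a single element of $D$ moves; and drawing $n$ i.i.d.\ samples from a size-$N$ object with $N<n$ would by itself amplify a one-row change by a factor $n/N=\Theta(1/\alpha)$. Padding $D$ up to size $\ge2n$ and subsampling tames the second effect through secrecy of the sample; the first requires choosing $t_0$ carefully (tie-breaking, randomizing the rank, isolating degenerate databases) so that a single change to $D$ relabels only $O(1)$ rows. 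Pinning down the exact constants — in particular getting $(2\eps,(1+e^\eps)\delta)$ rather than a larger multiple of $\eps$, and the value $27$ — amounts to balancing the padding size against the utility constraint $\alpha M<N/2$ and the amplification $\tilde\eps=6\eps n/M$.
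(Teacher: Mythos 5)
Your item~1 is essentially the paper's construction (extract the $2\alpha m$ examples whose ranks straddle the label boundary, pad with $\min X$/$\max X$ if one side is short, feed to the interior-point algorithm, argue generalization via one Chernoff bound per side) and the privacy accounting is the same: the transformation is $2$-Lipschitz on the labeled sample, so group privacy gives $(2\eps,(1+e^\eps)\delta)$. The paper merely factors this into two steps (interior point $\Rightarrow$ \emph{empirical} learner $\Rightarrow$ PAC learner, Lemma~\ref{lem:range-learning-reduction} and the thresholds generalization bound) rather than doing it in one go; that difference is cosmetic.

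Item~2 is where you depart from the paper, and the departure creates the very problem you flag as ``the main obstacle.'' You label $D$ by an \emph{explicit} threshold $c_{t_0}$ with $t_0$ a rank statistic of $D$, and then must fight the sensitivity of $t_0$ (a one-row change to $D$ can, through ties, relabel unboundedly many rows), which you propose to tame with index tie-breaks, extra randomness, and a special case for degenerate inputs. None of that is needed: the paper never selects a threshold value at all. It sorts $D$ and labels the smaller $n/2$ elements $1$ and the larger $n/2$ elements $0$ directly by \emph{rank}. A one-row change to $D$ then changes at most two rows of the resulting labeled database --- the changed element plus the one element whose rank crosses the $n/2$ boundary --- regardless of ties. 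This gives a clean, unconditional factor-of-$2$ group-privacy bound and eliminates the entire tie-breaking/randomization apparatus. (For utility, the argument that any $\alpha$-consistent $c_x$ must lie in $[\min D,\max D]$ goes through against the rank labeling just as against a threshold labeling: a violating $x$ would misclassify at least half the real rows, which is $> \alpha \cdot |\hat D|$.) I would encourage you to adopt the rank labeling; it is the key idea that makes the reduction trivial.

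Your constants in item~2 also do not close. You pad $D$ (of size $N=27\alpha n$) to $\hat D$ of size $M=3n$ and subsample $n$ rows. Lemma~\ref{lem:secrecy-of-the-sample} with $n/M=1/3$ already gives $\tilde\eps=2\eps$ and $\tilde\delta=e^{2\eps}\cdot\tfrac{4}{3}\delta$, which already matches or exceeds the claimed $(2\eps,(1+e^\eps)\delta)$ \emph{before} you pay anything for the labeling step; but the labeling step costs a further factor of $2$ (or more, if your $t_0$ sensitivity is not reined in), so you overshoot. The paper pads to $M=9n$ --- i.e.\ the empirical learner has sample complexity $9n$, and the interior-point input size is $3\alpha\cdot 9n=27\alpha n$, matching the stated bound --- so secrecy-of-the-sample gives $\tilde\eps=2\eps/3$ and $\tilde\delta<\delta$, and the factor-of-$2$ from the rank labeling lands you at $(4\eps/3,(1+e^{2\eps/3})\delta)\subseteq(2\eps,(1+e^\eps)\delta)$, with room to spare. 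So: replace the threshold labeling with rank labeling, and triple your padding.
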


We show this equivalence in two phases. In the first, we show a $\Theta(1/\alpha)$ relationship between the sample complexity of solving the interior point problem and the sample complexity of
\stoctext{identifying an $\alpha$-consistent hypothesis for every fixed input database.}
\stocrm{empirically learning thresholds.}
We then use generalization and resampling arguments to show that with privacy, this latter task is equivalent to learning with samples from a distribution.

\stocrm{
\begin{lemma}\label{lem:range-learning-reduction}
Let $X$ be a totally ordered domain. Then,
\begin{enumerate}
	\item If there exists an $(\eps, \delta)$-differentially private algorithm solving the interior point problem on $X$ with error probability $\beta$ and sample complexity $n$, then there is a $(2\eps, (1 + e^{\eps})\delta)$-differentially private algorithm for properly and empirically learning thresholds with $(\alpha, \beta)$-accuracy and sample complexity $n / (2\alpha)$.

	\item If there exists an $(\eps,\delta)$-differentially private algorithm that is able to properly and empirically learn thresholds on $X$ with $(\alpha,\beta)$-accuracy and sample complexity $n/(3\alpha)$, then there is a $(2\eps, (1 + e^{\eps})\delta)$-differentially private algorithm that solves the interior point problem on $X$ with error $\beta$ and sample complexity $n$.
\end{enumerate}

\end{lemma}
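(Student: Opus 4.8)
The plan is to prove both directions by transforming the input database for one problem into the input database for the other in a way that changes at most \emph{two} rows whenever a single row of the original database changes; the stated privacy loss $(2\eps,(1+e^\eps)\delta)$ is then exactly what group privacy for two rows buys (if $D_1,D_2$ differ in two rows there is an intermediate $D$ with $D_1,D$ and $D,D_2$ neighbouring, and applying the $(\eps,\delta)$ guarantee twice gives $\Pr[\cdot]\le e^{2\eps}\Pr[\cdot]+(1+e^\eps)\delta$), while accuracy/correctness comes from a short counting argument. Throughout I assume without loss of generality that the relevant databases have distinct entries; ties require only minor modifications and do not affect the asymptotics.

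For item~1, let $R$ be an $(\eps,\delta)$-private interior point algorithm of sample complexity $n$. Given a threshold-labelled database $S=((x_i,c_t(x_i)))_{i=1}^{m}$ with $m=n/(2\alpha)$, sort its rows by value; since $c_t$ is a threshold the labels are $1,\dots,1,0,\dots,0$ with $w:=|P_1|$ ones and $x_{(w)}\le t<x_{(w+1)}$. Let $D$ be the block of $n=2\alpha m$ rows of ranks $w-\alpha m+1,\dots,w+\alpha m$ (padding with copies of $\min X$, resp.\ $\max X$, if $w<\alpha m$, resp.\ $m-w<\alpha m$), run $s\gets R(D)$, and output the proper hypothesis $c_s$. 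Whenever $R$ succeeds, $s$ lies between $x_{(w-\alpha m+1)}$ and $x_{(w+\alpha m)}$, so a row of $S$ misclassified by $c_s$ must have rank in $\{w-\alpha m+1,\dots,w\}$ (if $s<t$) or $\{w+1,\dots,w+\alpha m\}$ (if $s\ge t$); either way at most $\alpha m$ rows are misclassified, so $c_s$ is $\alpha$-consistent with probability $\ge 1-\beta$. Changing one row of $S$ changes $P_1$ by at most one element and $P_0$ by at most one element, hence changes this block by at most two rows, and $c_s$ is post-processing of $R$'s output, so the learner is $(2\eps,(1+e^\eps)\delta)$-private.

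For item~2, let $\cA$ be an $(\eps,\delta)$-private proper empirical learner for thresholds of sample complexity $m=n/(3\alpha)$. Given $D\in X^{n}$ for the interior point problem, pick block sizes $k_1,k_2$ with $\alpha m=n/3<k_1,k_2$ and $k_1+k_2\le n$ (e.g.\ $k_1=k_2=\floor{n/3}+1$), and build the size-$m$ labelled database $S$ consisting of the $k_1$ smallest elements of $D$ labelled $1$, the $k_2$ largest elements of $D$ labelled $0$, and $m-k_1-k_2$ padding rows, half equal to $(\min X,1)$ and half to $(\max X,0)$. Since $d_{(k_1)}<d_{(n-k_2+1)}$, the database $S$ is labelled by the threshold $c_{d_{(k_1)}}$, so $\cA(S)$ returns some $c_s$ that is $\alpha$-consistent with $S$ with probability $\ge 1-\beta$; output $s$. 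If $s<\min D$ then $c_s$ misclassifies all $k_1>\alpha m$ of the $1$-labelled rows drawn from $D$ (each has value $\ge\min D>s$), a contradiction; symmetrically if $s>\max D$ then $c_s$ misclassifies all $k_2>\alpha m$ of the $0$-labelled rows from $D$; hence $\min D\le s\le\max D$, so the interior point problem is solved with error $\beta$. Privacy is as in item~1: changing one row of $D$ changes the set of its $k_1$ smallest elements by at most one element and the set of its $k_2$ largest by at most one element, so $S$ changes in at most two rows, giving $(2\eps,(1+e^\eps)\delta)$-privacy.

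The main obstacle is the privacy analysis of item~2, and specifically the realisation that the reduction must use only the extreme order statistics of $D$: the ``obvious'' reduction that relabels all of $D$ around its median produces a database that can change in a constant fraction of its rows when a single row of $D$ changes, which is fatal for privacy. The remaining delicate points --- choosing the block sizes so that the forcing arguments ($k_i>\alpha m$) are compatible with the blocks fitting disjointly inside $D$ ($k_1+k_2\le n$), and ensuring the constructed database is genuinely labelled by a threshold so that $\cA$'s guarantee applies (handled by the public $\min X/\max X$ padding, and in item~1 by the padding when $P_1$ or $P_0$ is small) --- are routine once the constructions are set up this way.
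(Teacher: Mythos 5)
Your proof is correct and follows essentially the same two reductions as the paper: in each direction you transform the input so that changing a single row induces at most two row changes in the constructed database, then invoke group privacy, with accuracy by a direct counting argument; your item~1 construction is identical to the paper's. One assertion in your closing discussion is mistaken, however. You claim that for item~2 the ``obvious'' reduction, which labels \emph{all} of $D$ around its median (smaller half $\to 1$, larger half $\to 0$, then pad with public $(\min X,1)$ and $(\max X,0)$ rows), ``can change in a constant fraction of its rows when a single row of $D$ changes'' and is therefore ``fatal for privacy.'' In fact this is precisely the paper's reduction, and it is not fatal: changing one element of $D$ changes the multiset of the $n/2$ smallest (resp.\ $n/2$ largest) elements of $D$ by at most one element, so the resulting labelled database (viewed as a multiset, or after canonically sorting) differs from the original in at most two rows --- exactly the bound you use in your own variant. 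Your choice $k_1=k_2=\floor{n/3}+1$ is therefore an unnecessary, though harmless, restriction; taking $k_1=k_2=n/2$ as in the paper also satisfies your constraints $k_1,k_2>\alpha m$ and $k_1+k_2\le n$ and gives the same conclusion.
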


\begin{proof}
For the first direction, let $\cA$ be a private algorithm for the interior point problem on databases of size $n$. Consider the algorithm $\cA'$ that, on input a database $D$ of size $n/(2\alpha)$, runs $\cA'$ on a database $D'$ consisting of the largest $n/2$ elements of $D$ that are labeled $1$ and the smallest $n/2$ elements of $D$ that are labeled $0$. If there are not enough of either such element, pad $D'$ with $\min\{X\}$'s or $\max\{X\}$'s respectively. Note that if $x$ is an interior point of $D'$ then $c_x$ is a threshold function with error at most $\frac{n/2}{n/(2\alpha)}$ on $D$, and is hence $\alpha$-consistent with $D$.
For privacy, note that changing one row of $D$ changes at most two rows of $D'$. Hence, applying algorithm $\cA$ preserves $(2\eps,(e^\eps+1)\delta)$-differential privacy.

For the reverse direction, suppose $\cA'$ privately finds an $\alpha$-consistent threshold functions for databases of size $n/(3\alpha)$. Define $\cA$ on a database $D'\in X^n$ to label the smaller $n/2$ points 1 and the larger $n/2$ points 0 to obtain a labeled database $D\in(X\times\{0,1\})^n$, pad $D$ with an equal number of $(\min\{X\},1)$ and $(\max\{X\},0)$ entries to make it of size $n/(3\alpha)$, and run $\cA'$ on the result. Note that if $c_x$ is a threshold function with error at most $\alpha$ on $D$ then $x$ is an interior point of $D'$, as otherwise $c_x$ has error at least $\frac{n/2}{n/(3\alpha)}>\alpha$ on $D$.
For privacy, note that changing one row of $D'$ changes at most two rows of $D$. Hence, applying algorithm $\cA'$ preserves $(2\eps,(e^\eps+1)\delta)$-differential privacy.
\end{proof}

Now we show that the task of privately outputting an almost consistent hypothesis on any fixed database is essentially equivalent to the task of private (proper) PAC learning. One direction follows immediately from a standard generalization bound for learning thresholds:

%

\begin{lemma}
Any algorithm $\cA$ for empirically learning $\thresh_X$ with $(\alpha,\beta)$-accuracy is also a $(2\alpha, \beta + \beta')$-accurate PAC learner for $\thresh_X$ when given at least $\max \{n, 4\ln (2/\beta')/\alpha\}$ samples.
\end{lemma}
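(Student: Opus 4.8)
The plan is to construct the PAC learner $\cA'$ so that, on input $m := \max\{n, 4\ln(2/\beta')/\alpha\}$ labeled examples $\db$ drawn i.i.d.\ from the unknown distribution $\cD$ and labeled by the target threshold $c_{x^*}\in\thresh_X$, it runs $\cA$ on $\db$ and outputs the hypothesis $c_{\hat x}\in\thresh_X$ that $\cA$ returns; here $n$ is the sample complexity of $\cA$ as an empirical learner, and we use the convention that such a learner may be run on any database of size at least $n$. Since $\db$ is in particular a valid database labeled by $c_{x^*}$, the empirical-accuracy guarantee of $\cA$ says that with probability at least $1-\beta$ the output satisfies $\error_{\db}(c_{x^*},c_{\hat x})\le\alpha$, i.e.\ $c_{\hat x}$ misclassifies at most $\alpha m$ of the sample points. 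The remaining task is a generalization statement: over the random draw of $\db$, with probability at least $1-\beta'$, \emph{every} threshold in $\thresh_X$ whose empirical error on $\db$ is at most $\alpha$ has true error at most $2\alpha$ against $c_{x^*}$ under $\cD$.

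For the generalization step I would first note that the set of points on which $c_{\hat x}$ and $c_{x^*}$ disagree is an interval of $X$ having $x^*$ as one endpoint: it has the form $(x^*,\hat x]$ when $\hat x\ge x^*$ and the form $(\hat x,x^*]$ when $\hat x< x^*$ (regardless of the orientation of the threshold class). Writing $I$ for this disagreement interval, $\error_\cD(c_{x^*},c_{\hat x})=\Pr_{y\getsr\cD}[y\in I]$ and $\error_{\db}(c_{x^*},c_{\hat x})=\frac1m\#\{i: x_i\in I\}$. I would then define the bad event $B$ to be the event that there is \emph{some} interval $I$ of one of those two forms with $\Pr_{y\getsr\cD}[y\in I]>2\alpha$ yet $\#\{i:x_i\in I\}\le\alpha m$. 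If $B$ does not occur, then any $\alpha$-consistent hypothesis automatically has true error at most $2\alpha$, so it suffices to show $\Pr[B]\le\beta'$.

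The key point is a monotonicity argument that collapses what would naively be a union bound over all $|X|$ thresholds down to two points. Consider the rightward intervals $(x^*,x]$ for $x\ge x^*$; both $\Pr_{y\getsr\cD}[(x^*,x]]$ and $\#\{i:x_i\in(x^*,x]\}$ are non-decreasing in $x$. Let $x_R$ be the least $x$ with $\Pr_{y\getsr\cD}[(x^*,x]]\ge2\alpha$ (if there is no such $x$, no rightward interval can witness $B$). For every $x\ge x_R$ the interval $(x^*,x]$ contains $(x^*,x_R]$, so if $(x^*,x]$ witnesses $B$ then already $\#\{i:x_i\in(x^*,x_R]\}\le\alpha m$. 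Hence the rightward part of $B$ is contained in the single event $\#\{i:x_i\in(x^*,x_R]\}\le\alpha m$, and since $\#\{i:x_i\in(x^*,x_R]\}$ is $\mathrm{Bin}(m,p)$-distributed with $p\ge2\alpha$, a multiplicative Chernoff bound (using $\alpha m\le pm/2$) gives $\Pr[\#\{i:x_i\in(x^*,x_R]\}\le\alpha m]\le e^{-pm/8}\le e^{-\alpha m/4}\le\beta'/2$, the last step using $m\ge 4\ln(2/\beta')/\alpha$. The leftward intervals $(x,x^*]$ are symmetric and contribute a further $\beta'/2$, so a union bound yields $\Pr[B]\le\beta'$; combined with the $\beta$ failure probability of $\cA$ this gives the claimed $(2\alpha,\beta+\beta')$-PAC guarantee.

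I do not anticipate a genuine obstacle: this is a one-dimensional uniform-convergence argument (indeed, one could instead invoke a generic $\VC$-dimension generalization theorem, since $\VC(\thresh_X)=1$, but the direct monotone-interval argument gives exactly the constant appearing in the statement). The only points that need care are (i) isolating the two one-sided families of intervals and using monotonicity to replace a union bound over $|X|$ thresholds by one over just two critical thresholds, and (ii) the bookkeeping behind $m=\max\{n,4\ln(2/\beta')/\alpha\}$, where the first term ensures $\cA$ has enough data to be invoked and the second makes the Chernoff bound strong enough for generalization.
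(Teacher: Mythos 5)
Your proposal is correct and takes essentially the same route as the paper: both pass from the empirical-learner guarantee to a generalization statement, and both prove that generalization statement by isolating one critical threshold on each side of $x^*$ (your $x_R$ is the paper's $x^+$, and symmetrically on the left), using monotonicity of the disagreement interval to collapse the union bound to those two points, and finishing with the same lower-tail Chernoff bound giving $e^{-\alpha m/4}\le\beta'/2$ per side.
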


\begin{proof}
Let $\cD$ be a distribution over a totally ordered domain $X$ and fix a target concept $c=q_x \in \thresh_X$. It suffices to show that for a sample $S=((x_i,c(x_i)), \dots (x_m, c(x_m)))$ where $m\geq4\ln (2/\beta')/\alpha$
and the $x_i$ are drawn i.i.d.\ from $\cD$, it holds that
\[\Pr\left[\exists \; h\in C:\;\; \error_\cD(h,c) > 2\alpha\ \land \  \error_S(h)\leq\alpha \right] \le \beta'.\]

Let $x^- \le x$ be the largest point with $\error_{\cD}(q_{x^-}, c) \ge 2\alpha$. If some $y \le x$ has $\error_{\cD}(q_{y}, c) \ge 2\alpha$ then $y \le x^{-}$, and hence for any sample $S$, $\error_S(q_{x^-}) \le \error_S(q_{y})$. Similarly let $x^+ \ge x$ be the smallest point with $\error_{\cD}(q_{x^+}, c) \ge 2\alpha$. Let $c^- = q_{x^-}$ and $c^+ = q_{x^+}$. Then it suffices to show that
\[\Pr\left[\error_S(c^-)\leq\alpha \lor \error_S(c^+) \leq\alpha\right] \le \beta'.\]
Concentrating first on $c^-$, we define the error region $R^- = (x^-, x] \cap X$ as the interval where $c^-$ disagrees with $c$. By a Chernoff bound, the probability that after $m$ independent samples from $\cD$, fewer than $\alpha m$ appear in $R^-$ is at most $\exp(-\alpha m / 4) \le \beta'/2$. The same argument holds for $c^+$, so the result follows by a union bound.
\end{proof}

In general, an algorithm that can output an $\alpha$-consistent hypothesis from concept class $\cC$ can also be used to learn $\cC$ using $\max \{n, 64\VC(\cC) \log (512/\alpha \beta')/\alpha\}$ samples \cite{BlumerEhHaWa89}. The concept class of thresholds has VC dimension $1$, so the generalization bound for thresholds saves an $O(\log(1/\alpha))$ factor over the generic statement.

For the other direction, we note that a distribution-free learner must perform well on the uniform distribution on the rows of any fixed database, and thus must be useful for outputting a consistent hypothesis on such a database.

\begin{lemma}
Suppose $\cA$ is an $(\epsilon,\delta)$-differentially private $(\alpha, \beta)$-accurate PAC learner for a concept class $\cC$ with sample complexity $m$. Then there is an $(\epsilon,\delta)$-differentially private $(\alpha, \beta)$-accurate empirical learner for $\cC$ with sample complexity $n=9m$. Moreover, if $\cA$ is proper, then so is the resulting empirical learner.
\end{lemma}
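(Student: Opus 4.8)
The plan is to use the same subsampling recipe that gave Lemma~\ref{lem:nonprivateSubsample} and Lemma~\ref{lem:secrecy-of-the-sample} for query release. Define $\tilde{\cA}$ on databases $D=((x_1,c(x_1)),\dots,(x_{9m},c(x_{9m})))\in(X\times\zo)^{9m}$ to subsample $m$ rows of $D$ uniformly at random with replacement and return $\cA$ applied to this subsample. Properness is immediate: $\tilde{\cA}$ outputs exactly what $\cA$ outputs, so if $\cA$ only ever outputs hypotheses in $\cC$, so does $\tilde{\cA}$.

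For accuracy, fix $c\in\cC$ and a database $D$ labeled by $c$, and let $\cD$ be the uniform distribution over the multiset $\{x_1,\dots,x_{9m}\}$. Drawing $m$ rows of $D$ with replacement is the same as drawing $m$ i.i.d.\ samples from $\cD$, labeled by $c$ (here we use that $D$ is consistent with $c$, so the labels carried along are exactly $c(x_i)$). Hence by the PAC guarantee of $\cA$, with probability at least $1-\beta$ the returned $h$ satisfies $\error_{\cD}(c,h)\le\alpha$; and since $\error_{\cD}(c,h)=\frac{1}{9m}\card{\set{i:h(x_i)\ne c(x_i)}}=\error_D(c,h)$, this means $h$ is $\alpha$-consistent with $D$ with probability at least $1-\beta$. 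So $\tilde{\cA}$ is an $(\alpha,\beta)$-accurate empirical learner with sample complexity $9m$.

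It remains to check privacy, and this is exactly the content of the secrecy-of-the-sample lemma. Applying Lemma~\ref{lem:secrecy-of-the-sample} with its ``$n$'' equal to $9m$ and its ``$m$'' equal to $m$ (so $n\ge2m$ and, in our regime, $\eps\le1$, as the lemma requires) shows $\tilde{\cA}$ is $(\tilde{\eps},\tilde{\delta})$-differentially private with $\tilde{\eps}=6\eps m/(9m)=2\eps/3\le\eps$ and $\tilde{\delta}=e^{6\eps m/(9m)}\cdot(4m/(9m))\cdot\delta=\tfrac{4}{9}e^{2\eps/3}\delta\le\delta$, the last step using $e^{2/3}\le 9/4$. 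Since a $(\tilde{\eps},\tilde{\delta})$-private mechanism with $\tilde{\eps}\le\eps$ and $\tilde{\delta}\le\delta$ is in particular $(\eps,\delta)$-private, we are done.

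I expect no real obstacle: the factor $9$ is chosen precisely so that the $e^{6\eps m/n}$ blowup and the $4m/n$ (and $6m/n$ in the $\eps$-parameter) savings from secrecy-of-the-sample combine to stay inside the original budget. The only thing to be careful about is verifying the parameter arithmetic above and confirming the hypotheses ($n\ge 2m$, $\eps\le1$) under which Lemma~\ref{lem:secrecy-of-the-sample} applies.
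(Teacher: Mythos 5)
Your proposal is correct and follows the paper's argument essentially verbatim: subsample $m$ rows with replacement from the $9m$-row input, observe that this is the same as drawing $m$ i.i.d.\ samples from the uniform (empirical) distribution on the rows, invoke the PAC guarantee to get an $\alpha$-good hypothesis for that distribution (which is precisely $\alpha$-consistency), and invoke Lemma~\ref{lem:secrecy-of-the-sample} for privacy. Your extra care — spelling out that generalization error under the uniform empirical distribution equals empirical error, and explicitly verifying the arithmetic $\tilde{\eps}=2\eps/3\le\eps$ and $\tilde{\delta}=\tfrac{4}{9}e^{2\eps/3}\delta\le\delta$ under $\eps\le1$ — fills in details the paper leaves implicit, but adds nothing the paper's proof does not already rely on.
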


\begin{proof}
Consider a database $D = \{(x_i, y_i)\} \in (X \times \{0, 1\})^n$. Let $\cD$ denote the uniform distribution over the rows of $D$. Then drawing $m$ i.i.d.\ samples from $\cD$ is equivalent to subsampling $m$ rows of $D$ (with replacement). Consider the algorithm $\tilde{\cA}$ that subsamples (with replacement) $m$ rows from $D$ and runs $\cA$ on it. Then with probability at least $1 - \beta$, algorithm $\cA$ outputs an $\alpha$-good hypothesis on $\cD$, which is in turn an $\alpha$-consistent hypothesis for $D$. Moreover, by Lemma~\ref{lem:secrecy-of-the-sample} (secrecy-of-the-sample), algorithm $\cA$ is $(\eps,\delta)$-differentially private.
\end{proof}

}

\stocrm{
\section{Thresholds in High Dimension}

We next show that the bound of $\Omega(\log^*|X|)$ on the sample complexity of private proper-learners for $\thresh_X$ extends to conjunctions of $\ell$ independent threshold functions in $\ell$ dimensions. As we will see, every private proper-learner for this class requires a sample of $\Omega(\ell \cdot \log^*|X|)$ elements. This also yields a similar lower bound for the task of query release, as in general an algorithm for query release can be used to construct a private learner.

The significance of this lower bound is twofold. First, for reasonable settings of parameters (e.g. $\delta$ is negligible and items in $X$ are of polynomial bit length in $n$), our $\Omega(\log^*|X|)$ lower bound for threshold functions is dominated by the dependence on $\log(1/\delta)$ in the upper bound. However, $\ell \cdot \log^*|X|$ can still be much larger than $\log(1/\delta)$, even when $\delta$ is negligible in the bit length of items in $X^\ell$. Second, the lower bound for threshold functions only yields a separation between the sample complexities of private and non-private learning for a class of VC dimension $1$. Since the concept class of $\ell$-dimensional thresholds has VC dimension of $\ell$, we obtain an $\omega(\VC(C))$ lower bound for concept classes even with arbitrarily large VC dimension.

Consider the following extension of $\thresh_X$ to $\ell$ dimensions.

\begin{definition}
For a totally ordered set $X$ and $\vec{a}=(a_1,\ldots,a_{\ell})\in X^{\ell}$ define the concept $c_{\vec{a}}:X^{\ell}\rightarrow\{0,1\}$ where $c_{\vec{a}}(\vec{x})=1$ if and only if for every $1\leq i\leq \ell$ it holds that $x_i\leq a_i$. Define the concept class of all thresholds over $X^{\ell}$ as $\thresh_X^{\ell}=\{ c_{\vec{a}} \}_{\vec{a}\in X^{\ell}}$.
\end{definition}

Note that the VC dimension of $\thresh_X^{\ell}$ is $\ell$. We obtain the following lower bound on the sample complexity of privately learning $\thresh_X^{\ell}$.

\begin{theorem} \label{thm:lb-higher}
For every $n, \ell \in \N$, $\alpha > 0$, and $\delta \le \ell^2/(1500n^2)$, any $(\eps = \frac{1}{2}, \delta)$-differentially private and $(\alpha, \beta = \frac{1}{8})$-accurate proper learner for $\thresh_X^{\ell}$ requires sample complexity $n=\Omega(\frac{\ell}{\alpha}\log^*|X|)$.
\end{theorem}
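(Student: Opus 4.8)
The plan is to mirror, in $\ell$ dimensions, the reduction of Theorem~\ref{thm:learning-vs-range} from proper PAC learning of thresholds to the interior point problem, and then invoke the hard distribution of Lemma~\ref{lem:range-lb-dist}. Concretely: assume $\cA$ is an $(\eps=\tfrac12,\delta)$-differentially private, $(\alpha,\tfrac18)$-accurate proper learner for $\thresh_X^{\ell}$ with sample complexity $n$. By the resampling/secrecy-of-the-sample arguments already used for Theorem~\ref{thm:learning-vs-range} we may assume $\cA$ is a proper \emph{empirical} learner (at the cost of constant factors in $n$), and by pre/post-composing with a uniformly random permutation of the $\ell$ coordinates applied to every row we may assume $\cA$ is symmetric under permutations of the coordinates (this preserves privacy and empirical accuracy). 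Fix $m$ to be the largest integer with $S(m)\le|X|$, so that $m=\Omega(\log^*|X|)$ by the tower-type estimate in the proof of Theorem~\ref{thm:range-lb}, and identify $[S(m)]$ with a subset of $X$.

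Given databases $D^{(1)},\dots,D^{(\ell)}\in[S(m)]^m$, I build an $\ell$-dimensional labeled database $\hat D$ of size $n$: partition its rows into $\ell$ equal blocks, and in block $j$ place the sorted elements of $D^{(j)}$ into coordinate $j$ (with $\min X$ in every other coordinate), labeling the bottom half of $D^{(j)}$ by $1$ and the top half by $0$, and fill the remainder of the block with harmless rows $(\min X,\dots,\min X)$ labeled $1$. Then $\hat D$ is consistent with $c_{\vec a}$, where $a_j$ is the median of $D^{(j)}$ (an interior point of $D^{(j)}$); on the rows of block $j$ both $c_{\vec a}$ and any $c_{\vec{\hat a}}$ act exactly as the one-dimensional thresholds $c_{a_j}$ and $c_{\hat a_j}$ on coordinate $j$; and the harmless rows never contribute to the empirical error. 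Running $\cA(\hat D)$ yields $c_{\vec{\hat a}}$ with at most $\alpha n$ mislabeled rows; since making $\hat a_j$ fail to lie in $[\min D^{(j)},\max D^{(j)}]$ forces at least $m/2$ mislabels in block $j$, at most $2\alpha n/m$ coordinates are ``bad''. Choosing the per-coordinate instance size $m=\Theta(\alpha n/\ell)$ with a large enough constant makes the fraction of bad coordinates smaller than any fixed constant; hence, when $D^{(1)},\dots,D^{(\ell)}$ are drawn i.i.d.\ from $\cD_m$, coordinate symmetry of $\cA$ gives that each \emph{fixed} coordinate $j$ is good with probability at least $(1-\tfrac18)(1-o(1))$, which exceeds the quantity $P_m$ for the relevant privacy parameters.

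Now consider, for a fixed $j$, the mechanism $\cM$ that on input $D\in[S(m)]^m$ samples $D^{(k)}\sim\cD_m$ for $k\ne j$ as internal randomness, sets $D^{(j)}=D$, forms $\hat D$, runs $\cA$, and outputs $\hat a_j$. Changing one row of $D$ changes at most two rows of $\hat D$ (the modified element, and at most one label-flip across the median of $D^{(j)}$), so $\cM$ is $(2\eps,(1+e^{\eps})\delta)=(1,O(\delta))$-differentially private; with $\delta=O(\ell^2/n^2)$ and $m=\Theta(\alpha n/\ell)$ this makes $(1+e^\eps)\delta$ small relative to $1/m^2$, so that $P_m$ stays below the success probability above. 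This contradicts Lemma~\ref{lem:range-lb-dist} on the universe $[S(m)]$, unless $m=\Theta(\alpha n/\ell)$ exceeds the largest $m^\ast$ with $S(m^\ast)\le|X|$; equivalently, $n=\Omega(\tfrac{\ell}{\alpha}\log^*|X|)=\Omega(\VC(C)\cdot\log^*|X|)$. The regime $\ell=O(1/\alpha)$ where $\ell m=\Theta(\alpha n)$ leaves room inside $n$ is exactly the one handled by this argument; to cover larger $\ell$, partition the $\ell$ coordinates into $\Theta(\alpha\ell)$ groups of $\Theta(1/\alpha)$ coordinates, and build $\hat D$ by stacking, on disjoint coordinate-groups and disjoint database-chunks, independent hard instances of the above type for $\thresh_X^{\Theta(1/\alpha)}$ (the single global target $c_{\vec a}$ decomposes over the partition because each chunk is passthrough, via $\min X$, on the coordinates outside its group). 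Averaging the $\le\alpha n$ mislabels over the $\Theta(\alpha\ell)$ chunks, most chunks inherit a $\Theta(\alpha)$ empirical error rate, so the per-group bound $\Omega(\tfrac{1}{\alpha^2}\log^*|X|)$ applies to a $\Theta(\alpha\ell)$ fraction of chunks, and summing over chunks again gives $n=\Omega(\tfrac{\ell}{\alpha}\log^*|X|)$; the case $\ell=O(1)$ is subsumed by the one-dimensional bound (Theorems~\ref{thm:learning-vs-range} and~\ref{thm:range-lb}) applied to the sub-class with $a_2=\dots=a_\ell=\max X$.

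\textbf{Main obstacle.} The crux is the joint calibration of several quantities: the per-coordinate interior point instance must be large enough ($m=\Theta(\alpha n/\ell)$) that an $\alpha$-fraction of adversarial mislabels cannot corrupt more than a negligible fraction of the $\ell$ coordinate subproblems, yet small enough to fit $\ell$ disjoint copies inside $n$ rows; the privacy loss incurred by the reduction (a factor $2$ from the median-crossing label flip, plus the $(1+e^{\eps})$ factor from the addition/removal) must leave $P_m$ strictly below the learner's success probability $(1-\beta)(1-o(1))$; and the coordinate-partitioning needed for $\ell\gg1/\alpha$ must be arranged so that the single global target concept restricts correctly to each group while the groups consume disjoint parts of the database. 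Getting all three to hold simultaneously is precisely why the statement fixes $\eps=\tfrac12$, $\beta=\tfrac18$, and $\delta\le\ell^2/(1500\,n^2)$.
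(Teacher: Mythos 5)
Your plan follows the same essential route as the paper: embed $\ell$ one\mbox{-}dimensional hard instances along the $\ell$ coordinates (with passthrough values elsewhere), argue by averaging/symmetry that a typical coordinate's subproblem must be solved, and contradict the one-dimensional hardness. The paper packages this more modularly: Theorem~\ref{thm:lb-higher-general} is a clean amplification theorem that takes \emph{any} hard distribution $\cD$ over labeled $n$-row one-dimensional databases (it embeds $\cD$ on a uniformly random axis $r$, fills all other axes with fresh draws from $\cD$, runs $\cA$, and outputs $h_r$), and Lemma~\ref{lemma:lowerRestated} supplies $\cD$ as the hard distribution for $(\alpha/\beta)$-consistent threshold learning, itself obtained from the interior-point lower bound via the padding reduction of Lemma~\ref{lem:range-learning-reduction}. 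You instead embed raw interior-point instances with a median labeling, effectively inlining the 1D reduction into the amplification step. That is a valid alternative, but it does introduce issues the paper's factoring avoids, and you should be aware of them.

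First, your ``bad coordinate'' threshold ($m/2$ mislabels per block) does not scale with $\alpha$, whereas the paper's bad-axis threshold ($\alpha n_1/\beta$ mislabels out of $n_1$ rows per axis) does. This forces you to calibrate $m \ge \Omega(\alpha n / \ell)$ \emph{and} $\ell m \le n$, which together demand $\alpha \lesssim $ a fixed constant. The paper sidesteps this because the hard 1D database already has $n_1 = \Theta(\log^*|X|/\alpha)$ rows (padded with extreme points in Lemma~\ref{lem:range-learning-reduction}); the padding is what makes the bad-axis threshold scale with $\alpha$ and aligns the error budget $\alpha\ell n_1$ with $\beta\ell$ bad axes for \emph{any} small $\alpha$. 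You misdiagnose this constraint as being about ``$\ell > 1/\alpha$'' — the binding constraint is that $\alpha$ be a small constant, not that $\ell$ be bounded, and the coordinate-partitioning construction you propose for ``large $\ell$'' does not address the real issue (and is unnecessary, since the single-axis-per-block embedding already handles all $\ell$). The cleaner fix would be to pad each block with $(\min X,\dots,\min X,1)$ and $(\max X,\dots,\max X,0)$-style rows so that the interior-point failure threshold per block becomes $\Theta(\alpha n/\ell)$, which is exactly the 1D padding that the paper performs in Lemma~\ref{lem:range-learning-reduction}. Second, you should flag that your induced mechanism $\cM$ is $(2\eps,(1+e^\eps)\delta)=(1,\cdot)$-DP, while Lemma~\ref{lem:range-lb-dist} is stated for $\eps<1/4$; the induction there does go through for $\eps'=1$, but $P_n$ becomes about $0.73+O(\sum\delta')$, so one must actually check that $(1-\beta)(1-o(1))$ exceeds it — this is why the theorem fixes $\beta=1/8$, $\eps=1/2$ and takes $\delta \le \ell^2/(1500 n^2)$, and the same check has to be done explicitly rather than asserted. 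Modulo these calibration points and the unneeded group decomposition, the core reduction is the same as the paper's.
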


This is the result of a general hardness amplification theorem for private proper learning. We show that if privately learning a concept class $C$ requires sample complexity $n$, then learning the class $C^\ell$ of conjunctions of $\ell$ different concepts from $C$ requires sample complexity $\Omega(\ell n)$.

\begin{definition}
For $\ell \in \N$, a data universe $X$ and a concept class $C$ over $X$, define a concept class $C^\ell$ over $X^\ell$ to consist of all $\vec{c} = (c_1, \dots, c_\ell)$, where $\vec{c}: X^\ell \to \{0, 1\}$ is defined by $\vec{c}(\vec{x}) = c_1(x_1) \land c_2(x_2) \land \dots \land c_{\ell}(x_{\ell})$.
\end{definition}

\begin{theorem} \label{thm:lb-higher-general}
Let $\alpha, \beta, \eps, \delta > 0$. Let $C$ be a concept class over a data universe $X$, and assume there is a domain element $p_1\in X$ s.t.\ $c(p_1)=1$ for every $c\in C$. Let $\cD$ be a distribution over databases containing $n$ examples from $X$ labeled by a concept in $C$, and suppose that every $(\eps, \delta)$-differentially private algorithm fails to find an $(\alpha/\beta)$-consistent hypothesis $h \in C$ for $D \sim \cD$ with probability at least $2\beta$. Then any $(\eps, \delta)$-differentially private and $(\alpha, \beta)$-accurate proper learner for $C^\ell$ requires sample complexity $\Omega(\ell n)$.
\end{theorem}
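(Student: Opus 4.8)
The plan is a direct-sum style reduction that uses the distinguished point $p_1$ to pack $\ell$ independent copies of the single-concept problem into one instance of $C^\ell$. Suppose $L$ is an $(\eps,\delta)$-differentially private, $(\alpha,\beta)$-accurate proper PAC learner for $C^\ell$ with sample complexity $m$; I want to conclude $m=\Omega(\ell n)$. First I would apply the secrecy-of-the-sample reduction already proved in this paper (the lemma turning a private proper PAC learner into a private proper \emph{empirical} learner at the cost of a factor $9$ in sample size) to get an $(\eps,\delta)$-differentially private, $(\alpha,\beta)$-accurate proper empirical learner $E$ for $C^\ell$ with sample complexity $m'=9m$. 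It then suffices to show $m'=\Omega(\ell n)$.

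Given $D=((z_i,c^*(z_i)))_{i=1}^{n}\sim\cD$, the reduction $A$ picks a uniformly random coordinate $j$, draws fresh independent $D^{(j')}\sim\cD$ for the other active coordinates, and builds $\tilde D\in(X^\ell\times\zo)^{m'}$ by embedding each row $(z,c^{(j')}(z))$ of the active database for coordinate $j'$ as the example $((p_1,\dots,p_1,z,p_1,\dots,p_1),\,c^{(j')}(z))$ with $z$ in position $j'$ (using $D$ itself in position $j$), and filling any remaining rows with copies of $((p_1,\dots,p_1),1)$; here one uses all $\ell$ coordinates, each with a size-$n$ block, when $m'\ge\ell n$, and otherwise only $k=\lfloor m'/n\rfloor$ active coordinates (with $j$ uniform among those). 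Since $c(p_1)=1$ for all $c\in C$, $\tilde D$ is exactly consistent with the concept $\vec c\in C^\ell$ whose $j'$-th coordinate is the concept labeling $D^{(j')}$ and whose remaining coordinates are an arbitrary fixed $c_0\in C$; $A$ runs $E(\tilde D)$, gets $\vec h=(h_1,\dots,h_\ell)$, and outputs $h_j$. Privacy is immediate: the randomness of $A$ ($j$, the fresh samples, the padding) does not depend on $D$, and changing one row of $D$ changes exactly one row of $\tilde D$, so $(\eps,\delta)$-privacy of $E$ and closure under post-processing make $A$ $(\eps,\delta)$-differentially private, and $h_j\in C$ because $E$ is proper.

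For utility, note that on an embedded example with active coordinate $j'$ (and on a padding example) any $\vec h\in C^\ell$ evaluates exactly as $h_{j'}$ does on the $X$-part, using $h_{j'}(p_1)=1$; so padding rows are always classified correctly and the total number of misclassifications of $\vec h$ on $\tilde D$ equals $\sum_{\text{active }j'}n\cdot\error_{D^{(j')}}(c^{(j')},h_{j'})$. Whenever $E$ succeeds (probability $\ge1-\beta$) this is $\le\alpha m'$, so $\sum_{\text{active }j'}\error_{D^{(j')}}(c^{(j')},h_{j'})\le\alpha m'/n=O(\alpha k)$ (and $=\alpha\ell$ when $m'=\ell n$). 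By Markov, at most (roughly) a $\beta$-fraction of the active coordinates have empirical error above $\alpha/\beta$; and since $j$ is uniform among the active coordinates and independent of the (exchangeable) blocks and of $E$'s output, a random $j$ is good with probability $>1-\beta$ given that $E$ succeeded. Hence $A$ outputs an $(\alpha/\beta)$-consistent $h\in C$ for $D$ with probability $>(1-\beta)^2>1-2\beta$, contradicting the hardness hypothesis for $\cD$; therefore $m'=\Omega(\ell n)$ and $m=\Omega(\ell n)$.

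The heart of the argument, and the step I expect to need the most care, is this last accounting: one must justify the exchangeability/symmetry precisely (the empirical learner cannot steer its error onto the non-planted coordinates because it has no information about which coordinate is planted), and one must keep the success probability genuinely above $1-2\beta$ --- this is exactly what the slack between the $(\alpha/\beta)$-consistency demanded of the single-concept solver and the $\alpha$-accuracy of the $C^\ell$ learner buys, and it is where the case $m'\not\equiv 0\pmod{\ell n}$ forces a careful choice of the truncation $k=\lfloor m'/n\rfloor$ and of constants (with the small-$\ell$ regime, where $\Omega(\ell n)=\Omega(n)$, handled separately). A secondary and routine point is checking that the $p_1$-embedding genuinely keeps consistency of $\tilde D$ and evaluation of $\vec h$ ``coordinate-local,'' so the per-coordinate count above is exact.
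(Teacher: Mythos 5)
Your proposal is correct and takes essentially the same route as the paper: reduce PAC to empirical learning via secrecy-of-the-sample, plant $D$ along a random coordinate with $p_1$ fillers on the other coordinates and fresh $\cD$-samples on those coordinates, invoke coordinate-locality of evaluation, and use a counting/Markov argument plus exchangeability of the blocks to conclude the planted coordinate is good except with probability $\beta$. The one place you take on unnecessary work is in handling an arbitrary $m'$ with truncation $k=\lfloor m'/n\rfloor$ and padding, which (as you note) makes the Markov constant slip when $m'/n$ is not an integer; the paper sidesteps this by phrasing the whole argument as a contradiction --- assume the learner has sample complexity $\ell n/9$, pass to an empirical learner on databases of size exactly $\ell n$, and then the bookkeeping is exact. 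Since the goal is an $\Omega(\ell n)$ bound, assuming the specific size $\ell n/9$ for the contradiction is the cleanest way to avoid the floor/ceiling issue you flagged.
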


Note that in the the above theorem we assumed the existence of a domain element $p_1\in X$ on which every concept in $C$ evaluates to 1. To justify the necessity of such an assumption, consider the class of {\em point functions} over a domain $X$ defined as $\point_X=\{ c_x : x\in X \}$ where $c_x(y)=1$ iff $y=x$. As was shown in~\cite{BeimelNiSt13b}, this class can be privately learned using $O_{\alpha,\beta,\epsilon,\delta}(1)$ labeled examples (i.e., the sample complexity has no dependency in $|X|$). Observe that since there is no $x\in X$ on which every point concept evaluates to 1, we cannot use Theorem~\ref{thm:lb-higher-general} to lower bound the sample complexity of privately learning $\point_X^{\ell}$. Indeed, the class $\point_X^{\ell}$ is identical (up to renaming of domain elements) to the class $\point_{X\ell}$, and can be privately learned using $O_{\alpha,\beta,\epsilon,\delta}(1)$ labeled examples.

\begin{remark}
Similarly to Theorem~\ref{thm:lb-higher-general} it can be shown that if privately learning a concept class $C$ requires sample complexity $n$, and if there exists a domain element $p_0\in X$ s.t.\ $c(p_0)=0$ for every $c\in C$, then learning the class of {\em disjunctions} of $\ell$ concepts from $C$ requires sample complexity $\ell n$.
\end{remark}

\begin{proof}[Proof of Theorem~\ref{thm:lb-higher-general}]
Assume toward a contradiction that there exists an $(\eps, \delta)$-differentially private and $(\alpha, \beta)$-accurate proper learner $\cA$ for $C^{\ell}$ using $\ell n / 9$ samples.
Recall that the task of privately outputting a good hypothesis on any fixed database is essentially equivalent to the task of private PAC learning (See Section~\ref{sec:PPAC-vs-IPP}).
We can assume, therefore, that $\cA$ outputs an $\alpha$-consistent hypothesis for every fixed database of size at least $n' \triangleq \ell n$ with probability at least $1 - \beta$.

We construct an algorithm ${Solve}_{\cD}$ which uses $\cA$ in order to find an $(\alpha/\beta)$-consistent threshold function for databases of size $n$ from $\cD$. Algorithm ${Solve}_{\cD}$ takes as input a set of $n$ labeled examples in $X$ and applies $\cA$ on a database containing $n'$ labeled examples in $X^{\ell}$.
The $n$ input points are embedded along one random axis, and random samples from $\cD$ are placed on each of the other axes (with $n$ labeled points along each axis).

\begin{algorithm}[H]
\caption{${Solve}_{\cD}$}
{\bf Input:} Database $D=(x_i,y_i)_{i=1}^{n} \in (X \times \{0,1\})^{n}$.
\begin{enumerate}[rightmargin=10pt,itemsep=1pt]

\item Initiate $S$ as an empty multiset.
\item Let $r$ be a (uniform) random element from $\{1,2,\ldots,\ell\}$.
\item For $i=1$ to $n$, let $\vec{z_i}\in X^{\ell}$ be the vector with $r^\text{th}$ coordinate $x_i$, and all other coordinates $p_1$ (recall that every concept in $C$ evaluates to 1 on $p_1$). Add to $S$ the labeled example $(\vec{z_i},y_i)$.
\item For every axis $t\neq r$:
\begin{enumerate}
	\item Let $D'=(x'_i,y'_i)_{i=1}^{n} \in (X \times \{0,1\})^{n}$ denote a (fresh) sample from $\cD$.
	\item For $i=1$ to $n$, let $\vec{z'_i}\in X^{\ell}$ be the vector whose $t^\text{th}$ coordinate is $x'_i$, and its other coordinates are $p_1$. Add to $S$ the labeled example $(\vec{z'_i},y'_i)$.
\end{enumerate}
\item Let $(h_1,h_2,\ldots,h_{\ell})=\vec{h}\leftarrow\cA(S)$.
\item Return $h_r$.
\end{enumerate}
\end{algorithm}

First observe that ${Solve}_{\cD}$ is $(\eps,\delta)$-differentially private. To see this, note that a change limited to one input entry affects only one entry of the multiset $S$. Hence, applying the $(\eps,\delta)$-differentially private algorithm $\cA$ on $S$ preserves privacy.

Consider the execution of ${Solve}_{\cD}$ on a database $D$ of size $n$, sampled from $\cD$.
We first argue that $\cA$ is applied on a multiset $S$ correctly labeled by a concept from $C^{\ell}$.
For $1\leq t\leq\ell$ let $(x^t_i,y^t_i)_{i=1}^{n}$ be the sample from $\cD$ generated for the axis $t$, let $(\vec{z^t_i},y^t_i)_{i=1}^{n}$ denote the corresponding elements that were added to $S$, and let $c_t$ be s.t.\ $c_t(x^t_i)=y^t_i$ for every $1\leq i \leq n$. Now observe that
$$
(c_1,c_2,\ldots,c_{\ell})(\vec{z^t_i})=c_1(p_1)\land c_2(p_1)\land \dots \land c_t(x^t_i)\land\dots\land c_{\ell}(p_1) = y^t_i,
$$
and hence $S$ is perfectly labeled by $(c_1,c_2,\ldots,c_{\ell})\in C^{\ell}$.

By the properties of $\cA$, with probability at least $1-\beta$ we have that $\vec{h}$ (from Step~5) is an $\alpha$-consistent hypothesis for $S$.
Assuming that this is the case, there could be at most $\beta\ell$ ``bad'' axes on which $\vec{h}$ errs on more than $\alpha n / \beta$ points.
Moreover, as $r$ is a random axis, and as the points along the $r^\text{th}$ axis are distributed exactly like the points along the other axes, the probability that $r$ is a ``bad'' axis is at most $\frac{\beta\ell}{\ell}=\beta$.
Overall, ${Solve}_{\cD}$ outputs an $(\alpha/\beta)$-consistent hypothesis with probability at least $(1-\beta)^2>1-2\beta$. This contradicts the hardness of the distribution $\cD$.
\end{proof}

Now the proof of Theorem \ref{thm:lb-higher} follows from the lower bound on the sample complexity of privately finding an $\alpha$-consistent threshold function (see Section~\ref{sec:lowerBound}):

\begin{lemma}[Follows from Lemma~\ref{lem:range-lb-dist} and~\ref{lem:range-learning-reduction}]\label{lemma:lowerRestated}
There exists a constant $\lambda>0$ s.t.\ the following holds.
For every totally ordered data universe $X$ there exists a distribution $\cD$ over databases containing at most $n=\frac{\lambda}{\alpha}\log^*|X|$ labeled examples from $X$ such that every $(\frac{1}{2}, \frac{1}{50 n^2})$-differentially private algorithm fails to find an $\alpha$-consistent threshold function for $D\sim\cD$ with probability at least $\frac{1}{4}$.
\end{lemma}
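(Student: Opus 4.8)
The plan is to transport the hard interior-point distribution of Lemma~\ref{lem:range-lb-dist} to the threshold-consistency problem, via a distributional version of the reduction in Lemma~\ref{lem:range-learning-reduction}(2). First I would fix $m$ to be the largest integer with $m\cdot S(m)\le|X|$; by the tower-function estimate carried out in the proof of Theorem~\ref{thm:range-lb} (where $S$ is sandwiched between iterated exponentials of height $m\pm O(\log^* m)$) this gives $m=\Theta(\log^*|X|)$. The labeled databases I produce will have size $n=m/(3\alpha)=\Theta(\tfrac1\alpha\log^*|X|)$, which is at most $\tfrac\lambda\alpha\log^*|X|$ for a suitable constant $\lambda$; I also restrict attention to $\alpha$ below a fixed absolute constant, which already suffices for the applications.

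To define $\cD$: sample $D'=(x_1,\dots,x_m)\getsr\cD_m$, a database over $[S(m)]$, and carve the low-order $\lceil\log m\rceil$ bits of $X$ out as tie-breakers by setting $\hat x_i=x_i\cdot m+(i-1)$; then $\hat D'=(\hat x_1,\dots,\hat x_m)\in X^m$ has $m$ distinct elements with $\lfloor\hat x_{(k)}/m\rfloor=x_{(k)}$ for the $k$-th smallest. Sort $\hat D'$, label its smallest $m/2$ entries by $1$ and its largest $m/2$ entries by $0$, and pad with equally many copies of $(\min X,1)$ and $(\max X,0)$ to reach total size $m/(3\alpha)$. Because the $\hat x_i$ are distinct and every $1$-labeled entry lies below every $0$-labeled entry, this database is \emph{exactly} threshold-labeled, so $\cD$ is supported on legitimately threshold-labeled databases, as required for the application in Theorem~\ref{thm:lb-higher-general}. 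The condition $m\cdot S(m)\le|X|$ leaves room both for the embedding of $[S(m)]$ and for the tie-breakers, and enlarging the bit length by $\lceil\log m\rceil$ only increases $\log^*|X|$ by an additive constant (absorbed into $\lambda$).

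Now suppose $\cA$ is $(\tfrac12,\tfrac1{50n^2})$-differentially private and, on $D\getsr\cD$, outputs an $\alpha$-consistent threshold $c_t$ with probability more than $3/4$. Define $\cM$ on an interior-point instance $D'\in[S(m)]^m$: form $\hat D'$, the split-and-padded database $D$, run $c_t\getsr\cA(D)$, and return $\lfloor t/m\rfloor$. \emph{Utility:} an $\alpha$-consistent $c_t$ mislabels at most $\alpha n=m/3$ rows, so $t$ cannot fall below the ($\approx m/6$)-th smallest $\hat x_i$ (else it mislabels more than $m/2-m/6>m/3$ of the $1$-labeled real rows) nor above the ($\approx 5m/6$)-th smallest (symmetrically; note also $t=\max X$ would mislabel all of the $(\max X,0)$ padding, which for small $\alpha$ is more than $m/3$ rows); hence $\lfloor t/m\rfloor$ lies between $x_{(1)}=\min D'$ and $x_{(m)}=\max D'$. \emph{Privacy:} altering one coordinate of $D'$ changes exactly one entry of $\hat D'$, and since its median then moves by at most one position, at most two rows of $D$; by group privacy ($k=2$), $\cM$ is $(1,(1+e^{1/2})/(50n^2))$-differentially private. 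Consequently $\cM$ solves the interior point problem on $\cD_m$ with probability more than $3/4$.

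This contradicts Lemma~\ref{lem:range-lb-dist} instantiated with $\eps=1$, $\beta=1/4$, and $\delta(m)=\Theta(\alpha^2/m^2)\ge(1+e^{1/2})/(50n^2)$. The hypothesis $\eps<1/4$ there is only used to ensure $P_n\le1-\beta$; the proof is otherwise valid for any $\eps$ with $e^\eps/(e^\eps+1)+(e^\eps+1)\sum_j\delta(j)\le1-\beta$, and with $\beta=1/4$ this permits $\eps$ up to just under $\ln3>1$, provided the constant hidden in $\delta(m)=\Theta(\alpha^2/m^2)$ (equivalently the $50$, together with the bound on $\alpha$) is small enough to keep $\sum_j\delta(j)$ and $P_m$ within budget. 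The main obstacle is exactly this privacy bookkeeping: the median-based split in the reduction genuinely has sensitivity $2$, so the reduction loses a factor $2$ in $\eps$, and hitting the target $\eps=\tfrac12$ forces running the interior-point lower bound at $\eps=1$ — outside its stated range — which requires reopening its proof to extract the weaker sufficient condition above. A secondary point, handled by the tie-breakers drawn from the low-order bits of $X$, is guaranteeing that $\cD$ is supported on databases that are genuinely threshold-consistent rather than merely ``split in half''.
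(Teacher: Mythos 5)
Your approach is essentially the one the paper intends by ``follows from Lemma~\ref{lem:range-lb-dist} and~\ref{lem:range-learning-reduction}'': push the hard interior-point distribution $\cD_m$ through the median-split reduction of Lemma~\ref{lem:range-learning-reduction}(2). But the paper leaves two genuine gaps implicit, and you have correctly identified and closed both. First, the raw sort-label-pad map from $[S(m)]^m$ to $(X\times\zo)^{m/(3\alpha)}$ can produce databases that are \emph{not} labeled by any threshold concept when ties straddle the median; since Lemma~\ref{lemma:lowerRestated} must feed into Theorem~\ref{thm:lb-higher-general}, which explicitly requires $\cD$ to be supported on databases ``labeled by a concept in $C$,'' this matters. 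Your deterministic tie-breaker $\hat x_i = x_i\cdot m + (i-1)$ fixes it cleanly (at the cost of an extra $\log m$ bits in the domain, correctly absorbed into $\lambda$), and your rounding $\lfloor t/m\rfloor$ correctly recovers an interior point. Second, the reduction costs a factor of $2$ in $\eps$ via two-row group privacy, so targeting $\eps=\tfrac12$ for the learner forces invoking Lemma~\ref{lem:range-lb-dist} at $\eps=1$, outside its stated range $\eps<\tfrac14$. Your observation that the inductive proof of that lemma only needs $P_n\le 1-\beta$ — hence works for any $\eps<\ln 3$ with an appropriately small $\sum_j\delta(j)$, which at $\beta=\tfrac14$ and $\eps=1$ translates to $\sum_j\delta(j)\lesssim 0.005$ — is correct and necessary, and the paper does not flag this.

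Two small caveats. The privacy bookkeeping means the stated constant $50$ in $\delta\le 1/(50n^2)$ only leaves budget for $\alpha$ below a small absolute constant (roughly $0.08$ by your own calculation), which you rightly note and which is consistent with the applications downstream; but it is worth being explicit that the lemma as literally written, quantified over all $\alpha>0$, is not established by this argument without adjusting constants. Second, the median-split really does have sensitivity two on the labeled database (the removed/inserted row plus one flipped label at position $m/2$), so the factor $(2\eps,(1+e^\eps)\delta)$ is unavoidable with this reduction; there is no way to land at $\eps=\tfrac12$ for the interior-point lemma, so the relaxation to $\eps<\ln 3$ is the only route that preserves the constants as stated.
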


We remark that, in general, an algorithm for query release can be used to construct a private learner with similar sample complexity. Hence, Theorem~\ref{thm:lb-higher} also yields the following lower bound on the sample complexity of releasing approximated answers to queries from $\thresh_X^{\ell}$.

\begin{theorem}\label{thm:query-release-lower-bound}
For every $n, \ell \in \N$, $\alpha > 0$, and $\delta \le \ell^2/(7500n^2)$, any $(\frac{1}{150}, \delta)$-differentially private algorithm for releasing approximated answers for queries from $\thresh_X^{\ell}$ with $(\alpha,\frac{1}{150})$-accuracy must have sample complexity $n=\Omega(\frac{\ell}{\alpha}\log^*|X|)$.
\end{theorem}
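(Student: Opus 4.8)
The plan is to lower-bound the sample complexity of releasing $\thresh_X^{\ell}$ by reducing that task to \emph{proper learning} of $\thresh_X^{\ell}$ and then invoking the learning lower bound behind Theorem~\ref{thm:lb-higher}. The first step is the standard ``hypothesis selection'' reduction. Suppose $R$ is an $(\eps,\delta)$-differentially private algorithm that releases $\thresh_X^{\ell}$ queries with $(\alpha,\beta)$-accuracy on databases of size $n$. Given a sample $S=((\vec x_i,y_i))$ realizable by $\thresh_X^{\ell}$, split it into its positive and negative parts $S^{+},S^{-}$, run $R$ separately on each (padding a too-small part to size $n$ with copies of a fixed point and subtracting off the known contribution --- unnecessary for the balanced databases produced below), and combine the two answer vectors to obtain, for every $\vec a\in X^{\ell}$, an estimate of $\error_S(c_{\vec a})$ accurate to within $\alpha$; output the $c_{\vec a}$ of smallest estimated error. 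Since the target concept has empirical error $0$, the chosen hypothesis has empirical error at most $2\alpha$, so this is a \emph{proper empirical learner} for $\thresh_X^{\ell}$ of sample complexity $n$ with $(2\alpha,2\beta)$-accuracy, and by composition (Lemma~\ref{lem:composition}) over the two calls to $R$ it is $(2\eps,2\delta)$-differentially private.

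Next I would feed this empirical learner into the dimension-amplification argument from the proof of Theorem~\ref{thm:lb-higher-general}, skipping that proof's opening PAC-to-empirical step (the only step there that costs sample size). Concretely: embed a database $D$ of size $m$, drawn from the hard distribution $\cD$ of Lemma~\ref{lemma:lowerRestated}, along a uniformly random coordinate $r\in\{1,\dots,\ell\}$; fill each of the other $\ell-1$ coordinates with an independent fresh draw from $\cD$, so that the resulting size-$\ell m$ database is consistently labeled by some $\vec c\in\thresh_X^{\ell}$ (using that every threshold evaluates to $1$ at $\min X$ on the off-coordinates); run the empirical learner; and return its coordinate-$r$ hypothesis $h_r$. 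Altering one row of $D$ alters exactly one row of the constructed database, so this procedure inherits differential privacy; and since $D$ and the $\ell-1$ fillers are i.i.d.\ from $\cD$ while $r$ is uniform, a bad-coordinate union bound shows $h_r$ is $O(\alpha)$-consistent with $D$ with probability greater than $3/4$. Running the empirical learner requires databases of size at least $2n$, while a hard distribution of size $m$ fits inside $X$ exactly when $m\le\Theta(\log^{*}|X|/\alpha)$; so if $n$ were below $\Omega\!\left(\tfrac{\ell}{\alpha}\log^{*}|X|\right)$ we could take $m=\lceil 2n/\ell\rceil$, a size for which the hard distribution still exists over $X$, run the above construction, and contradict Lemma~\ref{lemma:lowerRestated}. (When $|X|$ is too small for this to beat the generic $\Omega(\ell/\alpha\eps)$ lower bound for releasing a $\VC$-dimension-$\ell$ class~\cite{BlumLiRo08}, that bound already gives the claim.)

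The main obstacle is quantitative rather than conceptual: the argument is a chain of reductions, each degrading $\eps,\delta,\alpha,\beta$ by constant factors, and one must verify the final parameters still lie within the tolerances of Lemma~\ref{lemma:lowerRestated}. Privacy is the tight part: running $R$ twice inflates $\eps=\tfrac1{150}$ to $\tfrac1{75}$, comfortably below the $\tfrac12$ that the hard distribution withstands, while $\delta$ doubles to $2\delta$ and must stay below $\tfrac{1}{50m^{2}}=\Theta(\ell^{2}/n^{2})$ --- which is why the hypothesis reads $\delta\le\ell^{2}/(7500n^{2})$ rather than the $\ell^{2}/(1500n^{2})$ of Theorem~\ref{thm:lb-higher}, the extra room absorbing the factor-$2$ loss, the $m\approx 2n/\ell$ accounting, and the constant blow-up of the accuracy parameter. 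One should also fix the convention that ``sample complexity $n$'' lets $R$ --- and hence the derived learner --- run on databases of any size at least $n$, so that the size-$\ell m$ instance built in the embedding is admissible and no generalization step is needed; alternatively, one could first convert the empirical learner into a genuine proper PAC learner via the Dvoretzky-Kiefer-Wolfowitz-style generalization bound for thresholds (costing only $\widetilde O(\ell/\alpha)$ extra samples) and then invoke Theorem~\ref{thm:lb-higher} as a black box, at the price of a short case analysis on the size of $\log^{*}|X|$.
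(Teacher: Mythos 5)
Your proof is correct, but it routes through the lemmas of the paper differently than the paper's own argument. The paper's proof invokes Lemma~\ref{lem:sanitization-implies-PPAC} as a black box to convert the hypothetical releaser into a constant-accuracy PAC learner, invokes Theorem~\ref{thm:lb-higher} (with $\alpha$ set to a constant) as a black box to get an $\Omega(\ell\log^*|X|)$ bound for constant-accuracy release, and then injects the $1/\alpha$ factor entirely via a \emph{post-hoc padding-and-rescaling refinement}: pad the database with $(1-150\alpha)n$ copies of $\min X$ and rescale the released answers, converting an $(\alpha,1/150)$-accurate releaser on $n$ samples into a $(1/150,1/150)$-accurate one on $150\alpha n$ samples. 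You instead build the releaser-to-learner reduction from scratch (the standard hypothesis-selection argument, split into $S^+,S^-$), aim directly for an \emph{empirical} learner rather than a PAC learner, and plug it straight into the embedding construction from Theorem~\ref{thm:lb-higher-general}'s proof together with Lemma~\ref{lemma:lowerRestated}. In your route the $1/\alpha$ factor comes from the size of the hard distribution in Lemma~\ref{lemma:lowerRestated} (i.e., from Lemma~\ref{lem:range-learning-reduction}'s $1/\alpha$ blowup between the interior-point problem and $\alpha$-consistent empirical learning), not from padding the releaser. Going directly to an empirical learner is a genuine simplification: it skips the subsampling/secrecy-of-the-sample step that both Lemma~\ref{lem:sanitization-implies-PPAC} and the opening of Theorem~\ref{thm:lb-higher-general}'s proof pay for, and you correctly observe that the databases arising from the embedding are balanced, so the padding of $S^+,S^-$ can be dispensed with there. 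The paper's route is more modular (it reuses statements as proved), while yours is more self-contained and likely yields slightly better constants; the quantitative bookkeeping you flag (the chain of constant-factor degradations of $\eps,\delta,\alpha,\beta$ against the tolerances of Lemma~\ref{lemma:lowerRestated}) is the only thing left to grind out, and nothing in the chain looks problematic.
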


In order to prove the above theorem we use our lower bound on privately learning $\thresh_X^{\ell}$ together with the following reduction from private learning to query release.

\begin{lemma}[\cite{GuptaHaRoUl11,BeimelNiSt13b}]\label{lem:sanitization-implies-PPAC}
Let $C$ be a class of predicates.
If there exists a $(\frac{1}{150},\delta)$-differentially private algorithm capable of releasing queries from $C$ with $(\frac{1}{150},\frac{1}{150})$-accuracy and sample complexity $n$, then there exists a $(\frac{1}{5},5\delta)$-differentially private $(\frac{1}{5},\frac{1}{5})$-accurate PAC learner for $C$ with sample complexity $O(n)$.
\end{lemma}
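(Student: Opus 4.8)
The plan is to use the assumed query-release algorithm to privately estimate, for every hypothesis in $C$, its empirical error on the learner's labeled sample, and then output the apparent minimizer; this is essentially the sanitization-to-learning reduction of \cite{GuptaHaRoUl11,BeimelNiSt13b}. Write $\cS$ for the hypothesized $(\tfrac{1}{150},\delta)$-differentially private algorithm that releases $C$ with $(\tfrac1{150},\tfrac1{150})$-accuracy and sample complexity $n$. We construct a learner $L$ taking a sample $S=((x_1,c(x_1)),\dots,(x_m,c(x_m)))$ of size $m=\Theta(n)$ (with $m\ge n$), where the $x_i$ are i.i.d.\ from $\cD$ and $c\in C$. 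The first observation is that if $D_1$ and $D_0$ denote the sub-samples of points labeled $1$ and $0$, then
\[
\error_S(q)=\frac{1}{m}\Bigl(|D_1|-\bigl|\{i\in D_1:q(x_i)=1\}\bigr|+\bigl|\{i\in D_0:q(x_i)=1\}\bigr|\Bigr)
\]
for every $q\in C$, since $q$ errs on $D_1$ exactly where it outputs $0$ and on $D_0$ exactly where it outputs $1$. Hence it suffices to estimate, privately and simultaneously over all $q\in C$, the counts $|D_b\cap q^{-1}(1)|$ together with $|D_1|$.

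These estimates come from $\cS$. First I would pad each $D_b$ with copies of a fixed element $p\in X$ up to the fixed size $m\ (\ge n)$, run $\cS$ on each padded sub-database to obtain values $a^b_q$ approximating the fraction of the padded $D_b$ lying in $q^{-1}(1)$ to within $\tfrac1{150}$ for all $q\in C$ at once (with probability $\ge1-\tfrac1{150}$ each), and also release $\tilde m_1=|D_1|+\Lap(150)$, setting $\tilde m_0=m-\tilde m_1$. Since each padded database has the fixed size $m$ and $q(p)$ is a known bit for each $q$, the padding contribution to $a^b_q$ can be subtracted off, so $L$ can compute from these released quantities an estimate $\widehat{\error}_S(q)$ of $\error_S(q)$, for every $q\in C$, with additive error $O(\tfrac1{150})+O(\tfrac1m)$. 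Finally $L$ outputs $q^{*}\in C$ minimizing $\widehat{\error}_S(q)$; this last step is post-processing of the released quantities and costs no further privacy. For privacy, altering one row of $S$ changes $|D_1|$ by at most $1$ and replaces at most one entry in each of the two padded databases (the altered point, or a padding copy, swapped for the other), so $L$ is the composition of two $(\tfrac1{150},\delta)$-private calls to $\cS$ and one $\tfrac1{150}$-private Laplace release, hence $(\tfrac{3}{150},2\delta)$-differentially private by basic composition --- comfortably inside $(\tfrac15,5\delta)$.

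For accuracy, $c\in C$ labels $S$, so $\error_S(c)=0$, hence $\widehat{\error}_S(c)=O(\tfrac1{150})$, and by optimality $\widehat{\error}_S(q^{*})\le\widehat{\error}_S(c)$, giving $\error_S(q^{*})=O(\tfrac1{150})<\tfrac1{10}$. A standard generalization bound for the (realizable) class $C$ --- legitimate here because the hypothesized query-release algorithm already forces $n=\Omega(\VC(C)/(\alpha\eps))=\Omega(\VC(C))$, so $m=\Theta(n)$ supplies the $O(\VC(C))$ samples it needs --- then yields $\error_\cD(c,q^{*})\le\tfrac15$, except with probability at most $\tfrac1{75}$ (two sanitizer failures) plus the Laplace and generalization failure probabilities, which can be kept below $\tfrac15$ in total. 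I expect the main obstacle to be the bookkeeping around padding: $\cS$ must be fed databases of its prescribed size without corrupting either the error estimates or the privacy guarantee, which is why we pad to a fixed size with a fixed element and subtract off the (privately estimated) padding mass, rather than, say, duplicating sample points, which would break privacy. The remaining care is in tracking the (loose) constants through the two sanitizer calls, the Laplace count, and the generalization step so as to land exactly at the stated $(\tfrac15,5\delta)$-privacy and $(\tfrac15,\tfrac15)$-accuracy.
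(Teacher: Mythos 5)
The paper itself cites this lemma from \cite{GuptaHaRoUl11,BeimelNiSt13b} without reproving it, so there is no in-paper proof to compare against; judged against the reduction as it appears in those references, your approach is the standard one and is essentially correct. The key steps --- decompose the empirical error of $q$ into the counts $|D_1|$, $|D_1\cap q^{-1}(1)|$, $|D_0\cap q^{-1}(1)|$, estimate these by sanitizing the label-$1$ and label-$0$ sub-databases separately (plus a Laplace count), output the apparent empirical minimizer, and then invoke realizable-case uniform convergence, which is available precisely because the sanitizer's own lower bound already forces $n=\Omega(\VC(C))$ --- are exactly the right moves, and your privacy accounting (two sanitizer calls plus one Laplace release, giving $(\tfrac{3}{150},2\delta)$ by basic composition) and accuracy accounting both land safely inside the stated $(\tfrac15,5\delta)$ and $(\tfrac15,\tfrac15)$.

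Two small technical nits worth tightening. First, you pad each $D_b$ to size $m\ge n$ and then feed it to $\cS$, but $\cS$ is only assumed to have sample complexity $n$; the cleanest fix is simply to take $m=n$, so that $|D_b|\le n$ automatically and each $D_b$ can be padded to exactly the size $\cS$ expects (this still gives a learner of sample complexity $O(n)$, and $n=\Omega(\VC(C))$ still supplies enough samples for generalization). Second, the error introduced by the Laplace estimate of $|D_1|$ in the padding correction is $O(150\log(1/\beta)/m)$, not $O(1/m)$; this is still negligible because $(\tfrac1{150},\delta)$-privacy with $\tfrac1{150}$-accuracy on even a single counting query already forces $n=\Omega(150^2)$, but the constant should be tracked explicitly since it is the same mechanism that lets you claim $m=\Theta(n)$ suffices for the generalization step. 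Neither issue changes the structure or correctness of the argument.
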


\begin{proof}[Proof of Theorem~\ref{thm:query-release-lower-bound}]
Let $\delta\leq\ell^2/(7500n^2)$.
Combining our lower bound on the sample complexity of privately learning $\thresh_X^{\ell}$ (Theorem~\ref{thm:lb-higher}) together with the reduction stated in Lemma~\ref{lem:sanitization-implies-PPAC}, we get a lower bound of $m\triangleq\Omega(\ell\cdot\log^*|X|)$ on the sample complexity of every $(\frac{1}{150},\delta)$-differentially private algorithm for releasing queries from $\thresh_X^{\ell}$ with $(\frac{1}{150},\frac{1}{150})$-accuracy.

In order to refine this argument and get a bound that incorporates the approximation parameter, let $\alpha\leq1/150$, and assume towards contradiction that there exists a $(\frac{1}{150},\delta)$-differentially private algorithm $\tilde{\cA}$ for releasing queries from $\thresh_X^{\ell}$ with $(\alpha,\frac{1}{150})$-accuracy and sample complexity $n < m/(150\alpha)$.

We will derive a contradiction by using $\tilde{\cA}$ in order to construct a $(\frac{1}{150},\frac{1}{150})$-accurate algorithm for releasing queries from $\thresh_X^{\ell}$ with sample complexity less than $m$.
Consider the algorithm $\cA$ that on input a database $D$ of size $150\alpha n$, applies $\tilde{\cA}$ on a database $\tilde{D}$ containing the elements in $D$ together with $(1-150\alpha)n$ copies of $(\min X)$. Afterwards, algorithm $\cA$ answers every query $c\in\thresh_X^{\ell}$ with $a_c \triangleq \frac{1}{150\alpha}(\tilde{a_c} - 1 + 150\alpha)$, where $\{\tilde{a_c}\}$ are the answers received from $\tilde{\cA}$.

Note that as $\tilde{\cA}$ is $(\frac{1}{150},\delta)$-differentially private, so is $\cA$.
We now show that $\cA$'s output is $\frac{1}{150}$-accurate for $D$ whenever $\tilde{\cA}$'s output is $\alpha$-accurate for $\tilde{D}$, which happens with all but probability $\frac{1}{150}$.
Fix a query $c\in\thresh_X^{\ell}$ and assume that $c(D)=t/(150\alpha n)$. Note that $c(\min X)=1$, and hence, $c(\tilde{D})=t/n + (1 - 150\alpha)$. By the utility properties of $\tilde{\cA}$,
\begin{eqnarray*}
a_c &=& \frac{1}{150\alpha}(\tilde{a_c} - 1 + 150\alpha) \\
&\leq& \frac{1}{150\alpha}(c(\tilde{D}) + \alpha - 1 + 150\alpha) \\
&=& \frac{1}{150\alpha}(t/n + \alpha)\\
&=& t/(150\alpha n) + 1/150\\
&=& c(D)+1/150.
\end{eqnarray*}
Similar arguments show that $a_c\geq c(D) - 1/150$, proving that $\cA$ is $(1/150,1/150)$-accurate and contradicting the lower bound on the sample complexity of such algorithms.
\end{proof}

}

\stocrm{

\section{Mechanism-Dependent Lower Bounds}

\subsection{The Undominated Point Problem}

By a reduction to the interior point problem problem, we can prove an impossibility result for the problem of privately outputting something that is at least the minimum of a database on an unbounded domain. Specifically, we show

\begin{theorem} \label{thm:half-range}
For every (infinite) totally ordered domain $X$ with no maximum element (e.g., $X=\N$) and every $n\in\N$, there is no $(\eps, \delta)$-differentially private mechanism
$M:X^n\rightarrow X$ such that for every $x=(x_1,\ldots,x_n)\in X^n$,
$$\Pr[M(x)\geq\min_i x_i]\geq2/3.$$
\end{theorem}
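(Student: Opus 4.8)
The plan is to run a \emph{mechanism-dependent} hard-instance argument, since (as noted just above the theorem) no single distribution over databases can defeat every mechanism here. Given a candidate $(\eps,\delta)$-differentially private mechanism $M:X^n\to X$ with $\Pr[M(x)\ge\min_i x_i]\ge 2/3$ for all $x$, I would exploit a tension between two facts: on the one hand the utility guarantee forces $M$ to output something $\ge b$ with probability $\ge 2/3$ whenever it is run on the constant database $(b,\dots,b)$; on the other hand, run on any \emph{fixed} reference database, $M$ produces a single random element of $X$, so (when $X$ has a cofinal increasing $\omega$-sequence, e.g.\ $X=\N$) its upper tail must vanish. The observation that makes these two facts collide cheaply is that the constant database $(b,\dots,b)$ is at Hamming distance at most $n$ from the reference database \emph{no matter how large $b$ is}, so group privacy bridges the two at a cost of only a fixed factor $e^{n\eps}$ and additive term $\delta_n \triangleq \frac{e^{n\eps}-1}{e^{\eps}-1}\delta$.

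Concretely, I would argue by contradiction and proceed as follows. First, fix any base element $a\in X$ and consider $x_a=(a,\dots,a)$; the distribution of $M(x_a)$ on $X$ has vanishing upper tail (the events $\{M(x_a)\ge t\}$ along a cofinal increasing sequence of $t$'s are nested with empty intersection), so pick $b$ far out along the order with $\Pr[M(x_a)\ge b] < (2/3-\delta_n)/e^{n\eps}$, which is possible once $\delta_n<2/3$ (true for any negligible $\delta$). Second, apply the utility guarantee to $x_b=(b,\dots,b)$, whose minimum is $b$, to get $\Pr[M(x_b)\ge b]\ge 2/3$. Third, since $x_a$ and $x_b$ differ in at most $n$ coordinates, iterate the $(\eps,\delta)$-privacy inequality one coordinate at a time (group privacy), applied to the up-set $T=\{y\in X: y\ge b\}$, to obtain $\Pr[M(x_b)\ge b] \le e^{n\eps}\Pr[M(x_a)\ge b]+\delta_n < 2/3$, contradicting the second step.

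The main thing to get right — and in a sense the crux of the proof — is the choice of the two databases: comparing the \emph{constant} databases $x_a$ and $x_b$ (rather than interior-point-style nested databases) is what keeps their Hamming distance bounded by $n$ however far out $b$ has to be pushed, so the group-privacy blowup is an absolute constant rather than something that grows with $|X|$; this is the sense in which the argument is ``a reduction to (the base case of) the interior point problem.'' Two caveats I would flag: the argument uses the order structure of $X$ only to find a $b$ past the tail of $M(x_a)$, so it applies to any infinite totally ordered $X$ with a cofinal increasing $\omega$-sequence (in particular $\N$, $\mathbb{Q}$, $\mathbb{R}$, $[0,1)$); and it needs $\delta$ small enough that $\delta_n<2/3$, which is automatic in the standard regime. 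Finally, the same template — run $M$ on a reference input, find a threshold beyond its output tail, and conclude via group privacy — is what I would reuse for Theorem~\ref{thm:pure-point-lb}.
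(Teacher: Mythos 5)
Your argument is genuinely different from the paper's and is correct as far as it goes, but it proves a quantitatively weaker statement, and the caveat you flag about $\delta$ is more serious than you allow. The paper's proof first observes that any mechanism $M$ is ``$T$-bounded'' for some increasing $T:\N\to\N$ (i.e.\ $\Pr[M(x)\geq T(\max_i x_i)]$ is small), and then reduces the undominated-point problem to the \emph{full} interior-point lower bound over the exponentially growing chain $\{1,T(1),T(T(1)),\dots\}$: rounding $M$'s output down to this chain gives a private interior-point solver on a domain of size $d$ for arbitrary $d$, forcing $n\geq\Omega(\log^* d)$, contradiction. Crucially, the interior-point lower bound (Lemma~\ref{lem:range-lb-dist}) is built recursively so that each level adds only $O(\delta)$ to the failure probability, so the whole argument goes through with $\delta$ as large as $O(1/n^2)$ — squarely in the approximate-DP regime the paper cares about.

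Your proof instead applies group privacy in one shot across Hamming distance $n$ (between $x_a$ and $x_b$), which is elegant and makes the argument self-contained, but the price is the additive term $\delta_n = \frac{e^{n\eps}-1}{e^\eps-1}\delta$. The condition $\delta_n < 2/3$ therefore requires $\delta \lesssim e^{-n\eps}$, exponentially small in $n$ for any constant $\eps$. This is \emph{not} ``automatic in the standard regime'': with the paper's running choice $\eps = 0.1$, $\delta = 1/n^{\log n}$, one has $\delta_n \approx e^{0.1n}/n^{\log n}\to\infty$, and the contradiction disappears. In other words, once $\delta$ is only negligible (or inverse-polynomial) rather than inverse-exponential, a single group-privacy step from $x_a$ to $x_b$ loses everything, and this is exactly the regime where the undominated-point problem is nontrivial and where the paper's result has content. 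Your self-diagnosis that this is ``a reduction to the base case of the interior point problem'' is accurate, and it also explains the gap: the recursive construction in Lemma~\ref{lem:range-lb-dist} exists precisely to avoid the multiplicative $e^{n\eps}$ blowup you incur, replacing it with $n$ additive $O(\delta)$ losses. To prove the theorem in the intended parameter regime you need to bridge $x_a$ and $x_b$ through the interior-point machinery (or something equally $\delta$-frugal), not through group privacy alone.
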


Besides being a natural relaxation of the interior point problem, this \emph{undominated point problem} is of interest because we require new techniques to obtain lower bounds against it. Note that if we ask for a mechanism that works over a bounded domain (e.g., $[0,1]$), then the problem is trivial. Moreover, this means that proving a lower bound on the problem when the domain is $\N$ cannot possibly go by way of constructing a single distribution that every differentially private mechanism fails on. The reason is that for any distribution $\cD$ over $\N^n$, there is some number $K$ where $\Pr_{D \getsr \cD}[\max{D} > K] \le 2/3$, so the mechanism that always outputs $K$ solves the problem.

\begin{proof}
Without loss of generality we may take $X=\N$, since every totally ordered domain with no maximum element contains an infinite sequence $x_0<x_1<x_2<x_3<\dots$.
To prove our lower bound we need to take advantage of the fact that we only need to show that for each differentially private mechanism $M$ there exists a distribution, depending on $M$, over which $M$ fails. To this end, for an increasing function $T:\N \to \N$, we say that a mechanism $M:\N^n \to \N$ is ``$T$-bounded'' if $\Pr[M(x_1,\dots,x_n) \ge T(\max_i x_i)] < 1/8$. That is, $M$ is $T$-bounded if it is unlikely to output anything larger than $T$ applied to the max of its input. Note that any mechanism is $T$-bounded for some function $T$.

We can then reduce the impossibility of the undominated point problem for $T$-bounded mechanisms to our lower bound for the interior point problem. First, fix a function $T$. Suppose for the sake of contradiction that there were a $T$-bounded mechanism $M$ that solves the undominated point problem on $(x_1,\dots,x_n)$ with probability at least $7/8$. Then by a union bound, $M$ must output something in the interval $[\min_i x_i, T(\max_i x_i))$ with probability at least $3/4$. Now, for $d\in\N$, consider the data universe $X_d=\{ 1, T(1), T(T(1)), T(T(T(1))), \dots, T^{(d-1)}(1)  \}$ and the differentially private mechanism $M':X_d^n\rightarrow X_d$ that, on input a database $D$ runs $M(D)$ and rounds the answer down to the nearest $T^i(d)$. Then $M'$  solves the interior point problem on the domain $X_d$ with probability at least $3/4$. By our lower bound for the interior point problem we have $n = \Omega(\log^* d)$, which is a contradiction since $n$ is fixed and $d$ is arbitrary.

\end{proof}
}

\stocrm{
\subsection{Properly Learning Point Functions with Pure Differential Privacy}

Using similar ideas as in the above section, we revisit the problem of privately learning the concept class $\point_{\N}$ of point functions over the natural numbers. Recall that a point function $c_x$ is defined by $c_x(y) = 1$ if $x = y$ and evaluates to $0$ otherwise. Beimel et al. \cite{BeimelKaNi10} used a packing argument to show that $\point_{\N}$ cannot be properly learned with {\em pure} $\eps$-differential privacy (i.e., $\delta{=}0$). However, more recent work of Beimel et al. \cite{BeimelNiSt13a} exhibited an $\eps$-differentially private \emph{improper} learner for $\point_{\N}$ with sample complexity $O(1)$. Their construction required an uncountable hypothesis class, with each concept being described by a real number. This left open the question of whether $\point_{\N}$ could be learned with a countable hypothesis class, with each concept having a finite description length.

We resolve this question in the negative. Specifically, we show that it is impossible to learn (even improperly) point functions over an infinite domain with pure differential privacy using a countable hypothesis class.

%

\begin{theorem} \label{thm:points}
Let $X$ be an infinite domain, let $H$ be a countable collection of hypotheses $\{h : X \to \{0, 1\}\}$, and let $\eps\geq 0$. Then there is no $\eps$-differentially private $(1/3, 1/3)$-accurate PAC learner for points over $X$ using the hypothesis class $H$.
\end{theorem}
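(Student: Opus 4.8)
The plan is to argue by contradiction: fix a learner $L\colon(X\times\zo)^n\to H$ that is $\eps$-differentially private and $(1/3,1/3)$-accurate for $\point_X$, and derive an impossibility. The argument is \emph{mechanism-dependent}: we first use the countability of $H$ to replace $L$'s (possibly infinite) effective range by a \emph{finite} set $F=F(L)\subseteq H$ of hypotheses that $L$ must output almost surely, and then use the infinitude of $X$ to locate a distribution on which this finite family cannot be accurate --- a ``hard distribution tailored to $L$,'' in the spirit of the undominated-point lower bound.

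The steps I would carry out, in order: (1) For distinct $x,b\in X$ consider the distribution $\cD_{x,b}$ uniform on $\{x,b\}$ with target $c_x$. Since $\error_{\cD_{x,b}}(c_x,h)=\tfrac12\1[h(x)\neq1]+\tfrac12\1[h(b)\neq0]$ takes values in $\{0,\tfrac12,1\}$, accuracy $(1/3,1/3)$ forces $\Pr_{D,L}[h(x)=1\wedge h(b)=0]\ge 2/3$, where $D$ consists of $n$ i.i.d.\ labeled draws from $\cD_{x,b}$ and $h=L(D)$. The support of $\cD_{x,b}^n$ is a finite set of databases (all rows in $\{(x,1),(b,0)\}$), so by averaging there is a fixed database $D^*_{x,b}$ in this support with $\Pr_L[L(D^*_{x,b})\in G_{x,b}]\ge 2/3$, where $G_{x,b}=\{h\in H: h(x)=1,\ h(b)=0\}$. (2) Fix any single database $D_0\in(X\times\zo)^n$ and let $\nu_0$ be the distribution of $L(D_0)$ on the countable set $H$; choose a finite $F\subseteq H$ with $\nu_0(H\setminus F)\le e^{-n\eps}/3$ (possible since $\nu_0$ is a probability measure on a countable set). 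Any two databases of size $n$ differ in at most $n$ rows, so group privacy for \emph{pure} differential privacy gives, for every pair $x\neq b$,
\[\Pr[L(D^*_{x,b})\notin F]\ \le\ e^{n\eps}\,\nu_0(H\setminus F)\ \le\ 1/3 ,\]
i.e.\ $\Pr[L(D^*_{x,b})\in F]\ge 2/3$. (3) Combining (1) and (2) for the same random variable $L(D^*_{x,b})$ yields $\Pr[L(D^*_{x,b})\in F\cap G_{x,b}]\ge 2/3+2/3-1>0$, so $F\cap G_{x,b}\neq\emptyset$: for \emph{every} pair of distinct $x,b\in X$ there is $g\in F$ with $g(x)=1$ and $g(b)=0$. (4) Hence the map $\psi\colon X\to\zo^{F}$, $\psi(x)=\bigl(g(x)\bigr)_{g\in F}$, is injective (distinct $x,b$ are separated by the witnessing $g$), contradicting that $X$ is infinite while $\zo^F$ has only $2^{|F|}$ elements. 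Equivalently, pigeonhole produces distinct $x,b\in X$ with $\psi(x)=\psi(b)$; then $G_{x,b}\cap F=\emptyset$, so $\Pr[L(D^*_{x,b})\in G_{x,b}]\le\Pr[L(D^*_{x,b})\notin F]\le 1/3$, contradicting step (1) --- and $\cD_{x,b}$ (with target $c_x$) is the promised hard distribution for $L$.

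The main obstacle --- and the place where the setup must be just right --- is step (2): obtaining $\Pr[L(D^*_{x,b})\in F]\ge 2/3$ \emph{simultaneously for all infinitely many pairs $x,b$}, with no degradation as $n$ or $\eps$ grows. This works only because $F$ is chosen \emph{after} $L$, from the fixed distribution $\nu_0$, so $\nu_0(H\setminus F)$ can be pushed below $e^{-n\eps}/3$; the (finite but possibly enormous) group-privacy factor $e^{n\eps}$ is then harmless. This is exactly where $\delta=0$ is essential: an additive $\delta$ in group privacy would contribute a fixed positive term $(e^{n\eps}-1)\delta$ that shrinking $F$ cannot remove, consistent with the fact that over \emph{finite} domains such improper learners do exist. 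The remaining points are routine: the averaging in step (1) (including $n=1$, where $\cD_{x,b}^n$ has a two-element support), verifying that $\point_X$ genuinely forces $h(x)=1\wedge h(b)=0$ under $\cD_{x,b}$, and noting that only $\alpha<1/2$ and $\beta<1/2$ are actually used, so the same proof handles any constant accuracy parameters.
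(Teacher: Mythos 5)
Your proposal is correct and follows essentially the same strategy as the paper's, with a somewhat cleaner endgame. Both arguments rest on the same three ingredients: (i) countability of $H$, used to extract a finite set $F$ (the paper calls it $H'$) carrying almost all of the output mass on a single reference database; (ii) the accuracy guarantee on two-point distributions $\cD_{x,b}$ together with group privacy for pure $\eps$-DP (any two size-$n$ databases differ in at most $n$ rows), which transfers the accuracy on the tailored database $D^*_{x,b}$ into a nontrivial probability statement at the fixed reference database; and (iii) pigeonhole on the restriction map $X \ni x \mapsto (g(x))_{g\in F} \in \{0,1\}^F$, which cannot be injective since $X$ is infinite and $\{0,1\}^F$ is finite. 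Where you diverge from the paper is the final step: the paper inductively builds an infinite sequence $x_1,x_2,\dots$ whose good-hypothesis sets $G_i \subseteq H'$ are pairwise disjoint (the inductive step is exactly your pigeonhole) and then finishes with a packing argument, namely $\sum_i \Pr[M((0,1)^n)\in G_i]\le 1$ while each term is at least $\frac12 e^{-\eps n}$. You instead apply pigeonhole once to produce a single bad pair $(x,b)$ with $F\cap G_{x,b}=\emptyset$ and contradict accuracy directly; the two are equivalent, and yours is simply a more direct packaging. One refinement in your version is worth flagging: by choosing $F$ with tail mass $\nu_0(H\setminus F)\le e^{-n\eps}/3$, you get $\Pr[L(D^*_{x,b})\in F]\ge 2/3$ uniformly over all pairs, absorbing the group-privacy factor $e^{n\eps}$ entirely. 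The paper fixes the tail mass of $H'$ at $1/6$ and then treats $\Pr[M(R_i)\notin H']\le 1/6$ as given; this step really does require the $e^{-\eps n}$-scaled tail that you use, and your explicit treatment both closes that small gap and makes transparent why $\delta=0$ is essential (an additive $\delta$ in group privacy would leave an irreducible term that shrinking $F$ cannot remove).
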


\begin{remark}
A learner implemented by an algorithm (i.e. a probabilistic Turing machine) must use a hypothesis class where each hypothesis has a finite description. Note that the standard proper learner for $\point_X$ can be implemented by an algorithm. However, a consequence of our result is that there is no algorithm for privately learning $\point_X$.
\end{remark}

\begin{proof}
For clarity, and without loss of generality, we assume that $X=\N$.
Suppose for the sake of contradiction that we had an $\eps$-differentially private learner $M$ for point functions over $\N$ using hypothesis class $H$. Since $H$ is countable, there is a finite subset of hypotheses $H'$ such that $M((0, 1)^n) \in H'$ with probability at least $5/6$, where $(0, 1)^n$ is the dataset where all examples are the point $0$ with the label $1$. Indeed
$\sum_{h\in H}\Pr[M((0, 1)^n)=h]=1$, so some finite partial sum of this series is at least $5/6$.
Now to each point $x \in \N$ we will associate a distribution $\cD_x$ on $\N$ and let $G_x \subseteq H'$ be the set of hypotheses $h$ \emph{in the finite set $H'$} for which
\[\Pr_{y \sim \cD_x}[c_x(y) = h(y)] \ge 2/3.\]
We establish the following claim.
\begin{claim} \label{claim:disjoint-good-hyp}
There is an infinite sequence of points $x_1,x_2,x_3,\dots$ together with distributions $\cD_i := \cD_{x_i}$ such that the sets $G_i := G_{x_i}$ are all disjoint.
\end{claim}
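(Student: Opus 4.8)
The plan is to exhibit the sequence and the distributions explicitly, arranging things so that in fact \emph{every} $G_i$ is empty; this is stronger than pairwise disjointness, and the only input it requires is the finiteness of $H'$. First I would find an infinite subset $I\subseteq X$ on which all hypotheses in $H'$ are simultaneously constant: color each $y\in X$ by the tuple $\chi(y)=(h(y))_{h\in H'}\in\{0,1\}^{H'}$; since there are only $2^{|H'|}$ colors and $X$ is infinite, some color class $I$ is infinite, and by construction every $h\in H'$ is constant on $I$.

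Enumerate $I=\{z_1,z_2,z_3,\dots\}$ with the $z_j$ distinct. For each $i\ge 1$ I would set $x_i:=z_{2i-1}$ and take $\cD_i:=\cD_{x_i}$ to be the uniform distribution on the two-point set $\{z_{2i-1},z_{2i}\}$ (the distributions associated with the other points of $X$ play no role in the claim and may be fixed arbitrarily). The $x_i$ are distinct, so this is a legitimate infinite sequence. Now fix $i$ and an arbitrary $h\in H'$: since $z_{2i-1},z_{2i}\in I$ we have $h(z_{2i-1})=h(z_{2i})=:v\in\{0,1\}$, while $c_{x_i}(z_{2i-1})=c_{x_i}(x_i)=1$ and $c_{x_i}(z_{2i})=0$ because $z_{2i}\ne x_i$. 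Hence
\[
\Pr_{y\sim\cD_i}\bigl[c_{x_i}(y)=h(y)\bigr]=\tfrac12\,\1[v=1]+\tfrac12\,\1[v=0]=\tfrac12<\tfrac23,
\]
so $h\notin G_i$. As $h$ was arbitrary, $G_i=\emptyset$, and in particular the $G_i$ are pairwise disjoint.

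The only step that requires any thought is the first one, locating the infinite monochromatic set $I$, and that is an immediate pigeonhole consequence of $|H'|<\infty$; everything after it is the one-line computation above, whose content is simply that on a pair of points that no hypothesis in $H'$ can distinguish, any such hypothesis agrees with the point function $c_{x_i}$ on exactly one of the two points --- probability exactly $1/2$, below the threshold $2/3$ used to define $G_i$. I do not expect a real obstacle in proving this claim; the delicate part of the overall theorem is the separate step that turns ``some $G_i=\emptyset$'' into an actual failure of the private learner $M$.
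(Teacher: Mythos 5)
Your proof is correct and proves the claim as stated, but it takes a genuinely different route from the paper. The paper argues by induction: having fixed $x_1,\dots,x_i$ (and hence the finite set $B=\bigcup_{j\le i}G_j$), it finds a pair $(x,x')$ such that every $h\in B$ with $h(x)=1$ also has $h(x')=1$, sets $x_{i+1}=x$, $\cD_{i+1}$ uniform on $\{x,x'\}$, and checks that no $h\in B$ can then agree with $c_{x_{i+1}}$ on more than one of the two points, so $G_{i+1}\cap B=\emptyset$. Your proof instead does the pigeonhole once and for all: you extract an infinite set $I$ on which every $h\in H'$ is constant, pair up consecutive elements of $I$, and observe that each $h\in H'$, being constant on the pair, must disagree with the point function on exactly one of the two points, giving $\Pr_{\cD_i}[c_{x_i}=h]=1/2$ exactly. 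The conclusion you get is strictly stronger than the claim: every $G_i$ is empty, rather than merely the $G_i$ being pairwise disjoint (the paper's inductive construction may well leave $G_{i+1}$ nonempty, containing hypotheses of $H'\setminus B$). Your version is also somewhat cleaner about two small points the paper glosses over: it guarantees the $x_i$ are distinct (which the paper's ``some $x,x'\in\N$'' does not explicitly arrange), and it spells out the pigeonhole argument for the existence of the pair, which the paper only asserts.

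One remark worth making about the interaction with the surrounding proof of Theorem~\ref{thm:points}: because your construction forces $G_i=\emptyset$, the downstream packing step (summing $\Pr[M((0,1)^n)\in G_i]$ over the disjoint $G_i$) becomes moot --- the contradiction is reached one step earlier, at the assertion $\Pr[M(R_i)\in G_i]\ge 1/2$, since that probability is now identically $0$. You flag this yourself, and it is not a defect in your argument; it simply means your stronger conclusion front-loads the contradiction. Either way the claim is proved, and any subtlety about how the $5/6$-mass set $H'$ is used to get the $\ge 1/2$ bound belongs to the theorem's proof, not to the claim.
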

Given the claim, the result follows by a packing argument \cite{HardtTa10, BeimelKaNi10}. By the utility of $M$, for each $\cD_i$ there is a database $R_i \in (\N \times \{0, 1\})^n$ in the support of $\cD_i^n$ such that $\Pr[M(R_i) \in G_i] \ge 2/3 - 1/6 = 1/2$. By changing the database $R_i$ to $(0, 1)^n$ one row at a time while applying the differential privacy constraint, we see that
\[\Pr[M((0, 1)^n) \in G_i] \ge \frac{1}{2}e^{-\eps n}.\]
It is impossible for this to hold for infinitely many disjoint sets $G_i$.
\end{proof}

\begin{proof}[Proof of Claim \ref{claim:disjoint-good-hyp}]
We inductively construct the sequence $(x_i)$, starting with $x_1 = 0$. Now suppose we have constructed $x_1, \dots, x_i$ with corresponding good hypothesis sets $G_1, \dots, G_i$. Let $B = \cup_{j = 1}^i G_i$ be the set of hypotheses with wish to avoid. Note that $B$ is a finite set of hypotheses, so there are some $x, x' \in \N$ for which every $h \in B$ with $h(x) = 1$ also has $h(x') = 1$. Let $x_{i+1} = x$ and $\cD_{i}$ be distributed uniformly over $x$ and $x'$. Then for every hypothesis $h \in B$,
\[\Pr_{y \sim D_i}[c_{x_{i+1}}(y) = h(y)] \le 1/2,\]
and hence $G_{i+1}$ is disjoint from the preceding $G_j$'s.
\end{proof}
}

\paragraph{Acknowledgments.} We thank Amos Beimel, Adam Smith, \stoctext{and} Jonathan Ullman\stocrm{, and anonymous reviewers} for helpful conversations and suggestions that helped guide our work. We thank Gautam Kamath for pointing us to references on distribution learning. We also thank Haim Kaplan and Roodabeh Safavi for pointing to us an error in the proof of Theorem~\ref{thm:sanitization-vs-range} in an earlier version of this paper.

\bibliographystyle{alpha}

\newcommand{\etalchar}[1]{$^{#1}$}

\stocrm{

\appendix

\section{The Choosing Mechanism} \label{app:choosing}

We supply the proofs of privacy and utility for the choosing mechanism.

\begin{proof}[Proof of Lemma \ref{lem:CMprivacy}]
Let $A$ denote the choosing mechanism (Algorithm \ref{alg:choosing}). Let $S,S'$ be neighboring databases of $m$ elements. We need to show that $\Pr[A(S)\in R]\leq \exp(\epsilon)\cdot\Pr[A(S')\in R]+\delta$ for every set of outputs $R \subseteq \cF \cup \{\bot\}$. Note first that $\OPT(S) = \max_{f \in \cF} \{q(S, f)\}$ has sensitivity at most $1$, so by the properties of the Laplace Mechanism,
\begin{eqnarray}\label{eq:CMbot}
\Pr[A(S)=\bot] &=& \Pr\left[\widetilde{\OPT}(S)<\frac{8}{\epsilon}\ln(\frac{4k}{\beta\epsilon\delta})\right] \nonumber\\
&\leq& \exp(\frac{\epsilon}{4})\cdot\Pr\left[\widetilde{\OPT}(S')<\frac{8}{\epsilon}\ln(\frac{4k}{\beta\epsilon\delta})\right] \nonumber\\
&=& \exp(\frac{\epsilon}{4})\cdot\Pr[A(S')=\bot].
\end{eqnarray}
Similarly, we have $\Pr[A(S) \ne \bot] \le \exp(\eps/4) \Pr[A(S') \ne \bot]$. Thus, we my assume below that $\bot \not\in R$. (If $\bot \in R$, then we can write $\Pr[A(S) \in R] = \Pr[A(S) = \bot] + \Pr[A(S) \in R \setminus \{\bot\}]$, and similarly for $S'$.)

\paragraph{Case~(a): $\OPT(S) < \frac{4}{\epsilon}\ln(\frac{4k}{\beta\epsilon\delta})$.} It holds that
\begin{align*}
\Pr[A(S) \in R] &\le \Pr[A(S)\neq\bot] \\
&\leq \Pr\left[\Lap\left(\frac{4}{\epsilon}\right)>\frac{4}{\epsilon}\ln\left(\frac{4k}{\beta\epsilon\delta}\right)\right] \\
&\le \delta \le \Pr[A(S') \in R] + \delta.
\end{align*}

\paragraph{Case~(b): $\OPT(S)\geq\frac{4}{\epsilon}\ln(\frac{4k}{\beta\epsilon\delta})$.} Let $G(S)$ and $G(S')$ be the sets used in step~2 in the execution $S$ and on $S'$ respectively.
We will show that the following two facts hold:\\

\noindent
$Fact \; 1:$ For every $f\in G(S)\setminus G(S')$, it holds that $\Pr[A(S)=f]\leq \frac{\delta}{k}$.\\

\noindent
$Fact \; 2:$ For every possible output $ f \in G(S) \cap G(S')$, it holds that $\Pr[A(S)=f]\leq e^{\epsilon} \cdot\Pr[A(S')=f]$.\\

We first show that the two facts imply that the lemma holds for Case~(b).
Let $B \triangleq G(S)\setminus G(S')$, and note that as $q$ is of $k$-bounded growth, $|B|\leq k$.
Using the above two facts, for every set of outputs $R \subseteq \cF$ we have
\begin{eqnarray*}
\Pr[A(S)\in R] &=&  \Pr[A(S)\in R\setminus B] + \sum_{f \in R \cap B}\Pr[A(S)=f]\\
&\leq& e^\epsilon \cdot \Pr[A(S')\in R\setminus B] + |R \cap B|\frac{\delta}{k}\\
&\leq& e^\epsilon \cdot \Pr[A(S')\in R] + \delta.\\
\end{eqnarray*}

To prove Fact~1, let $f\in G(S)\setminus G(S')$.
That is, $q(S,f)\geq1$ and $q(S',f)=0$. As $q$ has sensitivity at most $1$, it must be that $q(S,f)=1$. As there exists $\hat f\in S$ with $q(S,\hat f)\geq\frac{4}{\epsilon}\ln(\frac{4k}{\beta\epsilon\delta})$, we have that
$$
\Pr[A(S)=f]\leq
\Pr\left[  \begin{array}{c}
	\text{The exponential}\\
	\text{mechanism chooses $f$}
\end{array} \right]
\leq \frac{\exp(\frac{\epsilon}{4}\cdot 1)}{\exp(\frac{\epsilon}{4}\cdot\frac{4}{\epsilon}\ln(\frac{4k}{\beta\epsilon\delta}))}
= \exp\left(\frac{\epsilon}{4}\right)\frac{\beta\eps\delta}{4k},
$$
which is at most $\delta/k$ for $\epsilon\leq2$.

To prove Fact~2, let $f \in G(S) \cap G(S')$ be a possible output of $A(S)$. We use the following Fact~3, proved below.\\

\noindent
$Fact \; 3:$ $\sum\limits_{h\in G(S')}\exp(\frac{\epsilon}{4}q(S',h))\leq e^{\epsilon/2}\cdot\sum\limits_{h\in G(S)}\exp(\frac{\epsilon}{4}q(S,h))$.\\

\noindent
Using Fact~3, for every possible output $f\in G(S)\cap G(S')$ we have that
\begin{eqnarray*}
&&\frac{\Pr[A(S)=f]}{\Pr[A(S')=f]}\\
&& \;\;\;\;\;\; =  \;\;\;
\left.\left(
\Pr[A(S)\neq\bot] \cdot
\frac{\exp(\frac{\epsilon}{4}q(f,S))}
{\sum_{h\in G(S)}\exp( \frac{\epsilon}{4} q(h,S) )}\right) \middle/ \left(
\Pr[A(S')\neq\bot] \cdot
\frac{\exp(\frac{\epsilon}{4}q(f,S'))}
{\sum_{h\in G(S')}\exp( \frac{\epsilon}{4} q(h,S') )}\right) \right.\\
&& \;\;\;\;\;\; = \;\;\;
\frac{\Pr[A(S)\neq\bot]}{\Pr[A(S')\neq\bot]}\cdot
\frac{\exp(\frac{\epsilon}{4}q(f,S))}
{\exp(\frac{\epsilon}{4}q(f,S'))} \cdot
\frac{\sum_{h\in G(S')}{\exp( \frac{\epsilon}{4} q(h,S') )}}{\sum_{h\in G(S)}{\exp( \frac{\epsilon}{4} q(h,S) )}}
\leq
e^\frac{\epsilon}{4}\cdot e^\frac{\epsilon}{4}\cdot e^\frac{\epsilon}{2}=e^\epsilon.
\end{eqnarray*}

We now prove Fact~3. Let $\XXX \triangleq \sum_{h \in G(S)} \exp(\frac{\eps}{4} q(S, h))$. Since there exists a solution $\hat{f}$ s.t. $q(S,\hat{f})\geq\frac{4}{\epsilon}\ln(\frac{4k}{\beta\epsilon\delta})$, we have
$\XXX\geq\exp(\frac{\epsilon}{4}\cdot\frac{4}{\epsilon}\ln(\frac{4k}{\beta\epsilon\delta}))\geq\frac{4k}{\epsilon}$.

Now, recall that $q$ is of $k$-bounded growth, so $|G(S')\setminus G(S)|\leq k$, and every $h\in(G(S')\setminus G(S))$ satisfies $q(S',h)=1$. Hence,
\begin{eqnarray*}
\sum_{h\in G(S')}\exp\left(\frac{\epsilon}{4}q(S',h)\right) &\leq&  k\cdot\exp\left(\frac{\epsilon}{4}\right)+ \sum_{h\in G(S')\cap G(S)}\exp\left(\frac{\epsilon}{4}q(S',h)\right)\\
&\leq&  k\cdot\exp\left(\frac{\epsilon}{4}\right)+ \exp\left(\frac{\epsilon}{4}\right)\cdot\sum_{h\in G(S')\cap G(S)}\exp\left(\frac{\epsilon}{4}q(S,h)\right)\\
&\leq&  k\cdot\exp\left(\frac{\epsilon}{4}\right)+ \exp\left(\frac{\epsilon}{4}\right)\cdot\sum_{h\in G(S)}\exp\left(\frac{\epsilon}{4}q(S,h)\right)\\
&=&  k\cdot e^{\epsilon/4}+e^{\epsilon/4}\cdot\XXX\leq e^{\epsilon/2}\XXX,\\
\end{eqnarray*}
where the last inequality follows from the fact that $\XXX \ge 4k/\eps$. This concludes the proof of Fact~3, and completes the proof of the lemma.
\end{proof}

The utility analysis for the choosing mechanism is rather straightforward:

\begin{proof}[Proof of Lemma \ref{lem:CMscoreOne}]
Recall that the mechanism defines $\widetilde{\OPT}(S)$ as $\OPT(S)+\Lap(\frac{4}{\epsilon})$.
Note that the mechanism succeeds whenever $\widetilde{\OPT}(S)\geq\frac{8}{\epsilon}\ln(\frac{4k}{\beta\epsilon\delta})$.
This happens provided the $\Lap\left(\frac{4}{\eps}\right)$ random variable is at most $\frac{8}{\eps}\ln(\frac{4k}{\beta\eps\delta})$, which happens with probability at least $(1-\beta)$.
\end{proof}

\begin{proof}[Proof of Lemma \ref{lem:CMutility}]
Note that if $\OPT(S)<\frac{16}{\epsilon}\ln(\frac{4km}{\beta\epsilon\delta})$, then every solution is a good output, and the mechanism cannot fail. Assume, therefore, that there exists a solution $f$ s.t. $q(f,S)\geq\frac{16}{\epsilon}\ln(\frac{4km}{\beta\epsilon\delta})$, and
recall that the mechanism defines $\widetilde{\OPT}(S)$ as $\OPT(S)+\Lap(\frac{4}{\epsilon})$. As in the proof of Lemma \ref{lem:CMscoreOne}, with probability at least $1 - \beta/2$, we have $\widetilde{\OPT}(S) \ge \frac{8}{\eps}\ln\left(\frac{4k}{\beta\eps\delta}\right)$. Assuming this event occurs, we will show that with probability at least $1 - \beta/2$, the exponential mechanism chooses a solution $f$ s.t. $q(S,f)\geq \opt(S)-\frac{16}{\epsilon}\ln(\frac{4km}{\beta\epsilon\delta})$.

By the growth-boundedness of $q$, and as $S$ is of size $m$, there are at most $km$ possible solutions $f$ with $q(S,f)>0$. That is, $|G(S)|\leq km$. By the properties of the Exponential Mechanism, we obtain a solution as desired with probability at least
\[\left(1- km\cdot\exp\left(-\frac{\epsilon}{4}\cdot\frac{16}{\epsilon}\ln\left(\frac{4km}{\beta\epsilon\delta}\right)\right)\right)\geq\left(1-\frac{\beta}{2}\right).\]

By a union bound, we get that the choosing mechanism outputs a good solution with probability at least $(1-\beta)$.
\end{proof}

\section{Interior Point Fingerprinting Codes} \label{app:range-fpc}

Fingerprinting codes were introduced by Boneh and Shaw \cite{BonehSh98} to address the problem of watermarking digital content. Suppose a content distributor wishes to distribute a piece of digital content to $n$ legitimate users in such a way that any pirated copy of that content can be traced back to any user who helped in producing the copy. A {\em fingerprinting code} is a scheme for assigning each $n$ users a codeword that can be hidden in their copy of the content, and then be uniquely traced back to the identity of that user. Informally, a finger printing code is {\em fully collusion-resistant} if when an arbitrary coalition $T$ of users combine their codewords to produce a new pirate codeword the pirate codeword can still be successfully traced to a member of $T$, provided the pirate codeword satisfies a certain {\em marking assumption}. Traditionally, this marking assumption requires that if all users in $T$ see the same bit $b$ at index $j$ of their codewords, then index $j$ of their combined codeword must also be $b$.

Recent work has shown how to use fingerprinting codes to obtain lower bounds in differential privacy \cite{BUV14, DworkTaThZh14, BassilySmTh14}. Roughly speaking, these works show how any algorithm with nontrivial accuracy for a given task can be used to create a pirate algorithm that satisfies the marking assumption for a fingerprinting code. The security of the fingerprinting code means that the output of this algorithm can be traced back to one of its inputs. This implies that the algorithm is not differentially private.

We show how our lower bound for privately solving the interior point problem can also be proved by the construction of an object we call an {\em interior point fingerprinting code}. The difference between this object and a traditional fingerprinting code lies in the marking assumption. Thinking of our codewords as being from an ordered domain $X$, our marking assumption is that the codeword produced by a set of $T$ users must be an interior point of their codewords. The full definition of the code is as follows.

\begin{definition} \label{def:rfpc}
For a totally ordered domain $X$, an \emph{interior point fingerprinting code} over $X$ consists of a pair of randomized algorithms $(\Gen, \Trace)$ with the following syntax.
\begin{itemize}
\item $\Gen_n$ samples a codebook $C = (x_1, \dots, x_n) \in X^n$
\item $\Trace_n(x)$ takes as input a ``codeword'' $x \in X$ and outputs either a user $i \in [n]$ or a failure symbol $\bot$.
\end{itemize}
The algorithms $\Gen$ and $\Trace$ are allowed to share a common state (e.g. their random coin tosses).

The adversary to a fingerprinting code consists of a subset $T \subseteq [n]$ of users and a pirate algorithm $\cA : X^{|T|} \to X$. The algorithm $\cA$ is given $C|_T$, i.e. the codewords $x_i$ for $i \in T$, and its output $x \getsr \cA(C|_T)$ is said to be ``feasible'' if $x \in [\min_{i\in T} x_i, \max_{i \in T} x_i]$. The security guarantee of a fingerprinting code is that for all coalitions $T \subseteq [n]$ and all pirate algorithms $\cA$, if $x = \cA(C|_T)$, then we have
\begin{enumerate}
\item Completeness: $\Pr[\Trace(x) = \bot \land x \text{ feasible}] \le \gamma$, where $\gamma
\in[0,1]$ is the {\em completeness error}.
\item Soundness: $\Pr[\Trace(x) \in [n] \setminus T] \le \xi$, where $\xi\in[0,1]$ is the {\em soundness error}.
\end{enumerate}
The probabilities in both cases are taken over the coins of $\Gen, \Trace$, and $\cA$.
\end{definition}

\begin{remark}
We note that an interior point fingerprinting code could also be interpreted as an ordinary fingerprinting code (using the traditional marking assumption) with codewords of length $|X|$ of the form $000011111$.
As an example for using such a code, consider a vendor interested in fingerprinting movies.
Using an interior point fingerprinting code, the vendor could produce fingerprinted copies by simply splicing two versions of the movie.
\end{remark}

We now argue as in \cite{BUV14} that the existence of an interior point fingerprinting code yields a lower bound for privately solving the interior point problem.

\begin{lemma} \label{lem:rfpc-to-dp}
Let $\eps\le1$, $\delta\le 1/(12n)$, $\gamma\le1/2$ and $\xi\le 1/(33n)$.
If there is an interior point fingerprinting code on domain $X$ for $n$ users with completeness error $\gamma$ and soundness error $\xi$, then there is no $(\eps, \delta)$-differentially private algorithm that, with probability at least $2/3$, solves the interior point problem on $X$ for databases of size $n-1$.
\end{lemma}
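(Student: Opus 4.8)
The plan is to run the hypothetical private interior-point mechanism against the fingerprinting code in the style of \cite{BUV14}, adapted to the fact that the mechanism takes $n-1$ rows while the code has $n$ users. Suppose toward a contradiction that $M$ is an $(\eps,\delta)$-differentially private algorithm that, on every database of size $n-1$, outputs an interior point with probability at least $2/3$. First I would sample a codebook $C = (x_1,\dots,x_n)$ from $\Gen_n$, set $C' = (x_1,\dots,x_{n-1})$, and run $x \leftarrow M(C')$; observe that $x$ is exactly the output of the pirate algorithm $\cA = M$ run against the coalition $T_0 = \{1,\dots,n-1\}$, whose view is $C|_{T_0} = C'$.

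Next I would combine the accuracy of $M$ with the completeness and soundness of the code to show that $\Trace(x)$ frequently outputs a user in $T_0$. Since $M$ solves the interior-point problem, $x$ is feasible for $T_0$ (i.e. $\min_{i\le n-1}x_i \le x \le \max_{i\le n-1}x_i$) with probability at least $2/3$; by completeness (using $\gamma \le 1/2$) the event that $x$ is feasible but $\Trace(x)=\bot$ has probability at most $\gamma$, so $\Pr[\Trace(x) \in \{1,\dots,n\}] \ge 2/3 - \gamma \ge 1/6$; and since $x$ is independent of $x_n$, soundness against $T_0$ gives $\Pr[\Trace(x) = n] \le \xi$, whence $\sum_{i=1}^{n-1}\Pr[\Trace(x)=i] \ge 1/6 - \xi$.

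The core of the argument is then to bound each individual term $\Pr[\Trace(x) = i]$ for $i \le n-1$ using privacy. For each such $i$, I would form the database $C'_{-i}$ obtained from $C'$ by overwriting its $i$-th row with $x_n$. This $C'_{-i}$ has two crucial properties: it differs from $C'$ in a single row, and it is independent of $x_i$ and equals the view $C|_{\{1,\dots,n\}\setminus\{i\}}$ of the coalition $T_i = \{1,\dots,n\}\setminus\{i\}$. Conditioning on the (correlated) coins of $\Gen$ and $\Trace$ and applying $(\eps,\delta)$-differential privacy of $M$ over its own independent coins gives $\Pr[\Trace(M(C'))=i] \le e^\eps\Pr[\Trace(M(C'_{-i}))=i] + \delta$, and soundness against $T_i$, for which $i$ is the unique non-member, gives $\Pr[\Trace(M(C'_{-i}))=i] \le \xi$. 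Summing over $i$ yields $1/6 - \xi \le (n-1)(e^\eps\xi + \delta)$; substituting $\eps \le 1$, $\xi \le 1/(33n)$, and $\delta \le 1/(12n)$ makes the right-hand side strictly smaller than the left, the desired contradiction. The main obstacle to get right is the construction of $C'_{-i}$: it must simultaneously be a one-row neighbor of $C'$ (so that differential privacy of $M$ applies) and be realizable as a pirate's input for a coalition that excludes user $i$ (so that soundness applies), and replacing $x_i$ by $x_n$ is precisely what reconciles these two demands; a secondary point is to keep careful track of the fact that $\Gen$ and $\Trace$ may share randomness, so the privacy step must be applied after conditioning on that shared randomness.
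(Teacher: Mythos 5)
Your proposal is correct and uses essentially the same reduction as the paper: run the interior-point mechanism as a pirate on the coalition $[n-1]$, and for each $i\le n-1$ use differential privacy (conditioned on the shared randomness of $\Gen$ and $\Trace$) to compare against the neighboring database $C|_{[n]\setminus\{i\}}$ obtained by overwriting row $i$ with $x_n$, then invoke soundness against that coalition. The only difference is bookkeeping: the paper pigeonholes to a single $i^*$ with $\Pr[\Trace(x)=i^*]\ge (2/3-\gamma)/n$ and applies privacy once, whereas you sum the privacy bound over all $i\le n-1$ and separately charge $\Pr[\Trace(x)=n]\le\xi$ to soundness --- a cleaner way to dispose of the $i^*=n$ boundary case that the paper leaves implicit.
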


\begin{proof}
Suppose for the sake of contradiction that there were a differentially private $\cA$ for solving the interior point problem on $X^{n-1}$. Let $T = [n-1]$, and let $x = \cA(C|_T)$ for a codebook $C \getsr \Gen$.
\[1 - \gamma \leq \Pr[\Trace(x) \ne \bot \lor x \text{ not feasible}] \leq \Pr[\Trace(x) \ne \bot] + \frac{1}{3}.\]
Therefore, there exists some $i^* \in [n]$ such that
\[\Pr[\Trace(x) = i^*] \geq \frac{1}{n}\cdot\left(\frac{2}{3} - \gamma\right)\geq\frac{1}{6n}.\]
Now consider the coalition $T'$ obtained by replacing user $i^*$ with user $n$. Let $x' = \cA(C|_{T'})$, again for a random codebook $C \getsr \Gen$. Since $\cA$ is differentially private,
\[\Pr[\Trace(x') = i^*] \ge e^{-\eps}\cdot(\Pr[\Trace(x) = i^*]  - \delta) > \frac{1}{33n} \geq \xi,\]
contradicting the soundness of the interior point fingerprinting code.
\end{proof}

We now show how to construct an interior point fingerprinting code, using similar ideas as in the proof of Lemma \ref{lem:range-lb-dist}. For $n$ users, the codewords lie in a domain with size an exponential tower in $n$, allowing us to recover the $\log^*|X|$ lower bound for interior point queries.

\begin{lemma} \label{lem:rfpc-construction}
For every $n\in\N$ and $\xi>0$ there is an interior point fingerprinting code for $n$ users with completeness $\gamma = 0$ and soundness $\xi$ on a domain $X_n$ of size $|X_n| \le {\rm tower}^{(n + \log^*(2n^2/\xi))}(1)$.
\end{lemma}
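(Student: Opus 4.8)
The plan is to mirror the recursive construction in the proof of Lemma~\ref{lem:range-lb-dist}, building an interior point fingerprinting code for $n+1$ users out of one for $n$ users. For the base case $n=1$ take $X_1=\{0,1\}$, let $\Gen_1$ output the single codeword $c\getsr\{0,1\}$, and let $\Trace_1$ always output user $1$; completeness and soundness are then trivially $0$ (the set $[1]\setminus\{1\}$ is empty). For the inductive step, let $(\Gen_n,\Trace_n)$ be an interior point FPC for $n$ users on $X_n=[S]$ with $S=|X_n|$, completeness $0$, soundness $\xi_n$. Fix a base $b=b_{n+1}=\Theta((n+1)^2/\xi)$ and let $X_{n+1}$ consist of the length-$S$ strings over $[b]$, so $|X_{n+1}|=b^S$. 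Then $\Gen_{n+1}$ samples an inner codebook $(x_1,\dots,x_n)\getsr\Gen_n$, samples a reference string $y_0\getsr\{1,\dots,b-1\}^S$, and for each $1\le i\le n$ sets $y_i$ to agree with $y_0$ on the first $x_i$ digits, to have a forced $0$ in digit $x_i+1$, and to be uniform afterward; the output codebook $(y_0,y_1,\dots,y_n)$ is handed to the $n+1$ users $\{0,1,\dots,n\}$, with user $0$ receiving $y_0$. (The forced $0$ guarantees $y_i<y_0$ for $i\ge1$ and, more importantly, that the codewords $y_1,\dots,y_n$ occur in the same order as the prefix lengths $x_1,\dots,x_n$.) On input $y$, $\Trace_{n+1}$ sets $\ell$ to be the length of the longest common prefix of $y$ and $y_0$ and $M=\max_{1\le i\le n}x_i$; it outputs user $0$ if $\ell>M$ (in particular if $y=y_0$), outputs $\bot$ if $y>y_0$ or $\ell<\min_{1\le i\le n}x_i$, and otherwise returns $\Trace_n(\ell)$.

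There are three largely routine ingredients. First, a combinatorial prefix bound: if $y$ lies between two codewords of a coalition $T$, then, using the convention $x_0=+\infty$ for the reference user, $y$ shares $y_0$'s first $\min_{i\in T\cap\{1,\dots,n\}}x_i$ digits, so a feasible $y$ never triggers the ``$\ell<\min x_i$'' branch and $\ell$ is at least the relevant inner minimum --- this is what lets us feed $\ell$ into the inner trace. Second, a ``hidden digit'' lemma in the spirit of the $e^\eps/b(n)+\delta$ bound in Lemma~\ref{lem:range-lb-dist}: when $0\notin T$, digit $\max_{i\in T}x_i+1$ of $y_0$ is information-theoretically hidden from the pirate (the codeword attaining the maximum has a forced $0$ in that position), so the pirate's output agrees with $y_0$ there with probability at most $1/(b-1)$; this caps the extra soundness loss per recursion level. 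Third, the domain estimate: iterating $|X_{n+1}|=b_{n+1}^{|X_n|}$ and using $M^{{\rm tower}^{(k)}(x)}\le{\rm tower}^{(k+1)}(x+\log\log M)$ exactly as in the proof of Theorem~\ref{thm:range-lb}, the sum $\sum_{k\le n}\log\log b_k=O(n\log\log(n^2/\xi))$ contributes $\log^*(2n^2/\xi)$ to the height of the tower, giving $|X_n|\le{\rm tower}^{(n+\log^*(2n^2/\xi))}(1)$. The soundness bound then follows by reducing to an inner pirate that resamples $y_0$, reconstructs the codewords of $T$, runs the outer pirate, and outputs $\ell$: inner soundness contributes $\xi_n$ and the hidden-digit lemma contributes $\le1/(b_{n+1}-1)$, and the choice $b_k=\Theta(k^2/\xi)$ makes $\xi_{n+1}=\xi_n+O(\xi/n^2)$ sum to at most $\xi$ over all levels, while completeness stays exactly $0$.

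The hard part is making the analysis of $\Trace$ watertight for coalitions $T$ that contain the reference user $0$, and keeping completeness at exactly $0$. In Lemma~\ref{lem:range-lb-dist} the reduction only ever needed to process the full database (so the reference row was always present) and used the \emph{privacy} of $\hat\cM$ with respect to that row; a fingerprinting code instead has to tolerate a pirate that genuinely sees $y_0$ (when $0\in T$), and such a pirate can output $y$ whose common prefix with $y_0$ exceeds every $x_i$ with $i\in T\cap\{1,\dots,n\}$ yet is still at most $M$, which threatens to steer $\Trace_n(\ell)$ toward an inner user outside $T$. Resolving this is where the forced-$0$ order-matching property earns its keep: when $0\in T$ the feasible set is pinned down to $[\min_{i\in T\cap\{1,\dots,n\}}y_i,\;y_0]$, and one must argue that in this dangerous prefix regime $\Trace_{n+1}$ can be rerouted to output user $0$ (who is in $T$) rather than recursing --- while simultaneously ensuring $\Trace_{n+1}$ emits $\bot$ on inputs that are infeasible for every coalition (so that soundness, which is unconditional, still holds) but never on an input feasible for the actual $T$ (so that $\gamma=0$). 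Closing this case analysis cleanly --- quite possibly by handing the inner trace more than the single scalar $\ell$, e.g.\ a short derived database that is guaranteed feasible for $T\cap\{1,\dots,n\}$ --- is the crux of the argument, and is also what ultimately fixes the precise form of $b_n$ and hence the $2n^2/\xi$ inside the $\log^*$.
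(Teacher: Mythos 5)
Your construction follows the same recursive skeleton as the paper's, but you've added machinery the paper does not use, and this extra machinery is precisely what creates the unresolved case analysis you flag at the end. In the paper, $\Gen_{n+1}$ builds $x_i$ by simply copying the first $x_i'$ leading digits of $x_{n+1}$ and filling the remaining positions with fresh uniform digits --- there is no forced-$0$ digit and no engineered ordering property. The paper's $\Trace_{n+1}$ has only two outcomes: accuse user $n+1$ (when $x$ agrees with $x_{n+1}$ on more leading digits than any $x_i$ does), or else hand the agreement length $x'$ to $\Trace_n$. It has no $\bot$-branch of its own, and the base trace $\Trace_1$ always accuses the lone user, so by induction $\Trace_n$ never outputs $\bot$ and completeness with $\gamma=0$ is automatic. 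That design choice is what makes the recursion close.

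The key fact you are not using is that soundness in Definition~\ref{def:rfpc} places \emph{no feasibility hypothesis} on the pirate's output. So when you bound $\Pr[\Trace_{n+1}(x)\in[n]\setminus T]\le\Pr[\Trace_n(x')\in[n]\setminus T']$ you may invoke the inductive soundness bound for the derived inner pirate directly, without ever arranging that $x'$ be feasible for $C'|_{T'}$ --- which is exactly what your ``$0\in T$'' case analysis was trying, and failing, to engineer by hand, and what you suspected might require handing the inner trace ``more than a single scalar.'' The only genuinely new contribution at level $n+1$ is the hidden-digit bound $\Pr[\Trace_{n+1}(x)=n+1\wedge(n+1)\notin T]\le 1/b(n)$, which you have right; the case where the reference user \emph{is} in $T$ needs no special handling because accusing them is then not a soundness violation and the recursion covers the rest. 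Your combinatorial prefix observation, your hidden-digit lemma, and your tower-size estimate all match the paper's. The gap is the over-engineered $\Trace$ (the $\bot$-branches and the forced $0$) together with the consequent belief that you must certify feasibility of $\ell$ before recursing. Strip those two additions --- have $\Trace_{n+1}$ always either accuse the reference user or recurse on the raw agreement length --- and both completeness and soundness close as in the paper.
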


\begin{proof}
Let $b(n) = 2n^2/\xi$, and define the function $S$ recursively by $S(1)=1$ and $S(n+1) = b(n)^{S(n)}$.
By induction on $n$, we will construct codes for $n$ users over a domain of size $S(n)$ with perfect completeness and soundness at most $\sum_{j = 1}^n \frac{1}{b(j)}<\xi$. First note that there is a code with perfect completeness and perfect soundness for $n = 1$ user over a domain of size $S(1)=1$. Suppose we have defined the behavior of $(\Gen_n, \Trace_n)$ for $n$ users. Then we define
\begin{itemize}
\item $\Gen_{n+1}$ samples $C' = (x_1', \dots, x_n') \getsr \Gen_n$ and $x_{n+1} \getsr [S(n+1)]$.  For each $i = 1, \dots, n$, let $x_i$ be a base-$b(n)$ number (written $x_i^{(0)} x_i^{(1)}\dots x_i^{(S(n)-1)}$, where $x_i^{(0)}$ is the most significant digit) that agrees with $x_{n+1}$ in the $x_i'$ most-significant digits, and has random entries from $[b(n)]$ at every index thereafter. The output codebook is $C = (x_1, \dots, x_{n+1})$.
\item $\Trace_{n+1}(x)$ retrieves the codebook $C$ from its shared state with $\Gen_{n+1}$. Let $M$ be the maximum number of digits to which any $x_i$ (for $i = 1, \dots, n$) agrees with $x_{n+1}$. If $x$ agrees with $x_{n+1}$ on more than $M$ digits, accuse user $n+1$. Otherwise, let $x'$ be the number of indices on which $x$ agrees with $x_{n+1}$, and run $\Trace_n(x')$ with respect to codebook $C' = (x_1', \dots, x_n')$.
\end{itemize}
We reduce the security of this scheme to that of $(\Gen_n, \Trace_n)$.
To check completeness, let $T \subseteq [n+1]$ be a pirate coalition and let $\cA$ be a pirate algorithm. Consider the pirate algorithm $\cA'$ for codes on $n$ users that, given a set of codewords $C'|_{T'}$ where $T' = T \setminus \{n+1\}$, simulates $\Gen_{n+1}$ to produce a set of codewords $C|_T$ and outputs the number $x'$ of indices on which $x=\cA(C|_T)$ agrees with $x_{n+1}$.

If $x$ is feasible for $C|_T$ and $x^{M+1} \ne x_{n+1}^{M+1}$, then $x'$ is feasible for $C'|_{T'}$. Therefore,
\begin{align*}
\Pr[\Trace_{n+1}(x) = \bot \land x \text{ feasible for  } C|_T] &= \Pr[x^{M+1} \ne x_{n+1}^{M+1} \land \Trace_{n}(x') = \bot \land x \text{ feasible for } C|_T] \\
&\le \Pr[\Trace_{n}(x') = \bot \land x' \text{ feasible for } C'|_{T'}]=0,
\end{align*}
by induction, proving perfect completeness.

To prove soundness, let $M' = \max x_i'$. Then
\begin{align*}
\Pr[\Trace_{n+1}(x) \in [n+ 1] \setminus T] &\le \Pr[\Trace_{n+1}(x) = n+1 \land (n+1)\notin T] + \Pr[\Trace_{n+1}(x) \in [n] \setminus T]\\
&\le \Pr[x^{M'+1} = x_{n+1}^{M'+1} \land (n+1)\notin T] + \Pr[\Trace_n(x') \in [n] \setminus T] \\
&\le \frac{1}{b(n)} + \sum_{j = 1}^{n-1} \frac{1}{b(j)}=\sum_{j = 1}^n \frac{1}{b(j)}<\xi.
\end{align*}
\end{proof}

Combining Lemmas \ref{lem:rfpc-to-dp} and \ref{lem:rfpc-construction} yields Theorem \ref{thm:range-lb-informal}.

\remove{
\section{Another Reduction from Releasing Thresholds to the Interior Point Problem} \label{app:thresh-from-range2}

We give a somewhat different reduction showing that solving the interior point problem enables us to $\alpha$-accurately release thresholds with a $\polylog(1/\alpha)/\alpha$ blowup in sample complexity. It gives qualitatively the same parameters as Algorithm $Thresh$ used to prove Theorem \ref{thm:sanitization-vs-range}, but we believe the ideas used for this reduction may be useful in the design of other differentially private algorithms.

This reduction computes approximate $(\alpha/3)$-quantiles of its input, which can then be used to release thresholds with $\alpha$-accuracy. To do so, it uses the strategy of \cite{DworkNaPiRo10} of using a complete binary tree to generate a sequence of $k = 3/\alpha$ noise values. The tree has $k$ leaves and depth $\log k$, and at each node in the tree we sample a Laplace random variable. The noise value corresponding to a leaf is the sum of the samples along the path from that leaf to the root.

We take the sorted input database and divide it into equal-size blocks around the $k$ $(\alpha/3)$-quantiles, and perturb the boundaries of the blocks by the $k$ noise values. Solving the interior point problem on these buckets then gives approximate $(\alpha/3)$-quantiles. Moreover, the noisy bucketing step ensures that the final algorithm is differentially private.

We formally describe this algorithm as $Thresh_2$ below. Let $R$ be an $(\eps, \delta)$-differentially private mechanism for solving the interior point problem on $X$ that succeeds with probability at least $1- \alpha\beta/6$ on databases of size $m$. In the algorithm below, let $P(i)$ denote the set of prefixes of the binary representation of $i$ (including the empty prefix).

\begin{algorithm}[H]
\caption{$Thresh_2(D)$}
\textbf{Input:} Database $D = (x_1, \dots, x_n) \in X^n$
\begin{enumerate}
\item Sort $D$ in nondecreasing order
\item Let $k = 3/\alpha$ be a power of $2$
\item For each $s \in \{0, 1\}^{\ell}$ with $0 \le \ell \le \log k$, sample $\nu_s \sim \Lap((\log k) / 2\eps)$
\item For each $i = 1, \dots, k$, let $\eta_i = \sum_{s \in P(i)} \nu_s$
\item Let $T_0 = \alpha n / 6, T_1 = \alpha n / 2+ \eta_1, \dots, T_{k-2} = \alpha n / 6 + \alpha (k-2) n /3+ \eta_{k-1}, T_{k-1} = n - \alpha n /6$
\item Divide $D$ into blocks $D_1, \dots, D_{k-1}$, where $D_i = (x_{T_{i-1}}, \dots, x_{T_{i}-1})$ (note $D_i$ may be empty)
\item Release $R(D_1), \dots, R(D_m)$, interpreted as approximate $(\alpha/3)$-quantiles.
\end{enumerate}
\end{algorithm}

We will show that this algorithm satisfies $(3\eps, (1+e^{\eps})\delta)$-differential privacy, and is able to release approximate $k (=3/\alpha)$-quantiles with $(\alpha/3, \beta)$-accuracy, and hence $(\alpha,\beta)$-accurate answers to threshold queries, as long as
\[n \ge \max \left\{\frac{6m}{\alpha},\frac{99\log^{2.5}(1/\alpha)}{\alpha \eps} \right\}\]

\paragraph{Privacy} Let $D = (x_1, \dots, x_n)$ where $x_1 \le x_2 \le \dots \le x_n$, and let $D' = (x_1, \dots, x_i', \dots, x_n)$. Assume without loss of generality that $x_i' \ge x_{i+1}$, and suppose
\[x_1 \le \dots \le x_{i-1} \le x_{i+1} \le \dots \le x_{j} \le x_{i}' \le x_{j+1} \le \dots \le x_n.\]
Consider vectors of noise values $\nu = (\nu_1, \nu_2, \dots, \nu_m)$. Then there is a bijection between noise vectors $\nu$ and noise vectors $\nu'$ such that $D$ partitioned according to $\nu$ and $D'$ partitioned according to $\nu'$ differ on at most 2 blocks (cf. \cite{DworkNaPiRo10}). Moreover, this bijection changes at most $2\log m$ values $\nu_s$ by at most $1$. Thus under this mapping, noise vector $\nu'$ is sampled with probability at most $e^{\eps}$ times the probability $\nu$ is sampled. We get that for any set $S$,
\begin{align*}
\Pr[(M(D_1), \dots, M(D_m)) \in S] &\le e^{\eps}(e^{2\eps}\Pr[(M(D'_1), \dots, M(D'_m)) \in S]) + (1+e^{\eps})\delta \\
&= e^{3\eps} \Pr[(M(D'_1), \dots, M(D'_m)) \in S]  + (1+e^{\eps})\delta.
\end{align*}

\paragraph{Utility} We can produce $(\alpha/3)$-accurate estimates of every quantile as long as
\begin{enumerate}
\item Every noise value has magnitude at most $\alpha n /3$
\item Every execution of $R$ succeeds
\end{enumerate}
By the analysis of Lemma \ref{lem:prefix-tree} in \cite{DworkNaPiRo10}, with probability at least $1-\beta/2$, every noise value $\eta_i$ is bounded by $11 \log^{2.5}(1/\alpha)/\eps \le \alpha n / 12$. This suffices to achieve item 1. Moreover, conditioned on the noise values being so bounded, each $|D_i| \ge \alpha n/6 \ge m$, so each execution of $R$ individually succeeds with probability $1 - \alpha\beta/6$. Hence they all succeed simultaneously with probability at least $1 - \beta/2$, giving item 2.

}

}
\end{document}